\pdfoutput=1
\documentclass[11pt,letterpaper]{article}
\usepackage[margin=1in]{geometry}
\def\arXiv{1} 
\newcommand{\notarxiv}[1]{foo}
\newcommand{\arxiv}[1]{ba}
\ifdefined \arXiv
	\renewcommand{\arxiv}[1]{#1}%
	\renewcommand{\notarxiv}[1]{\ignorespaces}%
\else%
	\renewcommand{\arxiv}[1]{\ignorespaces}%
	\renewcommand{\notarxiv}[1]{#1}%
\fi%

\usepackage{thm-restate}
\usepackage{bm}
\usepackage{bbm}

\arxiv{
\usepackage{tikz}
\usetikzlibrary{positioning,chains,fit,shapes,calc}
}
\usepackage{thmtools} 
\arxiv{\usepackage{thm-restate}}
\arxiv{\usepackage{lipsum}}

\usepackage{subcaption}
\usepackage[linesnumbered,ruled,vlined]{algorithm2e}

\SetKwInput{KwInput}{Input}                %
\SetKwInput{KwOutput}{Output}              %
\SetKwInput{KwReturn}{Return}    
\SetKwInOut{Parameter}{Parameters}
\SetKwComment{Comment}{$\triangleright$\ }{}
\SetCommentSty{mycommfont}
\newcommand{\OuterLoop}{\mathsf{MirrorProx}}

\newcommand{\RegBoxSimp}{\mathsf{RegularizedBS}}
\newcommand{\AltminBS}{\mathsf{AltminBS}}
\newcommand{\GradBS}{\mathsf{GradBS}}
\newcommand{\solve}{\mathsf{Solve}}

\usepackage{enumerate}

\usepackage{amsmath,amssymb, amsthm}    %
\usepackage{verbatim}           %
\usepackage{xspace}             %
\usepackage{graphicx,float}     %
\usepackage{ifthen,calc}        %
\usepackage{textcomp}           %
\usepackage{fancybox}           %
\usepackage{hhline}             %
\usepackage{float}              %

\arxiv{
\usepackage[vflt]{floatflt}     %
\usepackage[small,compact]{titlesec}
\usepackage{setspace}
}

\arxiv{
\usepackage{color}
\definecolor{ForestGreen}{rgb}{0.1333,0.5451,0.1333}
\usepackage{thumbpdf}
\usepackage[letterpaper,
colorlinks,,allcolors=blue,%
bookmarks,bookmarksopen,bookmarksnumbered]
{hyperref}
}
\arxiv{\usepackage{cleveref}}
\notarxiv{
\Crefname{lemma}{Lemma}{Lemmas}
\crefname{thm}{theorem}{theorems}
\Crefname{proposition}{Proposition}{Propositions}
\Crefname{corollary}{Corollary}{Corollaries}
\crefalias{AlgoLine}{line}
}

\newenvironment{psmallmatrix}
  {\left(\begin{smallmatrix}}
  {\end{smallmatrix}\right)}

\arxiv{
\newtheorem{theorem}{Theorem}

\newtheorem{proposition}{Proposition}
\newtheorem{corollary}{Corollary}
\newtheorem{definition}{Definition}

\newtheorem{lemma}{Lemma}
}

\arxiv{}

\usepackage{float}
\floatstyle{plain}\newfloat{myfig}{t}{figs}[section]
\floatname{myfig}{\textsc{Figure}}
\floatstyle{plain}\newfloat{myalg}{H}{algs}[section]
\floatname{myalg}{}
\setlength{\fboxrule}{0.8pt}    %

\newcommand{\defeq}{:=}
\newcommand{\norm}[1]{\left\lVert#1\right\rVert}
\newcommand{\inprod}[2]{\left\langle#1, #2\right\rangle}
\newcommand{\eps}{\epsilon}

\newcommand{\argmax}{\textup{argmax}}
\newcommand{\argmin}{\textup{argmin}} 
\newcommand{\R}{\mathbb{R}}

\newcommand{\N}{\mathbb{N}}
\newcommand{\diag}[1]{\textbf{\textup{diag}}\left(#1\right)}
\newcommand{\half}{\frac{1}{2}}

\newcommand{\smin}{\textup{smin}}

\newcommand{\opt}{\textup{OPT}}
\newcommand{\xset}{\mathcal{X}}
\newcommand{\yset}{\mathcal{Y}}
\newcommand{\zset}{\mathcal{Z}}
\newcommand{\ma}{\mathbf{A}}
\newcommand{\ai}{\ma_{i:}}
\newcommand{\aj}{\ma_{:j}}

\newcommand{\jac}{\mathbf{J}}
\newcommand{\inner}[2]{\left<#1,#2\right>}

\newcommand{\innerB}[2]{\Bigg<#1,#2\bigg>}
\newcommand{\tO}{\widetilde{O}}
\newcommand{\nnz}{\textup{nnz}}

\newcommand{\Par}[1]{\left(#1\right)}
\newcommand{\Brack}[1]{\left[#1\right]}
\newcommand{\Brace}[1]{\left\{#1\right\}}

\newcommand{\x}{^{\mathsf{x}}}
\newcommand{\y}{^{\mathsf{y}}}
\newcommand{\md}{\mathbf{D}}
\newcommand{\mzero}{\mathbf{0}}

\newcommand{\veps}{\varepsilon}

\newcommand{\1}{\mathbf{1}}
\newcommand{\0}{\mathbf{0}}
\newcommand{\mb}{\mathbf{B}}
\newcommand{\OTRound}{\mathsf{OTRound}}
\newcommand{\OPT}{\mathsf{OPT}}
\newcommand{\tx}{\tilde{x}}
\newcommand{\hx}{\hat{x}}
\newcommand{\Abs}[1]{\left|#1\right|}
\newcommand{\mm}{\mathbf{M}}

\newcommand{\bz}{\bar{z}}

\newcommand{\bx}{\bar{x}}

\newcommand{\AltMin}{\mathsf{AltMin}}

\newcommand{\by}{\bar{y}}

\newcommand{\del}{\textup{del}}
\newcommand{\mx}{\mathbf{X}}
\newcommand{\mk}{\mathbf{K}}
\newcommand{\mc}{\mathbf{C}}
\newcommand{\td}{\tilde{d}}
\newcommand{\Sinkhorn}{\mathsf{Sinkhorn}}
\newcommand{\MaintainMatch}{\mathsf{DecMatching}}

\newcommand{\Round}{\mathsf{Round}}

\newcommand{\dum}{\mathsf{dum}}
\newcommand{\tot}{\mathsf{tot}}
\newcommand{\Remove}{\mathsf{RemoveOverflow}}

\newcommand{\sink}{\mathsf{sink}}

\newcommand{\fme}{f_{\mu,\eps}}
\newcommand{\gme}{g_{\mu,\eps}}
\newcommand{\rme}{r_{\mu,\eps}}

\newcommand{\MCM}{\textup{MCM}}
\newcommand{\tsolve}{\mathcal{T}}
\newcommand{\tG}{\widetilde{G}}
\newcommand{\tL}{\widetilde{L}}
\newcommand{\tR}{\widetilde{R}}
\newcommand{\tV}{\widetilde{V}}
\newcommand{\tE}{\widetilde{E}}
\newcommand{\tmb}{\widetilde{\mb}}

\newcommand{\Bmax}{B_{\max}}
\newcommand{\Cmax}{C_{\max}}

\newcommand{\vmin}{v_{\min}}

\renewcommand{\varepsilon}{\sigma}

\newenvironment{proof-sketch}{%
	\proof}{\endproof}

\newcommand{\citep}{\cite}
\newcommand{\Mest}{M_{\textup{est}}}

\makeatletter
\let\cref@old@stepcounter\stepcounter
\def\stepcounter#1{%
  \cref@old@stepcounter{#1}%
  \cref@constructprefix{#1}{\cref@result}%
  \@ifundefined{cref@#1@alias}%
    {\def\@tempa{#1}}%
    {\def\@tempa{\csname cref@#1@alias\endcsname}}%
  \protected@edef\cref@currentlabel{%
    [\@tempa][\arabic{#1}][\cref@result]%
    \csname p@#1\endcsname\csname the#1\endcsname}}
\makeatother

\arxiv{
\title{Regularized Box-Simplex Games \\ and Dynamic Decremental Bipartite Matching}

\author{Arun Jambulapati ~~~ Yujia Jin ~~~ Aaron Sidford  ~~~ Kevin Tian\\
Stanford University\\
	\texttt{\{\href{mailto:jmblpati@stanford.edu}{jmblpati},%
		\href{mailto:yujiajin@stanford.edu}{yujiajin},%
		\href{mailto:sidford@stanford.edu}{sidford},%
		\href{mailto:kjtian@stanford.edu}{kjtian}%
		\}@stanford.edu}}
\date{}
}

\notarxiv{
\title{Regularized Box-Simplex Games and Dynamic Decremental Bipartite Matching} %

\author{Arun Jambulapati}{Stanford University}{jmblpati@stanford.edu}{}{}
\author{Yujia Jin}{Stanford University}{yujiajin@stanford.edu}{https://web.stanford.edu/~yujiajin/}{supported by a Stanford Graduate Fellowship and the Dantzig-Lieberman Operations Research Fellowship}
\author{Aaron Sidford}{Stanford University}{sidford@stanford.edu}{www.aaronsidford.com}{supported in part by a Microsoft Research Faculty Fellowship, NSF CAREER Award CCF-1844855, NSF Grant CCF-1955039, a PayPal research award, and a Sloan Research Fellowship.}
\author{Kevin Tian}{Stanford University}{ktian6@gmail.com}{https://kjtian.github.io/}{supported in part by a Google Ph.D. Fellowship, a Simons-Berkeley VMware Research Fellowship, a Microsoft Research Faculty Fellowship, NSF CAREER Award CCF-1844855, NSF Grant CCF-1955039, and a PayPal research award. }%

\authorrunning{Arun, Yujia, Aaron, and Kevin} %

\Copyright{Arun Jambulapati, Yujia Jin, Aaron Sidford, and Kevin Tian} %

\ccsdesc{Theory of computation~Dynamic graph algorithms} %

\keywords{bipartite matching, decremental matching, dynamic algorithms, continuous optimization, box-simplex games, primal-dual method} %

\relatedversion{Full version available on arXiv.}

\acknowledgements{We thank anonymous reviewers for their feedback.}%

\EventEditors{John Q. Open and Joan R. Access}
\EventNoEds{2}
\EventLongTitle{42nd Conference on Very Important Topics (CVIT 2016)}
\EventShortTitle{CVIT 2016}
\EventAcronym{CVIT}
\EventYear{2016}
\EventDate{December 24--27, 2016}
\EventLocation{Little Whinging, United Kingdom}
\EventLogo{}
\SeriesVolume{42}
\ArticleNo{23}
}

\begin{document}

\maketitle

\begin{abstract}%
Box-simplex games are a family of bilinear minimax objectives which encapsulate graph-structured problems such as maximum flow \cite{Sherman17}, optimal transport \cite{JambulapatiST19}, and bipartite matching \cite{AssadiJJST22}.
We develop efficient near-linear time, high-accuracy solvers for regularized variants of these games. Beyond the immediate applications of such solvers for computing Sinkhorn distances, a prominent tool in machine learning, we show that these solvers can be used to obtain improved running times for maintaining a (fractional) $\epsilon$-approximate maximum matching in a dynamic decremental bipartite graph against an adaptive adversary. We give a generic framework which reduces this dynamic matching problem to solving regularized graph-structured optimization problems to high accuracy. Through our reduction framework, our regularized box-simplex game solver implies a new algorithm for dynamic decremental bipartite matching in total time $\widetilde{O}(m \cdot \eps^{-3})$, from an initial graph with $m$ edges and $n$ nodes. We further show how to use recent advances in flow optimization \cite{chen2022maximum} to improve our runtime to $m^{1 + o(1)} \cdot \eps^{-2}$, thereby demonstrating the versatility of our reduction-based approach. These results improve upon the previous best runtime of~$\widetilde{O}(m \cdot \eps^{-4})$~\cite{BernsteinGS20} and illustrate the utility of using regularized optimization problem solvers for designing dynamic algorithms. 
\end{abstract}
\arxiv{
\newpage

 \tableofcontents

\thispagestyle{empty}
\newpage
\pagenumbering{arabic} 
}

\section{Introduction}\label{sec:intro}

Efficient approximate solvers for graph-structured convex programming problems have led to a variety of recent advances in combinatorial optimization. Motivated by problems related to maximum flow and optimal transportation, a recent line of work \cite{Sherman13, KelnerLOS14, Sherman17, SidfordTi18, JambulapatiST19, CohenST21} developed near-linear time, accelerated solvers for a particular family of convex programming objectives we refer to in this paper as \emph{box-simplex games}:
\begin{equation}\label{eq:boxsimplexintro}\min_{x \in \Delta^m} \max_{y \in [-1, 1]^n} y^\top \ma x + c^\top x - b^\top y
\text{ where }
\Delta^m \defeq \{x\in\R^m_{\geq 0} | \norm{x}_1 =1 \}\,.
\end{equation}
Box-simplex games, \eqref{eq:boxsimplexintro}, are bilinear problems where a maximization player is constrained to the box (the $\ell_\infty$ ball) and a minimization player is constrained to the simplex (the nonnegative $\ell_1$ shell). The problem provides a convenient encapsulation of linear programming problems with $\ell_1$ or $\ell_\infty$ structure; \eqref{eq:boxsimplexintro} can be used to solve box-constrained $\ell_\infty$ regression problems (see e.g.\ \cite{Sherman17, SidfordTi18}) and maximizing over the box-constrained player yields the following $\ell_1$ regression problem
\begin{equation}\label{eq:l1regintro}\min_{x \in \Delta^m} c^\top x + \norm{\ma x - b}_1.\end{equation}
Furthermore, solvers for \eqref{eq:boxsimplexintro} and \eqref{eq:l1regintro}  are used in state-of-the-art algorithms for approximate maximum flow \cite{Sherman17}, optimal transport (OT) \cite{JambulapatiST19}, (width-dependent) positive linear programming \cite{BoobSW19}, and semi-streaming bipartite matching \cite{AssadiJJST22}. 

One of the main goals of our work is to develop efficient algorithms for solving \emph{regularized} variants of the problems \eqref{eq:boxsimplexintro} and \eqref{eq:l1regintro}. An example of particular interest is the following
\begin{equation}\label{eq:sinkhornintro}\min_{x \in \Delta^m \mid \mb^\top x = d} c^\top x + \mu H(x),
\text{ where }
\mu \ge 0
\text{ and }
H(x) \defeq \sum_{i \in [m]} x_i \log x_i.
\end{equation}
The particular case of \eqref{eq:sinkhornintro} when $\mb \in \R^{m \times n}$ is the (unsigned) edge-vertex incidence matrix of a complete bipartite graph, and $d$ is a pair of discrete distributions supported on the sides of the bipartition, is known as the \emph{Sinkhorn distance} objective. This objective is used in the machine learning literature \cite{Cuturi13} as an efficiently-computable approximation to optimal transport distances: $c$ corresponds to pairwise movement costs, and $d$ encodes the prescribed marginals. This objective has favorable properties, e.g.\ differentiability \cite{Vialard19}, and there has been extensive work by both theorists and practitioners to solve \eqref{eq:sinkhornintro} and analyze its properties (see e.g.\ \cite{Cuturi13, AltschulerWR17} and references therein). Choosing $\ma$ and $b$ to be sufficiently large multiples of $\mb^\top$ and $d$, it can be shown that solutions to the following regularized variant of \eqref{eq:l1regintro} yield approximate solutions to \eqref{eq:sinkhornintro},
\begin{equation}
	\label{eq:regl1intro}\min_{x \in \Delta^m} c^\top x + \norm{\ma x - b}_1 + \mu H(x)
\,.
\end{equation} 

Beyond connections to Sinkhorn distances, there are additional reasons why it may be desirable to solve regularized box-simplex games. For example, regularization could speed up algorithms and allow high-precision solutions to be computed more efficiently. Further, obtaining a high-precision solution to a regularized version of the problem yields a more canonical and predictable approximate solution than an arbitrary low-precision approximation to the unregularized problem. Moreover, regularization potentially makes optimal solutions more stable to input changes. For box-simplex games stemming from bipartite matching we quantify this stability and show all of these properties allow regularized solvers to yield faster algorithms for a particular dynamic matching problem. 

Altogether, the main contributions of this paper are the following.
\begin{enumerate}
	\item We give improved running times for the problem of \emph{dynamic decremental bipartite matching} (DDBM) with an \emph{adaptive adversary}, a fundamental problem in dynamic graph algorithms. Our algorithm follows from a general black-box reduction we develop from DDBM to solving (variants of) regularized box-simplex games to high precision.
	\item We give efficient solvers for (variants of) the regularized box-simplex problems \eqref{eq:sinkhornintro}, \eqref{eq:regl1intro}.
	\item As a byproduct, we also show how to apply our new solvers (and additional tools from the literature) to obtain state-of-the-art methods for computing Sinkhorn distances.
\end{enumerate}

Formally, the \emph{DDBM} problem we consider in this paper is the following: given a bipartite graph undergoing edge deletions maintain, at all times, an \emph{$\epsilon$-approximate (maximum) matching},\footnote{This is sometimes also referred to as a $(1 + \eps)$-multiplicatively approximate matching in the literature.} that is a matching which has size at least a $(1 - \epsilon)$-fraction of the maximum (for a pre-specified) value of $\epsilon$. Unless specified otherwise, we consider the \emph{adaptive adversary model} where edge deletions can be specified adaptively to the matching returned. Further, unless specified otherwise, we allow the matching output by the algorithm to be \emph{fractional}, rather than integral.%

We show how to reduce solving the DDBM problem to solving a sequence of regularized box-simplex games. This reduction yields a new approach to dynamic matching; this approach is inspired by prior work, e.g.\ \cite{BernsteinGS20}, but conceptually distinct in that it decouples the solving of optimization subproblems from characterizing their solutions. For our specific DDBM problem, the only prior algorithm achieving an amortized polylogarithmic update time (for constant $\eps$) is in the recent work of \cite{BernsteinGS20}, which derives their dynamic algorithm as an application of the \emph{congestion balancing} technique. Our reduction eschews this combinatorial tool and directly argues, via techniques from convex analysis, that solutions to appropriate regularized matching problems can be used dynamically as approximate matchings while requiring few recomputations. We emphasize our use of fast \emph{high-accuracy solvers}\footnote{Throughout, we typically use the term ``high-accuracy'' to refer to an algorithm whose runtime scales polylogarithmically in the inverse accuracy (as opposed to e.g.\ polynomially).} in the context of our reduction to obtain our improved runtimes, as our approach leverages structural characteristics of the exact solutions which we only show are inherited by approximate solutions when solved to sufficient accuracy.

Our work both serves as a proof-of-concept of the utility of regularized linear programming solvers as a subroutine in dynamic graph algorithms, and provides the tools necessary to solve said problems in various structured cases. This approach to dynamic algorithm design effectively separates a ``stability analysis'' of the solution to a suitable optimization problem from the computational burden of solving that problem to high accuracy: any improved solver would then have implications for faster dynamic algorithms as well. As a demonstration of this flexibility, we give three  uses of our reduction framework (which proceed via different solvers) in obtaining our improved DDBM update time.
We hope our work opens the door to exploring the use of the powerful continuous optimization toolkit, especially techniques originally designed for non-dynamic problems, for their dynamic counterparts.

\paragraph{Paper organization.} We give a detailed overview of our contributions in Section~\ref{ssec:results}, and overview related prior work in Section~\ref{ssec:mainrelated}. We state preliminaries in Section~\ref{sec:prelims}. In Section~\ref{sec:dec_framework}, we describe our framework for reducing DDBM to a sequence of regularized optimization problems satisfying certain properties, and in Section~\ref{ssec:solver} we give three different instantiations of the DDBM framework for obtaining a variety of DDBM solvers. Finally in Section~\ref{sec:framework} we provide our main algorithm for regularized box-simplex games. \notarxiv{In~\Cref{app:maxflow}, we provide additional discussions on a recent advancement for faster DDBM solver. }We defer proofs for~\Cref{sec:dynamic} and~\Cref{sec:framework} to~\Cref{app:dynamic} and~\Cref{app:framework} respectively, and provide additional results for approximating Sinkhorn distances efficiently in~\Cref{sec:apprx}.

\subsection{Our results}\label{ssec:results} 

\paragraph{A framework for faster DDBM.}
 We develop a new framework for solving the DDBM problem of computing an $\epsilon$-approximate maximum matching in a dynamic graph undergoing edge deletions from an adaptive adversary.  Our framework provides a reduction from this DDBM problem to solving various regularized formulations of box-simplex games. 

To illustrate the reduction, suppose we have a bipartite graph $G = (V, E)$, and suppose, for simplicity, that we know $M^*$, the size of the (maximum cardinality) matching. As demonstrated in \cite{AssadiJJST22}, solving the $\ell_1$ regression problem $\min_{x \in M^* \Delta^m} -c^\top x + \norm{\ma x - b}_1$,
to $\eps M^*$ additive accuracy for appropriate choices of $\ma$, $b$, and $c$ yields an $\eps$-approximate maximum cardinality matching. Intuitively, $\ma$ and $b$ penalize violations of the matching constraints, and $c$ is a multiple of the all-ones vector capturing the objective of maximizing the matching size. However, $\ell_1$ regression objectives do not necessarily have unique minimizers: as such the output of directly minimizing these objectives is difficult to characterize beyond (approximate) optimality. This induces difficulty in using solutions to such problems directly in  dynamic graph algorithms. 

Our first key observation (building upon intuition from congestion balancing \cite{BernsteinGS20}) is that, beyond enabling faster runtime guarantees, regularization provides more robust solutions which are resilient to edge deletions in dynamic applications. We show that if
\begin{equation}\label{eq:matchreg}x^*_\eps \defeq \min_{x \in M^*\cdot\Delta^m} -c^\top x + \norm{\ma x - b}_1 + \eps H(x)\end{equation}
is the solution to the \emph{regularized} box-simplex formulation of bipartite matching, then $x^*_\eps$ enjoys favorable stability properties allowing us to argue about its size under deletions. 

The stability of solutions to \eqref{eq:matchreg} is fairly intuitive: the entropy regularizer $H(x)$ encourages the objective to spread the matching uniformly among edges, when all else is held equal. For example, when $G$ is a complete bipartite graph on $2 n$ vertices, standard linear programming relaxations of matching do not favor either of (i) an integral perfect matching, and (ii) a fractional matching spreading mass evenly across many edges, over the other. However, using (i) as our approximate matching on a dynamic graph undergoing deletions is substantially more unstable; an adaptive adversary can remove edges corresponding to our matching, forcing $\Omega(n)$ recomputations. On the other hand, no edge deletions can cause this type of instability for strategy (ii): as each edge receives weight $\frac 1 n$ in the fractional matching, the only way to reduce the fractional matching size by $\epsilon n$ is to remove $O(\epsilon n^2)$ edges: thus $O(\eps^{-1})$ recomputations are (intuitively) sufficient for maintaining an $\eps$-approximate maximum matching. This distinction underlies the use of \emph{high-accuracy} solvers in our reduction; indeed, while they obtain large matching values in an original graph, approximate solutions may not carry the same types of dynamic matching value stability. We note similar intuition motivated the approach in \cite{BernsteinGS20}.

To make this argument more rigorous, consider using $x^*_\eps$ as our approximate matching for a number of iterations corresponding to edge deletions, until its size restricted to the smaller graph has decreased by a factor of $1 - O(\eps)$. By using strong convexity of \eqref{eq:matchreg} in the $\ell_1$ norm, we argue that whenever the objective value of $x^*_\eps$ has worsened, the maximum matching size itself must have gone down by a (potentially much smaller) amount. A tighter characterization of this strong convexity argument shows that we only need to recompute a solution to slight variants of \eqref{eq:matchreg} roughly $\tO(\eps^{-2})$ times throughout the life of the algorithm. Combined with accelerated $\tO(\frac m \eps)$-time solvers for regularized box-simplex games (which are slight modifications of (\ref{eq:matchreg})), this strategy yields an overall runtime of $\tO(\frac m {\eps^3})$, improving upon the recent state-of-the-art decremental result of \cite{BernsteinGS20}.

We formalize these ideas in Section~\ref{sec:dynamic}, where we demonstrate that a range of regularization strategies (see Definition~\ref{def:canonical}) such as \eqref{eq:matchreg} are amenable to this reduction. Roughly, as long as our regularized objective is ``at least as strongly convex'' as the entropic regularizer, and closely approximates the matching value in the static setting, then it can be used in our DDBM algorithm. Combining this framework with solvers for regularized matching problems, we give three different results. The first two obtain amortized update times of roughly $\tO(\eps^{-3})$, in Theorems~\ref{thm:main-dec-bs} and~\ref{thm:main-dec-sinkhorn} via box-simplex games and matrix scaling, respectively (though the latter holds only for dense graphs). We give an informal statement of the former here.

\begin{theorem}[informal, see Theorem~\ref{thm:main-dec-bs}]
	Let $G = (V, E)$ be bipartite, $|V|=n$, $|E|=m$, and $\eps \ge \textup{poly}(m^{-1})$. There is a deterministic algorithm maintaining an $\eps$-approximate matching in a dynamic bipartite graph with adversarial edge deletions running in time $O(m \log^{5} m\cdot \eps^{-3})$.
\end{theorem}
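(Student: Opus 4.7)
The plan is to combine the reduction framework from Section~\ref{sec:dec_framework} with a high-accuracy solver for regularized box-simplex games from Section~\ref{sec:framework}. Concretely, I would formulate the matching problem on the current graph $G$ as the regularized box-simplex game \eqref{eq:matchreg}: $\ma$ encodes the edge--vertex incidence structure so that $\norm{\ma x - b}_1$ penalizes violations of the matching constraints, $c$ is a multiple of $\1$ so that $-c^\top x$ favors large matchings, and the entropic regularizer $\eps H(x)$ enforces uniqueness together with strong convexity in the $\ell_1$ norm. The optimizer $x^*_\eps$ of \eqref{eq:matchreg} is output as the current (fractional) approximate matching; because the entropy regularizer spreads mass across edges, $x^*_\eps$ gives a canonical, stable matching to maintain.

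Next, I would carry out the stability analysis that is the core of the reduction. Using that $H$ is $1$-strongly convex on $\Delta^m$ in the $\ell_1$ norm, I would show that a single computed $x^*_\eps$ remains a valid $\eps$-approximate matching across a long ``epoch'' of edge deletions, and that the epoch can only terminate once the true optimum matching value $\OPT$ of the current graph has itself dropped by a quantitative amount (roughly $\Omega(\eps)\cdot \OPT$ up to logarithmic factors). Since $\OPT$ starts at $O(m)$ and is monotonically non-increasing, a telescoping argument over epochs bounds the total number of recomputations by $\tO(\eps^{-2})$. To invoke the generic reduction of Section~\ref{sec:dec_framework}, I would verify that \eqref{eq:matchreg} satisfies the canonical regularization conditions of \Cref{def:canonical}: sufficient strong convexity relative to $H$ and closeness in value to unregularized matching.

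Finally, I would bound the per-recomputation cost by invoking the accelerated regularized box-simplex solver from Section~\ref{sec:framework}, which delivers a sufficiently accurate minimizer of \eqref{eq:matchreg} in $\tO(m/\eps)$ time. The high-accuracy guarantee---polylogarithmic rather than polynomial dependence on the inverse target error---is essential here, since only near-exact optima inherit the stability properties used above; a low-accuracy solution would not be characterized by the KKT conditions that drive the telescoping argument. Multiplying $\tO(\eps^{-2})$ recomputations by the $\tO(m/\eps)$ per-call cost yields the $O(m \log^5 m \cdot \eps^{-3})$ total runtime, with the concrete $\log^5 m$ factor accounting for the inner iteration complexity of the solver, bookkeeping for the dynamic graph, and the range of regularization scales encountered over the algorithm's lifetime.

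The hard part will be making the stability analysis fully quantitative for \emph{approximate} (rather than exact) optima of \eqref{eq:matchreg}: the telescoping bound requires that a solution returned by the solver lies close enough to the true $x^*_\eps$---in a norm that the entropic regularizer controls---to inherit both the matching-value guarantee and the epoch-length bound, so that approximation error does not compound across recomputations. This demands careful joint calibration of the regularization strength, the target solver accuracy, and the epoch-termination threshold. Fortunately, this is precisely the interplay that the canonical framework of Section~\ref{sec:dec_framework} is designed to abstract away from the solver's internals, reducing the task to checking the regularization properties above and plugging in any high-accuracy box-simplex solver.
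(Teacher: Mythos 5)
Your overall route is the paper's: instantiate the reduction of \Cref{sec:dec_framework} with a regularized box-simplex objective, solve each subproblem with the high-accuracy solver of \Cref{sec:framework}, and multiply $\tO(\eps^{-2})$ recomputations by $\tO(m/\eps)$ per solve. However, the quantitative core of your epoch/telescoping step has a genuine gap. You claim an epoch can end only after the true matching value has dropped by roughly $\Omega(\eps)\cdot\OPT$; this is both unprovable and inconsistent with your own count. It is unprovable because an adaptive adversary can trigger a recomputation (deleting a $\Theta(\eps)$ fraction of the fractional matching mass) while decreasing $\MCM$ by a negligible amount --- exactly the complete-bipartite example in the introduction, and the paper explicitly notes the matching size may drop by a ``potentially much smaller'' amount than the objective. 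And if it were true, it would give only $\tO(\eps^{-1})$ epochs, not the $\tO(\eps^{-2})$ you need. The correct potential is the optimal value $\nu^E$ of the \emph{regularized} objective: by relative strong convexity (Item 3 of \Cref{def:canonical}) together with the monotone rounding guarantee \eqref{eq:cond-round-guarantee-l1} and the solver's $\ell_1$ accuracy \eqref{eq:cond-apprx-l1}, each recomputation increases $\nu^E$ by $\Omega(\beta\eps)$, and since an $\eps$-accurate value approximation forces the entropy weight to be $\beta = \Theta(\eps M/\log m)$, the per-epoch gain is $\Theta(\eps^2 M/\log m)$ against a total range of $O(M)$, giving $O(\log m\cdot\eps^{-2})$ recomputations per phase and $O(\log n)$ phases (\Cref{prop:decmatch}, \Cref{coro:ddbm-reduce}).

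Second, your plan takes the CRO to be the entropy-only objective \eqref{eq:matchreg}, but the canonical-solver requirement \eqref{eq:cond-apprx-l1} demands $\ell_1$ distance $O(\eps)$ to the \emph{exact} minimizer of the CRO, which via $\Theta(\eps/\log m)$-strong convexity translates to function accuracy $O(\eps^3/\log m)$. The solver of \Cref{thm:sherman} delivers exactly this, but only for the doubly regularized game \eqref{eq:main-reg-apprx} (entropy plus the dual quadratic term $(y^2)^\top|\ma|^\top x$), whose optimizer differs from that of \eqref{eq:matchreg}; solving the entropy-only problem to $\approx\eps^3$ accuracy via \Cref{coro:sherman} would force the quadratic parameter down to $\approx\eps^3$ and the per-solve cost up to $\tO(m\eps^{-2})$, losing the claimed total. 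The paper resolves this by making the CRO itself the primal-dual regularized objective \eqref{eq:def-matching-bs} (with the $\gamma\y$ term, the $8M$ scaling, and a slack coordinate), verifying \Cref{def:canonical} for it in Lemma~\ref{lem:canonical-bs}, and pairing $\RegBoxSimp$ with the rounding routine $\Remove$ --- which your proposal omits, yet is needed both to output a feasible fractional matching (the vertex constraints are only softly penalized in the objective) and for the monotonicity step $\tilde{x}\le\hat{x}$ used in the potential argument.
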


We note that our algorithm (deterministically) returns a \emph{fractional matching}. There is a black-box reduction from dynamic fractional matching maintenance to dynamic integral matching maintenance contained in \cite{Wajc20}, but to our knowledge this reduction is bottlenecked at an amortized $\tO(\eps^{-4})$ runtime (see e.g.\ Appendix A.2, \cite{BernsteinGS20}). Improving this reduction is a key open problem.

\paragraph{High-accuracy solvers for regularized box-simplex games.}
 To use our DDBM framework, we give a new algorithm for solving regularized box-simplex games of the form:
\begin{equation}\label{eq:regbsintro}\min_{x \in \Delta^m} \max_{y\in[0,1]^n} f_{\mu, \eps}(x, y) \defeq y^\top \ma^\top x + c^\top x - b^\top y + \mu H(x) - \frac \eps 2 \Par{y^2}^\top |\ma|^\top x, \end{equation}
where $\eps$ and $\mu = \Omega(\eps)$ are regularization parameters and $y^2$, $|\ma|$ denote entrywise operations. The regularization terms $H(x)$ and $(y^2)^\top |\ma|^\top x$ in \eqref{eq:regbsintro} are parts of a primal-dual regularizer proposed in \cite{JambulapatiST19} (and a variation of a similar regularizer of \cite{Sherman17}) used in state-of-the-art algorithms for approximately solving (unregularized) box-simplex games. This choice of regularization enjoys favorable regularization properties over the joint box-simplex domain, and thereby sidesteps the infamous $\ell_{\infty}$-strong convexity barrier that has limited previous attempts at accelerated algorithms for this problem. Under relatively mild restrictions on problem parameters (see discussion at the start of Section~\ref{sec:framework}), we develop a \emph{high accuracy solver} for \eqref{eq:regbsintro}, stated informally here.

\begin{theorem}[informal, see Theorem~\ref{thm:sherman}]\label{thm:shermaninf}
Given an instance of  \eqref{eq:regbsintro}, with $\mu = \Omega(\eps)$, $\norm{\ma}_\infty \le 1$, and $\sigma \ge \textup{poly}(m^{-1})$ Algorithm~\ref{alg:sherman} returns $x$ with $\max_{y \in [0, 1]^n} f_{\mu, \eps}(x, y) - f_{\mu, \eps}(x^\star, y^\star) \le \sigma$ in time $\tO(\nnz(\ma) \cdot \frac {1}{\sqrt{\mu\eps}})$ where  $(x^\star,y^\star)$ is the optimizer of~\eqref{eq:regbsintro}.
\end{theorem}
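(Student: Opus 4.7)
The plan is to design an accelerated primal-dual method for the strongly-convex-strongly-concave saddle point problem~(\ref{eq:regbsintro}), using the regularization terms $\mu H(x)$ and $\tfrac{\eps}{2}(y^2)^\top|\ma|^\top x$ themselves as a Bregman divergence for a mirror prox iteration. This is a natural extension of the framework of \cite{JambulapatiST19}, which uses the primal-dual regularizer $r_{\mu,\eps}(x,y) = \mu H(x) + \tfrac{\eps}{2}(y^2)^\top|\ma|^\top x$ as an area-convex mirror map for the \emph{unregularized} box-simplex game and obtains $\tO(\eps^{-1})$ iterations. Here, because $f_{\mu,\eps}$ already contains the matching regularization inside the objective, the same mirror map confers extra strong monotonicity of the saddle gradient field, which can be converted into a linear contraction to replace the inverse-$\eps$ rate with the accelerated outer iteration count $\tO((\mu\eps)^{-1/2})$.

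The first step is to recall from \cite{JambulapatiST19} that under $\norm{\ma}_\infty \le 1$ the regularizer $r_{\mu,\eps}$ is area-convex with respect to the bilinear gradient field of $f_{\mu,\eps}$, so that each extragradient mirror prox step can be taken with $O(1)$ step size and the usual $\tO(\eps^{-1})$ duality gap argument applies. The second, and new, step is to observe that the non-bilinear part of $f_{\mu,\eps}$, namely $\mu H(x) - \tfrac{\eps}{2}(y^2)^\top|\ma|^\top x + c^\top x - b^\top y$, is convex-concave and has Bregman divergence bounded below pointwise by a constant multiple of the Bregman divergence of $r_{\mu,\eps}$ itself. Feeding this into the standard strongly-monotone mirror prox potential (as in classical extragradient analyses for strongly monotone variational inequalities) yields per-step contraction of $V^{r_{\mu,\eps}}$ by a factor $1 - \Theta(\sqrt{\mu\eps})$, so $\tO(1/\sqrt{\mu\eps})$ outer iterations suffice to drive the duality gap below $\sigma$; the polynomial lower bound $\sigma \ge \textup{poly}(m^{-1})$ keeps the initialization gap and all log-factors polylogarithmic.

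Each mirror prox step in turn requires solving a proximal subproblem
\[
\argmin_{x \in \Delta^m} \max_{y \in [0,1]^n} \inner{g\x}{x} + \inner{g\y}{y} + V^{r_{\mu,\eps}}\big((x,y),(\bx,\by)\big) + \mu H(x) - \tfrac{\eps}{2}(y^2)^\top|\ma|^\top x,
\]
i.e.\ a local regularized box-simplex game. I would solve it by alternating minimization: fixing $y$ yields an entropically-regularized linear objective on the simplex with a closed-form softmax solution (up to scalar normalization), and fixing $x$ yields a coordinate-separable concave quadratic maximization on $[0,1]^n$ with a closed-form clipped-affine solution. Because both subproblems are strongly convex/concave in their own variable, alternating minimization converges linearly and $\tO(1)$ inner rounds suffice, each costing $\tO(\nnz(\ma))$ for two matrix-vector products. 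This gives total per-iteration cost $\tO(\nnz(\ma))$ and hence overall runtime $\tO(\nnz(\ma) \cdot (\mu\eps)^{-1/2})$.

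The main obstacle will be carefully tracking three distinct sources of curvature in the contraction argument: the $\ell_1$ strong convexity of $\mu H$ on the simplex side, the weighted-$\ell_2$ strong concavity of the $y$-quadratic whose weights $\eps(|\ma|^\top x)_j$ depend on the current primal iterate, and the joint area-convexity of $r_{\mu,\eps}$ relative to the bilinear coupling. The $\sqrt{\mu\eps}$ contraction only emerges once one expresses the Bregman potential in the local norms that match the primal-dual curvature, and the assumption $\mu = \Omega(\eps)$ is precisely what keeps the entropic primal iterate from concentrating so sharply that the weights $(|\ma|^\top x)_j$ become degenerate. A secondary complication is propagating inexact inner-solve error through $\tO(1/\sqrt{\mu\eps})$ outer steps without eroding the $\sigma$ accuracy, which we handle by setting the inner tolerance to $\sigma \cdot \textup{poly}(\mu,\eps,m^{-1})$, paying only a polylog factor in the overall iteration count.
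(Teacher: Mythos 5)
Your plan follows essentially the same route as the paper: a strongly-monotone mirror prox outer loop over the joint box-simplex geometry, with the non-separable primal-dual regularizer, and an alternating-minimization inner solver with closed-form softmax/clipped-affine block updates. In fact your mirror map $\mu H(x) + \tfrac{\eps}{2}(y^2)^\top|\ma|^\top x$ is exactly $\sqrt{\mu\eps/2}$ times the paper's regularizer $\rme$ in \eqref{def:reg-sm} (with $\rho = \sqrt{2\mu/\eps}$), so the geometry and the final $\alpha/\nu = \tO(1/\sqrt{\mu\eps})$ accounting coincide with \Cref{lem:alphabetasm} and \Cref{lem:rel-lip}.

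However, two steps do not hold as you state them. First, at your normalization the regularizer is \emph{not} area-convex (or relatively Lipschitz) with an $O(1)$ constant against the unit-scale bilinear field: the guarantee of \cite{JambulapatiST19} is for their $\Theta(1)$-scale regularizer, and with yours the admissible ratio degrades to $\Theta(\sqrt{\mu\eps})$. Your own claim is internally inconsistent: an $O(1)$ relative-Lipschitz constant together with $\Theta(1)$ strong monotonicity would give a constant contraction per step, not $1-\Theta(\sqrt{\mu\eps})$. The correct bookkeeping (relative Lipschitzness in the local norms of \Cref{lem:reg-convex-and-bound}, then contraction $\alpha/(\alpha+\nu)$ as in \Cref{prop:outerloopproof-sm}) does recover your iteration count, but that computation has to be done rather than imported from the unregularized setting. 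Second, and more substantively, your inner-loop argument --- linear convergence of alternating minimization ``because both subproblems are strongly convex/concave in their own variable'' --- is not valid: blockwise strong convexity alone does not imply a linear rate. The paper instead uses the Hessian-stability condition of \Cref{lem:progress-altmin}, which (like the local-norm relative-Lipschitz bound) requires all simplex iterates to remain within a constant multiplicative factor of one another and bounded away from zero entrywise. Establishing this is where the real work lies: the iterate-stability bound (\Cref{lem:stability}, \Cref{coro:stability}), the padding oracle $\mathcal{O}_\delta$ with its divergence-error bound (\Cref{lem:padsuffices}), the conversion of inner function error into the approximate-proximal-oracle guarantee (which again needs entrywise lower bounds on $x$), and the truncation of large entries of $c$ (\Cref{lem:cmaxbound}) so that the operator's entropic and cost terms do not destroy multiplicative stability --- this is also where the $\Cmax$ dependence hidden by the informal statement enters \Cref{thm:sherman}. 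Without this machinery, both your outer relative-Lipschitz step and your claimed $\tO(1)$ inner rounds are unsupported.
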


Our solver follows recent developments in solving unregularized box-simplex games. We analyze an approximate extragradient algorithm based on the mirror prox method of \cite{Nemirovski04}, and prove that iterates of the regularized problem \eqref{eq:regbsintro} enjoy multiplicative stability properties previously shown for the iterates of mirror prox on the unregularized problem \cite{CohenST21}. Leveraging these tools, we also show the regularizer-operator pair satisfies technical conditions known as \emph{relative Lipschitzness} and \emph{strong monotonicity}, thus enabling a similar convergence analysis as in \cite{CohenST21}. This yields an efficient algorithm for solving \eqref{eq:regbsintro}.

Roughly, when the scale of the problem (defined in terms of the matrix operator norm $\norm{\ma}_\infty$ and appropriate norms of $b$ and $c$) is bounded,\footnote{Our runtimes straightforwardly extend to depend appropriately on these norms in a scale-invariant way.} our algorithm for computing a high-precision optimizer to \eqref{eq:regbsintro} runs in $\tO(\frac 1 {\sqrt{\mu\eps}})$ iterations, each bottlenecked by a matrix-vector product through $\ma$. When $\mu \approx \eps$, the optimizer of the regularized variant is an $O(\eps)$-approximate solution to the unregularized problem \eqref{eq:boxsimplexintro}, and hence Theorem~\ref{thm:sherman} recovers state-of-the-art runtimes (scaling as $\tO(\eps^{-1})$) for box-simplex games up to logarithmic factors. We achieve our improved dependence on $\mu$ in Theorem~\ref{thm:sherman} by trading off the scales of the primal and dual domains. This type of argument is well-known for \emph{separable regularizers} \cite{CarmonJST19}, but a key technical novelty of our paper is demonstrating a similar analysis holds for non-separable regularizers compatible with box-simplex games e.g.\ the one from \cite{JambulapatiST19}, which has not previously been done. 
To our knowledge, Theorem~\ref{thm:sherman} is the first result for solving general regularized box-simplex games to high accuracy in nearly-linear time. We develop our box-simplex algorithm and prove Theorem~\ref{thm:sherman} in Section~\ref{sec:framework}.

\paragraph{Improved rates for the Sinkhorn distance objective.} We apply our accelerated solver for \eqref{eq:regbsintro} in computing approximations to the Sinkhorn distance objective \eqref{eq:sinkhornintro}, a fundamental algorithmic problem in the practice of machine learning, at a faster rate. It is well-known that solving the \emph{regularized} Sinkhorn problem \eqref{eq:sinkhornintro} with $\mu$ scaling much larger than the target accuracy $\eps$ enjoys favorable properties in practice \cite{Cuturi13} (compared to its unregularized counterpart, the standard OT distance). In~\cite{altschuler2019massively}, the authors show that Sinkhorn iteration studied in prior work solves \eqref{eq:sinkhornintro} to additive accuracy $\eps$ at an unaccelerated rate of $\tO(\frac 1 {\mu \eps})$. For completeness we provide a proof of this result (up to logarithmic factors) in Appendix~\ref{ssec:sinkhornunaccel}.

As a straightforward application of the solver we develop for \eqref{eq:regbsintro}, we demonstrate that we can attain an accelerated rate of $\tO(\frac 1 {\sqrt{\mu \eps}})$ for approximating \eqref{eq:sinkhornintro} to additive accuracy $\eps$ via a first-order method. More specifically, the following result is based on reducing the ``explicitly constrained'' Sinkhorn objective \eqref{eq:sinkhornintro} to a ``soft constrained'' regression variant of the form \eqref{eq:regl1intro}, where our box-simplex game solver is applicable. We now state our first result on improved rates for approximating Sinkhorn distance objectives.

\begin{theorem}[informal, see Theorem~\ref{thm:sinkhornmin}]\label{thm:sinkbsinf}
	Let $\mu \in [\Omega(\eps), O(\frac{\norm{c}_\infty}{\log m})]$ in \eqref{eq:sinkhornintro} corresponding to a complete bipartite graph with $m$ edges. There is an algorithm based on the regularized box-simplex game solver of Theorem~\ref{thm:sherman} which obtains an $\eps$-approximate minimizer to \eqref{eq:sinkhornintro} in time $\tO(m \cdot \frac{\norm{c}_\infty}{\sqrt{\mu\eps}})$.
\end{theorem}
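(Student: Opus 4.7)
The approach is to reduce the Sinkhorn objective \eqref{eq:sinkhornintro} to a soft-constrained regularized $\ell_1$ regression problem of the form \eqref{eq:regl1intro}, recast this as a regularized box-simplex game \eqref{eq:regbsintro}, and then invoke Theorem~\ref{thm:sherman}. The first step is to replace the hard marginal constraint $\mb^\top x = d$ by a penalty $\lambda \norm{\mb^\top x - d}_1$ for a scalar $\lambda = \Theta(\norm{c}_\infty)$ chosen to exceed the magnitude of the optimal Lagrange multiplier. By a standard penalty-method argument, any high-accuracy minimizer of
\[
\min_{x \in \Delta^m} c^\top x + \lambda\norm{\mb^\top x - d}_1 + \mu H(x)
\]
is $O(\eps/\norm{c}_\infty)$-infeasible, and can be rounded to an exactly feasible $x'$ via the $\OTRound$ procedure of \cite{AltschulerWR17}, at the cost of an additive $O(\eps)$ change in the Sinkhorn objective.

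The second step is to cast this soft-constrained problem as an instance of \eqref{eq:regbsintro}. Using duality $\lambda\norm{\mb^\top x - d}_1 = \max_{y \in [-1,1]^n} \lambda y^\top(\mb^\top x - d)$ together with a sign-splitting change of variables to the $[0,1]^n$ box, one obtains a game with matrix $\ma = \mb^\top$ (so $\norm{\ma}_\infty = O(1)$ and $\nnz(\ma) = O(m)$), linear terms coming from $c$ and $\lambda d$, and entropy coefficient $\mu$. To match the scale assumed by Theorem~\ref{thm:sherman} I would rescale the entire objective by $1/\norm{c}_\infty$; this preserves $\norm{\ma}_\infty = O(1)$ but reduces the effective game regularizers to $\mu_{\mathrm{g}} = \Theta(\mu/\norm{c}_\infty)$ and $\eps_{\mathrm{g}} = \Theta(\eps/\norm{c}_\infty)$. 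The latter is chosen so that the auxiliary regularizer $-\tfrac{\eps_{\mathrm{g}}}{2}(y^2)^\top|\ma|^\top x$ in \eqref{eq:regbsintro} perturbs the saddle value by $O(\eps_{\mathrm{g}})$ in the rescaled problem, i.e.\ $O(\eps)$ back in the original scale. The assumption $\mu = \Omega(\eps)$ translates to the solver prerequisite $\mu_{\mathrm{g}} = \Omega(\eps_{\mathrm{g}})$, and the upper bound $\mu \le O(\norm{c}_\infty/\log m)$ guarantees $|\mu H(x)| \le O(\norm{c}_\infty)$ so the entropy term stays commensurate with $c$ after rescaling.

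Applying Theorem~\ref{thm:sherman} to the rescaled game then takes time
\[
\tO\!\left(\frac{\nnz(\ma)}{\sqrt{\mu_{\mathrm{g}} \eps_{\mathrm{g}}}}\right) = \tO\!\left(\frac{m \cdot \norm{c}_\infty}{\sqrt{\mu \eps}}\right),
\]
and combined with the $\tO(m)$ rounding step this matches the claimed runtime. The main obstacle is the careful error analysis across the three reductions (penalty, box-simplex casting, rescaling) plus the final $\OTRound$, in particular tracking how the accumulated errors in the $x$-variable, the auxiliary $(y^2)$-regularizer, and the rounded marginals combine to yield an additive $O(\eps)$ error in the original Sinkhorn objective. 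This is precisely where the high-accuracy (polylogarithmic in $1/\sigma$) convergence of Theorem~\ref{thm:sherman} is essential, since the aggregated intermediate accuracy requirements scale polynomially in $\norm{c}_\infty/\eps$ and would otherwise dominate the runtime.
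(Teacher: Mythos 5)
Your proposal is correct and follows essentially the same route as the paper's proof of Theorem~\ref{thm:sinkhornmin}: penalize the marginals by an $\ell_1$ term with coefficient $\Theta(\norm{c}_\infty)$ (the paper takes $C = 2(\norm{c}_\infty + 33\mu\log m)$, which is $\Theta(\norm{c}_\infty)$ in your regime $\mu = O(\norm{c}_\infty/\log m)$), sign-split the dual variable to obtain the game \eqref{eq:sinkhornpddefintro} over $[0,1]^{2n}$ with $\ma = \tfrac14(\mb,-\mb)$, solve with the regularized box-simplex solver (via Corollary~\ref{coro:sherman}), and round with $\OTRound$. The one point where the paper's justification differs slightly from yours is the penalty size: rather than bounding the optimal Lagrange multiplier (which for the entropy-regularized problem can grow like $\mu\log(1/d_{\min})$ when demands have tiny entries), the paper argues via the rounding bound of Lemma~\ref{lem:relatesinkhorn} together with the entropy-versus-$\ell_1$ estimate of Lemma~\ref{lem:entropyvsl1}, and your ``small infeasibility, then round'' step is exactly what this argument delivers once phrased that way.
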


By leveraging the particular structure of the Sinkhorn distance and its connection to a primitive in scientific computing and theoretical computer science known as \emph{matrix scaling} \cite{LinialSW98, Cohen17, Allen-ZhuLOW17}, we give a further-improved solver for \eqref{eq:sinkhornintro} in Theorem~\ref{lem:ms-sinkhorn}. This solver has a nearly-linear runtime scaling as $\tO(\frac 1 \mu)$, which is a high-precision solver for the original Sinkhorn objective. Our high-precision Sinkhorn solver applies powerful second-order optimization tools from \cite{Cohen17} based on the \emph{box-constrained Newton's method} for matrix scaling, yielding our second result on improved Sinkhorn distance approximation rates.

\begin{theorem}[informal, see Theorem~\ref{lem:ms-sinkhorn}]
	Let $\mu, \eps = O(\norm{c}_\infty)$ in \eqref{eq:sinkhornintro} corresponding to a complete bipartite graph with $m$ edges. There is an algorithm based on the matrix scaling solver of \cite{Cohen17} which obtains an $\eps$-approximate minimizer to \eqref{eq:sinkhornintro} in time $\tO(m \cdot \frac{\norm{c}_\infty}{\mu})$.
\end{theorem}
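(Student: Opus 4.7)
The plan is to reduce the entropy-regularized Sinkhorn objective \eqref{eq:sinkhornintro} on the complete bipartite graph to an instance of matrix scaling that can be solved in nearly-linear time by the box-constrained Newton method of \cite{Cohen17}. Writing $x$ as an $n \times n$ nonnegative matrix indexed by bipartition edges and taking the Lagrangian of \eqref{eq:sinkhornintro} with respect to the marginal constraints $\mb^\top x = d$, I would first observe that the unique optimum has the Gibbs-type form $x^\star_{ij} = u^\star_i \mk_{ij} v^\star_j$ where $\mk_{ij} \defeq \exp(-c_{ij} / \mu)$ and $u^\star, v^\star \in \R_{>0}$ are dual multipliers. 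Satisfying the row and column sums of $x^\star$ is then literally the matrix scaling problem on $\mk$ with target marginals $d$, so the Sinkhorn solution is recoverable from any accurate scaling.

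Next I would invoke the solver of \cite{Cohen17}, whose complexity scales as $\tO(m)$ per iteration and with iteration count controlled by the $\ell_\infty$-diameter of a box containing the optimal log-scalings together with a polylogarithmic dependence on the target accuracy. The key quantitative input is a bound on this diameter: since $\mk$ has entries in $[\exp(-\norm{c}_\infty/\mu), 1]$, standard arguments (based on comparing $x^\star$ to any feasible matrix such as $d_1 d_2^\top$) yield $\norm{\log u^\star}_\infty, \norm{\log v^\star}_\infty = O(\norm{c}_\infty/\mu)$, immediately giving the claimed $\tO(m \cdot \norm{c}_\infty/\mu)$ runtime to drive the produced $\tx = \mathrm{diag}(u) \mk \mathrm{diag}(v)$ to $\ell_1$ marginal error $\mathrm{poly}(\eps / \norm{c}_\infty)$.

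To convert the approximate scaling $\tx$ to an $\eps$-approximate minimizer of \eqref{eq:sinkhornintro}, I would apply the rounding procedure $\OTRound$ of Altschuler-Weed-Rigollet, which projects $\tx$ onto the exact marginal polytope $\{x \geq 0 : \mb^\top x = d\}$ by cheap row/column rescalings plus a rank-$1$ correction, at the cost of modifying entries by an amount bounded in terms of $\lone{\mb^\top \tx - d}$. The approximation error would then be split into three pieces: (i) suboptimality of the scaled $\tx$ against $x^\star$ in the smooth strongly convex objective, bounded via KKT gap and the diameter estimate; (ii) the increase in $c^\top x$ under rounding, which is at most $O(\norm{c}_\infty) \cdot \lone{\mb^\top \tx - d}$; and (iii) the change in $\mu H(x)$ under rounding, controlled by keeping perturbations multiplicative.

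The main obstacle is exactly piece (iii): naive rounding can push coordinates toward $0$, causing $\mu H(x)$ to blow up logarithmically. I expect to handle this by setting the matrix scaling precision polynomially small in $\eps / \norm{c}_\infty$, which the polylogarithmic accuracy dependence of \cite{Cohen17} accommodates without changing the leading runtime, and by using the structural lower bound $x^\star_{ij} \geq \exp(-O(\norm{c}_\infty/\mu))$ from the diameter argument to ensure $\OTRound$ only perturbs each coordinate by a multiplicative $1 \pm \mathrm{poly}(\eps/\norm{c}_\infty)$ factor, so that the change in $\mu H(x)$ is absorbed into the overall $\eps$ budget.
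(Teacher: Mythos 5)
Your overall route --- the Gibbs/matrix-scaling correspondence, the box-constrained Newton solver of \cite{Cohen17}, and rounding via $\OTRound$ --- matches the paper's proof of Theorem~\ref{lem:ms-sinkhorn}, but two of your steps have genuine gaps. First, the diameter bound $\norm{\log u^\star}_\infty, \norm{\log v^\star}_\infty = O(\norm{c}_\infty/\mu)$ is not true for arbitrary demands: the bounded near-optimal scalings also depend on $\log(1/\min_i d_i)$, and if some demand entry is extremely small or zero, no bounded scaling exists at all, so the hypothesis of Proposition~\ref{prop:ms-solver} (existence of an approximate scaling with $\norm{u}_\infty,\norm{v}_\infty \le B$) can fail outright. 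The paper handles this by padding the demands entrywise to roughly $\eps/(m\norm{c}_\infty)$ (Definition~\ref{def:pad}) before invoking the scaling solver, which yields $B = O(\norm{c}_\infty/\mu + \log(m\norm{c}_\infty/\eps))$, and then pays for the demand perturbation via a stability-in-demands bound (Lemma~\ref{lem:optdbound}, used through Lemma~\ref{lem:approxsinkhorn}): the optimal value moves by at most $O(\norm{c}_\infty + \mu\log m)$ times the $\ell_1$ change in demands. Some version of this preprocessing and stability argument is needed; your proposal omits it.

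Second, your mechanism for piece (iii) would fail as stated. The optimal entries are only bounded below by $\exp(-\Theta(\norm{c}_\infty/\mu))$, which can be super-polynomially small (e.g.\ when $\norm{c}_\infty/\mu = m^{1/2}$), while $\OTRound$ applies an \emph{additive} rank-one correction whose entries have size governed by the marginal deficit, i.e.\ only $\mathrm{poly}(\eps/\norm{c}_\infty, 1/m)$ if you request polynomially small scaling accuracy. Such an additive change is not a $1 \pm \mathrm{poly}(\eps/\norm{c}_\infty)$ multiplicative perturbation of entries that small; forcing multiplicative control would require scaling accuracy of order $\exp(-\Theta(\norm{c}_\infty/\mu))$, and since the runtime of Proposition~\ref{prop:ms-solver} scales with $\log^2(1/\text{accuracy})$, this would introduce extra $\mathrm{poly}(\norm{c}_\infty/\mu)$ factors beyond the claimed $\tO(m \norm{c}_\infty/\mu)$. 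The paper avoids multiplicative control entirely: Lemma~\ref{lem:entropyvsl1} gives the purely additive bound $H(\tilde{x}) - H(x) \le 33\log(m)\norm{\tilde{x}-x}_1 + m^{-30}$ for any $x,\tilde{x}\in\Delta^m$, with no entrywise lower bounds, so the entropy change under rounding is absorbed at the cost of a $\log m$ factor times the $\ell_1$ movement; combining this with the observation that the computed scaling is \emph{exactly} optimal for its own marginals (plus Lemma~\ref{lem:optdbound}) closes the error budget while keeping the scaling accuracy only polynomially small. Your piece (i), argued via ``KKT gap'' in a ``smooth'' objective, is also loose for the same reason (the entropy gradient blows up near the boundary); the exact-optimality-for-perturbed-marginals argument sidesteps this.
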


We present both Theorems~\ref{thm:sinkhornmin} and~\ref{lem:ms-sinkhorn} because they follow from somewhat incomparable solver frameworks. While the runtime of Theorem~\ref{thm:sinkhornmin} is dominated by that of Theorem~\ref{lem:ms-sinkhorn}, it is a direct application of a more general solver (Theorem~\ref{thm:sherman}), which also applies to regularized regression or box-simplex objectives where the optimum does not have a characterization as a matrix scaling. Moreover, the algorithm of Theorem~\ref{lem:ms-sinkhorn} is a second-order method which leverages recent advances in solving Laplacian systems, and hence may be less practical than its counterpart in Theorem~\ref{thm:sinkhornmin}. Finally, we note that due to subtle parameterization differences for our DDBM applications, the DDBM runtime attained by using our box-simplex solver within our reduction framework is more favorable on sparse graphs $(m \ll n^2)$, compared to that obtained by the matrix scaling solver.

Since our results on optimizing Sinkhorn distances follow from machinery developed in this paper and \cite{Cohen17}, we defer them to Appendix~\ref{sec:apprx}. For consistency with the optimal transport literature and ease of presentation, we state our results in Appendix~\ref{sec:apprx} for instances of \eqref{eq:sinkhornintro} corresponding to complete bipartite graphs. However, our algorithms based on Theorem~\ref{thm:sherman} extend naturally to sparse graphs, as do the matrix scaling algorithms of \cite{Cohen17} as discussed in that work. We provide applications of these subroutines to the DDBM problem on sparse graphs in Section~\ref{sec:dynamic} and Appendix~\ref{app:dynamic}.

\notarxiv{
\paragraph{A recent advancement.} Subsequent to the original submission of this paper, a breakthrough result of \cite{chen2022maximum} provided an algorithm which computes high-accuracy solutions to a variety of graph-based optimization objectives in almost-linear time. By using a matrix scaling subroutine provided by \cite{chen2022maximum} within our framework, we obtain a (randomized) DDBM solver running in time $m^{1 + o(1)}\eps^{-2}$, improving Theorem~\ref{thm:main-dec-bs} in the $\eps^{-1}$ dependence at a cost of an $m^{o(1)}$ overhead. We defer a more detailed discussion of \cite{chen2022maximum} and its implications for our work to Appendix~\ref{app:maxflow}.
}
\arxiv{
\paragraph{A recent advancement.} 
Subsequent to the original submission of this paper, a breakthrough result of  \cite{chen2022maximum} provided an algorithm which computes high-accuracy solutions to a variety of graph-based optimization objectives in almost-linear time. One of the many applications of \cite{chen2022maximum} is faster algorithms for computing Sinkhorn distances. Using the algorithm designed in~\cite{chen2022maximum}, in place of~\cite{Cohen17}, yields  a speedup of Theorem~\ref{lem:ms-sinkhorn}, removing the polynomial dependence on $\mu^{-1}$ at the cost of an overhead of $m^{o(1)}$.

\begin{theorem}[informal, see~\Cref{lem:dec-solver-maxflow}]
	For $\eps = \Omega(m^{-3})$ in~\eqref{eq:sinkhornintro} corresponding to a bipartite graph with $m$ edges, there is an algorithm that obtains an $\eps$-approximate minimizer to~\eqref{eq:sinkhornintro} in time $m^{1+o(1)}$. 
\end{theorem}

Further, we show how to use \cite{chen2022maximum} to obtain improved running times for DDBM. By plugging in this Sinkhorn distance computation algorithm into our DDBM framework (and showing it is compatible with the form of our subproblems), we obtain an improved (randomized) DDBM solver in terms of the $\eps^{-1}$ dependence at the cost of a $m^{o(1)}$ overhead. 

\begin{theorem}[informal, see Theorem~\ref{thm:main-dec-sinkhorn-maxflow}]
	Let $G = (V, E)$ be bipartite, $|V|=n$, $|E|=m$, and $\eps \in [\Omega(m^{-3},1)$. There is a randomized algorithm with success probability $1-n^{-\Omega(1)}$  maintaining an $\eps$-approximate maximum matching in an (adaptive) decremental stream running in time $m^{1+o(1)}\epsilon^{-2}$.
\end{theorem}

We believe this result highlights the versatility and utility of our framework for solving DDBM. Given the varied technical machinery involved (and differing runtimes and use of randomness) our Sinkhorn algorithms, i.e.\ our new regularized box-simplex solver (see~\Cref{sec:framework}), \cite{Cohen17}, and \cite{chen2022maximum}, we provide statements of each solver instantiated by these different machinery in~\Cref{ssec:framework-alg}. However, due to the subsequent nature of \cite{chen2022maximum} with respect to our work, and because DDBM running times using  \cite{Cohen17} are no better than those achieved in our paper using our regularized box-simplex solver, the proofs of our DDBM solver based on~\cite{chen2022maximum} are deferred to~\Cref{ssec:dynamic-redx-maxflow}.
}

\subsection{Prior work}\label{ssec:mainrelated}

\paragraph{Dynamic matching.} Dynamic graph algorithms are an active area of research in the theoretical computer science, see e.g.\ \cite{HHS21,DurfeeGGP19,BernsteinGS20,Kiss21,GaoLP21,GoranciHP18,AbrahamDKKP16,GutenbergW20b,Goranci21,GoranciRST21,NanongkaiSW17,GoranciHP17a} and references therein. These algorithms have been developed under various dynamic graph models, including the additions and deletions on \emph{vertices}~or \emph{edges}, and \emph{oblivious adversary} model where the updates to the graph are fixed in advance (i.e.\ do not depend on randomness used by the algorithm), and the \emph{adaptive adversary} model in which updates are allowed to respond to the algorithm, potentially adversarially.  We focus on surveying deterministic dynamic matching algorithms with edge streams, which perform equally well under oblivious and adaptive updates; we remark the dynamic matching algorithms have also been studied under vertex addition and deletion model in~\cite{bosek2014online}. For a more in-depth discussion and corresponding developments in other settings, see \cite{Wajc20}.

Many variants of the particular dynamic problem of maintaining matchings in bipartite graphs have been studied, such as the \emph{fully dynamic} \cite{GuptaP13}, \emph{incremental} \cite{Gupta14, GrandoniLSSS19}, and \emph{decremental} \cite{BernsteinGS20} cases. However, known conditional hardness results \cite{HenzingerKNS15, KopelowitzPP16} suggest that attaining a polylogarithmic update time for maintaining an exact fully dynamic matching may be unattainable, prompting the study of restricted variants which require maintaining an approximate matching. The works most relevant to our paper are those of \cite{Gupta14}, which provides a $\tO(\eps^{-4})$ amortized update time algorithm for computing an $\eps$-approximate matching for incremental bipartite matching, and \cite{BernsteinGS20}, which achieves a similar $\tO(\eps^{-4})$ update time for decremental bipartite matching. Our main DDBM results, stated in Theorems~\ref{thm:main-dec-bs} and~\ref{thm:main-dec-sinkhorn}, improve upon \cite{BernsteinGS20} by roughly a factor of $\eps^{-1}$ in the decremental setting.

\paragraph{Box-simplex games.}
Box-simplex games, as well as $\ell_1$ and $\ell_\infty$ regression, are equivalent to linear programs in full generality \cite{LeeS15}, have widespread utility, and hence have been studied extensively by the continuous optimization community. Here we focus on discussing \emph{near-linear time} approximation algorithms, i.e.\ algorithms which run in time near-linear in the sparsity of the constraint matrix, potentially depending inverse polynomially on the desired accuracy. Interior point methods solve these problems with polylogarithmic dependence on accuracy, but are second-order and often encounter polynomial runtime overhead in the dimension (though there are exceptions, e.g.\ \cite{BrandLLSS0W21} and references therein).

A sequence of early works e.g.\ \cite{Nemirovski04, Nesterov05, Nesterov07} on primal-dual optimization developed first-order methods for solving games of the form \eqref{eq:boxsimplexintro}. These works either directly operated on the objective \eqref{eq:boxsimplexintro} as a minimax problem, or optimized a smooth approximation to the objective recast as a convex optimization problem. While these techniques obtained iteration complexities near-linear in the sparsity of the constraint matrix $\ma$, they either incurred an (unaccelerated) $\eps^{-2}$ dependence on the accuracy $\eps$, or achieved an $\eps^{-1}$ rate of convergence at the cost of additional dimension-dependent factors. This was due to the notorious ``$\ell_\infty$ strong convexity barrier'' (see Appendix A, \cite{SidfordTi18}), which bottlenecked classical acceleration analyses over an $\ell_\infty$-constrained domain. \cite{Sherman17} overcame this barrier by utilizing the primal-dual structure of \eqref{eq:boxsimplexintro} through a technique called ``area convexity'', obtaining a $\tO(\eps^{-1})$-iteration algorithm. Since then, \cite{CohenST21} demonstrated that fine-grained analyses of the classical algorithms of \cite{Nemirovski04, Nesterov07} also obtain comparable rates for solving \eqref{eq:boxsimplexintro}. Finally, we mention that area convexity has found applications in optimal transport and positive linear programming \cite{JambulapatiST19, BoobSW19}.

\paragraph{Sinkhorn distances.}
Since \cite{Cuturi13} proposed Sinkhorn distances for machine learning applications, a flurry of work has aimed at developing algorithms with faster runtimes for \eqref{eq:sinkhornintro}. A line of work by \cite{AltschulerWR17, DvurechenskyGK18, LinHJ19} has analyzed the theoretical guarantees of the classical Sinkhorn matrix scaling algorithm for this problem, due to the characterization of its solution as a diagonal rescaling of a fixed matrix. These algorithms obtain rates scaling roughly as $\tO(\norm{c}_\infty^2 \eps^{-2})$ for solving \eqref{eq:sinkhornintro} to additive accuracy $\eps$. Perhaps surprisingly, to our knowledge no guarantees for solving \eqref{eq:sinkhornintro} which improve as the regularization parameter $\mu$ grows are currently stated in the literature, a shortcoming addressed by this work. Finally, we remark that Sinkhorn iteration has also received extensive treatment from the theoretical computer science community, e.g.\ \cite{LinialSW98}, due to connections with algebraic complexity; see \cite{GargGOW20} for a recent overview of these connections. %

\section{Preliminaries}\label{sec:prelims}

\paragraph{General notation.} We denote $[n] \defeq \{1,2,\ldots, n\}$ and let $\0$ and $\1$ denote the all-$0$ and all-$1$ vectors. Given $v\in\R^d$, $v_i$ or $[v]_i$ denotes the $i^{\text{th}}$ entry of $v$, and for any subset $E\subseteq [d]$ we use $v_E$ or $[v]_E$ to denote the vector in $\R^d$ zeroing out $v$ on entries outside of $E$. We use $([v]_i)_+ = \max([v]_i,0)$ to denote the operation truncating negative entries. We use $v\circ w$ to denote elementwise multiplication between any $v,w\in\R^d$.  Given matrix $\ma\in\R^{m\times n}$, we use $\ma_{ij}$ to denote its $(i,j)^{\text{th}}$ entry, and denote its $i^{\text{th}}$ row and $j^{\text{th}}$ column by $\ai$ and $\aj$ respectively; its nonzero entry count is $\nnz(\ma)$. We use $\diag{v}$ to denote the diagonal matrix where $[\diag{v}]_{ii} = v_i$, for each $i$. Given two quantities $M$ and $M'$, for any $c>1$ we say $M$ is a $c$-approximation to $M'$ if it satisfies $\frac{1}{c}M'\le M\le cM'$. For $\eps\ll 1$, we say $M$ is an $\eps$(-multiplicative)-approximation of $M'$ if $(1-\eps)M'\le M\le (1+\eps)M'$. Throughout the paper, we use $|\ma|$ to denote taking the elementwise absolute value of a matrix $\ma$, and $v^2$ to denote the elementwise squaring of a vector $v$ when clear from context.

\paragraph{Norms.} $\norm{\cdot}_p$ denotes the $\ell_p$ norm of a vector or corresponding operator norm of a matrix. In particular, $\norm{\ma}_\infty = \max_{i}\norm{\ma_{i:}}_1$. We use $\norm{\cdot}$ interchangeably with $\norm{\cdot}_2$. We use $\Delta^m$ to denote an $m$-dimensional simplex, i.e.\ $x\in\Delta^m \iff x\in\R^d_{\ge0},\norm{x}_1=1$.

\paragraph{Graphs.} A graph $G=(V,E)$ has vertices $V$ and edges $E$; we abbreviate $n \defeq |V|$ and $m \defeq |E|$ whenever the graph is clear from context. For bipartite graphs,  $V = L\cup R$ denotes the bipartition. We let $\mb\in\{0,1\}^{E\times V}$ be the (unsigned edge-vertex) incidence matrix with $\mb_{ev} = 1$ if $v$ is an endpoint of $e$ and $\mb_{ev} = 0$ otherwise. 

\paragraph{Bregman divergence.} Given any convex distance generating function (DGF) $q(x)$, we use $V^q_{x'}(x) = q(x)-q(x')-\inprod{\nabla q(x')}{x-x'}\ge0$ as its induced Bregman divergence. When the DGF is clear from context, we abbreviate $V \defeq V^q$. By definition, $V$ satisfies
\begin{equation}\label{eq:three-point}
	\inprod{-\nabla V_{x'}(x)}{x-u} = V_{x'}(u) - V_{x}(u)-V_{x'}(x)~\text{for any}~x,x',u.
\end{equation}

\paragraph{Computational model.} We use the standard word RAM model, where one can perform each basic arithmetic operations on $O(\log n)$-bit words in constant time.  %

\section{Dynamic decremental bipartite matching}~\label{sec:dynamic}
Here we provide a reduction from maintaining an approximately maximum matching in a decremental bipartite graph to solving regularized matching problems to sufficiently high precision. In \Cref{sec:dec_framework} we give this framework and then, in \Cref{ssec:solver}, we provide various instantiations of our framework based on different solvers, to demonstrate its versatility.

\subsection{DDBM framework}
\label{sec:dec_framework}

Here we provide our general framework for solving DDBM, which assumes that for bipartite $G = (V,E)$ and approximate matching value $M$ there is a canonical regularized matching problem with properties given in \Cref{def:canonical}; we later provide multiple such examples. Throughout this section, $\mathrm{MCM}(E)$ denotes the size of the maximum cardinality matching on edge set $E$; the vertex set $V$ is fixed throughout, so we omit it in definitions.

\begin{definition}[Canonical regularized objective]\label{def:canonical}

Let $G = (V, E_0)$ be a bipartite graph and $M \geq 0$ be an $8$-approximation of $\MCM(E_0)$. For all $E \subseteq E_0$ with $\MCM(E) \ge \frac M 8$, let $f_{M,E} : \R_{\ge0}^{E}\rightarrow \R$,
 \begin{equation}\label{eq:nuxdef}
 	\nu^E \defeq \min_{x \in \R_{\ge 0}^E} f_{M, E}(x)
 	\text{ , and }
 	x^E \defeq \argmin_{x \in \R_{\ge 0}^E} f_{M, E}(x).\end{equation}
 We call the set of $\{f_{M,E}\}_{\MCM(E) \ge \frac M 8}$ a family of $(\eps,\beta)$-\emph{canonical regularized objectives} (CROs) for $G(E_0)$ and $M$ if the following four properties hold.

	\begin{enumerate}
	\item \label{item:def11} For all $E \subseteq E_0$ with $\MCM(E) \ge \frac M 8$, $-\nu^{E}$ is an $\frac \epsilon 8$-approximation of $\MCM(E)$.
	\item \label{item:def12} For all $E \subseteq E_0$ with $\MCM(E) \ge \frac M 8$, $f_{M,E}$ is equivalent to $f_{M,E_0}$ with the extra constraint that $x_{E_0 \setminus E}$ is fixed to $0$ entrywise.
	\item \label{item:def13} For any $E'\subseteq E \subseteq E_0$ with $\MCM(E) \geq \frac M 8$ and $\MCM(E') \ge \frac M 8$, 
	\begin{equation}\label{eq:cond-canonical-one}
		f_{M,E'}(x^{E'}) - f_{M,E}(x^{E})\ge \beta V^H_{x^{E}}(x^{E'})\text{ where }H(x) \defeq \sum_e x_e \log x_e 
	\end{equation}
	\item \label{item:def14} For any $x \in \R^E_{\ge 0}$ such that $8Mx$ is a feasible matching on $(V,E)$, 
	\begin{equation}\label{eq:cond-canonical-two}
		8M\norm{x_{E}}_1-\frac{\epsilon}{128}M \le -f_{M,E}(x)\le 8M\norm{x_{E}}_1+\frac{\epsilon}{128}M.
	\end{equation}
	\end{enumerate}
\end{definition}

We further define the following notion of a canonical solver for a given CRO, which solves the CRO to sufficiently high accuracy, and rounds the approximate solution to feasibility.

\begin{definition}[Canonical solver]\label{def:solver}
	For $(\eps,\beta)$-CROs $\{f_{M,E}\}_{\MCM(E) \ge \frac M 8}$, we call $\mathcal{A}$ an $(\eps,\mathcal{T})$-\emph{canonical solver} if it has subroutines $\solve$ and $\Round$ running in $O(\tsolve)$ time, satisfying:
\begin{enumerate}
	\item $\solve$ finds an approximate solution $\hat{x}^E$ of $f_{M,E}$ satisfying %
	\begin{subequations}
		\begin{align}
\Par{1+\frac{\epsilon}{8}}\nu^E\le f_{M,E}(\hat{x}^E)\le \Par{1-\frac{\epsilon}{8}}\nu^E. \label{eq:cond-apprx-function} \\
\norm{\hat{x}^E - x^E}_1\le \frac{\eps}{1100}.\label{eq:cond-apprx-l1}
\end{align}
 \end{subequations}
\item $\Round$ takes $\hat{x}^E$ and returns $\tilde{x}^E$ where $8M\tilde{x}^E$ is a feasible matching for $G(E)$, and:
\begin{subequations}
	\begin{align}
	\Par{1+\frac{\epsilon}{8}}\nu^E\le f_{M,E}(\tilde{x}^E)\le \Par{1-\frac{\epsilon}{8}}\nu^E.\label{eq:cond-round-guarantee} \\
		\tilde{x}^E\le \hat{x}^E~\text{monotonically}.\label{eq:cond-round-guarantee-l1}
	\end{align}
\end{subequations}
\end{enumerate}
\end{definition}

Our DDBM framework, \Cref{alg:decmatch}, uses CRO solvers satisfying the approximation guarantees of~\Cref{def:solver} to dynamically maintain an approximately maximum matching. We state its correctness and runtime in~\Cref{prop:decmatch}, and defer a proof to~\Cref{app:dec_framework}.

\begin{algorithm}[ht!]
	\DontPrintSemicolon
	\KwInput{$\eps\in(0,\frac 1 8)$, graph $G=(V,E)$ 
	}
	\Parameter{Family of CROs $\{f_{M,E}\}_{E \text{ is MP}}$, an $(\eps,\mathcal{T})$-canonical solver $(\solve, \Round)$}
	Compute $M$ with $\frac{1}{2}\MCM(E)\le M\le \MCM(E)$, via the greedy algorithm\;
	$\hat{x}^E \gets \solve(f_{M,E})$\;\label{line:dec-req-1}
	$\tilde{x}^E \gets \Round(\hat{x}^E)$\;
	$\Mest \gets M$\;
	\While{$\Mest > \frac{1}{4}M$}{\label{line:terminate-dec-outer}
	$E_\del \gets \emptyset$\;
		\While{edge $e$ is deleted and $\|\tilde{x}^E_{E_\del}\|_1\le \frac{\eps}{8}\norm{\tilde{x}^E}_1$}{\label{line:inner-dec-start}\Comment*{recompute whenever the deleted approximate matching size reaches a factor $\Theta(\eps)$}
		$E_\del \gets E_\del \cup\{e\}$\label{line:inner-dec-end}\;
	}
	$E\gets E\setminus E_\del$, $E_\del \gets \emptyset$ \label{line:recompute}\;
	$\hat{x}^E\gets\solve(f_{M,E})$\Comment*{find high-accuracy minimizer of $F_{M,E}$ satisfying~\eqref{eq:cond-apprx-function} and~\eqref{eq:cond-apprx-l1}}\label{line:hatx-dec-approximation}
		$\tilde{x}^E \gets \Round(\hat{x}^E)$\Comment*{round to feasible matching satisfying~\eqref{eq:cond-round-guarantee} and~\eqref{eq:cond-round-guarantee-l1}}
	Compute $\Mest$ with $\frac{1}{2}\MCM(E)\le \Mest \le \MCM(E)$, via the greedy algorithm
		}
	\caption{$\MaintainMatch(\eps, G = (V,E))$}\label{alg:decmatch}
\end{algorithm}

In the following, we let $E_0$ be the original graph's edge set, and $E_1, E_2, \ldots , E_K$ be the sequence of edge sets recomputed in \Cref{line:recompute}, before termination for $E_{K+1}$  on \Cref{line:terminate-dec-outer}.

\begin{restatable}{proposition}{propdecmatch}\label{prop:decmatch}
Let $\eps\in(0, 1)$ and $M\ge0$. Given a family of $(\eps,\beta)$-CROs $\{f_{M,E}\}$ for $G = (V, E_0)$, and an $(\eps,\mathcal{T})$-canonical solver for the family,~\Cref{alg:decmatch} satisfies the following.
\begin{enumerate}
	\item When $\Mest > \frac{1}{4}M$ on \Cref{line:terminate-dec-outer}, where $\Mest$ estimates $\MCM(E_k)$: at any point in the loop of Lines~\ref{line:inner-dec-start} to~\ref{line:inner-dec-end}, $8M \tilde{x}^{E_k}_{E_k \setminus E_{\textup{del}}}$ is an $\eps$-approximate matching of $G(V, E_k \setminus E_{\textup{del}})$.
	\item When $\Mest \leq \frac{1}{4}M$ on \Cref{line:terminate-dec-outer}, where $\Mest$ estimates $\MCM(E)$: $\MCM(E) \le \frac 1 {2} \MCM(E_0)$.
\end{enumerate} 
The runtime of the algorithm is $O(m + \Par{\mathcal{T}+m} \cdot \frac{M}{\beta\eps})$. 
\end{restatable}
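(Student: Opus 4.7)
The plan decomposes the proof into three parts aligned with the proposition statement: correctness inside the inner loop, the $\MCM$-halving upon termination, and the runtime bound. My approach weaves together the four properties of a CRO from Definition~\ref{def:canonical} with the canonical solver guarantees of Definition~\ref{def:solver}.

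For the first claim, feasibility of $8M\tilde{x}^{E_k}_{E_k\setminus E_\del}$ as a matching on $G(V, E_k\setminus E_\del)$ is immediate from feasibility on $E_k$ and the monotonicity of vertex sums under edge restriction. For the approximation ratio, I would chain the following bounds: the $\Round$ objective guarantee \eqref{eq:cond-round-guarantee} gives $-f_{M,E_k}(\tilde{x}^{E_k}) \geq (1-\frac{\eps}{8})(-\nu^{E_k})$; Property~\ref{item:def11} gives $-\nu^{E_k} \geq (1-\frac{\eps}{8})\MCM(E_k)$; and Property~\ref{item:def14} gives $8M\|\tilde{x}^{E_k}\|_1 \geq -f_{M,E_k}(\tilde{x}^{E_k}) - \frac{\eps M}{128}$. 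Absorbing the additive slack via $M \leq 8\MCM(E_k)$ (from the outer-loop invariant $\MCM(E_k) \geq M/8$), combined with the inner-loop invariant $\|\tilde{x}^{E_k}_{E_k\setminus E_\del}\|_1 \geq (1-\frac{\eps}{8})\|\tilde{x}^{E_k}\|_1$ and monotonicity $\MCM(E_k\setminus E_\del) \leq \MCM(E_k)$, yields $8M\|\tilde{x}^{E_k}_{E_k\setminus E_\del}\|_1 \geq (1-\eps)\MCM(E_k\setminus E_\del)$. Claim~2 follows immediately from the greedy algorithm's $2$-approximation: $\MCM(E) \leq 2\Mest \leq M/2 \leq \MCM(E_0)/2$.

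The main work is bounding the number $K$ of recomputations, as each contributes $O(\mathcal{T} + m)$ while all inner-loop bookkeeping sums to $O(m)$. My plan is to telescope the CRO strong-convexity inequality (Property~\ref{item:def13}): $\nu^{E_{k+1}} - \nu^{E_k} \geq \beta V^H_{x^{E_k}}(x^{E_{k+1}})$. Summing and applying Property~\ref{item:def11} together with $\MCM(E_0) \leq 2M$ (from the initial greedy) and $\MCM(E_K) = \Theta(M)$ (outer-loop invariant) bounds the telescoped right-hand side by $O(M)$. For the per-term lower bound, I observe that $x^{E_{k+1}}$ is zero on $E_\del = E_k\setminus E_{k+1}$ by Property~\ref{item:def12}, so unrolling the unnormalized KL divergence (using $0\log 0 = 0$ and termwise nonnegativity) gives $V^H_{x^{E_k}}(x^{E_{k+1}}) \geq \|x^{E_k}_{E_\del}\|_1$. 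I then lower bound this mass using three ingredients: the inner-loop exit condition $\|\tilde{x}^{E_k}_{E_\del}\|_1 \geq \frac{\eps}{8}\|\tilde{x}^{E_k}\|_1$, the $\Round$ monotonicity $\tilde{x}^{E_k} \leq \hat{x}^{E_k}$, and the $\solve$ $\ell_1$ bound~\eqref{eq:cond-apprx-l1}, yielding $\|x^{E_k}_{E_\del}\|_1 \geq \frac{\eps}{8}\|\tilde{x}^{E_k}\|_1 - \frac{\eps}{1100}$. Since $\|\tilde{x}^{E_k}\|_1 \approx \MCM(E_k)/(8M) = \Theta(1)$ throughout the outer loop, this is $\Omega(\eps)$, giving $K = O(M/(\beta\eps))$; multiplying by the per-recomputation cost and adding the initial $O(m)$ greedy yields the claimed runtime.

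The main technical obstacle will be threading the three approximate optimizers $x^{E_k}$, $\hat{x}^{E_k}$, and $\tilde{x}^{E_k}$ through the Bregman lower bound with tight enough constants, so that the $\Omega(\eps)$ gap witnessed by $\tilde{x}^{E_k}$ in the inner-loop exit condition is not consumed by the monotone rounding step or the $\solve$'s $\frac{\eps}{1100}$ $\ell_1$ slack. This is precisely the role of the tight tolerance in~\eqref{eq:cond-apprx-l1}, and verifying that this tolerance propagates cleanly into the strong-convexity lower bound is the critical calculation of the proof.
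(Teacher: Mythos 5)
Your proposal is correct and follows essentially the same route as the paper's proof: the same three-part decomposition, the same chaining of \eqref{eq:cond-round-guarantee}, Item~\ref{item:def11}, and Item~\ref{item:def14} with the outer-loop invariant $\MCM(E_k)=\Omega(M)$ for the approximation claim, and the same potential/telescoping argument via Item~\ref{item:def13}, restriction of the entropic divergence to $E_\del$ (where $x^{E_{k+1}}$ vanishes by Item~\ref{item:def12}), and the triangle-inequality/monotonicity threading of $x^{E_k},\hat{x}^{E_k},\tilde{x}^{E_k}$ against the $\frac{\eps}{1100}$ slack of \eqref{eq:cond-apprx-l1} for the $O(M/(\beta\eps))$ recomputation bound. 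The only differences are cosmetic (telescoping $\nu^{E_k}$ rather than $f_{M,E_k}(x^{E_k})$, and using the $M/8$ rather than $M/4$ lower bound on $\MCM(E_k)$, which still absorbs the additive $\frac{\eps M}{128}$ slack).
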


\begin{proof-sketch} We summarize proofs of the two properties, and our overall runtime bound.

	\textbf{Matching approximation properties.} By the greedy matching guarantee in~\Cref{line:terminate-dec-outer}, it  holds that for any $E_k$ (the edge set recomputed in the $k^{\text{th}}$ iteration of \Cref{line:recompute} before termination), its true matching size $\MCM(E_k)$ must be no smaller than $\frac M 4$. Consequently, we can use the CRO family to approximate the true matching size up to $O(\epsilon)$ multiplicative factors, and by the guarantee~\eqref{eq:cond-round-guarantee}, this implies $8M \tilde{x}^{E_k}_{E_k \setminus E_{\textup{del}}}$ is an $O(\epsilon)$ approximation of the true matching size. Also, our algorithm's termination condition and the guarantee on $\Mest$ immediately imply $\MCM(E_{K+1})\le \frac{1}{2}\MCM(E_0)$.
	
\textbf{Iteration bound.} We use a potential argument. Given $E_{k+1}\subset E_k$, corresponding to consecutive edge sets requiring recomputation, we use the following inequalities:
\begin{equation}\label{eq:dec-progress}
	\begin{aligned}
f_{M,E_{k+1}}&\Par{x^{E_{k+1}}} - f_{M,E_{k}}\Par{x^{E_k}} \stackrel{(i)}{\ge} \beta V^H_{x^{E_k}}\Par{x^{E_{k+1}}}\\
&  \stackrel{(ii)}{\ge} \beta\sum_{i\in E_\del} \left(
[x^{{E_{k+1}}}]_i\log [x^{{E_{k+1}}}]_i - [x^{E_k}]_i\log [x^{E_k}]_i - \Par{1+\log [x^{E_k}]_i}\cdot \Par{[x^{{E_{k+1}}}]_i-[x^{E_k}]_i}
\right)\\
& \stackrel{(iii)}{=} \beta \sum_{i\in E_\del}[x^{E_k}]_i \stackrel{(iv)}{\ge} \beta \left(\norm{\hat{x}^{E_k}_{E_\del}}_1 - \norm{\hat{x}^{E_k} - x^{E_k}}_1\right) \stackrel{(v)}{\ge} \beta \left(\norm{\tilde{x}^{E_k}_{E_\del}}_1 - \norm{\hat{x}^{E_k} - x^{E_k}}_1\right),
\end{aligned}
\end{equation}
where $(i)$ uses the third property in~\eqref{eq:cond-canonical-one}, $(ii)$ uses convexity of the scalar function $c \log c$, $(iii)$ uses that $x^{E_{k+1}}_{E_\del}$ is $0$ entrywise, $(iv)$ uses the triangle inequality, and $(v)$ uses the monotonicity property~\eqref{eq:cond-round-guarantee-l1}. Moreover, between recomputations we have that the $\ell_1$-norm of deleted edges satisfies $\norm{\tilde{x}^{E_k}_{E_\del}}_1 = \Omega(\eps)$, and our solver guarantees $\norm{\hat{x}^{E_k} - x^{E_k}}_1 = O(\eps)$. Since the overall function decrease before termination is $O(M)$ given the stopping criterion in~\Cref{line:terminate-dec-outer}, the algorithm terminates after $O(\frac M {\beta\eps})$ recomputations.
\end{proof-sketch}

Using this generic DDBM framework, we obtain improved decremental matching algorithms by defining families of CROs $f_{M,E(G)}$ with associated $(\eps,\mathcal{T})$-canonical solvers satisfying~\Cref{def:solver}.  In~\Cref{ssec:dynamic-redx-bs}, we give a regularized primal-dual construction of $f_{M,E}$, and adapt the solver of~\Cref{sec:framework} to develop a canonical solver for the family (specifically, as the subroutine $\solve$). Similarly, in \Cref{ssec:dynamic-redx-high}, we show how to construct an appropriate family of $f_{M,E}$ using Sinkhorn distances, and apply the matrix scaling method presented in~\Cref{ssec:high-OT} (based on work of \cite{Cohen17}) to appropriately instantiate $\solve$. 

While both algorithms, as stated, only maintain an approximate fractional matchings, this fractional matching can be rounded at any point to an explicit integral matching via e.g.\ the cycle-canceling procedure of Proposition 3 in~\cite{AssadiJJST22} in time $O(m\log m)$, or dynamically (albeit at amoritized cost $\tO(\eps^{-4})$ using \cite{Wajc20}). Moreover, our algorithm based on the regularized box-simplex solver~(\Cref{thm:main-dec-bs}) is deterministic, and both work against an adaptive adversary. Repeatedly applying Proposition~\ref{prop:decmatch}, we obtain the following overall claim.

\begin{corollary}\label{coro:ddbm-reduce}
Let $G = (V, E(G))$ be bipartite, and suppose for any subgraph  $(V, E_0 \subseteq E(G))$, we are given a family of $(\eps, \beta)$-CROs and an $(\eps, \tsolve)$-canonical solver for the family. There is a deterministic algorithm maintaining a fractional $\eps$-approximate matching in a dynamic bipartite graph with adversarial edge deletions, running in time
$O\Par{m\log^3 n + (\tsolve + m)\cdot \frac{M}{\beta\eps} \cdot \log n}$.
\end{corollary}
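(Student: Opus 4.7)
The plan is to apply Proposition \ref{prop:decmatch} repeatedly in a sequence of $O(\log n)$ epochs, each handling matching sizes within a particular geometric scale. The central observation is that whenever one invocation of \Cref{alg:decmatch} exits its outer loop (Line~\ref{line:terminate-dec-outer}), the second guarantee of Proposition \ref{prop:decmatch} ensures that the true maximum matching has shrunk by at least a factor of $2$ relative to its size at the start of that invocation. Since the initial matching value is at most $n$, at most $O(\log n)$ such halvings can occur before the matching size falls below any fixed constant threshold (after which further maintenance is trivial).

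Concretely, at the beginning of each epoch I compute a greedy $2$-approximation $M$ of the current maximum matching in $O(m)$ time, and then invoke \Cref{alg:decmatch} with parameters $\eps$, the provided family of $(\eps,\beta)$-CROs restricted to the current edge set, and the $(\eps,\tsolve)$-canonical solver. By the first guarantee of Proposition \ref{prop:decmatch}, at every intermediate moment during the epoch the algorithm outputs a fractional $\eps$-approximate matching. When the epoch concludes (the internal estimate $\Mest$ dropping below $M/4$), I recompute $M$ on the remaining graph and begin a new epoch. Since the per-epoch $M$ values form a geometrically decreasing sequence dominated by the initial matching value, and each epoch costs $O(m + (\tsolve + m) \cdot M / (\beta\eps))$ by Proposition \ref{prop:decmatch}, summing over the $O(\log n)$ epochs gives a total cost of $O(m \log n + (\tsolve + m) \cdot M \log n / (\beta \eps))$. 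The extra $\log^2 n$ factor on the $m$ term in the stated bound absorbs polylogarithmic overhead from auxiliary data structures (e.g.\ a balanced search tree keyed by edges with prefix-sum weights) needed to service the inner-loop test $\|\tilde{x}^E_{E_\del}\|_1 \le \tfrac{\eps}{8}\|\tilde{x}^E\|_1$ in amortized polylogarithmic time per deletion.

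The only point requiring care is that the output remains an $\eps$-approximate matching at the transitions between consecutive epochs. This is immediate: each new epoch begins with a fresh call to $\solve$ followed by $\Round$, and the guarantees~\eqref{eq:cond-round-guarantee} together with Property~\ref{item:def11} of Definition~\ref{def:canonical} imply the returned $\tilde{x}^E$ is a valid $\eps$-approximate matching on the current edge set before any further deletions occur. Determinism and robustness to an adaptive adversary are inherited from the underlying solver and Proposition \ref{prop:decmatch}. The main obstacle in this corollary — the stability analysis of the CRO solution under deletions — has already been discharged in proving Proposition \ref{prop:decmatch}, so what remains here is essentially a doubling argument combined with a clean summation of per-epoch costs.
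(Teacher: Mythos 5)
Your proposal is correct and matches the paper's own argument: Corollary~\ref{coro:ddbm-reduce} is proved there exactly by repeatedly invoking Proposition~\ref{prop:decmatch}, using its second guarantee to argue the maximum matching at least halves between invocations, so only $O(\log n)$ invocations occur before $\MCM(E)=0$, and summing the per-invocation costs gives the stated bound. Your extra remarks (geometric decrease of the per-epoch $M$, approximation at epoch transitions via~\eqref{eq:cond-round-guarantee} and Item~\ref{item:def11}) are consistent elaborations of the same approach, and your $m\log n$ overhead is only tighter than the claimed $m\log^3 n$.
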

\begin{proof}
It suffices to repeatedly apply \Cref{prop:decmatch} until we can safely conclude $\MCM(E) = 0$, which by the second property can only happen $O(\log n)$ times.
\end{proof}

\subsection{DDBM solvers}\label{ssec:solver}

In this section, we demonstrate the versatility of the DDBM framework in~\Cref{sec:dec_framework} by instantiating it with different classes of CRO families, and applying different canonical solvers on these families. By using regularized box-simplex game solvers developed in this paper (see~\Cref{sec:framework}), we give an $\widetilde{O}(m \epsilon^{-3})$ time algorithm for maintaining a $\epsilon$-multiplicatively approximate fractional maximum matching in a $m$-edge bipartite graph undergoing a sequence of edge deletions, improving upon the previous best running time of $\widetilde{O}(m \epsilon^{-4})$~\cite{BernsteinGS20}. We also use our framework to obtain different decremental matching algorithms with runtime $\widetilde{O}(n^2 \epsilon^{-3})$ and ${O}(m^{1+o(1)} \epsilon^{-2})$, building on recent algorithmic developments in the literature on matrix scaling. The former method uses box-constrained Newton's method solvers for matrix scaling problems in~\cite{Cohen17} (these ideas are also used in~\Cref{ssec:high-OT}), and the latter uses a recent almost-linear time high-accuracy Sinkhorn-objective solver in~\cite{chen2022maximum}, a byproduct of their breakthrough maximum flow solver. We defer readers to corresponding sections in~\Cref{app:dynamic} for omitted proofs.

Given a bipartite graph initialized at $G=(V,E_0)$ with unsigned incidence matrix $\mb\in\{0,1\}^{E\times V}$; we denote $n\defeq |V|$ and $m\defeq|E_0|$. The first family of CROs one can consider is the regularized box-simplex game objective in form:
\begin{equation}\label{eq:def-matching-abstract}
\begin{aligned}
\min_{(x, \xi)\in  \Delta^{E+1}}\max_{y\in[0,1]^{V}} f_{M,E}(x,\xi,y) &\defeq  -\1_E^\top(8Mx) - y^\top \Par{8M\mb^\top x-\1} +\gamma\x H(x, \xi)+\gamma\y \Par{y^2}^\top \mb^\top x,\\
&\text{where}~\gamma\x = \widetilde{\Theta}\left(\eps M\right),~\gamma\y = \Theta\left(\eps M\right),~\text{ and }\\
f_{M,E}(x) & \defeq \min_{\xi|(x,\xi)\in\Delta^{E+1}}\max_{y\in[0,1]^V}f_{M,E}(x,\xi,y)
\end{aligned}
\end{equation}

We prove this is a family of $(\epsilon,\gamma\x)$-CROs (\Cref{lem:canonical-bs}). The canonical solver for this family uses $\Remove$ (Algorithm 4, \cite{AssadiJJST22}) as $\Round$ and uses the regularized box-simplex games developed later in this paper (see~\Cref{sec:framework}) as $\solve$, which finds an $\eps$-approximate solution of~\eqref{eq:def-matching-abstract} in time $\widetilde{O}(\tfrac m \eps)$.  Combining all these components with the DDBM framework in~\Cref{coro:ddbm-reduce} leads to the following DDBM solver based on regularized box-simplex games.

\begin{restatable}{theorem}{thmmaindecbs}\label{thm:main-dec-bs}
Let $G = (V, E)$ be bipartite and let $\eps \in [\Omega(m^{-3}), 1)$. There is a deterministic algorithm for the DDBM problem which maintains an $\eps$-approximate matching, based on solving regularized box-simplex games, running in time
$O\Par{m \epsilon^{-3} \log^{5}n}$.
\end{restatable}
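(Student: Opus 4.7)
The plan is to instantiate the DDBM framework of~\Cref{coro:ddbm-reduce} by (i) verifying that $\{f_{M,E}\}$ from~\eqref{eq:def-matching-abstract} is an $(\eps,\gamma\x)$-family of CROs with $\gamma\x=\widetilde\Theta(\eps M)$, and (ii) building an $(\eps,\widetilde O(m/\eps))$-canonical solver using~\Cref{thm:shermaninf} for $\solve$ together with the $\Remove$ subroutine of~\cite{AssadiJJST22} for $\Round$. With $M=O(n)$, $\tsolve=\widetilde O(m/\eps)$, and $\beta=\gamma\x$, the dominant $M/(\beta\eps)$ factor in~\Cref{coro:ddbm-reduce} evaluates to $\widetilde O(\eps^{-2})$, giving total runtime $O(m\eps^{-3}\log^{5}n)$; the $\log^5 n$ absorbs the $\log^3 n$ of~\Cref{coro:ddbm-reduce}, the solver's poly-log dependence on $1/\sigma$, and the $\widetilde\Theta$ in $\gamma\x$.

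For the CRO verification, Properties~\ref{item:def11},~\ref{item:def12}, and~\ref{item:def14} of~\Cref{def:canonical} follow from direct bookkeeping. If $8Mx$ is a feasible matching then $\mb^\top(8Mx)\le\1$, so the optimal $y$ is $\0$ and $-f_{M,E}(x)$ reduces to $8M\|x_E\|_1$ plus the two regularizers $\gamma\x H(x,\xi)$ and $\gamma\y(y^2)^\top\mb^\top x$; each is bounded by $O(\eps M/128)$ on the simplex (using $|H|\le\log(m+1)$), giving~\eqref{eq:cond-canonical-two}, from which the $\frac\eps 8$-approximation of $\MCM(E)$ follows via standard matching LP duality. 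Property~\ref{item:def12} is immediate from the form of~\eqref{eq:def-matching-abstract}. For Property~\ref{item:def13}, after max-min-ing out $y$ and $\xi$, the function $f_{M,E}(x)$ remains convex in $x$ and retains $\gamma\x$ times the entropic regularizer (which is separable from the $y$-subproblem). Since~\ref{item:def12} gives $f_{M,E'}(x^{E'})=f_{M,E}(x^{E'})$ after zero-padding, first-order optimality of $x^E$ together with the three-point identity~\eqref{eq:three-point} for $V^H$ yields
\[
f_{M,E'}(x^{E'})-f_{M,E}(x^E)\;\ge\;\gamma\x\,V^H_{x^E}(x^{E'}),
\]
which is~\eqref{eq:cond-canonical-one}.

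For the canonical solver, we rescale~\eqref{eq:def-matching-abstract} by $1/(16M)$ so the bilinear matrix has $\|\cdot\|_\infty\le 1$; the induced regularization parameters $\mu=\gamma\x/(16M)=\widetilde\Theta(\eps)$ and $\sigma_{\mathrm{reg}}=\gamma\y/(16M)=\Theta(\eps)$ match the form of~\Cref{thm:shermaninf}, which returns in time $\widetilde O(m/\sqrt{\mu\sigma_{\mathrm{reg}}})=\widetilde O(m/\eps)$ a candidate $\hat x^E$ with duality gap at most any chosen $\sigma\ge\textup{poly}(m^{-1})$. The entropic strong convexity shown above converts gap $\sigma$ into $V^H_{x^E}(\hat x^E)\le\sigma/\gamma\x$, and Pinsker then gives~\eqref{eq:cond-apprx-l1}; picking $\sigma=\widetilde\Theta(\eps^3 M)$ simultaneously yields~\eqref{eq:cond-apprx-function}, and the condition $\sigma\ge\textup{poly}(m^{-1})$ is precisely the source of the $\eps\gtrsim m^{-3}$ hypothesis. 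Rounding applies $\Remove$ of~\cite{AssadiJJST22} to $\hat x^E$, giving a monotonically smaller $\tilde x^E$ such that $8M\tilde x^E$ is a feasible fractional matching. The main obstacle is verifying that $\Remove$---originally analyzed against the unregularized $\ell_1$-regression objective of~\cite{AssadiJJST22}---still obeys~\eqref{eq:cond-round-guarantee} when applied to the regularized~\eqref{eq:def-matching-abstract}; we argue that since $\Remove$'s monotone shrinkage changes $\|\hat x^E\|_1$ by only $O(\eps)$, each regularizer term is perturbed by $\widetilde O(\eps^2 M)$, well within the $\frac\eps 8$ slack around $\nu^E=\Theta(M)$, while~\eqref{eq:cond-round-guarantee-l1} is automatic from the routine's monotone structure.
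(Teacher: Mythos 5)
Your overall architecture is the same as the paper's (Lemma~\ref{lem:canonical-bs} for the CRO properties, Lemma~\ref{lem:dec-solver-bs} building a canonical solver from Theorem~\ref{thm:sherman} plus $\Remove$, then Corollary~\ref{coro:ddbm-reduce}), and your parameter/runtime accounting is correct. The genuine gap is in your verification of the rounding guarantee \eqref{eq:cond-round-guarantee}. You rest it on the claim that $\Remove$ shrinks $\hat{x}^E$ by only $O(\eps)$ in $\ell_1$. By Lemma~\ref{lem:remove} the shrinkage is governed by the total overflow $\sum_{v}\Par{\Brack{8M\mb^\top \hat{x}^E - \1}_v}_+$, and nothing forces this to be $O(\eps)$ at the regularized optimum: the unregularized part of $f_{M,E}$ is indifferent to piling extra mass on edges whose endpoints are already saturated (each unit gains $8M$ in the linear term and pays it back in penalty), and the entropy term, whose job is precisely to spread mass over all edges, actively breaks this tie \emph{toward} overflow. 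Concretely, on a star with $k$ leaves ($\MCM=1$, $8M=\Theta(1)$) the objective is flat in $\norm{x}_1$ once the center saturates, and the $\Theta(\eps M)$ regularizers push the optimizer to $\norm{x^E}_1 = \Theta(1)$ with overflow $\Theta(M)$; $\Remove$ then deletes a constant fraction of the $\ell_1$ mass, i.e.\ $\Theta(M)$ in matching units, not $\widetilde O(\eps^2 M)$. Moreover, even granting the premise, your comparison only gives $f_{M,E}(\tilde{x}^E)\le f_{M,E}(\hat{x}^E)+O(\eps M)$, and since \eqref{eq:cond-apprx-function} and \eqref{eq:cond-round-guarantee} carry the \emph{same} $(1-\frac{\eps}{8})$ factor, there is no slack of that order to absorb such a loss; the slack has to come from the fact that $\hat{x}^E$ is a near-exact (additive $O(\eps^3 M)$) minimizer, not from the multiplicative guarantee.

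The paper's proof of Lemma~\ref{lem:dec-solver-bs} avoids any bound on the shrinkage: with $\ell = 8M\hat{x}^E$, Lemma~\ref{lem:remove} gives $\norm{\tilde\ell}_1 \ge \norm{\ell}_1 - \sum_v\Par{\Brack{\mb^\top\ell-\1}_v}_+$, and the right-hand side is exactly the unregularized part of $-f_{M,E}(\hat{x}^E)$, hence at least $-\nu^E - O(\eps^3 M/\log m) - \frac{\eps M}{128}$ by the solver accuracy and the regularizer range; applying Item~\ref{item:def14} to the now-feasible $8M\tilde{x}^E$ and Item~\ref{item:def11} then yields \eqref{eq:cond-round-guarantee} regardless of how much mass $\Remove$ deleted. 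You should replace your perturbation argument with this chain. A smaller issue of the same flavor: your claim that Item~\ref{item:def11} follows from \eqref{eq:cond-canonical-two} ``via standard matching LP duality'' is too quick, because \eqref{eq:cond-canonical-two} only speaks about feasible matchings, while the direction $-\nu^E \le (1+\Theta(\eps))\MCM(E)$ must rule out infeasible $x$ with large $\norm{x}_1$ and large overflow; the paper handles this with the same $\Remove$-based rounding (equivalently, König duality applied after rounding), and you should say so explicitly. The rest of your proposal—CRO Items~\ref{item:def12}--\ref{item:def14}, the relative strong convexity via first-order optimality, the rescaling by $\frac{1}{16M}$, the choice of solver accuracy $\widetilde\Theta(\eps^3 M)$ and its link to $\eps=\Omega(m^{-3})$—matches the paper's argument.
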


Our second CRO family is the following regularized Sinkhorn distance objective:
\begin{equation}\label{eq:def-matching-two-abstract}
\begin{aligned}
\min_{\begin{psmallmatrix}x \\ x^{\dum}\end{psmallmatrix} \in \R_{\ge 0}^{\tE} ~\mid~ 2|R|\tmb^\top\begin{psmallmatrix}x \\ x^{\dum}\end{psmallmatrix} = d}  f^\sink_{M,E}(x^{\tot}) & \defeq 2|R|\1_E^\top x +\gamma H\Par{x, x^\dum}~\text{where}~\gamma = \widetilde{\Theta}\Par{\eps M},\\
f^\sink_{M,E}(x) & \defeq\min_{x^\dum\in\R_{\ge0}^{E\setminus E_0}} f^\sink_{M,E}(x, x^\dum),
\end{aligned}
\end{equation}
where we extend graph $G=(V,E)$ to a balanced bipartite graph $\tG = (\tV, \tE)$ by introducing dummy vertices and edges. The extended graph allows us to write the inequality constraint $\mb^\top x=\1_V$ equivalently as the linear constraint $2|R|\widetilde{\mb}^\top \begin{psmallmatrix}x \\ x^{\dum}\end{psmallmatrix} = d$ for some defined $d\in\R^{\widetilde{V}}$ as some properly-extended vector of $\1_V$. This allows us to apply known matrix scaling solver to such approximating Sinkhorn distance objective in literature.

We prove this is a family of $(\epsilon,\gamma)$-CROs (\Cref{lem:canonical-sink}). The canonical solver for this family uses truncation to $E$ as $\Round$ and uses the matrix scaling solver from \cite{Cohen17} based on box-constrained Newton's method as $\solve$, which finds an $\eps$-approximate solution of~\eqref{eq:def-matching-two-abstract} in time $\widetilde{O}(n^2/\epsilon)$. Combining all these components with the DDBM framework in~\Cref{coro:ddbm-reduce} leads to the following DDBM solver based on approximating Sinkhorn distances.

\begin{restatable}{theorem}{thmmaindexsinkhorn}\label{thm:main-dec-sinkhorn}
Let $G = (V, E)$ be bipartite and $\eps \in [\Omega(m^{-3}),1)$. There is a randomized algorithm for the DDBM problem which maintains an $\eps$-approximate matching with probability $1 - n^{-\Omega(1)}$, based on matrix scaling solver of~\cite{Cohen17}, running in time $\tO\Par{n^2 \eps^{-3}}$.
\end{restatable}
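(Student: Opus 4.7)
The plan is to apply the general DDBM reduction framework of \Cref{coro:ddbm-reduce} with the Sinkhorn-distance CRO family \eqref{eq:def-matching-two-abstract}, using the matrix scaling solver of \cite{Cohen17} to instantiate the canonical solver. The theorem then reduces to (i) verifying that \eqref{eq:def-matching-two-abstract} indeed defines an $(\eps,\gamma)$-CRO family for $\gamma = \widetilde{\Theta}(\eps M)$, and (ii) showing the matrix scaling solver, coupled with a simple truncation rounding step, gives an $(\eps,\tsolve)$-canonical solver with $\tsolve = \tO(n^2/\eps)$. Once both hold, plugging $\beta = \widetilde{\Theta}(\eps M)$, $\tsolve = \tO(n^2/\eps)$, and $M = O(n)$ into the runtime bound $O(m\log^3 n + (\tsolve + m)\cdot \frac{M}{\beta\eps}\log n)$ of \Cref{coro:ddbm-reduce} immediately yields $\tO(n^2\eps^{-3})$, as the $M/(\beta\eps)$ factor contributes $\tO(\eps^{-2})$.

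For step (i), I would verify the four CRO properties in order. Property~\ref{item:def14} (approximate matching value on feasible $x$) is the design goal of the transport formulation: the linear term $2|R|\1_E^\top x$ exactly recovers the matching size up to the scaling, and choosing $\gamma = \widetilde{\Theta}(\eps M)$ ensures the entropy contribution $\gamma H(x,x^{\dum})$ is bounded by $O(\eps M)$ on any simplex-scale input, absorbed into the $\frac{\eps}{128}M$ slack. Property~\ref{item:def11} then follows by combining Property~\ref{item:def14} with standard LP relaxation arguments: the optimum of the linear relaxation of bipartite matching equals $\MCM(E)$ (by total unimodularity), and the entropic perturbation moves the optimum by at most $O(\eps M)$. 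Property~\ref{item:def12} is automatic: forcing $x_{E_0\setminus E}=0$ in $f^{\sink}_{M,E_0}$ gives precisely $f^{\sink}_{M,E}$ because the transport constraint $2|R|\tmb^\top (x,x^{\dum}) = d$ depends only on active edges and the dummy completion. Property~\ref{item:def13} is the key structural fact: since $H$ is $1$-strongly convex in $\ell_1$ (Pinsker) and $2|R|\1_E^\top x$ is linear, $f^{\sink}_{M,E}$ is $\gamma$-strongly convex in $\ell_1$ relative to $H$, and the Bregman-divergence inequality $f_{M,E'}(x^{E'}) - f_{M,E}(x^{E}) \ge \gamma V^H_{x^E}(x^{E'})$ follows from first-order optimality of $x^E$ over the (larger, containing $x^{E'}$) feasible set for $E$ together with the fact that restricting to $E'\subset E$ only adds zero-coordinate constraints.

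For step (ii), I would invoke the matrix scaling machinery of \cite{Cohen17}, which computes a diagonal rescaling $(u,v)$ with $x^{\sink} = \diag{u}\exp(-\mathrm{cost}/\gamma)\diag{v}$ whose marginals match $d$ to any prescribed accuracy in time $\tO(n^2/\eps)$ on a complete bipartite structure (and extends to sparse graphs with $\tO(m/\eps)$ scaling via the same paper's techniques). The output's KL-divergence to the true Sinkhorn optimum $x^E$ can be driven below $O(\eps^2)$ at this cost, and then Pinsker's inequality converts this into the $\ell_1$ bound $\|\hat x^E - x^E\|_1 \le \eps/1100$ required in \eqref{eq:cond-apprx-l1}. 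The function-value bound \eqref{eq:cond-apprx-function} follows from the analogous KL-to-function-value translation enabled by the $\gamma$-relative strong convexity established above. The $\Round$ subroutine is then simple truncation: restrict $\hat x^E$ to its support on $E$ (zeroing dummy coordinates) and scale down if needed to ensure feasibility as a matching, which is monotone (giving \eqref{eq:cond-round-guarantee-l1}) and changes the objective by $O(\eps M)$ (giving \eqref{eq:cond-round-guarantee}), because the dummy mass and any violation of degree constraints can be shown small via the near-feasibility of $\hat x^E$.

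The main obstacle I anticipate is the precise quantitative translation between the matrix scaling solver's convergence guarantees (typically stated in $\ell_1$ marginal error or KL divergence to a prescribed target) and the two distinct requirements in \Cref{def:solver}: an additive function-value bound and an $\ell_1$ bound to the \emph{exact} optimum $x^E$. In particular, \cite{Cohen17} solves matrix scaling to high accuracy, but extracting a feasible primal $x$ (as opposed to scaling factors $(u,v)$) that is simultaneously $\eps$-close to $x^E$ in $\ell_1$ and has function value within the required multiplicative window requires carefully balancing the scaling accuracy against $\gamma = \widetilde{\Theta}(\eps M)$ so that the condition-number overhead in the $\tO(n^2/\eps)$ runtime does not grow beyond polylogarithmic factors. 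I would handle this by setting the internal scaling accuracy to $\mathrm{poly}(\eps/(Mn))$, using the logarithmic dependence of \cite{Cohen17} on accuracy to absorb this into $\tO(\cdot)$, and then verifying that the resulting truncated-and-rescaled vector inherits both approximation guarantees; the remainder of the argument is then a direct application of \Cref{coro:ddbm-reduce}.
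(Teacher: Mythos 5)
You take essentially the same route as the paper: instantiate Corollary~\ref{coro:ddbm-reduce} with the dummy-extended Sinkhorn objectives as the $(\eps,\gamma)$-CRO family (the paper's Lemma~\ref{lem:canonical-sink}) and with the \cite{Cohen17} matrix scaling solver plus truncation to $E$ as the canonical solver (the paper's Lemma~\ref{lem:dec-solver-ms}); your verification of the four CRO properties and your use of strong convexity in $\ell_1$ to convert function error into the $\ell_1$ bound \eqref{eq:cond-apprx-l1} also mirror the paper's arguments, and your final accounting $M/(\beta\eps) = \tO(\eps^{-2})$ is correct.

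The one step you treat as a black box is exactly where the paper has to do quantitative work: the per-solve cost $\tsolve = \tO(n^2/\eps)$. The box-constrained Newton method of \cite{Cohen17} (Proposition~\ref{prop:ms-solver}) runs in time $\tO(\nnz(\mk)\cdot B)$ where $B$ bounds $\norm{u}_\infty,\norm{v}_\infty$ for a near-optimal scaling, and the dependence on $B$ is \emph{linear}, not logarithmic. In this instance the cost vector has scale $2|R|$ and the entropy weight is $\gamma = \widetilde{\Theta}(\eps M)$, so (via Lemma 10 of \cite{BlanchetJKS18}) $B = \tO(|R|/\gamma) = \tO(n/(\eps M))$, giving $\nnz(\mk)\cdot B = \tO((m+n)\cdot n/(\eps M))$. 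To land at $\tO(n^2/\eps)$ the paper additionally uses $M = \Omega(m/n)$, which follows because the minimum vertex cover (equal to $\MCM$ in bipartite graphs) must have size at least $m/n$. Your proposal never bounds $B$ or invokes this matching lower bound; your internal-accuracy tuning (``$\mathrm{poly}(\eps/(Mn))$, absorbed logarithmically'') only controls the accuracy parameter, through which the dependence really is logarithmic, and so does not supply the missing argument. Relatedly, your parenthetical claim that the same machinery gives $\tO(m/\eps)$ on sparse graphs is incorrect in this regime (it would yield an $m\eps^{-3}$ DDBM bound, which the paper obtains only through the box-simplex solver of Theorem~\ref{thm:main-dec-bs}); this does not affect the $\tO(n^2\eps^{-3})$ statement you are proving, but it reflects the absent $B$-analysis, which is the genuine gap to fill.
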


Alternatively, for the same $(\epsilon,\gamma)$-CRO family as in~\eqref{eq:def-matching-two-abstract}, one can use the same $\Round$ procedure and the recent high-accuracy almost-linear time graph flow problems solver of~\cite{chen2022maximum} for $\solve$ as a canonical solver. Since this new solver can find high-accuracy solutions of entropic-regularized problems of the form~\eqref{eq:def-matching-two-abstract} within a runtime of $(|E_0|+O(|V|))^{1+o(1)} = m^{1+o(1)}$, this gives a third DDBM solver, which yields and improved  an dependence on $\eps^{-1}$.

    \begin{restatable}{theorem}{thmmaindecsinkhornmaxflow}\label{thm:main-dec-sinkhorn-maxflow}
    Let $G = (V, E)$ be bipartite and $\eps \in [\Omega(m^{-3}),1)$. There is a randomized algorithm for the DDBM problem which maintains an $\eps$-approximate matching with probability $1 - n^{-\Omega(1)}$, based on the Sinkhorn objective solver of \cite{chen2022maximum}, running in time $m^{1 + o(1)}\eps^{-2}$.
    \end{restatable}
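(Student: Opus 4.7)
The plan is to reuse nearly all of the scaffolding already established for \Cref{thm:main-dec-sinkhorn}, and swap only the $\solve$ subroutine. Specifically, I would keep the $(\epsilon,\gamma)$-CRO family from~\eqref{eq:def-matching-two-abstract} with $\gamma = \widetilde{\Theta}(\eps M)$ and the truncation-to-$E$ procedure as $\Round$; both continue to satisfy the canonical-objective and rounding conditions exactly as in the proof of \Cref{thm:main-dec-sinkhorn}. The only change is to replace the matrix scaling solver of~\cite{Cohen17} with the almost-linear-time, high-accuracy solver of~\cite{chen2022maximum}, applied to the entropically regularized flow problem on the extended graph $\tG = (\tV,\tE)$ with $|\tE| = O(m)$. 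This solver runs in time $m^{1+o(1)} \cdot \mathrm{polylog}(1/\delta)$ for target additive accuracy $\delta$, so I set $\tsolve = m^{1+o(1)}$.

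First I would verify that the output of~\cite{chen2022maximum} can be post-processed into an $(\eps,\tsolve)$-canonical solver in the sense of~\Cref{def:solver}. The objective-value guarantee~\eqref{eq:cond-apprx-function} is a direct consequence of choosing $\delta$ sufficiently small. For the $\ell_1$ closeness~\eqref{eq:cond-apprx-l1}, I would exploit that the entropic term $\gamma H(\cdot)$ in~\eqref{eq:def-matching-two-abstract} is strongly convex in the $\ell_1$ norm on the feasible set (via Pinsker's inequality, after normalizing by the fixed $\ell_1$ mass implied by the linear constraint $2|R|\tmb^\top(x,x^{\dum}) = d$). This converts an objective-value bound $f(\hat x) - f(x^\star) \leq \delta$ into an $\ell_1$ closeness bound $\|\hat x - x^\star\|_1 \leq O(\sqrt{\delta/\gamma})$, so it suffices to take $\delta = \widetilde{O}(\eps^3 M)$; the resulting polylogarithmic dependence on $\eps^{-1}$ and $M$ is absorbed into the $m^{1+o(1)}$ runtime.

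Next I would plug this $(\eps, m^{1+o(1)})$-canonical solver, together with the $(\eps,\widetilde{\Theta}(\eps M))$-CRO family, into~\Cref{coro:ddbm-reduce}. Substituting $\beta = \widetilde{\Theta}(\eps M)$ into the corollary's runtime bound $O(m\log^3 n + (\tsolve + m)\cdot \tfrac{M}{\beta\eps}\cdot \log n)$ gives $\tfrac{M}{\beta \eps} = \widetilde{\Theta}(\eps^{-2})$, so the overall running time is $m^{1+o(1)} \cdot \eps^{-2}$. A union bound over the $\widetilde{O}(\eps^{-2})$ recomputations, each succeeding with probability $1 - n^{-\Omega(1)}$, yields overall success probability $1 - n^{-\Omega(1)}$ under the assumption $\eps \geq \Omega(m^{-3})$ (which ensures $\eps^{-2}$ is at most $\mathrm{poly}(m)$).

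The main obstacle I anticipate is the interface between~\cite{chen2022maximum} and our framework, rather than any genuinely new algorithmic idea: one has to verify that the entropically-regularized flow problem corresponding to~\eqref{eq:def-matching-two-abstract} lies in the class of objectives their solver handles with the stated accuracy and runtime, that the Pinsker-style conversion from objective accuracy to $\ell_1$ accuracy is valid at the required scale $\eps/1100$, and that their randomized guarantees suffice against an adaptive adversary when each recomputation uses fresh independent randomness on the current edge set $E_k$. Once these interface checks are carried out, everything else follows by combining the runtime bookkeeping of~\Cref{coro:ddbm-reduce} with the $m^{1+o(1)}$ per-call cost, exactly as in \Cref{thm:main-dec-sinkhorn}.
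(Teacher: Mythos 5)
Your proposal matches the paper's own proof essentially step for step: it keeps the Sinkhorn CRO family of \eqref{eq:def-matching-two-abstract} and truncation-to-$E$ as $\Round$, instantiates $\solve$ with the high-accuracy entropically-regularized flow solver of \cite{chen2022maximum} on the extended graph (verifying \eqref{eq:cond-apprx-function} by accuracy choice and \eqref{eq:cond-apprx-l1} by $\ell_1$-strong convexity of the $\gamma H$ term, exactly the paper's argument in \Cref{lem:dec-solver-maxflow}), and then invokes \Cref{coro:ddbm-reduce} with $\beta = \gamma = \widetilde\Theta(\eps M)$ to get $m^{1+o(1)}\eps^{-2}$. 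The interface checks you flag (that the subproblem is of the form handled by \cite{chen2022maximum}, and the accuracy-to-$\ell_1$ conversion) are precisely the content of the paper's \Cref{lem:dec-solver-maxflow}, so the argument is correct and not a different route.
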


\section{Regularized box-simplex games}\label{sec:framework}

In this section, we develop a high-accuracy solver for regularized box-simplex games:
\begin{equation}\label{eq:main-reg-apprx}
\begin{aligned}
\min_{x\in\Delta^m}\max_{y\in[0,1]^n} f_{\mu,\eps}(x,y) & \defeq y^
\top \ma^\top x + c^\top x -b^\top y + \mu H(x)-\frac{\eps}{2}\left(y^2\right)^\top|\ma|^\top x,\\
~~\text{where}~~ H(x) &\defeq \sum_{i\in[m]}x_i\log x_i~\text{is the standard entropic regularizer},
\end{aligned}
\end{equation}
where we recall absolute values and squaring act entrywise.

For ease of presentation, we make the following assumptions for some $\delta > 0$.
\begin{enumerate}
\item Upper bounds on entries: $\|\ma\|_{\infty} \le 1$, $\norm{b}_\infty \le \Bmax$, $\norm{c}_\infty \le \Cmax$. For simplicity, we assume $\Bmax \ge \Cmax \ge 1$; else, $\Cmax \gets \max(1, \Cmax)$ and $\Bmax \gets \max(\Cmax, \Bmax)$.
\item Lower bounds on matrix column entries: $\max_{i}|\ma_{ij}|\ge \delta$ for every $j\in[n]$.
\end{enumerate}
We defer the detailed arguments of why these assumptions are without loss of generality to~\Cref{app:framework}.
Our algorithm acts on the induced (monotone) gradient operator of the regularized box-simplex objective \eqref{eq:main-reg-apprx}, namely $(\nabla_x f_{\mu,\eps}(x, y), -\nabla_y f_{\mu,\eps}(x, y))$, defined as
\begin{equation}\label{def:grad-op}
g_{\mu, \eps}(x, y)\defeq \Par{\ma y + c + \mu(\1+\log(x)) - \frac{\eps}{2} \Abs{\ma} (y^2), -\ma^\top x + b + \eps \cdot \diag{y}\Abs{\ma}^\top x}. \end{equation} 
Further, it uses the following joint (non-separable) regularizer of
\begin{equation}\label{def:reg-sm}
	r_{\mu, \eps}(x, y) \defeq \rho \sum_{i \in [m]} x_i \log x_i + \frac{1}{\rho} x^\top \Abs{\ma} \Par{y^2}~~\text{where}~~\rho = \sqrt{\frac{2\mu}{\epsilon}},
\end{equation} 
variants of which have been used in \cite{Sherman17,JambulapatiST19,AssadiJJST22,CohenST21}. When clear from context, we drop subscripts and refer to these as operator $g$ and regularizer $r$. Our method is the first high-accuracy near-linear time solver for the regularized problem~\eqref{eq:main-reg-apprx}, yielding an $O(\veps)$-approximate solution with a runtime scaling polylogarithmically in problem parameters and $\veps$. We utilize a variant of the mirror prox (extragradient)~\cite{Nemirovski04} method for strongly monotone objectives, which appeared in~\cite{CarmonJST19, CohenST21} for regularized saddle point problems with separable regularizers.

In \Cref{ssec:high-level}, we present high-level ideas of our algorithm, which uses a mirror prox outer loop (\Cref{alg:outerloop}) and an alternating minimization inner loop (\Cref{alg:altmin}) to implement outer loop steps; we also provide convergence guarantees. In Section~\ref{ssec:helper}, we state useful properties of the regularizer~\eqref{def:reg-sm}, and discuss a technical detail ensuring iterate stability in our method. In \Cref{ssec:framework-alg}, we provide our full algorithm for regularized box-simplex games,~\Cref{alg:sherman} and give guarantees in~\Cref{thm:sherman}. Omitted proofs are in~\Cref{app:framework}.

\subsection{Algorithmic framework}\label{ssec:high-level}

In this section, we give the algorithmic framework we use to develop our high-precision solver, which combines an outer loop inspired by mirror prox \cite{Nemirovski04} with a custom inner loop for implementing each iteration. We first define an approximate solution for a proximal oracle.

\begin{definition}[Approximate proximal oracle solution]\label{def:approx-prox}
	Given a convex function $f$ over domain $\zset$ and $\veps\ge0$, we say $z' \in \zset$ is a \emph{$\veps$-approximate solution} for a proximal oracle if $z'$ satisfies $	\langle \nabla f(z'), z'-z\rangle\le \veps$. We denote this approximation property by $z' \gets_\veps \argmin_{z \in \zset} f$.
\end{definition}

We employ such approximate solutions as the proximal oracle within our ``outer loop`` method. Our outer loop is a variant of mirror prox (\Cref{alg:outerloop}) which builds upon both the mirror prox type method in~\cite{Sherman17} for solving unregularized box-simplex games and the high-accuracy mirror prox solver developed in~\cite{CarmonJST19, CohenST21} for bilinear saddle-point problems on geometries admitting separable regularizers. We first give a high-level overview of the analysis, which requires bounds on two properties. 
First, suppose $g$ is $\nu$-strongly monotone with respect to regularizer $r$, i.e.
\begin{equation}\label{def-strong-monotone}
	~\text{for any}~w,z\in\zset,~\inprod{g(w)-g(z)}{w-z}\ge\nu\inprod{\nabla r(w)-\nabla r(z)}{w-z}.
\end{equation}
Further, suppose it is $\alpha$-\emph{relatively Lipschitz} with respect to $r$ and Algorithm~\ref{alg:sherman} (see Definition 1 of \cite{CohenST21}), i.e.\ for any consecutive iterates $z_{k-1}$, $ z_{k-1/2}$, $z_{k}$ of our algorithm,\footnote{This property (i.e.\ relative Lipschitzness restricted to iterates of the algorithm) was referred to as ``local relative Lipschitzness'' in \cite{CohenST21}, but we drop the term ``local'' for simplicity.}
 \begin{equation}\label{def-rel-lip}
	~\inner{g(z_{k-1/2})-g(z_{k-1}))}{z_{k-1/2} - z_k}\le \alpha\Big(V_{z_{k-1/2}}(z_k)+V_{z_{k-1}}(z_{k-1/2})\Big).
\end{equation}
With these assumptions, we show that the strongly monotone mirror prox step makes progress by decreasing the divergence to optimal solution $V_{z_k}(z^\star)$ by a factor of $\frac \alpha {\alpha + \nu}$: this implies $\widetilde{O}(\frac \alpha \nu)$ iterations suffice for finding a high-accuracy solution. We provide the formal convergence guarantee of $\OuterLoop$ in~\Cref{prop:outerloopproof-sm}, which also accommodates the error of each approximate proximal step used in the algorithm. This convergence guarantee is generic and does not rely on the concrete structure of $g$, $r$ in our box-simplex problem. 

\begin{algorithm}
	\DontPrintSemicolon
	\KwInput{$\frac \veps 2$-approximate proximal oracle, operator and regularizer pair $(g, r)$ such that $g$ is $\nu$-strongly monotone and $\alpha$-relatively Lipschitz with respect to $r$}
	\Parameter{Number of iterations $K$}
	\KwOutput{Point $z_K$ with $ V_{z_K}(z^\star)\le 
	(\tfrac{\alpha}{\nu+\alpha})^K\Theta+\frac{\veps}{\nu}$}
	$z_0 \gets \argmin_{z\in\zset} r(z)$ \;
	\For{$k = 1, \ldots, K$}
	{
		$z_{k-1/2} \leftarrow_{\veps/2} \argmin_{z \in \zset}\left\{ 
		\inner{g\left(z_{k-1}\right)}{z} + 
		\alpha V_{z_{k-1}}(z)\right\}$  \; \label{line:outer-grad-sm}
		$z_k \leftarrow_{\veps/2} \argmin_{z \in \zset}\left\{ 
		\inner{g\left(z_{k-1/2}\right)}{z} + 
		\alpha V_{z_{k-1}}(z)+\nu V_{z_{k-1/2}}(z)\right\}$\; 
		\label{line:outer-extragrad-sm}
	}
	\Return $z_K$
	\caption{$\OuterLoop()$}\label{alg:outerloop}
\end{algorithm}

\begin{restatable}[Convergence of~\Cref{alg:outerloop}]{proposition}{outerloop}%
\label{prop:outerloopproof-sm}
Given regularizer $r$ with range at most $\Theta$, suppose $g$ is $\nu$-strongly-monotone with respect to $r$ (see \eqref{def-strong-monotone}), and is $\alpha$-relatively-Lipschitz with respect to $r$ (see \eqref{def-rel-lip}). Let $z_K$ be the output of Algorithm~\ref{alg:outerloop}. Then, $
V^r_{z_K}(z^\star)\le\left(\frac{\alpha}{\nu+\alpha}\right)^K\Theta
+\frac{\veps}{\nu}.$
\end{restatable}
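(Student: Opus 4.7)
The plan is to execute a standard mirror-prox convergence analysis specialized to the strongly monotone setting, using the three-point equality \eqref{eq:three-point} to translate the (approximate) optimality conditions of each proximal step into inequalities involving Bregman divergences, and then telescoping the resulting one-step contraction. The two structural hypotheses on $(g,r)$ will be used at distinct places: relative Lipschitzness is used to cancel the ``extra'' divergence terms that arise from having to evaluate $g$ at the extrapolated point $z_{k-1/2}$ instead of $z_{k-1}$, while strong monotonicity is what yields the geometric contraction factor $\alpha/(\alpha+\nu)$.

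\textbf{One-step analysis.} First I would apply Definition~\ref{def:approx-prox} to each of the two updates in Algorithm~\ref{alg:outerloop}. Evaluating the approximate first-order condition of Line~\ref{line:outer-grad-sm} at the test point $u = z_k$ and using \eqref{eq:three-point} on $V_{z_{k-1}}$ yields
\[
\inner{g(z_{k-1})}{z_{k-1/2}-z_k} \;\le\; \alpha\bigl(V_{z_{k-1}}(z_k) - V_{z_{k-1/2}}(z_k) - V_{z_{k-1}}(z_{k-1/2})\bigr) + \tfrac{\veps}{2}.
\]
Similarly, evaluating the approximate condition of Line~\ref{line:outer-extragrad-sm} at an arbitrary $u\in\zset$ and applying \eqref{eq:three-point} to both $V_{z_{k-1}}$ and $V_{z_{k-1/2}}$ gives
\[
\inner{g(z_{k-1/2})}{z_k-u} \;\le\; \alpha\bigl(V_{z_{k-1}}(u) - V_{z_k}(u) - V_{z_{k-1}}(z_k)\bigr) + \nu\bigl(V_{z_{k-1/2}}(u) - V_{z_k}(u) - V_{z_{k-1/2}}(z_k)\bigr) + \tfrac{\veps}{2}.
\]

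\textbf{Combining the two bounds.} The next step is to add the two displays with $u = z^\star$ and use the decomposition $\inner{g(z_{k-1/2})}{z_{k-1/2}-z^\star} = \inner{g(z_{k-1/2})}{z_k - z^\star} + \inner{g(z_{k-1/2})-g(z_{k-1})}{z_{k-1/2}-z_k} + \inner{g(z_{k-1})}{z_{k-1/2}-z_k}$. Invoking relative Lipschitzness \eqref{def-rel-lip} on the middle term cancels the surviving $V_{z_{k-1/2}}(z_k)$ and $V_{z_{k-1}}(z_{k-1/2})$ terms exactly (this is the ``miracle'' of extragradient-style methods and will be the main technical step to get right). After this cancellation and dropping the remaining nonnegative divergence $V_{z_{k-1}}(z_k)$ on the right-hand side, one obtains
\[
\inner{g(z_{k-1/2})}{z_{k-1/2}-z^\star} \;\le\; \alpha\bigl(V_{z_{k-1}}(z^\star) - V_{z_k}(z^\star)\bigr) + \nu\bigl(V_{z_{k-1/2}}(z^\star) - V_{z_k}(z^\star)\bigr) + \veps.
\]

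\textbf{Strong monotonicity and the recurrence.} Since $z^\star$ satisfies the variational inequality $\inner{g(z^\star)}{z_{k-1/2}-z^\star}\ge 0$, adding this to \eqref{def-strong-monotone} with $w = z_{k-1/2}$, $z = z^\star$ gives the lower bound $\inner{g(z_{k-1/2})}{z_{k-1/2}-z^\star} \ge \nu V_{z_{k-1/2}}(z^\star)$ (using $\inner{\nabla r(z_{k-1/2}) - \nabla r(z^\star)}{z_{k-1/2}-z^\star} \ge V_{z_{k-1/2}}(z^\star)$ via \eqref{eq:three-point}). Plugging this into the display above and rearranging cancels the $\nu V_{z_{k-1/2}}(z^\star)$ contributions and yields the contraction
\[
V_{z_k}(z^\star) \;\le\; \frac{\alpha}{\alpha+\nu}\,V_{z_{k-1}}(z^\star) + \frac{\veps}{\alpha+\nu}.
\]
Iterating this recurrence $K$ times, upper bounding the initial divergence by $V_{z_0}(z^\star)\le \Theta$ (since $z_0$ minimizes $r$ and $r$ has range $\Theta$), and summing the geometric series $\sum_{i\ge 0}(\alpha/(\alpha+\nu))^i = (\alpha+\nu)/\nu$ delivers the claimed bound $V_{z_K}(z^\star)\le (\alpha/(\alpha+\nu))^K\Theta + \veps/\nu$. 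The main obstacle is the bookkeeping in the combination step: ensuring that the divergence cross-terms produced by the two proximal conditions align perfectly with what relative Lipschitzness consumes, so that no extra factors leak into the final error term $\veps/\nu$.
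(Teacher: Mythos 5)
Your proposal is correct and follows essentially the same route as the paper's proof: approximate proximal optimality at each half-step plus the three-point identity, relative Lipschitzness to cancel the cross divergences $V_{z_{k-1/2}}(z_k)$ and $V_{z_{k-1}}(z_{k-1/2})$, strong monotonicity together with optimality of $z^\star$ to lower bound $\inner{g(z_{k-1/2})}{z_{k-1/2}-z^\star}$ by $\nu V_{z_{k-1/2}}(z^\star)$, and then the same contraction $V_{z_k}(z^\star)\le \frac{\alpha}{\alpha+\nu}V_{z_{k-1}}(z^\star)+\frac{\veps}{\alpha+\nu}$ telescoped with $V_{z_0}(z^\star)\le\Theta$. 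The only (harmless) bookkeeping slip is that after cancellation the term $\alpha V_{z_{k-1}}(z_k)$ vanishes exactly while the nonnegative term you drop is $\nu V_{z_{k-1/2}}(z_k)$, not $V_{z_{k-1}}(z_k)$; the resulting inequality and conclusion are unchanged.
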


Given the somewhat complicated nature of our joint regularizer, we cannot solve the proximal problems required by Algorithm~\ref{alg:outerloop} in closed form. Instead, we implement each proximal step to the desired accuracy by using an alternating minimization scheme, similarly to the implementation of approximate proximal steps in \cite{Sherman17, JambulapatiST19}. 

To analyze our algorithm, we use a generic progress guarantee for alternating minimization from~\cite{JambulapatiST19} to solve each subproblem, stated below.

\begin{algorithm}[ht!]
		\KwInput{$\ma \in \R^{m \times n}_{\ge 0}$, $\gamma\x \in \R^m$, $\gamma\y \in \R^n$, $T \in \N$, $\theta>0$, $x_{\text{init}}\in\Delta^m$, $y_{\text{init}}\in[0,1]^n$}
		\KwOutput{ Approximate minimizer to $\inprod{(\gamma\x, \gamma\y)}{z} + \theta r(z)$ for $r(z)$ in~\eqref{def:reg-sm}}
		 $x^{(0)} \gets x_{\text{init}}$, $y^{(0)} \gets y_{\text{init}}$\;
		\For{$0 \le t \le T$}
		{
	 $x^{(t + 1)} \gets \argmin_{x \in \Delta^m}\Brace{\inprod{\gamma\x}{x} + \theta r(x, y^{(t)})}$\;
		$y^{(t + 1)} \gets \argmin_{y \in [0, 1]^n}\Brace{\inprod{\gamma\y}{y} + \theta r(x^{(t + 1)}, y)}$\;
		}
		\Return $(x^{(T+1)}, y^{(T)})$
		\caption{$\AltMin(\gamma\x, \gamma\y, \ma, \theta, T, x_{\text{init}}, y_{\text{init}})$}\label{alg:altmin}
\end{algorithm}

\begin{lemma}[Alternating minimization progress, Lemma 5 and Lemma 7,~\cite{JambulapatiST19}]\label{lem:progress-altmin}
Let $r: \xset \times \yset \to \R$ be jointly convex, $\theta > 0$, and $\gamma^x$ and $\gamma^y$ be linear operators on $\xset, \yset$. Define
\begin{equation}\label{eq:subproblem}
	x_\OPT, y_\OPT = \argmin_{x\in\xset}\argmin_{y\in\yset} f(x,y)\defeq \langle \gamma^x, x\rangle+\langle\gamma^y,y\rangle+\theta r(x,y).
\end{equation}
Suppose $f(x,y)$ is twice-differentiable and satisfies: for all $x'\ge\frac{1}{2}x$ entrywise, $x',x\in\xset$ and $y',y\in\yset$, $\nabla^2 f(x',y')\succeq \frac{1}{\kappa}f(x,y)$. Then the iterates of~\Cref{alg:altmin} satisfy  
\[
f(x^{(t+2)},y^{(t+1)})-f(x_\OPT, y_\OPT)\le \left(1-\frac{1}{2\kappa}\right)\left(f(x^{(t+1)},y^{(t)})-f(x_\OPT, y_\OPT)\right).
\]
\end{lemma}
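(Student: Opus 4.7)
The plan is to analyze Algorithm~\ref{alg:altmin} as exact block coordinate descent on a convex function whose Hessian exhibits bounded multiplicative variation along iterate trajectories, then derive the per-step contraction via a standard suboptimality gap argument. The Hessian hypothesis (understood as $\nabla^2 f(x',y') \succeq \tfrac{1}{\kappa}\nabla^2 f(x,y)$ whenever $x' \ge \tfrac12 x$ entrywise) is precisely what one needs to translate exact one-block minimization into additive progress toward the joint optimum $f(x_\OPT, y_\OPT)$.

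The first step is to establish \emph{multiplicative iterate stability}: consecutive $x$-iterates satisfy $x^{(t+2)} \ge \tfrac{1}{2} x^{(t+1)}$ entrywise (so that the hypothesis fires along the segment connecting them, and likewise at their intermediate $y$-values). The natural route exploits the entropic part of $r$: its Hessian $\rho \cdot \mathrm{diag}(1/x_i)$ diverges as any coordinate vanishes, so the first-order optimality conditions of the $x$-subproblem force no coordinate of the new minimizer to shrink by more than a constant factor. An inductive argument controlling the ratio of $x^{(t+2)}_i / x^{(t+1)}_i$ using these KKT conditions (together with the boundedness of $\gamma^x$ and the box-constrained $y$) should suffice.

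Given stability, the one-block descent lemma gives a progress bound. From exact minimization and a Taylor expansion along the $x$-segment, combined with the Hessian comparison at intermediate points, one obtains
\begin{equation*}
f(x^{(t+1)}, y^{(t)}) - f(x^{(t+2)}, y^{(t)})
\ge \tfrac{1}{2\kappa}\, \bigl\|x^{(t+2)} - x^{(t+1)}\bigr\|^2_{\nabla^2_{xx} f(x^{(t+1)}, y^{(t)})},
\end{equation*}
with an analogous inequality for the subsequent exact $y$-step. Pairing these with the convexity upper bound $f(x^{(t+1)}, y^{(t)}) - f(x_\OPT, y_\OPT) \le \langle \nabla f(x^{(t+1)}, y^{(t)}),\, (x^{(t+1)}, y^{(t)}) - (x_\OPT, y_\OPT)\rangle$, and invoking the Hessian comparison once more to relate the full-Hessian quadratic form (bounding the gap via strong convexity over the stable region) to its block-diagonal counterparts, one recovers the desired contraction $F_{t+1} \le (1 - \tfrac{1}{2\kappa}) F_t$ for $F_t \defeq f(x^{(t+1)}, y^{(t)}) - f(x_\OPT, y_\OPT)$.

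The main obstacle is Step 1: formalizing multiplicative stability under exact minimization. While the entropic barrier provides strong intuition, the argument must also handle the coupling cross-term $\tfrac{1}{\rho} x^\top |\ma| (y^2)$, the simplex constraint on $x$, and the box constraints on $y$, each of which can redistribute mass across coordinates nontrivially. The remainder of the argument is a fairly mechanical adaptation of classical block coordinate descent analysis to the relative (rather than absolute) Hessian bound, and the factor $\tfrac{1}{2}$ in the rate $(1 - \tfrac{1}{2\kappa})$ absorbs both the coordinate-descent loss and the Hessian-comparison loss in a single budget.
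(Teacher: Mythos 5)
There is a genuine gap, and it starts with a misreading of the hypothesis. First, note the paper does not prove this lemma at all: it is imported verbatim from \cite{JambulapatiST19} (their Lemmas 5 and 7), so the relevant comparison is with that argument. In that argument the condition ``$\nabla^2 f(x',y')\succeq \frac{1}{\kappa}\nabla^2 f(x,y)$ for all $x'\ge\frac{1}{2}x$'' is \emph{not} a statement about consecutive iterates being multiplicatively close, and no iterate-stability step is needed: the points at which the Hessian comparison is invoked are convex combinations of the current iterate with the optimum (e.g.\ $\tfrac{1}{2}(x^{(t+1)}+x_\OPT)$, or the first half of the segment toward $z_\OPT$), and these automatically satisfy $x'\ge\frac{1}{2}x^{(t+1)}$ entrywise simply because $x_\OPT\ge 0$. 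That nonnegativity trick is the entire reason the hypothesis is phrased this way, and it is the pivot your outline is missing. Your Step 1 --- proving $x^{(t+2)}\ge\tfrac12 x^{(t+1)}$ via entropic KKT analysis --- is therefore both unnecessary and unproven (you concede it is the ``main obstacle''), and it is also out of scope: the lemma is stated for an arbitrary jointly convex $r$ satisfying the Hessian condition, so an argument leaning on the specific regularizer \eqref{def:reg-sm}, the simplex, and the box constraints does not prove the statement as given (and multiplicative stability of consecutive exact block minimizers is false for general jointly convex $f$).

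Second, even granting Step 1, your Step 3 hides the actual content. The block-descent inequalities you write only lower bound the one-round progress by squared step lengths in local Hessian norms; the crux is upper bounding the gap $f(x^{(t+1)},y^{(t)})-f(x_\OPT,y_\OPT)$ by $2\kappa$ times that progress. The convexity inequality $f(z_t)-f_\OPT\le\langle\nabla f(z_t), z_t-z_\OPT\rangle$ involves the (uncontrolled) distance to the optimum, and there is no absolute strong convexity to fall back on (e.g.\ $\nabla^2_{yy}$ here is $\frac{2}{\rho}\diag{|\ma|^\top x}$, which can be arbitrarily small); moreover the Hessian comparison cannot be applied ``in reverse'' along the segment toward $z_\OPT$, since $x^{(t+1)}\ge\frac12 x'$ fails when $x_\OPT$ is much larger than $x^{(t+1)}$ in some coordinate. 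The cited proof instead compares one alternating round against the explicit comparator halfway toward the optimum, handles the $x$--$y$ coupling via a Cauchy--Schwarz bound on the cross Hessian block (as in \Cref{lem:reg-convex-and-bound}), and uses block optimality to kill the first-order terms; none of this is ``a fairly mechanical adaptation of classical block coordinate descent analysis,'' and your outline does not supply a substitute for it.
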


Combining this lemma with the structure of our regularizer~\eqref{def:reg-sm}, we obtain the following guarantees, showing \Cref{alg:altmin} finds a $\frac \veps 2$-approximate solution to the proximal oracle. 

\begin{restatable}[Convergence of~\Cref{alg:altmin}]{corollary}{innerloop}\label{coro:altmin-conv}
	Let $\delta,\veps\in(0,1)$, $\rho\ge 1$. Suppose we are given $\gamma\in\zset_* = \xset_*\times \yset_*$ with $\max(\norm{\gamma\x}_\infty,\norm{\gamma\y}_1)\le B$, and define the proximal subproblem solution 
\[x_\OPT, y_\OPT = \argmin_{x\in\Delta^m}\argmin_{y\in[0,1]^n} f(x,y)\defeq\langle \gamma^x, x\rangle+\langle\gamma^y,y\rangle+\theta r(x,y)~~\text{for some}~\theta>0.\] If the Hessian condition in~\Cref{lem:progress-altmin} holds with a constant $\kappa>0$, and all simplex iterates $x$ of \Cref{alg:altmin} satisfy $x\ge \delta$ elementwise, then the algorithm finds a $\frac \veps 2$-approximate solution to the proximal oracle within $T = O\Par{\log\Par{\frac{\rho(B+mn\theta)^2}{\delta\veps\theta}}}$ iterations.
\end{restatable}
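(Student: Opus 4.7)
The plan is to combine the geometric decay in function value from \Cref{lem:progress-altmin} with a conversion from function-value suboptimality into the proximal-oracle gap of \Cref{def:approx-prox}. First, using $\|\gamma\x\|_\infty \le B$, $\|\gamma\y\|_1 \le B$, $x\in\Delta^m$, $y\in[0,1]^n$, the linear portion of $f$ has range $O(B)$. On the domain, $H(x)\in[-\log m,0]$ and $x^\top|\ma|y^2 \le \|\ma\|_\infty \le 1$, so $\theta r(x,y)$ has range $O(\theta(\rho\log m + \rho^{-1})) = O(\theta\rho\log m)$ using $\rho \ge 1$. Consequently $\Delta_0 \defeq f(x^{(0)},y^{(0)}) - f(x_\OPT, y_\OPT) = O(B + \theta\rho\log m)$. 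Telescoping \Cref{lem:progress-altmin} yields
\[
f(x^{(T+1)}, y^{(T)}) - f(x_\OPT, y_\OPT) \le \left(1-\tfrac{1}{2\kappa}\right)^T \Delta_0 \le e^{-T/(2\kappa)}\Delta_0,
\]
so $T = O(\kappa\log(\Delta_0/\veps'))$ iterations suffice to reduce the function-value gap below any target $\veps'$.

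To convert an $\veps'$ function-value gap into a proximal-oracle gap, I would write $z'=(x^{(T+1)},y^{(T)})$ and decompose
\[
\langle \nabla f(z'), z'-u\rangle = \langle \nabla f(z')-\nabla f(z_\OPT), z'-u\rangle + \langle \nabla f(z_\OPT), z'-u\rangle.
\]
Maximizing over $u\in\zset$, the second term reduces by first-order optimality at $z_\OPT$ to $\langle \nabla f(z_\OPT), z'-z_\OPT\rangle \le f(z')-f(z_\OPT) \le \veps'$ via convexity. The first term, bounded by Cauchy-Schwarz, smoothness $\|\nabla f(z')-\nabla f(z_\OPT)\|_2 \le L\|z'-z_\OPT\|_2$, and strong convexity $\|z'-z_\OPT\|_2 \le \sqrt{2\veps'/\lambda}$, is at most $L\sqrt{2\veps'/\lambda}\cdot\mathrm{diam}_2(\zset)$. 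Using the element-wise bound $x\ge\delta$, the column-norm assumption $\max_i|\ma_{ij}|\ge\delta$, and $\|\ma\|_\infty\le 1$, the Hessian blocks of $\theta r$ in \eqref{def:reg-sm} yield strong convexity $\lambda = \Omega(\theta\delta^2/\rho)$ (driven by $\frac{2\theta}{\rho}\diag{|\ma|^\top x} \succeq \frac{2\theta\delta^2}{\rho}\id$) and smoothness $L = O(\theta(\rho/\delta + \sqrt{mn}/\rho))$, with $\mathrm{diam}_2(\zset) = O(\sqrt n)$.

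Finally, setting $\veps' = \Theta(\veps^2\lambda/(L^2 n))$ makes the proximal-oracle gap at most $\veps/2$. Substituting this with $\Delta_0 = O(B + \theta\rho\log m) = O(B+mn\theta)$ into $T = O(\kappa\log(\Delta_0/\veps'))$ and absorbing the constant $\kappa$ and all polynomial factors in $m, n, \rho, 1/\delta, \theta$ into the argument of the $\log$ recovers the stated $T = O(\log(\rho(B+mn\theta)^2/(\delta\veps\theta)))$.

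The delicate step in this approach is bounding the local strong convexity and smoothness constants $\lambda, L$ for the non-separable regularizer: the coupling term $\frac{1}{\rho}x^\top|\ma|y^2$ is not individually jointly convex, so tight strong convexity requires both the iterate stability $x\ge\delta$ (to lower-bound $\diag{|\ma|^\top x}$ in the $y$-block) and an appropriately large $\rho$ so that the entropy block $\theta\rho\diag{1/x}$ dominates the Schur-complement correction from the bilinear coupling. Crude versions of these bounds still give an iteration complexity of the stated form, because polynomial blow-ups only appear inside the $\log$; the care needed is in verifying they fit the exact closed-form expression given.
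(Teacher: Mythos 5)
Your proposal is correct in outline and reaches the stated bound, but the conversion from function error to the proximal-oracle gap goes by a genuinely different route than the paper's. The paper telescopes \Cref{lem:progress-altmin} exactly as you do, but then invokes a dedicated helper (Lemma~\ref{lem:l1diffneeded}) that works in the natural $\ell_1$/$\ell_\infty$ geometry: it decomposes $\inprod{\gamma+\theta\nabla r(z)}{z-w}$ by subtracting the optimality inequality at $z_\OPT$, splitting into $\inprod{\gamma}{z-z_\OPT}$, $\theta\inprod{\nabla r(z)-\nabla r(z_\OPT)}{z_\OPT-w}$, and $\theta\inprod{\nabla r(z)}{z-z_\OPT}$, bounding each by H\"older against $\norm{x-x_\OPT}_1$ and $\norm{y-y_\OPT}_\infty$, and then using strong convexity in these norms ($\tfrac{\theta\rho}{4}\norm{x-x_\OPT}_1^2 + \tfrac{\delta^2\theta}{2\rho}\norm{y-y_\OPT}_\infty^2 \le f(z)-f(z_\OPT)$, via the lower bound of Lemma~\ref{lem:reg-convex-and-bound} and $|\ma|^\top x \ge \delta^2$). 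You instead anchor the decomposition at $\nabla f(z_\OPT)$, kill the second term with first-order optimality plus convexity (so it is at most the function gap), and control the first via Euclidean smoothness, strong convexity, and $\mathrm{diam}_2(\zset)=O(\sqrt n)$. This is a valid and arguably more standard argument; it picks up extra polynomial factors ($n$, $\sqrt{m}$, powers of $\rho,\delta^{-1}$) in the required accuracy $\veps'$, which is harmless since they only enter inside the logarithm, whereas the paper's local-norm bookkeeping also yields an explicit $\ell_1$ bound on $\norm{x-x_\OPT}_1$ that is reused elsewhere.

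Two caveats you should make explicit, both of which mirror hypotheses the paper's helper results carry but the corollary statement elides. First, your smoothness step requires $\norm{\log x' - \log x_\OPT}$-type control, hence an entrywise lower bound on $x_\OPT$ (and thus on the segment to $z'$), not just on the iterates; Lemma~\ref{lem:l1diffneeded} assumes $x, x_\OPT \ge \delta$ explicitly, and in the full algorithm this is supplied by Corollary~\ref{coro:stability}. Second, your strong-convexity modulus $\lambda = \Omega(\theta\delta^2/\rho)$ relies on the joint Hessian lower bound $\nabla^2 r \succeq \md(x)$ of Lemma~\ref{lem:reg-convex-and-bound}, which is proved for $\rho \ge 3$ (the paper's own proof has the same reliance despite the stated $\rho \ge 1$), so your closing remark about needing $\rho$ large enough to dominate the bilinear coupling is exactly the right thing to verify rather than a crude fallback.
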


\subsection{Helper lemmas}\label{ssec:helper}

Before providing our full method and analysis, here we list a few helper lemmas, which we rely on heavily in our later development. The first characterizes a useful property of $r_{\mu,\epsilon}$, showing that its Hessian is locally well-approximated by a diagonal matrix, which induces appropriate local norms for the blocks $x\in\xset$, $y\in\yset$.  We use this to prove the ``strong monotonicity'' (\Cref{lem:alphabetasm}) and ``relative Lipschitzness'' (\Cref{lem:rel-lip}) bounds required in~\Cref{ssec:framework-alg}. 

\begin{restatable}[Bounds on regularizer]{lemma}{restatercab}\label{lem:reg-convex-and-bound}
	Suppose $\ma \in \R^{m \times n}$ has $\norm{\ma}_\infty \le 1$. For any $z=(x,y)\in\Delta^m\times [0,1]^n$, $r=\rme$ defined as in~\eqref{def:reg-sm}, and $\bar{x} \in \R_{> 0}^m$, $\inprod{x}{\ma y} \le \norm{x}_{\diag{\frac{1}{\bar{x}}}} \norm{y}_{\diag{\Abs{\ma}^\top \bar{x}}}.$
Further, if $\rho\ge 3$, the matrix 
	\begin{equation}\label{eq:def-D}
	\md(x) \defeq \begin{pmatrix} \frac{\rho}{2} \diag{\frac{1}{x}} & \mathbf{0} \\ \mathbf{0} & \frac{1}{\rho} \diag{\Abs{\ma}^\top x}\end{pmatrix}
	\end{equation}
	satisfies the following relationship with the Hessian matrix of $r(z)$:
	\begin{equation}\label{eq:reg-convex-and-bound}
		\md(x)\preceq \nabla^2r(z) = \begin{pmatrix} \rho \cdot \diag{\frac{1}{x}} & \frac{2}{\rho}\ma \diag{y} \\ \frac{2}{\rho}\diag{y} \ma^\top  & \frac{2}{\rho} \diag{\Abs{\ma}^\top x} \end{pmatrix} \preceq 4\md(x).
	\end{equation}
\end{restatable}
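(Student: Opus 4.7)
The plan is to prove the two statements in sequence, with the first (a weighted Cauchy–Schwarz bound for bilinear forms) serving as the core technical tool for the second (Hessian bounds).

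For the first inequality, I would start from $\inprod{x}{\ma y} = \sum_{i,j} x_i \ma_{ij} y_j$ and upper bound it by $\sum_{i,j} |x_i||\ma_{ij}||y_j|$. Then I would split each summand as
\[ |x_i||\ma_{ij}|^{1/2}\bar x_i^{-1/2}\;\cdot\;|\ma_{ij}|^{1/2}\bar x_i^{1/2}|y_j|\]
and apply the Cauchy–Schwarz inequality. The first factor squared sums to $\sum_i x_i^2/\bar x_i \cdot \sum_j |\ma_{ij}|$, and invoking $\norm{\ma}_\infty \le 1$ (so $\sum_j |\ma_{ij}| \le 1$) bounds this by $\norm{x}^2_{\diag{1/\bar x}}$. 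The second factor squared equals $\sum_j y_j^2 [\Abs{\ma}^\top \bar x]_j = \norm{y}^2_{\diag{\Abs{\ma}^\top \bar x}}$, giving the claim.

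For the Hessian bounds, I would first compute $\nabla^2 r(z)$ directly from the definition of $r_{\mu,\eps}$. The main work is controlling the off-diagonal block. Writing the quadratic form of $\nabla^2 r$ evaluated at a direction $(u,v)$, one gets $\rho\norm{u}^2_{\diag{1/x}} + \frac{4}{\rho}\, u^\top \ma\diag{y}v + \frac{2}{\rho}\norm{v}^2_{\diag{\Abs{\ma}^\top x}}$. I would apply the first part of the lemma with the substitutions $x\leftarrow u$, $y\leftarrow \diag{y}v$, $\bar x \leftarrow x$, and then use $y_j^2 \le 1$ entrywise to get $|u^\top \ma\diag{y}v| \le \norm{u}_{\diag{1/x}} \norm{v}_{\diag{\Abs{\ma}^\top x}}$. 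Combined with Young's inequality $2ab \le \tau a^2 + \tau^{-1} b^2$, the cross term becomes $\frac{2}{\rho}(\tau \norm{u}^2_{\diag{1/x}} + \tau^{-1} \norm{v}^2_{\diag{\Abs{\ma}^\top x}})$ for any $\tau > 0$.

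The main obstacle is to pick $\tau$ to simultaneously establish both $\md(x) \preceq \nabla^2 r(z)$ and $\nabla^2 r(z) \preceq 4\md(x)$. For the upper bound, choosing $\tau = \rho$ yields a cross-term contribution of $2\norm{u}^2_{\diag{1/x}} + \frac{2}{\rho^2}\norm{v}^2_{\diag{\Abs{\ma}^\top x}}$; together with the diagonal Hessian blocks this gives $(\rho+2)\norm{u}^2_{\diag{1/x}} + (\frac{2}{\rho}+\frac{2}{\rho^2})\norm{v}^2_{\diag{\Abs{\ma}^\top x}}$, which is at most $4\md(x) = 2\rho\norm{u}^2_{\diag{1/x}} + \frac{4}{\rho}\norm{v}^2_{\diag{\Abs{\ma}^\top x}}$ whenever $\rho \ge 2$. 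For the lower bound we need the cross term not to swamp $\frac{\rho}{2}\norm{u}^2_{\diag{1/x}} + \frac{1}{\rho}\norm{v}^2_{\diag{\Abs{\ma}^\top x}}$; matching coefficients requires $2 \le \tau \le \rho^2/4$, which is feasible precisely when $\rho^2 \ge 8$, comfortably satisfied by $\rho \ge 3$. Checking these constants carefully is the one place where the hypothesis $\rho \ge 3$ is used, so I expect this is where the proof requires the most care.
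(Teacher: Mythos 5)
Your proposal is correct and follows essentially the same route as the paper's proof: the first inequality via the same split-and-Cauchy–Schwarz argument (using $\norm{\ma}_\infty\le 1$ for the row sums), and the Hessian bounds by evaluating the quadratic form, bounding the cross term $\frac{4}{\rho}u^\top\ma\diag{y}v$ via the first part together with $y\in[0,1]^n$, and absorbing it with Young's inequality, where the constraint $\rho^2\ge 8$ (hence $\rho\ge 3$) arises exactly as in the paper. The only cosmetic difference is that you keep the Young parameter $\tau$ general and check feasibility, while the paper fixes specific choices, which is immaterial.
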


We also introduce the following notion of a padding oracle (cf.\ Definition 2 of~\cite{CJST20}), which helps us control the multiplicative stability of iterates when running our algorithm. 

\begin{definition}\label{def:pad}
	Given $\delta>0$, and any $\bz = (\bx,y)\in\Delta^m\times [0,1]^n$, a padding oracle $\mathcal{O}_\delta$ returns $z = (x, y)$ by setting $\hx_i = \max(\bx_i, \delta)$ coordinate-wise and letting $x = \frac{\hx}{\norm{\hx}_1}$.
\end{definition}

This padding oracle has two merits which we exploit. First, the error incurred due to padding is small proportional to the padding size $\delta$, which finds usage in proving the correctness of our main algorithm,~\Cref{alg:sherman} (see~\Cref{prop:outerloopproof-sm-pad}). 

\begin{restatable}[Error of padding, cf.\ Lemma 6,~\cite{CJST20}]{lemma}{lempaderror}\label{lem:padsuffices}
For $\delta>0$ and $\bz = (\bx,y)\in\Delta^m\times[0,1]^n$ let $z = (x,y)\in\Delta^m\times[0,1]^n$ where $x = \mathcal{O}_\delta(\bx)$ (Definition~\ref{def:pad}), then for $r$ in~\eqref{def:reg-sm}, and any $w\in\zset = \Delta^m\times[0,1]^n$, $
V^r_{z}(w)-V^r_{\bz}(w)\le \left(\rho+\frac{8}{\rho}\right)m\delta$.
\end{restatable}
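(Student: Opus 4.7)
My plan is to exploit the additive decomposition $r_{\mu,\eps}(x,y) = \rho H(x) + L(x,y)$, where $L(x,y) \defeq \tfrac{1}{\rho}x^\top|\ma|(y^2)$ is \emph{linear in $x$ for fixed $y$}. Since Bregman divergence is additive across sums,
\[V^r_z(w) - V^r_{\bz}(w) = \rho\bigl[V^H_x(w^x) - V^H_{\bx}(w^x)\bigr] + \bigl[V^L_z(w) - V^L_{\bz}(w)\bigr],\]
and because the padding oracle leaves $y$ fixed, I only need to control each piece under $\bx \mapsto x$. Throughout, I would rely on the elementary padding bounds $\|\hx - \bx\|_1 \le m\delta$ and $\|\hx\|_1 \in [1, 1+m\delta]$, which together with $x = \hx/\|\hx\|_1$ yield $\|x - \bx\|_1 \le 2m\delta$ and the multiplicative lower bound $x_i \ge \bx_i/\|\hx\|_1$ coordinatewise.

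For the bilinear piece, linearity of $L(\cdot, y)$ in $x$ causes the combination $-L(x,y) - \langle \nabla_x L(x,y), w^x - x\rangle$ to collapse to $-\tfrac{1}{\rho}(w^x)^\top|\ma|y^2$, which is independent of $x$ and cancels in the subtraction $V^L_z(w) - V^L_{\bz}(w)$. The only surviving $x$-dependence enters through $\nabla_y L(x,y) = \tfrac{2}{\rho}\diag{y}|\ma|^\top x$, giving
\[V^L_z(w) - V^L_{\bz}(w) = -\tfrac{2}{\rho}(w^y - y)^\top \diag{y}\,|\ma|^\top(x - \bx).\]
Bounding the absolute value via $|(w^y_j - y_j)|\cdot y_j \le 1$ entrywise, $\|\ma\|_\infty \le 1$, and $\|x - \bx\|_1 \le 2m\delta$ then yields the estimate $\le \tfrac{4m\delta}{\rho} \le \tfrac{8m\delta}{\rho}$.

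For the entropic piece, I would expand both divergences and use the simplex identities $\langle \1, x\rangle = \langle \1, \bx\rangle = \langle \1, w^x\rangle = 1$ to telescope the constant, linear, and $H(x)$ terms, leaving the clean form $V^H_x(w^x) - V^H_{\bx}(w^x) = \sum_i w^x_i \log(\bx_i/x_i)$. Substituting $x_i \ge \bx_i/\|\hx\|_1$ gives $\bx_i/x_i \le \|\hx\|_1 \le 1 + m\delta$, hence $\log(\bx_i/x_i) \le m\delta$ for each $i$; summing against $w^x \in \Delta^m$ produces the bound $\rho m\delta$. In the edge case that some $\bx_i = 0$ on the support of $w^x$, the quantity $V^r_{\bz}(w)$ is already $+\infty$ and the target inequality is immediate. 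Adding the entropic and bilinear contributions then gives the stated bound $(\rho + 8/\rho) m\delta$.

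The main obstacle is handling the non-separability of $r_{\mu,\eps}$, which a priori could yield cross-term contributions scaling quadratically in $\|x - \bx\|_1$. The key insight making the linear scaling work out is that $L$ is linear in $x$ alone, so after the Bregman expansion the only surviving $x$-dependence passes through $\nabla_y L$, picking up a single factor of $x - \bx$; together with the multiplicative stretch bound $\bx_i/x_i \le \|\hx\|_1$ that the padding guarantees on the entropy side, this gives the two clean contributions $\tfrac{4m\delta}{\rho}$ and $\rho m\delta$.
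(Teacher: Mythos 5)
Your proof is correct and follows essentially the same decomposition as the paper's, splitting $r$ into $\rho H(x)$ plus the bilinear term and bounding the two Bregman-divergence differences separately, using the same padding facts $\norm{x-\bx}_1\le 2m\delta$ and $\norm{\hx}_1\le 1+m\delta$. The only deviations are refinements of detail: you prove the entropy bound $\sum_i w^x_i\log(\bx_i/x_i)\le m\delta$ directly where the paper invokes Lemma 6 of \cite{CJST20}, and you exploit the exact cancellation coming from linearity of the bilinear term in $x$ to get $\frac{4m\delta}{\rho}$ where the paper's term-by-term estimate gives $\frac{8m\delta}{\rho}$ --- both within the stated $(\rho+\frac{8}{\rho})m\delta$.
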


Second, padding ensures that the iterates of our algorithm satisfy $x = \Omega(\delta)$ entrywise, i.e.\ no entries of our simplex iterates $x$ are too small. This helps ensure the stability of iterates throughout one call of~\Cref{alg:altmin}, formally through the next lemma. 

\begin{restatable}[Iterate stability in~\Cref{alg:altmin}]{lemma}{lemstability}\label{lem:stability}
	Suppose $\eps\le 1$, $\rho\ge 6$, and $\alpha\ge \frac{36}{\rho}(\mu\log\frac 4 \delta + 3\Cmax)$. Let $(x_k, y_k)$ denote blocks of $z_k$, the $k^{\text{th}}$ iterate of \Cref{alg:outerloop}. In any iteration $k$  of~\Cref{alg:outerloop}, calling~\Cref{alg:altmin} to implement~\Cref{line:outer-grad-sm}, if $x_{k-1}\ge\frac \delta 2$ entrywise, $x^{(t+1)}\in x_{k-1}\cdot\Brack{\exp\Par{-\frac 1 9},\exp\Par{\frac 1 9}},\text{ for all } t\in[T].$
	Calling~\Cref{alg:altmin} to implement~\Cref{line:outer-extragrad-sm}, if $x_{k-1/2}\ge\frac \delta 4$ entrywise, $x^{(t+1)}\in x_{k-1}^{\frac{\alpha}{\alpha+\nu}}\circ x_{k-1/2}^\frac{\nu}{\alpha+\nu}\cdot\Brack{\exp\Par{-\frac 1 9},\exp\Par{\frac 1 9}}$ for all $t\in[T]$.
\end{restatable}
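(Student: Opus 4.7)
The plan is to exploit the closed-form structure of the AltMin $x$-update, which is a simplex-constrained entropically-regularized linear program (after folding in the bilinear coupling $x^\top \Abs{\ma} y^2$ with $y$ held fixed) and therefore admits an explicit multiplicative update formula. Crucially, each iterate $x^{(t+1)}$ is the \emph{exact} minimizer of its reduced $x$-subproblem for the current $y^{(t)}$, so the argument is per-iteration and requires no induction on $t$; the bound I seek will hold uniformly in $y^{(t)} \in [0,1]^n$.

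I first handle the gradient step, Line~\ref{line:outer-grad-sm} of~\Cref{alg:outerloop}. Expanding $V^r_{z_{k-1}}$ via the decomposition $r(x,y) = \rho H(x) + \rho^{-1} x^\top \Abs{\ma} y^2$, the simplex first-order optimality condition for the $x$-block yields, after cancellation of the $\alpha\rho(\1+\log x_{k-1})$ contributions,
\[\log x^{(t+1)} = \log x_{k-1} + \frac{1}{\alpha\rho}\bigl(\lambda^{(t)}\1 - g^x(z_{k-1}) - (\alpha/\rho)\,\Abs{\ma}\bigl((y^{(t)})^2 - y_{k-1}^2\bigr)\bigr)\]
for a Lagrange multiplier $\lambda^{(t)}$. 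Under $x_{k-1} \ge \delta/2$, together with $\norm{\ma}_\infty \le 1$, $\norm{c}_\infty \le \Cmax$ (with $\Cmax \ge 1$), $\eps \le 1$, and $y \in [0,1]^n$, I would bound $\norm{g^x(z_{k-1})}_\infty \le 3\Cmax + \mu\log(4/\delta)$ and the coupling correction by $\alpha/\rho$ in $\ell_\infty$. Combining with the hypothesis $\alpha \ge (36/\rho)(\mu\log(4/\delta) + 3\Cmax)$ and $\rho \ge 6$ gives a per-entry perturbation magnitude (ignoring $\lambda^{(t)}$) of at most $\tfrac{1}{36} + \tfrac{1}{\rho^2} \le \tfrac{1}{18}$. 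A standard simplex-normalization argument absorbs $\lambda^{(t)}$ and at most doubles this bound, yielding $x^{(t+1)}/x_{k-1} \in [\exp(-\tfrac{1}{9}), \exp(\tfrac{1}{9})]$.

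For the extragradient step, Line~\ref{line:outer-extragrad-sm} of~\Cref{alg:outerloop}, the subproblem has two Bregman centers with weights $\alpha$ (at $z_{k-1}$) and $\nu$ (at $z_{k-1/2}$). The $x$-FOC blends the two entropic terms into a weighted geometric mean,
\[\log x^{(t+1)} = \frac{\alpha}{\alpha+\nu}\log x_{k-1} + \frac{\nu}{\alpha+\nu}\log x_{k-1/2} + \frac{1}{(\alpha+\nu)\rho}\bigl(\lambda^{(t)}\1 + \tilde u^{(t)}\bigr),\]
matching exactly the reference point in the claim. The bound on $\tilde u^{(t)}$ is analogous, now using $x_{k-1/2} \ge \delta/4$ to control $\mu\norm{\1 + \log x_{k-1/2}}_\infty$ inside $\norm{g^x(z_{k-1/2})}_\infty$; the two coupling contributions pool into a single $O(1/\rho^2)$ term after dividing by $(\alpha+\nu)\rho$, since $\norm{(\alpha+\nu)(y^{(t)})^2 - \alpha y_{k-1}^2 - \nu y_{k-1/2}^2}_\infty \le \alpha+\nu$. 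The same normalization argument then closes the claim.

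The main obstacle I anticipate is bookkeeping rather than conceptual: verifying that the coupling-term contribution to each FOC enters with a prefactor of $\alpha/\rho$ (not $\alpha$), so that after division by $\alpha\rho$ it produces a $1/\rho^2$ correction absorbed by $\rho \ge 6$; and analogously that in the extragradient step the weighted $y^2$-combination is $(\alpha+\nu)$-bounded in $\ell_\infty$, producing an $O(1/\rho^2)$ contribution after division by $(\alpha+\nu)\rho$ rather than consuming part of the budget reserved for $\norm{g^x}_\infty$.
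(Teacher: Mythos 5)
Your proposal is correct and is essentially the paper's own proof: the paper likewise writes out the exact entropic $x$-update in closed form ($x^{(t+1)} \propto x_{k-1}\circ\exp(\cdot)$, resp.\ $\propto x_{k-1}^{\alpha/(\alpha+\nu)}\circ x_{k-1/2}^{\nu/(\alpha+\nu)}\circ\tau^{\mathsf{x}}$), bounds the exponent in $\ell_\infty$ by $\tfrac{1}{36}+\tfrac{1}{\rho^2}\le\tfrac{1}{18}$ using the assumptions on $\alpha$, $\rho$, $\Cmax$, and the entrywise lower bounds on $x_{k-1}$, $x_{k-1/2}$, and absorbs the simplex normalization exactly as you describe, yielding the $\exp(\pm\tfrac19)$ range. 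The only differences are presentational (Lagrange-multiplier FOC versus proportionality) and constant-level bookkeeping at the same level of looseness as the paper itself.
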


\subsection{Regularized box-simplex solver and its guarantees}\label{ssec:framework-alg}

We give our full high-accuracy regularized box-simplex game solver as~\Cref{alg:sherman}, which combines~\Cref{alg:outerloop},~\Cref{alg:altmin}, and a padding step to ensure stability. \notarxiv{For space considerations, we defer its statement to Appendix~\ref{ssec:shermanalg}.}

\arxiv{
	
	\SetKwProg{Fn}{function}{}{}
	
	\begin{algorithm}[ht!]
		\DontPrintSemicolon
		\KwInput{ $\ma \in \R^{m \times n}$, $c \in \R^m$, $b \in \R^n$, accuracy $\veps\in(m^{-10},1)$, $72\eps\le \mu\le  1$}
		\KwOutput{Approximate solution pair $(x, y)$ to \eqref{eq:main-reg-apprx}}
		\textbf{Global:} $\delta \gets  \tfrac{\eps\veps^2}{m^2}$, $\rho\gets \sqrt{\tfrac{2\mu}{\eps}}$, $\nu\gets \tfrac{1}{2}\sqrt{\tfrac{\mu\eps}{2}}$, $\alpha \gets 18\Cmax+32\sqrt{\frac{\mu\eps}{2}} \log \frac 4 \delta$ \;
		\textbf{Global:} $T \gets O\Par{\log \frac{mn\Bmax \alpha\rho}{\delta\veps}}$, $K \gets O\Par{\frac \alpha \nu \log\Par{\frac{\nu\log m}{\veps}}}$ for appropriate constants\;
		$(x_0,y_0)\gets (\tfrac{1}{m}\cdot\mathbf{1}_m, \mathbf{0}_n)$\;
		\For{$k =1$ to $K$}{
			$(\gamma\x, \gamma\y) \gets \GradBS(x_{k-1},y_{k-1}, x_{k-1},y_{k-1}, 0)$\;\label{line:altmin-one:start}
			$(x_{k-\frac{1}{2}},y_{k-\frac{1}{2}})\gets \AltminBS(\gamma\x,\gamma\y,\alpha,x_{k-1},y_{k-1})$\;\label{line:altmin-one:end}
			$(\gamma\x, \gamma\y) \gets \GradBS(x_{k-\frac{1}{2}},y_{k-\frac{1}{2}}, x_{k-1}, y_{k-1}, \nu)$\;\label{line:altmin-two:start}
			$(x^{(T+1)},y^{(T)})\gets \AltminBS(\gamma\x,\gamma\y,\alpha+\nu,x_{k-\frac{1}{2}},y_{k-\frac{1}{2}})$\;\label{line:altmin-two:end}
			$x_{k}\gets \frac{1}{\norm{\max\Par{x^{(T+1)},\delta}}_1}\cdot  \max\Par{x^{(T+1)},\delta}$, $y_{k}\gets y^{(T)}$\label{line:return-two}\Comment*{Implement padding $\mathcal{O}_\delta(x^{(T+1)})$}
		}
		\Fn{$\GradBS(x,y,x_0,y_0, \Theta)$}{
			$g\x \gets  \ma y + c + \mu(\1+\log(x)) - \frac{\eps}{2} \Abs{\ma} (y^2)$\;
			$g\y\gets -\ma^\top x + b + \eps\diag{y}|\ma|^\top x$ \;
			$g\x_r \gets -\alpha \rho (1+\log x_0) - \frac{\alpha}{\rho}|\ma| y^2_0 -  \Theta \rho (1+\log x) - \frac{\Theta}{\rho}|\ma| y^2$ \;
			$g\y_r \gets - \frac{2\alpha}{\rho} \diag {y_0}|\ma|^\top x_0 - \frac{2\Theta}{\rho} \diag {y}|\ma|^\top x)$ \;
			\Return $(g\x + g\x_r ,g\y + g\y_r)$	
		}
		\Fn(\Comment*[h]{Implement approximate proximal oracle via $\AltMin$}){$\AltminBS(\gamma\x,\gamma\y, \theta,x^{(0)},y^{(0)})$}{\For{$0 \le t \le  T$}{
				$x^{(t+1)}\gets \frac{1}{\norm{\exp\Par{-\frac{1}{\theta\rho}\gamma\x-\frac{1}{\rho^2}|\ma|\Par{y^{(t)}}^2}}_1}\cdot \exp\Par{-\frac{1}{\theta\rho}\gamma\x-\frac{1}{\rho^2}|\ma|\Par{y^{(t)}}^2}$ \;
				$y^{(t+1)} \gets \mathrm{med}\left(0,1, -\frac{\rho}{2\theta}\cdot\frac{\gamma\y}{|\ma|^\top x^{(t+1)}}\right)$ \;
			}	
			\Return $(x^{(T+1)},y^{(T)})$	
		}
		\caption{$\RegBoxSimp(\ma, b, c, \epsilon,\mu,\veps)$}\label{alg:sherman}
	\end{algorithm}
}

In order to analyze the convergence of Algorithm~\ref{alg:sherman}, we begin by observing that the operator in \eqref{def:grad-op} satisfies strong monotonicity with respect to our regularizer \eqref{def:reg-sm}.

\begin{restatable}[Strong monotonicity]{lemma}{lemsm}\label{lem:alphabetasm}
Let $\mu \ge \frac \eps 2$ and $\rho \defeq \sqrt{\tfrac{2\mu}{\eps}}$. The gradient operator $\gme$~\eqref{def:grad-op} is $\nu \defeq \tfrac{1}{2}\sqrt{\tfrac{\mu\eps}{2}}$-strongly monotone (see \eqref{def-strong-monotone}) with respect to $\rme$ defined in \eqref{def:reg-sm}.
\end{restatable}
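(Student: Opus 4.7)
The plan is to reduce strong monotonicity of $g_{\mu,\eps}$ with respect to $r_{\mu,\eps}$ to a pointwise PSD comparison of their Jacobians, leveraging \Cref{lem:reg-convex-and-bound} to match constants. First, I would differentiate~\eqref{def:grad-op} directly to compute the Jacobian $\nabla g_{\mu,\eps}(z)$. Since $g_{\mu,\eps}$ is the minimax gradient of the convex-concave $f_{\mu,\eps}$, its off-diagonal blocks $\nabla_y g\x = \ma - \eps|\ma|\diag{y}$ and $\nabla_x g\y = -\ma^\top + \eps\diag{y}|\ma|^\top$ satisfy $(\nabla_x g\y)^\top = -\nabla_y g\x$ and hence cancel under symmetrization, producing the block-diagonal
\begin{equation*}
\tfrac{1}{2}\Par{\nabla g_{\mu,\eps}(z) + \nabla g_{\mu,\eps}(z)^\top} = \begin{pmatrix}\mu\diag{\tfrac{1}{x}} & \mzero \\ \mzero & \eps\diag{|\ma|^\top x}\end{pmatrix},
\end{equation*}
whose diagonal blocks come from the entropic term $\mu H(x)$ and the coupling term $-\tfrac{\eps}{2}(y^2)^\top|\ma|^\top x$ of $f_{\mu,\eps}$, respectively.

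Next, I would verify the two algebraic identities $2\nu\rho = \mu$ and $4\nu/\rho = \eps$, both immediate from $\nu = \tfrac{1}{2}\sqrt{\mu\eps/2}$ and $\rho = \sqrt{2\mu/\eps}$. Substituting these into the diagonal matrix $\md(x)$ from \Cref{lem:reg-convex-and-bound} gives $4\nu\md(x) = \tfrac{1}{2}(\nabla g_{\mu,\eps}(z) + \nabla g_{\mu,\eps}(z)^\top)$ exactly. Combining this identity with the Hessian upper bound $\nabla^2 r_{\mu,\eps}(z) \preceq 4\md(x)$ from \Cref{lem:reg-convex-and-bound} then yields $\nu\nabla^2 r_{\mu,\eps}(z) \preceq \tfrac{1}{2}(\nabla g_{\mu,\eps}(z) + \nabla g_{\mu,\eps}(z)^\top)$ at every $z\in\Delta^m\times[0,1]^n$.

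Finally, I would integrate this pointwise PSD bound along the segment $z_t = (1-t)z + tw$ via the fundamental theorem of calculus (noting that the anti-symmetric part of any matrix contributes zero to its quadratic form) to conclude $\inprod{g_{\mu,\eps}(w)-g_{\mu,\eps}(z)}{w-z}\ge\nu\inprod{\nabla r_{\mu,\eps}(w)-\nabla r_{\mu,\eps}(z)}{w-z}$. The main step is the clean algebraic matching of $\tfrac{1}{2}(\nabla g_{\mu,\eps} + \nabla g_{\mu,\eps}^\top)$ with $4\nu\md$; this compatibility is precisely what the regularizer~\eqref{def:reg-sm} is designed to provide, and it bypasses the more delicate AM-GM estimates over the box-simplex domain that would otherwise be required for strong monotonicity in this geometry. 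The only subtlety is checking that the hypothesis $\rho\ge 3$ of \Cref{lem:reg-convex-and-bound} is consistent with the lemma's assumption $\mu\ge\eps/2$; in the parameter regime actually used by \Cref{alg:sherman}, we have $\mu\ge 72\eps$ and hence $\rho\ge 12$, so this is comfortably satisfied.
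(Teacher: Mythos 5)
Your proposal is correct and follows essentially the same route as the paper's proof: both symmetrize the Jacobian of $g_{\mu,\eps}$ (so the skew-symmetric bilinear blocks cancel), compare the resulting block-diagonal matrix with the upper bound $\nabla^2 r_{\mu,\eps} \preceq 4\md(x)$ from \Cref{lem:reg-convex-and-bound}, and integrate along the segment, with the algebra $2\nu\rho=\mu$, $4\nu/\rho=\eps$ fixing the constant. Your remark about the $\rho\ge 3$ hypothesis is a fair observation (the paper applies the Hessian upper bound under the weaker assumption $\mu\ge\eps/2$ without comment), and your resolution via the algorithm's regime $\mu\ge 72\eps$ is adequate.
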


Next, we show iterate stability through each loop of alternating minimization (i.e.\ from~\Cref{line:altmin-one:start} to~\Cref{line:altmin-one:end}, and~\Cref{line:altmin-two:start} to~\Cref{line:altmin-two:end} respectively), via~\Cref{lem:stability}.

\begin{restatable}[Iterate stability in~\Cref{alg:sherman}]{corollary}{corostability}\label{coro:stability}
	Assume the same parameter bounds as Lemma~\ref{lem:stability}, and that $\delta \in (0, m^{-1})$. In the $k^{\text{th}}$ outer loop of~\Cref{alg:sherman}, $x_{k-1}\ge\frac \delta 2$ entrywise. Further, for all iterates $x^{(t+1)}$ computed in~\Cref{line:altmin-one:start} to~\Cref{line:altmin-one:end} and $x_\OPT$ as defined in~\eqref{eq:subproblem} with $\theta = \alpha$, $\half x_{k - 1} \le x^{{(t + 1)}}, x_\OPT \le 2x_{k - 1}$, and $x^{(t+1)}, x_\OPT\ge \frac{\delta}{4}$, entrywise. Similarly, for all iterates $x^{(t+1)}$ computed in~\Cref{line:altmin-two:start} to~\Cref{line:altmin-two:end} and $x_\OPT$ as defined in~\eqref{eq:subproblem} with $\theta = \alpha+\nu$, $	\frac{1}{2} x_{k-1/2}\le x^{(t+1)}, x_\OPT \le 2x_{k-1/2}~\text{and}~x^{(t+1)}, x_\OPT\ge \frac{\delta}{4},~\text{entrywise}$.
\end{restatable}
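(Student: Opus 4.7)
The plan is to build the stability bounds in three phases, using the padding step on Line~\ref{line:return-two} of Algorithm~\ref{alg:sherman} together with Lemma~\ref{lem:stability}. First I would show by induction on $k$ that $x_{k-1} \ge \delta/2$ entrywise. The base case $k=1$ holds since $x_0 = \frac{1}{m}\mathbf{1}_m$ and $\delta \le m^{-1}$ together yield $x_0 \ge \delta \cdot \mathbf{1}_m$. For the inductive step, the padding operation on Line~\ref{line:return-two} produces $x_k = \max(x^{(T+1)},\delta)/\norm{\max(x^{(T+1)},\delta)}_1$, and the denominator is at most $\norm{x^{(T+1)}}_1 + m\delta = 1 + m\delta \le 2$, so $x_k \ge \delta/2$ entrywise.

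Next, I would apply Lemma~\ref{lem:stability} to the first AltMin call (Lines~\ref{line:altmin-one:start}--\ref{line:altmin-one:end}) with $x_{k-1} \ge \delta/2$, yielding $x^{(t+1)} \in x_{k-1}\cdot[e^{-1/9}, e^{1/9}]$ for every $t\in[T]$. Since $e^{-1/9} > 1/2$ and $e^{1/9} < 2$, this gives $\frac{1}{2}x_{k-1} \le x^{(t+1)} \le 2x_{k-1}$ and $x^{(t+1)} \ge e^{-1/9}\cdot\delta/2 > \delta/4$ entrywise. Because $x_{k-1/2} = x^{(T+1)}$, the same bounds give $x_{k-1/2} \in x_{k-1}\cdot[e^{-1/9}, e^{1/9}]$ and $x_{k-1/2} \ge e^{-1/9}\cdot\delta/2 > \delta/4$, verifying the hypothesis required to invoke Lemma~\ref{lem:stability} for the second AltMin call.

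For the second AltMin call (Lines~\ref{line:altmin-two:start}--\ref{line:altmin-two:end}), Lemma~\ref{lem:stability} gives $x^{(t+1)} \in x_{k-1}^{\alpha/(\alpha+\nu)} \circ x_{k-1/2}^{\nu/(\alpha+\nu)} \cdot [e^{-1/9}, e^{1/9}]$. Rewriting the anchor as $x_{k-1/2}\cdot(x_{k-1}/x_{k-1/2})^{\alpha/(\alpha+\nu)}$ and using $x_{k-1}/x_{k-1/2} \in [e^{-1/9}, e^{1/9}]$ (from the previous phase) places the anchor inside $x_{k-1/2}\cdot[e^{-\alpha/(9(\alpha+\nu))}, e^{\alpha/(9(\alpha+\nu))}] \subseteq x_{k-1/2}\cdot[e^{-1/9}, e^{1/9}]$. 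Composing yields $x^{(t+1)} \in x_{k-1/2}\cdot[e^{-2/9}, e^{2/9}] \subseteq [\frac{1}{2}x_{k-1/2}, 2x_{k-1/2}]$. The entrywise lower bound follows by bounding the anchor below by $(\delta/2)^{\alpha/(\alpha+\nu)}(e^{-1/9}\delta/2)^{\nu/(\alpha+\nu)} \ge (\delta/2)e^{-1/9}$, so $x^{(t+1)} \ge (\delta/2)e^{-2/9} > \delta/4$.

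The main obstacle is transferring the iterate bounds to the proximal optimizers $x_\OPT$, since Lemma~\ref{lem:stability} is stated only for AltMin iterates. I expect the cleanest resolution is a convergence-and-closedness argument: each multiplicative interval derived above is closed, and the AltMin iterates converge to $x_\OPT$ by Corollary~\ref{coro:altmin-conv}, so the same containment passes to the limit. Alternatively, one can verify directly from the explicit form of the AltMin update (exponential in $x$, truncation in $y$) in Algorithm~\ref{alg:sherman} that the update sends the target multiplicative interval into itself, so its fixed point $x_\OPT$ inherits the stability bounds claimed by the corollary.
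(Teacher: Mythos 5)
Your proposal is correct and follows essentially the same route as the paper: establish $x_{k-1}\ge\tfrac\delta2$ entrywise via the padding step (base case from $x_0=\tfrac1m\1$), invoke Lemma~\ref{lem:stability} for both $\AltMin$ calls, translate the multiplicative intervals, and transfer the bounds to $x_\OPT$ by passing to the limit of the convergent iterates. The paper's proof is just a terser version of this, leaving the interval manipulations and the limit argument implicit, so your extra detail is a faithful elaboration rather than a different method.
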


Under iterate stability, our next step is to prove that our operator $\gme$ is relatively Lipschitz with respect to our regularizer $\rme$ (as defined in~\eqref{def-rel-lip}). 

\begin{restatable}[Relative Lipschitzness]{lemma}{lemrellip}\label{lem:rel-lip}
Assume the same parameter bounds as in Lemma~\ref{lem:stability}. In the $k^{\text{th}}$ outer loop of~\Cref{alg:sherman}, let $\bz_{k}\gets (x^{(T+1)}, y^{(T)})$ from \Cref{line:altmin-two:end} be $z_k$ before the padding operation. Then, $x_{k-1/2}, \bx_{k}\in[\tfrac{1}{2}x_{k-1},2x_{k-1}]$ elementwise and 
\begin{equation*}
\inner{g(z_{k-1/2})-g(z_{k-1})}{z_{k-1/2}-\bz_k}\le \alpha \left(V_{z_{k-1}}(z_{k-1/2})+V_{z_{k-1/2}}(\bz_{k})\right)~~\text{for}~~\alpha = 4+32\sqrt{\frac{\mu\eps}{2}}.
\end{equation*}
\end{restatable}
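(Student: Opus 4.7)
The plan is to first establish multiplicative iterate stability (so that $\md(x)$ from~\Cref{lem:reg-convex-and-bound} is approximately constant across $z_{k-1}$, $z_{k-1/2}$, and $\bar z_k$) and then decompose $\inner{g(z_{k-1/2}) - g(z_{k-1})}{z_{k-1/2} - \bar z_k}$ into its five natural scalar pieces coming from the components of $\gme$ in~\eqref{def:grad-op}, bounding each by a constant times $V^r_{z_{k-1}}(z_{k-1/2}) + V^r_{z_{k-1/2}}(\bar z_k)$.

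\emph{Iterate stability.} I would combine~\Cref{coro:stability} with the tighter $\exp(\pm \tfrac 1 9)$ multiplicative bounds of~\Cref{lem:stability}, observing that $\bar x_k \in x_{k-1}^{\alpha/(\alpha+\nu)}\circ x_{k-1/2}^{\nu/(\alpha+\nu)}\cdot [\exp(-\tfrac 1 9), \exp(\tfrac 1 9)]$ and $x_{k-1/2}\in x_{k-1}\cdot [\exp(-\tfrac 1 9), \exp(\tfrac 1 9)]$, so both lie in $x_{k-1}\cdot [\exp(-\tfrac 2 9), \exp(\tfrac 2 9)] \subset [\tfrac 1 2 x_{k-1}, 2x_{k-1}]$ entrywise. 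This yields $\md(x')\succeq \tfrac 1 2 \md(x_{k-1})$ for every $x'$ on the segments involved, so~\Cref{lem:reg-convex-and-bound} with Taylor's theorem gives $V^r_{z_{k-1}}(z_{k-1/2}) \ge \tfrac 1 4\|z_{k-1/2} - z_{k-1}\|^2_{\md(x_{k-1})}$ and analogously for $V^r_{z_{k-1/2}}(\bar z_k)$. Henceforth abbreviate $\md\defeq\md(x_{k-1})$.

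\emph{Decomposition and bounds.} I would split the inner product into (A) $\inner{\ma\Delta y}{\Delta x}$, (B) $\mu\inner{\log x_{k-1/2} - \log x_{k-1}}{\Delta x}$, (C) $-\tfrac \eps 2\inner{\Abs{\ma}(y_{k-1/2}^2 - y_{k-1}^2)}{\Delta x}$, (D) $-\inner{\ma^\top\Delta x}{\Delta y}$, and (E) $\eps\inner{\diag{y_{k-1/2}}\Abs\ma^\top x_{k-1/2} - \diag{y_{k-1}}\Abs\ma^\top x_{k-1}}{\Delta y}$, with $\Delta x \defeq x_{k-1/2} - \bar x_k$ and $\Delta y\defeq y_{k-1/2} - \bar y_k$. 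For (A) and (D), I would apply the Cauchy--Schwarz bound in~\Cref{lem:reg-convex-and-bound} to get $|\inner{\ma u}{v}|\le \sqrt 2\|u\|_{\md_{yy}}\|v\|_{\md_{xx}}$, then AM--GM to collect squared norms of $z_{k-1/2}-z_{k-1}$ and $z_{k-1/2}-\bar z_k$, and finally convert to divergences via the bound just derived, contributing an $O(1)$ term to $\alpha$. For (B), I would use that $|\log x'_i - \log x_i|\le 2|x'_i - x_i|/x_{k-1,i}$ on the multiplicatively-stable range, then weighted Cauchy--Schwarz gives $|\text{(B)}|\le 2\mu\|x_{k-1/2}-x_{k-1}\|_{\diag{1/x_{k-1}}}\|\Delta x\|_{\diag{1/x_{k-1}}} = \tfrac{4\mu}{\rho}\|x_{k-1/2}-x_{k-1}\|_{\md_{xx}}\|\Delta x\|_{\md_{xx}}$, contributing an $O(\mu/\rho) = O(\sqrt{\mu\eps/2})$ term. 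For (C), use $|y^2_j - y'^2_j|\le 2|y_j - y'_j|$ (since $y,y'\in[0,1]$) to reduce to a bilinear quantity, then bound as in (A) with an extra $\eps$ factor; the conversion introduces an $\eps\rho=\sqrt{2\mu\eps}$ scaling. For (E), expand
\[\diag{y_{k-1/2}}\Abs\ma^\top x_{k-1/2} - \diag{y_{k-1}}\Abs\ma^\top x_{k-1} = \diag{y_{k-1/2}-y_{k-1}}\Abs\ma^\top x_{k-1/2} + \diag{y_{k-1}}\Abs\ma^\top(x_{k-1/2}-x_{k-1})\]
and bound each piece by weighted Cauchy--Schwarz in the weight vector $\Abs\ma^\top x_{k-1/2}\asymp \Abs\ma^\top x_{k-1}$ (using iterate stability), again producing an $\eps\rho$-scale contribution. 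Summing the five bounds and absorbing numerical constants yields the stated $\alpha = 4+32\sqrt{\mu\eps/2}$.

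\emph{Main obstacle.} The hardest step is (E), because it is not purely bilinear: splitting it into two summands above requires careful bookkeeping so that the first summand is controlled by $V^r_{z_{k-1}}(z_{k-1/2})$ (which hides $\|y_{k-1/2}-y_{k-1}\|_{\md_{yy}}$) and the second by $V^r_{z_{k-1/2}}(\bar z_k)$ (which hides $\|x_{k-1/2}-x_{k-1}\|_{\md_{xx}}$ and $\|\Delta y\|_{\md_{yy}}$), while interchanging $x_{k-1/2}$ and $x_{k-1}$ in the weights only at the cost of a constant factor guaranteed by iterate stability. A smaller but important subtlety is that the lemma is stated at $\bar z_k$, not the padded $z_k$: one must verify that the stability argument above applies before the padding step, which is exactly what~\Cref{lem:stability} provides. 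Once these two points are cleared, combining the bounds on (A)--(E) is a matter of arithmetic.
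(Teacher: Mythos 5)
Your proposal is correct and follows the same overall skeleton as the paper: use \Cref{coro:stability} to get $x_{k-1/2},\bx_k\in[\tfrac12 x_{k-1},2x_{k-1}]$, lower bound both divergences by $\tfrac14\norm{\cdot}^2_{\md(x_{k-1})}$ via the integral form of the Bregman divergence and \Cref{lem:reg-convex-and-bound}, handle the two bilinear pieces by the local-norm Cauchy--Schwarz plus AM--GM, and show the remaining regularizer-induced terms enter at scale $O(\sqrt{\mu\eps})$. The one genuine difference is how the remainder is bounded. The paper groups your (B), (C), (E) into a single expression $\int_0^1(z_{k-1/2}-z_{k-1})^\top\mh(z_\beta)(z_{k-1/2}-\bz_k)\,d\beta$, where $\mh(z_\beta)$ is the Jacobian of the non-bilinear part of $\gme$; it then factors out $\sqrt{\mu\eps/2}$ using $\mu=\rho\sqrt{\mu\eps/2}$ and $\eps=\tfrac2\rho\sqrt{\mu\eps/2}$, and bounds the resulting (sign-flipped Hessian-like) matrix by $8\md(x_{k-1})$ with exactly the argument of \Cref{lem:reg-convex-and-bound} plus iterate stability, which delivers the constant $32\sqrt{\mu\eps/2}$ in one stroke. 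You instead bound (B), (C), (E) separately by elementary pointwise estimates (the logarithm's Lipschitz bound on the stable range, $|y^2-y'^2|\le2|y-y'|$, and the product-rule split of (E)) followed by weighted Cauchy--Schwarz; your flagged bookkeeping for (E), including the factor-$2$ weight interchange $\Abs{\ma}^\top x_{k-1/2}\le 2\Abs{\ma}^\top x_{k-1}$, is exactly right, and if you track the constants your route in fact yields a relative Lipschitzness constant no larger than the stated $4+32\sqrt{\mu\eps/2}$ (your bilinear terms give about $2\sqrt2$ and the remainder about $28\sqrt{\mu\eps/2}$), so the lemma as stated follows since the inequality only improves as $\alpha$ decreases. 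The trade-off is that the paper's unified integral makes the constant accounting automatic and reuses an existing lemma, while your term-by-term route is more elementary but requires the careful arithmetic you describe; one small correction is that your (C) conversion produces a factor $\sqrt2\,\eps\le 2\sqrt2\sqrt{\mu\eps/2}$ rather than $\eps\rho$, which is harmless but worth fixing when tallying constants.
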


Next, we give a convergence guarantee on the inner loops (from~\Cref{line:altmin-one:start} to~\Cref{line:altmin-one:end}, and~\Cref{line:altmin-two:start} to~\Cref{line:altmin-two:end}) in \Cref{alg:sherman}, as an immediate consequence of~\Cref{coro:altmin-conv}.

\begin{restatable}[Inner loop convergence in~\Cref{alg:sherman}]{corollary}{coroaltmin}\label{coro:altmin-conv-sherman}
Assume the same parameter bounds as in Lemma~\ref{lem:stability}. For $\gamma$ defined in~\Cref{line:altmin-one:start}, suppose for an appropriate constant $T = \Omega\Par{\log\frac{mn \Bmax \alpha\rho}{\delta\veps}}$. Then, for all $k$ iterate $z_{k-1/2}=(x_{k-1/2}, y_{k-1/2})$ of~\Cref{line:altmin-one:end} satisfies
	\[
	\inprod{\nabla g(z_{k-1})+\alpha \nabla V_{z_{k-1}}(z_{k-1/2})}{z_{k-1/2}-w}\le \frac{\nu\veps}{4},\text{ for all }w\in\zset.
	\] 
	Similarly, for $\gamma$ defined in~\Cref{line:altmin-two:start}, iterate $\bz_{k}=(x_{(T+1)}, y_{(T)})$ of~\Cref{line:altmin-two:end} satisfies
	\[
	\inprod{\nabla g(z_{k-1})+\alpha \nabla V_{z_{k-1}}(\bz_{k})+\nu \nabla V_{z_{k-1/2}}(\bz_{k})}{\bz_{k}-w}\le \frac{\nu\veps}{4},\text{ for all }w\in\zset.
	\]
\end{restatable}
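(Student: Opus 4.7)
The result is claimed to follow ``immediately'' from Corollary~\ref{coro:altmin-conv}, so the plan is to verify, for each of the two subproblems (the one in Lines~\ref{line:altmin-one:start}--\ref{line:altmin-one:end} with $\theta = \alpha$, and the one in Lines~\ref{line:altmin-two:start}--\ref{line:altmin-two:end} with $\theta = \alpha+\nu$), the three hypotheses of Corollary~\ref{coro:altmin-conv}: (i) the Hessian self-similarity condition of Lemma~\ref{lem:progress-altmin} with some constant $\kappa = O(1)$; (ii) the elementwise lower bound $x\ge \delta$ on simplex iterates; and (iii) a bound $\max(\|\gamma\x\|_\infty, \|\gamma\y\|_1)\le B$ with $B = \mathrm{poly}(m,n,\Bmax,\alpha,\log\tfrac 1 \delta)$. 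Then invoking Corollary~\ref{coro:altmin-conv} with target accuracy $\tfrac{\nu\veps}{2}$ (so that the output is a $\tfrac{\nu\veps}{4}$-approximate proximal oracle solution in the sense of Definition~\ref{def:approx-prox}) yields the required inner-loop iteration count after taking logarithms.

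\textbf{Verifying the Hessian condition.} The subproblem objective $f(x,y) = \langle\gamma\x,x\rangle + \langle\gamma\y,y\rangle + \theta r(x,y)$ has $\nabla^2 f = \theta\nabla^2 r$, since the $\gamma$-terms are linear. By Lemma~\ref{lem:reg-convex-and-bound}, $\md(x)\preceq \nabla^2 r(z) \preceq 4\md(x)$, so if $x' \ge \tfrac 1 2 x$ entrywise then directly $\md(x')\succeq \tfrac 1 2 \md(x)$ (each diagonal block of $\md$ is monotone in $x$), giving $\nabla^2 f(x',y') \succeq \tfrac{1}{8} \nabla^2 f(x,y)$. Hence Lemma~\ref{lem:progress-altmin} applies with $\kappa = 8$, which is an $O(1)$ constant independent of all problem parameters.

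\textbf{Verifying iterate stability and linear-term bounds.} Corollary~\ref{coro:stability} directly supplies the elementwise bound $x^{(t+1)}, x_{\mathrm{OPT}} \ge \tfrac{\delta}{4}$ for the iterates of both inner loops, so hypothesis (ii) of Corollary~\ref{coro:altmin-conv} holds with padding parameter $\tfrac\delta 4$ (absorbed into constants). For hypothesis (iii), I would expand the definition of $\gamma$ in $\GradBS$: each coordinate of $\gamma\x$ is at most $\|\ma\|_\infty\|y\|_\infty + \Cmax + \mu(1 + |\log x|) + \tfrac\eps 2\|\ma\|_\infty + \alpha\rho(1+|\log x_0|) + \tfrac{\alpha}{\rho}\|\ma\|_\infty + \Theta\rho(1+|\log x|) + \tfrac{\Theta}{\rho}\|\ma\|_\infty$ (and symmetrically for $\gamma\y$, where $\|\gamma\y\|_1$ picks up the extra $n$ factor from $b$ and $\diag{y}|\ma|^\top x$). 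Since $\|x\|_\infty \le 1$ and $x \ge \tfrac\delta 4$ on all iterates of interest by Corollary~\ref{coro:stability} (and since $\Theta\in\{0,\nu\}$, $\rho\ge 1$), every term is $\mathrm{poly}(n,\Bmax,\alpha\rho,\log\tfrac 1 \delta)$ in magnitude. Thus $B = O(n\Bmax + \alpha\rho\log\tfrac 1 \delta)$ suffices.

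\textbf{Assembling the iteration count.} Applying Corollary~\ref{coro:altmin-conv} with error parameter $\tfrac{\nu\veps}{2}$ gives that $T = O(\log(\rho(B + mn\theta)^2 / (\delta \nu\veps \theta)))$ iterations suffice. With $\theta \in [\alpha, 2\alpha]$ and the bound on $B$ above, the argument of the logarithm is at most a polynomial in $m, n, \Bmax, \alpha, \rho, \tfrac 1 \delta, \tfrac 1 \nu, \tfrac 1\veps$. Since $\nu = \tfrac 1 2\sqrt{\mu\eps/2}$ and $\mu, \eps, \veps$ are at worst inverse-polynomial in $m$ (by hypothesis $\veps\ge m^{-10}$, $\mu\le 1$, and $\eps$ polynomially bounded), we have $\log\tfrac 1 \nu = O(\log m)$, which gets absorbed into the $\log$ term already appearing via $\tfrac 1 \delta$. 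Hence $T = \Omega(\log\tfrac{mn\Bmax\alpha\rho}{\delta\veps})$ with a sufficiently large constant is enough, matching the hypothesis of the corollary; no step here requires any delicate estimate beyond reading off polynomial dependencies. The only place to be careful is ensuring that the accuracy translation from $\tfrac{\veps}{2}$ in Corollary~\ref{coro:altmin-conv} to $\tfrac{\nu\veps}{4}$ here only introduces logarithmic overhead --- which is clear since $\nu$ is polynomial. This is the main (and essentially only) obstacle, and it is routine.
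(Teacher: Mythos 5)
Your proposal follows the paper's proof essentially verbatim: it verifies the Hessian condition of Lemma~\ref{lem:progress-altmin} with $\kappa=8$ via the $\md(x)$ sandwich from Lemma~\ref{lem:reg-convex-and-bound}, takes the entrywise lower bound $\delta/4$ and iterate stability from Corollary~\ref{coro:stability}, bounds $B=\max(\norm{\gamma\x}_\infty,\norm{\gamma\y}_1)=O(\alpha\rho\log\tfrac 1 \delta+n\Bmax)$ by expanding $\GradBS$, and then invokes Corollary~\ref{coro:altmin-conv} at accuracy $\tfrac{\nu\veps}{2}$, exactly as the paper does. The only slip is your parenthetical justification that ``each diagonal block of $\md$ is monotone in $x$'': the $xx$-block $\tfrac{\rho}{2}\diag{\tfrac 1 x}$ is decreasing in $x$, so $\md(x')\succeq\tfrac 1 2\md(x)$ does not follow from $x'\ge\tfrac 1 2 x$ alone --- it is immediate for the $yy$-block $\tfrac 1 \rho\diag{\Abs{\ma}^\top x}$ (the block the alternating-minimization lemma actually needs), and for the full matrix it additionally requires $x'\le 2x$, which Corollary~\ref{coro:stability} supplies for the iterates in question; the paper's own chain is equally terse here, so this is a shared imprecision rather than a gap in your approach.
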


We now analyze the progress made by each outer loop of~\Cref{alg:sherman}. The proof is very similar to that of~\Cref{prop:outerloopproof-sm}; the only difference is controlling the extra error incurred in the padding step of~\Cref{line:return-two}, which we bound via~\Cref{lem:padsuffices}.

\begin{restatable}[Convergence of~\Cref{alg:sherman}]{proposition}{propouterpad}%
\label{prop:outerloopproof-sm-pad}
Assume the same parameter bounds as in Lemma~\ref{lem:stability}, and that $\delta\le \frac{\veps}{4\rho\alpha m}$. \Cref{alg:sherman} returns $z_K$ satisfying $V^r_{z_K}(z^\star)\le\frac{3\veps}{\nu}$, 
letting (for an appropriate constant) $K = \Omega(\frac \alpha \nu \log(\frac{\nu \log m}{\veps})).$
\end{restatable}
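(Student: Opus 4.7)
The plan is to run the convergence argument of Proposition~\ref{prop:outerloopproof-sm} essentially verbatim, then absorb the extra error introduced by the padding step in Line~\ref{line:return-two} via Lemma~\ref{lem:padsuffices}. Concretely, for each outer iteration $k$, let $\bz_k$ denote the pre-padded iterate (i.e.\ the output of Line~\ref{line:altmin-two:end}) and $z_k$ the padded iterate returned on Line~\ref{line:return-two}. Corollary~\ref{coro:altmin-conv-sherman} guarantees that $z_{k-1/2}$ and $\bz_k$ are $\tfrac{\nu\veps}{4}$-approximate solutions to the two proximal subproblems defining Lines~\ref{line:outer-grad-sm} and~\ref{line:outer-extragrad-sm} of Algorithm~\ref{alg:outerloop}. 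Lemma~\ref{lem:alphabetasm} (strong monotonicity) and Lemma~\ref{lem:rel-lip} (relative Lipschitzness) verify the structural hypotheses of Proposition~\ref{prop:outerloopproof-sm}.

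Given these ingredients, I would first redo the per-iteration analysis of Proposition~\ref{prop:outerloopproof-sm}: combining the three-point identity \eqref{eq:three-point}, the definitions of $z_{k-1/2}$ and $\bz_k$ as $\tfrac{\nu\veps}{4}$-approximate proximal solutions, relative Lipschitzness to cancel the cross term $\inner{g(z_{k-1/2})-g(z_{k-1})}{z_{k-1/2}-\bz_k}$, and strong monotonicity $\inner{g(\bz_k)-g(z^\star)}{\bz_k-z^\star}\ge \nu \inner{\nabla r(\bz_k)-\nabla r(z^\star)}{\bz_k-z^\star}$, yields the contraction
\[
V^r_{\bz_k}(z^\star)\le \frac{\alpha}{\alpha+\nu}V^r_{z_{k-1}}(z^\star)+\frac{\veps}{2(\alpha+\nu)}.
\]
This is exactly the bound that drives Proposition~\ref{prop:outerloopproof-sm}, just stated at the pre-padded iterate $\bz_k$ instead of $z_k$.

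Next, I would apply Lemma~\ref{lem:padsuffices} to pass from $\bz_k$ to $z_k$: since $z_k = \mathcal{O}_\delta(\bz_k)$, $V^r_{z_k}(z^\star)\le V^r_{\bz_k}(z^\star)+(\rho+\tfrac{8}{\rho})m\delta$. Combining with the contraction gives the recursion
\[
V^r_{z_k}(z^\star)\le \frac{\alpha}{\alpha+\nu}V^r_{z_{k-1}}(z^\star)+\frac{\veps}{2(\alpha+\nu)}+\Par{\rho+\frac{8}{\rho}}m\delta.
\]
Unrolling this geometric recursion, the accumulated additive error is at most $\frac{\alpha+\nu}{\nu}\cdot\bigl[\frac{\veps}{2(\alpha+\nu)}+(\rho+\tfrac{8}{\rho})m\delta\bigr]=\frac{\veps}{2\nu}+\frac{(\alpha+\nu)(\rho+8/\rho)m\delta}{\nu}$, and using the hypothesis $\delta\le \frac{\veps}{4\rho\alpha m}$ (together with $\rho\ge 6$, $\nu\le\alpha$) this second term is bounded by $\frac{\veps}{\nu}$ up to constants. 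The initial divergence $V^r_{z_0}(z^\star)$ is $O(\log m)$ since $z_0$ minimizes $r$ and the range of $r$ on $\Delta^m\times[0,1]^n$ is $O(\rho\log m + \tfrac{1}{\rho}\|\ma\|_\infty\cdot 1)=O(\log m)$ for our choice of $\rho$. Taking $K = \Omega(\frac{\alpha}{\nu}\log\frac{\nu\log m}{\veps})$ drives the geometric factor $(\tfrac{\alpha}{\alpha+\nu})^K$ small enough that the contracted initial term is at most $\veps/\nu$, yielding $V^r_{z_K}(z^\star)\le \frac{3\veps}{\nu}$ as claimed.

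The main obstacle is bookkeeping: carefully verifying that all the prerequisites (iterate stability from Corollary~\ref{coro:stability}, approximate proximal accuracy, and relative Lipschitzness restricted to consecutive algorithm iterates) really do apply at each step despite the padding, and that the per-iteration padding error $(\rho+8/\rho)m\delta$ is dominated by $\tfrac{\veps}{\alpha+\nu}$ under the stated parameter choices. Once these bookkeeping steps are in place, the proof is a direct reduction to Proposition~\ref{prop:outerloopproof-sm} with a slightly larger (but absorbable) per-step slack.
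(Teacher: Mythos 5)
Your proposal follows essentially the same route as the paper's proof: establish the per-iteration contraction of Proposition~\ref{prop:outerloopproof-sm} at the pre-padded iterate $\bz_k$ (using Lemmas~\ref{lem:alphabetasm} and~\ref{lem:rel-lip} and Corollary~\ref{coro:altmin-conv-sherman}), absorb the padding error via Lemma~\ref{lem:padsuffices} together with $\delta\le\frac{\veps}{4\rho\alpha m}$, and unroll the resulting geometric recursion. The one slip is your claim that the initial divergence is $O(\log m)$: since $\rho=\sqrt{2\mu/\eps}$ need not be $O(1)$, the correct bound is $\Theta=O(\rho\log m)$, which is what the paper uses; this does not affect the conclusion because $\rho\eps=4\nu$ under the stated parameter choices, so $K=\Omega(\frac{\alpha}{\nu}\log\frac{\nu\log m}{\veps})$ with an appropriate constant still drives $\Par{\frac{\alpha}{\alpha+\nu}}^K\Theta$ below $\frac{\veps}{\nu}$.
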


We are now ready to prove the main theorem of this section, which gives a complete convergence guarantee of \Cref{alg:sherman} by combining our previous claims.

\begin{restatable}[Regularized box-simplex solver]{theorem}{thmsherman}\label{thm:sherman}
Given regularized box-simplex game \eqref{eq:main-reg-apprx} with $72 \eps\le \mu\le 1$ and optimizer $(x^\star, y^\star)$, and letting $\veps\in(m^{-10},1)$, $\RegBoxSimp$~(\Cref{alg:sherman}) returns $x^K$ satisfying $\norm{x^K - x^\star}_1 \le \frac{\veps}{C_{\max}\log^2 m}$ and $\max_{y\in[0,1]^n}f_{\mu,\eps}(x^K,y) - f_{\mu,\eps}(x^\star,y^\star)\le \veps$.
The total runtime of the algorithm is $O(\nnz(\ma) \cdot (\frac{\Cmax}{\sqrt{\mu\eps}} + \log(\frac{m}{\sigma\eps}))\cdot \log(\frac{C_{\max}\log m} \veps) \log(\frac{mn\Bmax} \veps))$.
\end{restatable}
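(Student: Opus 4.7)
The overall plan is to run $\RegBoxSimp$ with an appropriately shrunk internal target $\veps' = \Theta(\mu\veps^2/(\Cmax^2\log^4 m))$, invoke Proposition~\ref{prop:outerloopproof-sm-pad} for the resulting Bregman-divergence guarantee $V^r_{z_K}(z^\star) \le 3\veps'/\nu$, and then convert this guarantee into both an $\ell_1$-bound on $x_K - x^\star$ and the primal-gap bound via convexity. The hypotheses of Proposition~\ref{prop:outerloopproof-sm-pad} have already been established earlier in this section: $\alpha$-relative Lipschitzness in Lemma~\ref{lem:rel-lip}, $\nu$-strong monotonicity in Lemma~\ref{lem:alphabetasm}, inner-loop accuracy in Corollary~\ref{coro:altmin-conv-sherman}, and iterate multiplicative stability (which also justifies absorbing the padding error through Lemma~\ref{lem:padsuffices}) in Corollary~\ref{coro:stability}. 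The outer iteration count is $K = O((\alpha/\nu)\log(\nu\log m/\veps'))$, and because this depends only logarithmically on $1/\veps'$, replacing $\veps$ by $\veps'$ inflates the stated runtime only by constants.

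Next I would extract the $\ell_1$ bound from the Bregman guarantee. Since $r_{\mu,\eps}(x,y) = \rho H(x) + \rho^{-1} x^\top |\ma|(y^2)$ and the cross term is jointly convex (its positive semidefinite Hessian is part of Lemma~\ref{lem:reg-convex-and-bound}), Bregman divergences decompose additively with a nonnegative cross contribution, yielding $V^r_{z_K}(z^\star) \ge \rho V^H_{x_K}(x^\star) \ge (\rho/2)\norm{x_K - x^\star}_1^2$ by Pinsker's inequality. Combining with the identity $\rho\nu = \mu/2$ gives $\norm{x_K - x^\star}_1^2 \le 12\veps'/\mu$, and the choice of $\veps'$ yields $\norm{x_K - x^\star}_1 \le \veps/(\Cmax\log^2 m)$, the first conclusion of the theorem. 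For the primal gap, let $y_K \in \argmax_y f_{\mu,\eps}(x_K, y)$; using $\max_y f_{\mu,\eps}(x^\star, y) \ge f_{\mu,\eps}(x^\star, y_K)$ together with convexity of $f_{\mu,\eps}(\cdot, y_K)$ in $x$,
\[ \max_y f_{\mu,\eps}(x_K, y) - f_{\mu,\eps}(x^\star, y^\star) \le f_{\mu,\eps}(x_K, y_K) - f_{\mu,\eps}(x^\star, y_K) \le \norm{\nabla_x f_{\mu,\eps}(\xi, y_K)}_\infty \norm{x_K - x^\star}_1 \]
for some $\xi$ on the segment $[x^\star, x_K]$. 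Since iterate stability forces $\xi \ge \delta/4$ entrywise with $\delta = \eps\veps^2/m^2$, the logarithmic contribution in $\nabla_x f_{\mu,\eps}$ is at most $O(\mu\log m)$, so the gradient $\ell_\infty$ norm is $O(\Cmax + \mu\log m) = O(\Cmax\log m)$; multiplying yields a gap of $O(\veps/\log m) \le \veps$.

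The runtime follows by tallying the loop counts. Each outer iteration invokes $\AltminBS$ twice, and each of the $T = O(\log(mn\Bmax\alpha\rho/(\delta\veps')))$ inner iterations is dominated by products with $|\ma|$ and $|\ma|^\top$ costing $O(\nnz(\ma))$. Substituting the definitions of $\alpha$, $\nu$, and $\delta$ gives $\alpha/\nu = O(\Cmax/\sqrt{\mu\eps} + \log(m/(\eps\veps)))$, and all $\log(1/\veps')$ factors collapse into $\log(\Cmax\log m/\veps)$ and $\log(mn\Bmax/\veps)$ under our choice of $\veps'$, matching the stated total runtime. The only subtle step is verifying nonnegativity of the cross-term Bregman divergence so that the entropic piece alone controls the $\ell_1$ distance; this reduces to the joint convexity of $(1/\rho)x^\top |\ma|(y^2)$ already recorded in Lemma~\ref{lem:reg-convex-and-bound}. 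Everything else is substitution into previously established lemmas.
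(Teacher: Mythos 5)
Your overall skeleton matches the paper's proof: run the algorithm at a rescaled internal accuracy, invoke \Cref{prop:outerloopproof-sm-pad} to get a divergence bound $V^r_{z_K}(z^\star)$, convert that to an $\ell_1$ bound and then to a primal gap, and tally the runtime from $K$ and $T$ (the paper packages the conversion as \Cref{lem:regdivtogap}). Two of your steps, however, are not justified as written. The smaller one: the cross term $\tfrac{1}{\rho}x^\top\Abs{\ma}(y^2)$ is \emph{not} jointly convex (already for $m=n=1$, $\ma=1$ the function $xy^2$ has an indefinite Hessian), and \Cref{lem:reg-convex-and-bound} does not say it is --- its lower bound $\nabla^2 r \succeq \md(x)$ is proved precisely by spending part of the entropy Hessian to dominate the indefinite off-diagonal block. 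So your additive decomposition $V^r_{z_K}(z^\star)\ge \rho V^H_{x_K}(x^\star)$ is unsupported; the correct consequence of the lemma is $V^r_{z_K}(z^\star)\ge \tfrac{\rho}{2}V^H_{x_K}(x^\star)$, which still yields the $\ell_1$ bound after adjusting constants (this is exactly the route taken in \Cref{lem:regdivtogap}), so this slip is repairable.

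The genuine gap is in your duality-gap step. You bound $f_{\mu,\eps}(x_K,y_K)-f_{\mu,\eps}(x^\star,y_K)$ by $\norm{\nabla_x f_{\mu,\eps}(\xi,y_K)}_\infty\norm{x_K-x^\star}_1$ for a mean-value point $\xi$ on the segment $[x^\star,x_K]$, and then assert that ``iterate stability forces $\xi\ge\delta/4$ entrywise.'' \Cref{coro:stability} and \Cref{lem:stability} control only the algorithm's iterates and the proximal subproblem optimizers $x_\OPT$ of \eqref{eq:subproblem}; they say nothing about $x^\star$, the optimizer of \eqref{eq:main-reg-apprx}, which is a different object. From its softmax characterization $x^\star\propto\exp(-\tfrac{1}{\mu}(\ma y^\star + c - \tfrac{\eps}{2}\Abs{\ma}(y^\star)^2))$ one only gets $x^\star_i\gtrsim \tfrac1m e^{-O(\Cmax/\mu)}$, which is typically far below $\delta/4$, so $\xi$ can be close to the simplex boundary and the entropic part of the gradient is not $O(\mu\log m)$ as claimed (even granting a $\delta$-floor, the honest bound is $\mu\log(4/\delta)=\mu\log(4m^2/(\eps\veps^2))$, which carries a $\log(1/\eps)$ term not covered by your $\Cmax\log^2 m$ budget). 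The paper sidesteps the log-singularity entirely: in \Cref{lem:regdivtogap} the entropy difference is bounded by \Cref{lem:entropyvsl1}, $H(x)-H(x^\star)\le 33\log(m)\norm{x-x^\star}_1+m^{-30}$, which is insensitive to tiny coordinates, while the bilinear, linear, and quadratic-regularizer terms are $1$- or $\Cmax$-Lipschitz in $\ell_1$. To fix your argument you would either substitute that entropy bound, or separately lower bound $x^\star$ entrywise from its optimality conditions and track the resulting $O(\Cmax+\mu\log(m/\delta))$ gradient bound; as cited, the stability results do not support the step. The remaining pieces (the choice of internal accuracy, whose $\mu$ versus $\nu$ scaling differs from the paper's only in logarithmic factors, and the runtime tally) are fine.
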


As a corollary, we obtain an approximate solver for regularized box-simplex games in the following form (which in particular does not include a quadratic regularizer):
\begin{equation}\label{eq:main-reg}
\begin{aligned}
\min_{x\in\Delta^m}\max_{y\in[0,1]^n} f_{\mu}(x,y)  = y^
\top \ma^\top x + c^\top x -b^\top y + \mu H(x),
~~\text{where}~~ H(x)  \defeq \sum_{i\in[m]}x_i\log x_i.
\end{aligned}
\end{equation}

\begin{restatable}[Half-regularized approximate solver]{corollary}{corosherman}\label{coro:sherman}
Given regularized box-simplex game \eqref{eq:main-reg} with regularization parameters $72 \epsilon\le \mu\le 1$ and optimizer $(x^\star, y^\star)$, and letting $\eps\in(m^{-10},1)$, \Cref{alg:sherman} with $\veps \gets \frac \eps 2$ returns $x^K$ satisfying $\max_{y\in[0,1]^n}f_\mu(x^K,y) - f_\mu(x^\star,y^\star)\le \eps.$	
The total runtime of the algorithm is $O\Par{\nnz(\ma) \cdot  (\frac{\Cmax}{\sqrt{\mu\eps}} + \log\Par{\frac{m}{\eps}}) \cdot \log\Par{\frac{C_{\max}\log m} \eps} \log\Par{\frac{mn\Bmax} \eps}}$.
\end{restatable}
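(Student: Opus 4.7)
The plan is to deduce Corollary~\ref{coro:sherman} directly from Theorem~\ref{thm:sherman} by observing that the objective $f_\mu$ of \eqref{eq:main-reg} and the objective $f_{\mu,\eps}$ of \eqref{eq:main-reg-apprx} solved by $\RegBoxSimp$ differ only by the bilinear-quadratic regularization term $\tfrac{\eps}{2}(y^2)^\top |\ma|^\top x$, which is uniformly small over the joint domain $\Delta^m \times [0,1]^n$. Specifically, I would first bound
\[
0 \le \tfrac{\eps}{2}(y^2)^\top |\ma|^\top x \le \tfrac{\eps}{2} \cdot \norm{\ma}_\infty \cdot \norm{x}_1 \le \tfrac{\eps}{2}
\]
for every $(x,y) \in \Delta^m \times [0,1]^n$, using that $y_j^2 \le 1$, that the row sums of $|\ma|$ are bounded by $\norm{\ma}_\infty \le 1$, and that $\norm{x}_1 = 1$. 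Consequently $f_\mu \ge f_{\mu,\eps}$ pointwise and $f_\mu \le f_{\mu,\eps} + \tfrac{\eps}{2}$ pointwise.

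Next, I would apply Theorem~\ref{thm:sherman} with target accuracy $\veps \gets \eps/2$. Let $(x^\star_{\mu,\eps}, y^\star_{\mu,\eps})$ be the saddle point of \eqref{eq:main-reg-apprx}. The theorem guarantees that the output $x^K$ of $\RegBoxSimp(\ma, b, c, \eps, \mu, \eps/2)$ lies in $\Delta^m$ and satisfies
\[
\max_{y \in [0,1]^n} f_{\mu,\eps}(x^K, y) - f_{\mu,\eps}(x^\star_{\mu,\eps}, y^\star_{\mu,\eps}) \le \tfrac{\eps}{2}.
\]

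Finally, I would chain the two inequalities. Using the pointwise upper bound $f_\mu \le f_{\mu,\eps} + \eps/2$ on $x^K$ and the inequality $f_\mu(x^\star, y^\star) \ge f_{\mu,\eps}(x^\star_{\mu,\eps}, y^\star_{\mu,\eps})$ (which holds because $\min_x \max_y$ is monotone in the objective and $f_\mu \ge f_{\mu,\eps}$ pointwise), I get
\[
\max_{y\in[0,1]^n} f_\mu(x^K,y) - f_\mu(x^\star,y^\star) \le \Par{\max_{y\in[0,1]^n} f_{\mu,\eps}(x^K,y) + \tfrac{\eps}{2}} - f_{\mu,\eps}(x^\star_{\mu,\eps}, y^\star_{\mu,\eps}) \le \tfrac{\eps}{2} + \tfrac{\eps}{2} = \eps,
\]
as required. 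The runtime follows immediately by substituting $\veps = \eps/2$ into the runtime bound of Theorem~\ref{thm:sherman}, which changes only logarithmic factors. There is no essential obstacle: the argument is a one-line reduction, and the only thing to verify carefully is the uniform bound on the omitted regularizer, which is an elementary $\ell_1/\ell_\infty$-duality calculation using the normalization assumptions $\norm{\ma}_\infty \le 1$ and $x \in \Delta^m$ made at the start of Section~\ref{sec:framework}.
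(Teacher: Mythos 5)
Your proposal is correct and matches the paper's own proof: both bound the additive range of the quadratic term $\tfrac{\eps}{2}(y^2)^\top|\ma|^\top x$ by $\tfrac{\eps}{2}$ (using $\norm{\ma}_\infty \le 1$, $x \in \Delta^m$, $y \in [0,1]^n$), obtain the sandwich $f_{\mu,\eps} \le f_\mu \le f_{\mu,\eps} + \tfrac{\eps}{2}$, and invoke Theorem~\ref{thm:sherman} with accuracy $\veps = \tfrac{\eps}{2}$. You merely spell out the min-max monotonicity step that the paper leaves implicit, which is fine.
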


\arxiv{
\section*{Acknowledgements}

We thank anonymous reviewers for their feedback, Amin Saberi and David Wajc for helpful conversations, Jason Altschuler for providing a reference for the unaccelerated convergence rate of Sinkhorn's algorithm (in the original submission, we claimed no such rate had been stated previously), and Monika Henzinger and Thatchaphol Saranurak for helpful information regarding prior work.

AS was supported in part by a Microsoft Research Faculty Fellowship, NSF CAREER Award CCF-1844855, NSF Grant CCF-1955039, a PayPal research award, and a Sloan Research Fellowship. KT was supported in part by a Google Ph.D. Fellowship, a Simons-Berkeley VMware Research Fellowship, a Microsoft Research Faculty Fellowship, NSF CAREER Award CCF-1844855, NSF Grant CCF-1955039, and a PayPal research award. YJ was supported on a Stanford Graduate Fellowship and the Dantzig-Lieberman Operations Research Fellowship. 
}

\notarxiv{
\newpage
}
\arxiv{
\bibliographystyle{alpha}
} %
\newcommand{\etalchar}[1]{$^{#1}$}

\newpage

\appendix

\notarxiv{
\section{Discussion of a recent advancement~\cite{chen2022maximum}}\label{app:maxflow}

Subsequent to the original submission of this paper, a breakthrough result of  \cite{chen2022maximum} provided an algorithm which computes high-accuracy solutions to a variety of graph-based optimization objectives in almost-linear time. One of the many applications of \cite{chen2022maximum} is faster algorithms for computing Sinkhorn distances. Using the algorithm designed in~\cite{chen2022maximum}, in place of~\cite{Cohen17}, yields  a speedup of Theorem~\ref{lem:ms-sinkhorn}, removing the polynomial dependence on $\mu^{-1}$ at the cost of an overhead of $m^{o(1)}$.

\begin{theorem}[informal, see~\Cref{lem:dec-solver-maxflow}]
	For $\eps = \Omega(m^{-3})$ in~\eqref{eq:sinkhornintro} corresponding to a bipartite graph with $m$ edges, there is an algorithm that obtains an $\eps$-approximate minimizer to~\eqref{eq:sinkhornintro} in time $m^{1+o(1)}$. 
\end{theorem}

Further, we show how to use \cite{chen2022maximum} to obtain improved running times for DDBM. By plugging in this Sinkhorn distance computation algorithm into our DDBM framework (and showing it is compatible with the form of our subproblems), we obtain an improved (randomized) DDBM solver in terms of the $\eps^{-1}$ dependence at the cost of a $m^{o(1)}$ overhead. 

\begin{theorem}[informal, see Theorem~\ref{thm:main-dec-sinkhorn-maxflow}]
	Let $G = (V, E)$ be bipartite, $|V|=n$, $|E|=m$, and $\eps \in [\Omega(m^{-3},1)$. There is a randomized algorithm with success probability $1-n^{-\Omega(1)}$  maintaining an $\eps$-approximate maximum matching in an (adaptive) decremental stream running in time $m^{1+o(1)}\epsilon^{-2}$.
\end{theorem}

We believe this result highlights the versatility and utility of our framework for solving DDBM. Given the varied technical machinery involved (and differing runtimes and use of randomness) our Sinkhorn algorithms, i.e.\ our new regularized box-simplex solver (see~\Cref{sec:framework}), \cite{Cohen17}, and \cite{chen2022maximum}, we provide statements of each solver instantiated by these different machinery in~\Cref{ssec:framework-alg}. However, due to the subsequent nature of \cite{chen2022maximum} with respect to our work, and because DDBM running times using  \cite{Cohen17} are no better than those achieved in our paper using our regularized box-simplex solver, the proofs of our DDBM solver based on~\cite{chen2022maximum} are deferred to~\Cref{ssec:dynamic-redx-maxflow}.
}

\section{Proofs for~\Cref{sec:dynamic}}\label{app:dynamic}

\subsection{Proofs for~\Cref{sec:dec_framework}}\label{app:dec_framework}

\propdecmatch*

\begin{proof}

Throughout this proof, let the edge sets recomputed by \Cref{line:recompute} be denoted $E_1, E_2, \ldots, E_K$ and assume the termination condition~\Cref{line:terminate-dec-outer} breaks the loop for $E_{K+1}$, where $E_0$ is the original edge set of the graph. Before proving the two claims, we first observe that for all $k \in [K]$, $\MCM(E_k) \ge \frac M 8$. The base case of $k = 0$ clearly holds because of the approximation guarantee of the greedy algorithm, which yields $M \ge \half \MCM(E)$. Now, for any $k \in [K]$, let $M_{\textup{est}}^k$ be the greedily computed estimate of $\MCM(E_k)$. We conclude by 
\[\MCM(E_k) \ge M_{\textup{est}}^k > \frac 1 4 M \ge \frac 1 8 \MCM(E).\]

\textbf{Matching approximation.} For any $0 \le k \le K$, by item $1$ of the CRO definition applied to $E_{k}$, we have $(1 + \frac \epsilon 8) \MCM(E_{k})\ge -\nu^{E_{k}} \ge (1 - \frac \epsilon 8) \MCM(E_{k})$, and hence combining with \eqref{eq:cond-apprx-function} and~\eqref{eq:cond-round-guarantee} implies
\begin{align}
	-f_{M,E_{k}}(\hat{x}^{E_{k}})& \ge -\Par{1-\frac{\epsilon}{8}}\nu^{E_{k}} \ge \Par{1-\frac{\epsilon}{8}}^2\frac{1}{4}M \ge\frac{3}{16}M,~\label{eq:bound-xhat}\\
	-f_{M,E_{k}}(\tilde{x}^{E_{k}})& \ge -\Par{1-\frac{\epsilon}{8}}\nu^{E_{k}} \ge \Par{1-\frac{\epsilon}{8}}^2\frac{1}{4}M \ge\frac{3}{16}M.~\label{eq:bound-xtilde}
\end{align}
By the definition of the CRO in~\eqref{eq:cond-canonical-two} and feasiblity of the matching $8M\tilde{x}^{E_k}$ (which follows by the assumed guarantees on $\Round$), we have 
\begin{align}
8M\norm{\tilde{x}^{E_k}}_1 \ge	-f_{M,E_k}\Par{\tilde{x}^{E_k}}-\frac{\epsilon}{128}M\ge\frac{5}{32}M 
\implies\norm{\tilde{x}^{E_k}}_1 \ge \frac{5}{8\cdot 32}.\label{eq:lb-xtilde}
\end{align}
Also, following the above calculation and~\eqref{eq:bound-xtilde} we have that
\begin{align}
8M\norm{\tilde{x}^{E_k}}_1 & \ge  	-f_{M,E_k}\Par{\tilde{x}^{E_k}}-\frac{\eps}{128}M\nonumber \ge\Par{1-\frac{\epsilon}{8}}^2\MCM(E_k)-\frac{\epsilon}{32}M\ge \Par{1-\frac{\eps}{2}}\MCM(E_k)\label{eq:xtile-approx-1}
\end{align}
where for the last inequality we use $\MCM(E_k)\ge \frac{1}{4}M$, as shown in the first part of this proof. Consequently, considering any $E$ between recomputations satisfying $E_{k+1}\subset E\subseteq E_k$ such that $E=E_k\setminus E_\del$, we have by the condition on~\Cref{line:inner-dec-start} that
\begin{align*}
	8M\norm{\tilde{x}^{E_k}_{E_k\setminus E_\del}}_1 & = 8M\norm{\tilde{x}^{E_k}}_1 - 8M\norm{\tilde{x}^{E_k}_{E_\del}}_1\ge \Par{1-\frac{\epsilon}{8}}8M\norm{\tilde{x}^{E_k}}_1\\
	& \ge \Par{1-\frac{\epsilon}{8}}\Par{1-\frac{\epsilon}{2}}\MCM(E_k)\ge \Par{1-\epsilon}\MCM(E_k\setminus E_\del),
\end{align*}
which combined with feasibility of $8M\tx^{E_k}$ (and hence, feasibility of any restriction) implies that $8M\tilde{x}^{E_k}_{E_k\setminus E_\del}$ is always an $\epsilon$-approximate matching of current graph.

\textbf{Multiplicative matching decrease.} When the algorithm terminates after $K+1$ loops, using the termination condition, and the guarantee of $\Mest$, we have
\[\MCM({E_{K+1}})\le 2\Mest \le \frac{1}{2}M\le \frac{1}{2}\MCM(E_0).\]
Hence, the $\MCM$ value decreases by a factor of at least $2$.

\textbf{Iteration bound.} We next show that the algorithm will stop after $K+1$ loops (from~\Cref{line:inner-dec-start} to~\Cref{line:inner-dec-end}) for bounded $K$. For some loop $0 \le k \le K-1 $, letting $E_{k + 1} = E_k \setminus E_\del$,
\begin{equation}\label{eq:dec-progress-app}
	\begin{aligned}
& f_{M,E_{k+1}}\Par{x^{E_{k+1}}} - f_{M,E_{k}}\Par{x^{E_k}} \stackrel{(i)}{\ge} \beta V^H_{x^{E_k}}\Par{x^{E_{k+1}}}\\
& \quad \stackrel{(ii)}{\ge} \beta\sum_{i\in E_\del} \left(
[x^{{E_{k+1}}}]_i\log [x^{{E_{k+1}}}]_i - [x^{E_k}]_i\log [x^{E_k}]_i - \Par{1+\log [x^{E_k}]_i}\cdot \Par{[x^{{E_{k+1}}}]_i-[x^{E_k}]_i}
\right)\\
& \quad\stackrel{(iii)}{=} \beta \sum_{i\in E_\del}[x^{E_k}]_i \stackrel{(iv)}{\ge} \beta \left(\norm{\hat{x}^{E_k}_{E_\del}}_1 - \norm{\hat{x}^{E_k} - x^{E_k}}_1\right)\\
& \quad\stackrel{(v)}{\ge} \beta \left(\norm{\tilde{x}^{E_k}_{E_\del}}_1 - \norm{\hat{x}^{E_k} - x^{E_k}}_1\right),
\end{aligned}
\end{equation}
where we used $(i)$ the assumption~\eqref{eq:cond-canonical-one}, $(ii)$ convexity of the scalar function $c \log c$ allowing us to restrict the divergence to a subset, $(iii)$ the property that $[x^{E_{k+1}}]_i = 0$ on all deleted edges by Item~\ref{item:def12} in Definition~\ref{def:canonical}, $(iv)$ the triangle inequality, and $(v)$ the monotonicity property~\eqref{eq:cond-round-guarantee-l1}.
 
We now proceed to bound $\|\tilde{x}^{E_k}_{E_\del}\|_1$ and $\norm{\hat{x}^{E_k} - x^{E_k}}_1$. By the termination condition on Line~\ref{line:inner-dec-start}, $\|\tilde{x}^{E_k}_{E_\del}\|_1\ge \tfrac{\eps}{8}\norm{\tilde{x}^{E_k}}_1$. Moreover,
\begin{align*}
\norm{\hat{x}^{E_k} - x^{E_k}}_1\le \frac{\eps}{1000}\le 	\frac{\eps}{16}\norm{\tilde{x}^{E_k}}_1,
\end{align*}
where we used the $\ell_1$ bound in~\eqref{eq:cond-apprx-l1} and the lower bound  on $\norm{\tx^{E_k}}_1$ implied by~\eqref{eq:lb-xtilde}. 

Plugging these two bounds back in~\eqref{eq:dec-progress-app}, and once again using \eqref{eq:lb-xtilde}, we thus have
\[
f_{M,E_{k+1}}\Par{x^{E_{k+1}}} - f_{M,E_{k}}\Par{x^{E_k}}\ge \frac{\beta\eps}{16}\norm{\tilde{x}^{E_k}}_1 = \Omega(\beta\eps).
\]
At the beginning of the algorithm, recall by $(1 + \frac 1 8) \MCM(E_{0})\ge -\nu^{E_{k}} = -f_{M, E_0}(x^{E_0})$ due to~\Cref{def:canonical} we have $-f_{M, E_0}(x^{E_0})\le 2.25M$. Similarly, due to ~\Cref{def:canonical} we also have for $E_K$, it holds that $ -\nu^{E_{K}} \ge (1 - \frac 1 8) \MCM(E_{K})\ge \frac{7}{32}M$. Thus, this shows $K\le O\Par{\frac{M}{\beta\epsilon}}$ and thus the algorithm must terminate within $ O(\tfrac{M}{\beta \eps})$ loops. This implies the runtime bound after combining with the cost of the greedy approximation.
\end{proof}

\subsection{DDBM via regularized box-simplex games}\label{ssec:dynamic-redx-bs}

Throughout this section, let $\eps \in (0, 1)$, and assume that the DDBM problem is initialized with bipartite $G = (V, E_0)$ with unsigned incidence matrix $\mb \in \{0, 1\}^{E \times V}$; we denote $n \defeq |V|$ and $m \defeq |E_0|$.  For $E \subseteq E_0$ and $M$, an $8$-approximation to $\MCM(E)$, we consider the following regularized box-simplex game approximating the matching problem:
\begin{equation}\label{eq:def-matching-bs}
\begin{aligned}
\min_{(x, \xi)\in  \Delta^{E+1}}\max_{y\in[0,1]^{V}} f_{M,E}(x,\xi,y) &\defeq  -\1_E^\top(8Mx) - y^\top \Par{8M\mb^\top x-\1} +\gamma\x H(x, \xi)+\gamma\y \Par{y^2}^\top \mb^\top x,\\
&\text{where}~\gamma\x = \frac{\eps M}{256\log(m)},~\gamma\y = \frac{\eps M}{256}.
\end{aligned}
\end{equation}

We recall a rounding procedure from~\cite{AssadiJJST22}; we restate its guarantees below for completeness.

\begin{lemma}[Rounding]\label{lem:remove}
	Given bipartite $(V,E)$, $\Remove$ (Algorithm 4, \cite{AssadiJJST22}) takes $\ell \in\R^E_{\ge0}$, runs in $O(|E|)$ time, and outputs $\tilde{\ell}$ satisfying $\tilde{\ell}\le \ell$ entrywise, $\mb^\top\tilde{\ell}\le \1$, and  
	\[\norm{\tilde{\ell}}_1\ge \norm{\ell}_1-\sum_{i\in V}\Par{[\mb^\top \ell-1]_i}_{+}.\]
\end{lemma}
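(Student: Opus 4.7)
The statement essentially says that a natural greedy deoverflow procedure produces a feasible fractional matching while losing only the sum of vertex overflows in $\ell_1$ mass. My plan is to recall the structure of $\Remove$ from \cite{AssadiJJST22}, namely: iterate over the vertices $i \in V$ in an arbitrary order, and at each step, if the current vertex $i$ has $[\mb^\top \ell^{\text{cur}}]_i > 1$, reduce incident edge weights by a total of $[\mb^\top \ell^{\text{cur}}]_i - 1$ (distributed in any convenient way, e.g.\ chopping off edges in order until the excess is removed), to bring $i$'s degree in $\ell^{\text{cur}}$ down to exactly $1$. Let $\ell^{(0)} = \ell$ and $\ell^{(k)}$ be the state after processing the first $k$ vertices, and let $\tilde{\ell} = \ell^{(|V|)}$.

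The entrywise monotonicity $\tilde{\ell} \le \ell$ is immediate from construction, since we only ever decrease edge weights. For feasibility $\mb^\top \tilde{\ell} \le \1$, I will argue that once vertex $i$ has been processed, its degree sum equals $\min(1,[\mb^\top \ell^{\text{cur}}]_i)$, and since subsequent updates only decrease edge weights (including those incident to $i$), its degree sum can only drop further. For the $\ell_1$ guarantee, the key observation is that the amount of mass removed while processing vertex $i_k$ equals $([\mb^\top \ell^{(k-1)} - \1]_{i_k})_+$, and by monotonicity $\mb^\top \ell^{(k-1)} \le \mb^\top \ell$ coordinatewise, so this is bounded by $([\mb^\top \ell - \1]_{i_k})_+$. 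Summing over all processed vertices gives
\[
\norm{\ell}_1 - \norm{\tilde{\ell}}_1 \le \sum_{i \in V} ([\mb^\top \ell - \1]_i)_+,
\]
which rearranges to the stated bound.

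For the runtime, the standard implementation maintains the current degree at each vertex along with adjacency lists; when processing vertex $i$, we walk through its incident edges, decrementing each by the full current weight until the accumulated reduction matches the overflow (the last edge touched may be only partially reduced). Each edge can be fully reduced at most once (after which it has weight $0$), and is touched at most a constant number of additional times (once per endpoint as a ``partial'' reduction), so the total work is $O(|E|)$ after the $O(|V|+|E|)$ setup.

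The main subtlety will be the careful accounting to ensure the mass inequality is stated with respect to the original overflows rather than the post-processing ones; the monotonicity of $\mb^\top \ell^{(k)}$ in $k$ is what makes this work cleanly. Since the procedure and analysis are essentially those of Algorithm~4 of \cite{AssadiJJST22}, the proof reduces to invoking that analysis; I would state it here for completeness but otherwise defer the detailed verification of the linear-time implementation to the prior work.
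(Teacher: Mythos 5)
Your argument is correct, but note that the paper itself contains no proof of this lemma: it simply restates the guarantees of Algorithm~4 of \cite{AssadiJJST22} and defers entirely to that work, whereas you supply a self-contained argument for a reconstructed greedy procedure. Your reconstruction (process vertices in arbitrary order; when vertex $i$ is overflowed, strip incident edge weight until its load is exactly $1$) does satisfy all three guarantees, and your accounting is sound: the mass removed at the step processing $i_k$ is $\Par{[\mb^\top \ell^{(k-1)}]_{i_k}-1}_+ \le \Par{[\mb^\top \ell]_{i_k}-1}_+$ because iterates only decrease entrywise and $v \mapsto \Par{[\mb^\top v - \1]_i}_+$ is monotone in $v \ge 0$; feasibility follows since a processed vertex has load at most $1$ and loads never increase afterward; and the runtime is immediate since the total adjacency-list length is $2|E|$ (you do not even need the ``each edge is zeroed at most once'' refinement). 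The one caveat is that the lemma is a statement about the specific subroutine $\Remove$ of \cite{AssadiJJST22}, and you have not verified that your greedy procedure coincides with that algorithm -- you acknowledge this by deferring to the cited analysis. For the purposes of this paper that distinction is immaterial, since only the stated guarantees (monotone domination, feasibility $\mb^\top\tilde{\ell}\le\1$, and $\norm{\tilde{\ell}}_1\ge\norm{\ell}_1-\sum_{i\in V}\Par{[\mb^\top\ell-\1]_i}_+$, in $O(|E|)$ time) are used downstream, and any procedure meeting them could be substituted; but as written your proof is of an algorithm satisfying the lemma rather than a verification of the cited one.
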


Next, we show that the set of $f_{M,E}(x) =\min_{\xi \mid (x, \xi)\in  \Delta^{E+1}}\max_{y\in[0,1]^{V}} f_{M,E}(x, \xi,y)$ forms a family of $(\eps,\gamma\x)$-CROs.

\begin{restatable}[CROs from regularized box-simplex games]{lemma}{lemcanonicalbs}\label{lem:canonical-bs}
Define (overloading notation)
\[f_{M,E}(x) \defeq \min_{\xi \mid (x, \xi)\in  \Delta^{E+1}}\max_{y\in[0,1]^{V}} f_{M,E}(x, \xi,y).\]
Then, the family of $f_{M,E}(x)$ is a family of $(\eps,\gamma\x)$-CROs.
\end{restatable}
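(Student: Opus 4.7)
The plan is to verify each of the four CRO properties of Definition~\ref{def:canonical} in turn. The crux is a direct computation of the inner maximum over $y$ when $8Mx$ is a feasible matching, which establishes the fourth property and then feeds into the first via a standard rounding argument; the second and third follow from structural observations about the entropic regularizer and its strong convexity.

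For the fourth property, I compute $\max_{y\in[0,1]^V} f_{M,E}(x,1-\norm{x}_1,y)$ directly. This maximization decouples across vertices, and at each coordinate the inner objective in $y_v$ is a convex parabola, hence optimized at an endpoint of $[0,1]$. Analyzing the sign of $8M(\mb^\top x)_v-1$ when $8Mx$ is feasible yields a closed-form value whose deviation from the dominant linear piece $-8M\norm{x}_1$ is controlled by (i) the $\gamma^x$ quadratic contribution, at most $O(\gamma^y\norm{x}_1)=O(\eps M)$ since each edge contributes weight two to $\1^\top \mb^\top x$ on a bipartite graph; and (ii) the range of $\gamma^x H(x,\xi)$ on $\Delta^{E+1}$, at most $\gamma^x\log(|E|+1)=O(\eps M)$ by the choice $\gamma^x=\eps M/(256\log m)$. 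Tuning the constants gives the $(\eps/128)M$ two-sided bound required.

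For the first property, plugging in the scaled maximum matching into the fourth yields the one-sided bound $-\nu^E\ge \MCM(E)-\eps M/128$. For the matching upper bound, any candidate for the minimum of $f_{M,E}$ can be rounded via~\Cref{lem:remove} to a feasible matching with $\ell_1$-loss at most the overflow $\sum_v([8M\mb^\top x-\1]_v)_+$, which is essentially what $\max_y$ extracts from the $-y^\top(8M\mb^\top x-\1)$ term at infeasible $x$; combined with the fourth property this converts to an $(\eps/8)$-multiplicative approximation to $\MCM(E)$ using the hypothesis $\MCM(E)\ge M/8$. The second property is immediate: restricting $x$-support to $E\subseteq E_0$ zeros out every term of $f_{M,E_0}$ involving $x_e$ for $e\notin E$ under the convention $0\log 0=0$. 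Finally, for the third property (Bregman strong convexity with $\beta=\gamma^x$), observe that for fixed $y$ the summand $g(x,y)=-8M\1^\top x-y^\top(8M\mb^\top x-\1)+\gamma^y(y^2)^\top \mb^\top x$ is linear in $x$, so $\max_y g(x,y)$ is convex in $x$; adding the $\gamma^x H(x,1-\norm{x}_1)$ term, whose Hessian in $x$ equals $\diag{1/x}+(1-\norm{x}_1)^{-1}\1\1^\top\succeq\diag{1/x}$ (the Hessian of the pure entropy $\sum_e x_e\log x_e$), makes $f_{M,E}(x)$ a $\gamma^x$-strongly convex function of $x$ with respect to $V^H_{x^E}(\cdot)$. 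Applying the first-order optimality condition at $x^E$ and using the second property to identify $f_{M,E}(x^{E'})=f_{M,E'}(x^{E'})$ then gives~\eqref{eq:cond-canonical-one}.

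The main obstacle will be quantifying the regularizer contributions in the fourth property to fit under the $\eps M/128$ budget; this requires pigeonholing the number of saturated vertices against the matching size and exploiting the specific choice $\gamma^x=\eps M/(256\log m)$ to absorb the $\log$ factor from the entropy range. A secondary subtlety is ensuring the Bregman divergence in the third property is well-defined when $x^E$ has small entries on $E\setminus E'$; this is handled via the coercivity of the entropic regularizer, which forces $x^E>0$ elementwise on $E$ at the optimum, and by convexity of $c\log c$ to restrict the divergence to coordinates supported in $E'$ as in the proof of Proposition~\ref{prop:decmatch}.
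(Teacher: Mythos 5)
Your proposal is correct and follows essentially the same route as the paper's proof: Item~\ref{item:def12} is immediate, Item~\ref{item:def14} comes from bounding the regularizer ranges using the choices of $\gamma\x,\gamma\y$, Item~\ref{item:def11} combines plugging in the scaled maximum matching with a rounding argument via \Cref{lem:remove} and the bound $M\le 8\,\MCM(E)$, and Item~\ref{item:def13} is relative strong convexity with respect to entropy plus first-order optimality at $x^E$ combined with Item~\ref{item:def12}. The only cosmetic differences are that you prove the upper bound on $-\nu^E$ directly rather than by contradiction, and you obtain Item~\ref{item:def13} on the reduced function $f_{M,E}(x)$ (convexity of the $y$-maximum plus the Hessian of the entropy-with-slack term) instead of the paper's Hessian-integral argument at the saddle point with $y$ fixed to $y^E$.
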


\begin{proof} We prove each property in turn. Item~\ref{item:def12} (restricting coordinates to zero) is immediate from the problem definition, so we focus on showing the other three.
	
\textbf{Multiplicative value approximation (Item~\ref{item:def11}).}
By assumption, $M$ is an $8$-multiplicative approximation to $\MCM(E)$, so there is a maximum matching $8Mx_\star^E$ such that $\norm{x_\star^E}_1\le 1$. For such $x_\star^E$ and $\xi \defeq 1-\norm{x_\star^E}_1$, and because the entropic and quadratic regularization terms are nonpositive, 
\begin{align*}
\nu^E & \le \max_{y\in[0,1]^{V}} f_{M,E}(x_\star^E,\xi,y) \\
&\le \max_{y \in [0,1]^V} -\1_E^\top(8Mx_\star^E) - y^\top\Par{8M\mb^\top x_\star^E - \1} = -\MCM(E).
\end{align*}
The last equality holds because we assumed $8Mx_\star^E$ was a maximum matching. 
For the other direction, we proceed by contradiction. Suppose we have a feasible $(x,\xi)$ with
\[\max_{y}f_{M,E}(x,\xi,y)
< -\Par{1+\frac{\eps}{4}}\MCM(E).\]
Applying $\Remove$ on the approximate matching $\ell \defeq 8Mx$ yields $\tilde{\ell} \defeq 8M\tilde{x}$ such that 
\begin{align*}
-\MCM(E)\le -\norm{8M\tilde{x}}_1 & \le -\norm{8Mx}_1+\norm{8M\mb^\top x-\1}_1\\
& \stackrel{(i)}{\le} \max_{y\in[0,1]^{V}}f_{M,E}(x,\xi,y) + \frac{\eps M}{128}\\
& \stackrel{(ii)}{<}-\Par{1+\frac{\eps}{4}}\MCM(E) + \frac{\eps}{16}\MCM(E) < -\MCM(E),
\end{align*}
where we used $(i)$ the definition of $f_{M,E}$ and bounds on $\gamma\x$, $\gamma\y$ leading to an additive range of at most $\frac{\eps M}{128}$, and $(ii)$ that $M\le 8\MCM(E)$ by assumption, leading to a contradiction. Hence, combining these bounds yields Item~\ref{item:def11}.

\textbf{Relative strong convexity (Item~\ref{item:def13}).} For simplicity, we drop the subscript $M$. Let $(x^E,\xi^E,y^E)$ and $(x^{E'},\xi^{E'},y^{E'})$ be the unique optimal solutions of $f_{E}$ and $f_{E'}$ respectively. By the first order optimality condition of $(x^E,\xi^E)$, we have
\begin{align*}
f_{E'}(x^{E'},\xi^{E'},y^{E}) - f_{E}(x^E,\xi^E,y^E) &\ge \int_{0}^1 (1-\alpha) v^\top\nabla^2_{(x,\xi)(x,\xi)}f(x_\alpha, \xi_\alpha,y^E) v d\alpha,
\end{align*}
where we let $v \defeq (x^{E'}-x^{E},\xi^{E'}-\xi^{E})$, $x_\alpha \defeq \alpha x^{E'} + (1 - \alpha) x^E$, and $\xi_\alpha \defeq \alpha \xi^{E'} + (1 - \alpha)\xi^E$. By joint convexity of $f$ as a function of $(x, \xi)$ (with $y$ fixed),
\begin{align*}
\int_{0}^1 (1-\alpha) v^\top\nabla^2_{(x,\xi)(x,\xi)}f(x_\alpha, \xi_\alpha,y^E) v d\alpha & \ge \int_{0}^1 (1-\alpha) v^\top\nabla^2_{xx}f(x_\alpha, \xi_\alpha,y^E) v d\alpha\\
&  = \gamma\x V^H_{x_E}(x^{E'}).
\end{align*}
Here we let $\nabla^2_{xx}f$ zero out blocks of $\nabla^2_{(x, \xi)(x,\xi) } f$ appropriately. The conclusion \eqref{eq:cond-canonical-one} follows since $f_{E'}(x^{E'},\xi^{E'},y^{E'}) \ge f_{E'}(x^{E'},\xi^{E'},y^{E})$ by optimality of $y^{E'}$.

\textbf{Additive approximation (Item~\ref{item:def14}).} It suffices to note that for any feasible matching $8Mx$, the value of $f_{M, E}$ without the added entropic and quadratic regularization terms is exactly $-8M\norm{x_E}_1$, since $8M\mb^\top x \le \1$ entrywise. Using standard bounds on the ranges of $H(x,\xi)\in[-\log (m+1),0)$, $\Par{y^2}^\top \mb^\top x\in[0,1]$, we have the additive approximation \eqref{eq:cond-canonical-two} due to the values of $\gamma\x$ and $\gamma\y$.
\end{proof}

We now apply our high-accuracy box-simplex solver $\RegBoxSimp$~(\Cref{thm:sherman}), together with our rounding procedure $\Remove$, to develop a $(\eps,\widetilde{O}(m\eps^{-1}))$-canonical solver for the CRO family induced by the regularized objectives~\eqref{eq:def-matching-bs}.

\begin{lemma}[Canonical solver for regularized box-simplex games]\label{lem:dec-solver-bs}
For $\eps = \Omega(m^{-3})$, there is an $(\eps,\mathcal{T})$-canonical solver for the family of $f_{M,E}$ defined in Lemma~\ref{lem:canonical-bs}, using $\RegBoxSimp$~(\Cref{alg:sherman}) as $\solve$ and $\Remove$ as $\Round$, achieving $\mathcal{T} = O\Par{m\eps^{-1}\log^{3}n }$.
\end{lemma}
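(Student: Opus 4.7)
The plan is to verify that $(\RegBoxSimp, \Remove)$ realize an $(\eps, \tO(m/\eps))$-canonical solver in the sense of Definition~\ref{def:solver}, and I would organize the argument in three steps.

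First, I would recast the CRO in~\eqref{eq:def-matching-bs} as an instance of the regularized box-simplex game amenable to $\RegBoxSimp$ via Theorem~\ref{thm:sherman} (or, when the quadratic term's strength relative to the entropic term would violate the precondition $72\eps_{\text{solver}} \le \mu$, the half-regularized variant Corollary~\ref{coro:sherman}, at the cost of an $O(\eps M)$ error since the CRO's quadratic term is uniformly $O(\eps M)$ over the feasible domain). The bilinear operator is identified with an appropriate scaling of $\mb$, the linear terms with rescalings of $\1_E$ and $\1_V$, and the regularizer strengths with $\gamma^x$ and $\gamma^y/M$ after suitable sign/scaling conversion. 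After rescaling the objective so that $\|\ma\|_\infty \le 1$, the preconditions $\mu \le 1$ and $\sigma \ge m^{-10}$ hold whenever $\eps = \Omega(m^{-3})$.

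Second, I would invoke the solver with additive accuracy $\sigma = \Theta(\eps M)$. Theorem~\ref{thm:sherman} then delivers $\hat x^E$ satisfying (i) $\max_y f_{M,E}(\hat x^E, \hat\xi, y) - \nu^E = O(\eps M)$, and (ii) $\|\hat x^E - x^E\|_1 = O(\eps/\log^2 m)$. Combined with the bound $|\nu^E| \ge M/16$ (from Item~1 of Definition~\ref{def:canonical} and $\MCM(E) \ge M/8$), (i) converts into the multiplicative approximation~\eqref{eq:cond-apprx-function}, while (ii) directly yields~\eqref{eq:cond-apprx-l1}. Collecting the runtime factors in Theorem~\ref{thm:sherman} with $\nnz(\ma) = O(m)$ and the parameters above simplifies to $\tO(m/\eps)$.

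Third, I would apply $\Remove$ to $\ell = 8M\hat x^E$, obtaining $\tilde x^E$ with $8M\tilde x^E$ a feasible matching on $(V, E)$, $\tilde x^E \le \hat x^E$ entrywise (which delivers~\eqref{eq:cond-round-guarantee-l1}), and $\|\tilde x^E\|_1 \ge \|\hat x^E\|_1 - (1/(8M)) \sum_v(8M\mb^\top \hat x^E - \1)_+$, all in $O(m)$ time by Lemma~\ref{lem:remove}. To conclude~\eqref{eq:cond-round-guarantee}, I would bound the total constraint violation $\sum_v(8M\mb^\top \hat x^E - \1)_+ = O(\eps M)$ via the triangle inequality together with $\|\mb^\top v\|_1 \le 2\|v\|_1$, the $\ell_1$-closeness of $\hat x^E$ to $x^E$ from step two, and a bound on the inherent violation of $x^E$ itself (obtained by combining Items~1 and~4 of Definition~\ref{def:canonical} applied to the feasible matching produced by running $\Remove$ on $x^E$). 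Finally, applying Item~4 to the feasible $\tilde x^E$ converts the resulting $\|\tilde x^E\|_1 \ge \|\hat x^E\|_1 - O(\eps)$ into the multiplicative approximation of $\nu^E$ required by~\eqref{eq:cond-round-guarantee}.

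The main obstacle is the third step. Unlike for $\solve$, where Theorem~\ref{thm:sherman}'s output directly guarantees the CRO-value approximation on $\hat x^E$, the rounded solution $\tilde x^E$ may lose $\ell_1$ mass equal to the total constraint violation of $\hat x^E$. Controlling this requires both $\ell_1$-stability (provided by Theorem~\ref{thm:sherman}) and a delicate bound on the intrinsic infeasibility of $x^E$ (since the CRO minimizer is not guaranteed to be exactly feasible). Ensuring the resulting drop in matching value stays within the $(1 \pm \eps/8)$-multiplicative factor demanded by~\eqref{eq:cond-round-guarantee}, rather than merely $(1 \pm O(\eps))$, is the key technical step and dictates the precise choice of the solver's accuracy parameter $\sigma$.
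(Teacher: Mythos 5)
Your Steps 1 and 2 match the paper's construction of $\solve$ (apply Theorem~\ref{thm:sherman} to a rescaling of $f_{M,E}$; get \eqref{eq:cond-apprx-function} from the value guarantee and \eqref{eq:cond-apprx-l1} from the $\ell_1$/strong-convexity guarantee), but Step 3 contains a genuine gap. Your plan is to lower bound $\norm{\tilde{x}^E}_1$ by $\norm{\hat{x}^E}_1 - O(\eps)$ by showing the total overflow $\sum_{v}\Par{[8M\mb^\top \hat{x}^E - \1]_v}_+ = O(\eps M)$, reducing this to a bound on the ``inherent violation'' of $x^E$ via Items~\ref{item:def11} and~\ref{item:def14} and $\Remove$. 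Neither step works. First, Items~\ref{item:def11} and~\ref{item:def14} together with Lemma~\ref{lem:remove} only control the quantity ``mass minus overflow'' (the inequalities you can extract are $8M\norm{x^E}_1 - \mathrm{overflow}(x^E) \le 8M\norm{\Remove(x^E)}_1 \le -f_{M,E}(\Remove(x^E)) + O(\eps M) \le -\nu^E + O(\eps M)$ and $-\nu^E \le 8M\norm{x^E}_1 - \mathrm{overflow}(x^E) + O(\eps M)$); when you try to isolate the overflow it cancels, and the only remaining upper bound on $8M\norm{x^E}_1$ is the trivial $8M$. Second, the claim itself is false: the CRO minimizer can have overflow $\Theta(M)$, not $O(\eps M)$. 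Take a star graph (one center, many leaves), so $\MCM = 1$ and $M = \Theta(1)$: on the flat region of the unregularized objective $-8M\norm{x}_1 + \sum_v(8M[\mb^\top x]_v - 1)_+$, spreading mass thinly over all leaf edges keeps the value at $-\MCM$ while the entropy term strictly prefers more spread, so $x^E$ places total edge mass $\Theta(1)$ and overflows the center constraint by $\Theta(M)$. Consequently $\norm{\tilde{x}^E}_1$ really can drop by $\Theta(1)$ relative to $\norm{\hat{x}^E}_1$, and your final conversion collapses.

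The paper avoids this entirely: since $\max_{y\in[0,1]^V}$ of the bilinear penalty evaluates exactly to the overflow, the \emph{unregularized} part of $-f_{M,E}(\hat{x}^E)$ equals $8M\norm{\hat{x}^E}_1 - \sum_v\Par{[8M\mb^\top\hat{x}^E - \1]_v}_+$ identically. Hence Lemma~\ref{lem:remove} gives $8M\norm{\tilde{x}^E}_1 \ge -f_{M,E}(\hat{x}^E) - O(\eps M)$ (the $O(\eps M)$ being the regularizers' range), and the solver's function-error guarantee plus Items~\ref{item:def11} and~\ref{item:def14} then yield \eqref{eq:cond-round-guarantee} with no bound on the overflow, and no claim of the form $\norm{\tilde{x}^E}_1 \ge \norm{\hat{x}^E}_1 - O(\eps)$, ever being needed. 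Separately, your fallback in Step 1 (running Corollary~\ref{coro:sherman} on a version of the objective perturbed by the $O(\eps M)$-range quadratic term) would not recover \eqref{eq:cond-apprx-l1}: an $O(\eps M)$ objective perturbation against $\Omega(\eps M/\log m)$ strong convexity only gives $\ell_1$ distance $O(\sqrt{\log m})$ to $x^E$, far weaker than the required $\eps/1100$; the solver must be run on the actual regularized objective $f_{M,E}$ (suitably rescaled), which is what the paper does.
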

\begin{proof}
We define $x^E$ as in \eqref{eq:nuxdef}. $\solve$ applies the procedure in Theorem~\ref{thm:sherman} to solve $\frac 1 {16M} f_{M, E}$ to additive accuracy $O(\frac{\eps^3}{\log m})$ in the stated runtime, for a sufficiently small constant. By the multiplicative approximation guarantee on $\nu^E$ and $\MCM(E)$ in \Cref{def:canonical}, and since $\MCM(E)$ is $8$-approximated by $M$, we obtain the guarantee \eqref{eq:cond-apprx-function}. The $\ell_1$ bound \eqref{eq:cond-apprx-l1} then follows since the objective is $\Omega(\frac \eps {\log m})$-strongly convex in the $x$ coordinates.

To show that $\Remove$ is a valid choice of $\Round$, the monotonicity and feasibility conditions follow immediately from \Cref{lem:remove}. Moreover, for $\ell \gets 8M\hat{x}^E$, the definition of $f_{M, E}$ and the guarantees of \Cref{lem:remove} imply
\begin{align*}\norm{\tilde{\ell}}_1 &\ge \norm{\ell}_1 - \sum_{i \in V}\Par{\Brack{\mb^\top \ell - 1}}_+ \\
&= \min_{y \in [0, 1]^V} \1_E^\top(8M\hat{x}^E) - y^\top\Par{8M\mb^\top \hat{x}^E - \1} \ge -\nu^E - O\Par{\frac{\eps^3 M}{\log m}} - \frac{\eps M}{128}.\end{align*}
Combined with our approximation guarantees between $\nu^E$, $\MCM(E)$, and $M$ via Items~\ref{item:def11} and~\ref{item:def14} of~\Cref{def:canonical}, this implies \eqref{eq:cond-round-guarantee}, showing all required properties of $\Remove$.
\end{proof}

\thmmaindecbs*
\begin{proof}
It suffices to combine Lemma~\ref{lem:canonical-bs}, Lemma~\ref{lem:dec-solver-bs}, and Corollary~\ref{coro:ddbm-reduce}.
\end{proof}

\subsection{DDBM via matrix scaling and box-constrained Newton's method}\label{ssec:dynamic-redx-high}

We follow the same notational conventions as in \Cref{ssec:dynamic-redx-bs}, and further assume $|L| \le |R|$ without loss of generality. Moreover, assume for simplicity that both $L$ and $R$ have at least one vertex which has no adjacent edges, denoted $v_L^\star$ and $v_R^\star$; this clearly will not affect any matching sizes if we add such vertices. We will further expand the graph $G = (V, E)$ to a new graph $\tG = (\tV, \tE)$, with bipartition $\tV = \tL \cup \tR$. In particular, $\tL$ consists of the original left vertices $L$, a new dummy vertex set $L_0$ such that $|L| + |L_0| = |R|$, and an extra dummy vertex $v^\dum_L$. Similarly, $\tilde{R}$ consists of the original right vertices $R$ and a new dummy vertex $v^\dum_R$. The new edge set is
\[\tE \defeq E\cup\Par{\bigcup_{v\in L\cup L_0}(v, v^\dum_R)}\cup\Par{\bigcup_{v'\in R}(v', v^\dum_L)}\cup\Par{v^\dum_L,v^\dum_R}.\]

We define $\tmb \in \{0, 1\}^{\tE \times \tV}$ such that $\tmb_{e,v}=1$ whenever $v\in e$. To instantiate our matrix scaling solver for approximating Sinkhorn distance in~\Cref{ssec:high-OT}, we construct a demand vector $d \in \R_{\ge 0}^{\tV}$ and a cost vector $c \in \R^{\tE}$ as follows.
\begin{enumerate}
	\item Set $d_v = 1$ for all $v \in L \cup L_0 \cup R$.
	\item Set $d_{v_L^\dum} = d_{v_R^\dum} = |R|$.
	\item Set $c_e = -1$ for all $e \in E$, and $c_e = 0$ for all $\tE \setminus E$.
\end{enumerate}
For a vector in $\R^{\tE}_{\ge 0}$, we refer to its restrictions to the sets $E$ and $\tE \setminus E$ by $x$ and $x^\dum$ when clear from context. It is clear that any matching $ 2|R| x \in \R^E_{\ge 0}$ on the original graph $(V, E)$ can be extended to a matching 
\[2|R| x^\tot = 2|R|\begin{pmatrix}x \\ x^\dum\end{pmatrix} \in \R^{\tE}_{\ge 0}\]
such that $\norm{x^\tot}_1 = 1$, and $2|R| \tmb^\top x^\tot = d$, by placing additional flow on the edges in $\tE \setminus E$. Specifically, we will extend any flow on $(v, v') \in E$ to put the same amount of flow on $(v', v_L^\dum)$, $(v, v_R^\dum)$, and $(v_L^\dum, v_R^\dum)$, and then for any additional demand not routed to a vertex $v \in L \cup L_0 \cup R$ by the original edges $E$, we route it arbitrarily over the additional edges $\tE$ (similarly extending this flow to $v^\dum_L$ and $v^\dum_R$). This is always feasible, as we can route to $v_L^\star$ and $v_R^\star$. The total $\ell_1$ norm of $d$ is $4|R|$ by construction, and every unit of $x^\tot$ contributes two units of demand, thus $2|R| \norm{x^\tot} = 2|R|$ and rearrangement gives the $\ell_1$ guarantee. We then consider the Sinkhorn distance objective
\begin{equation}\label{eq:def-matching-two}
\begin{aligned}
\min_{x^{\tot} = \begin{psmallmatrix}x \\ x^{\dum}\end{psmallmatrix} \in \R_{\ge 0}^{\tE} \mid \tmb^\top(2|R|x^\tot) = d}  & f^\sink_{M,E}(x^{\tot}) \defeq c^\top\Par{2|R|x^\tot} +\gamma H\Par{x^\tot}~\text{where}~\gamma = \Theta\Par{\frac{\eps M}{\log(m)}}.
\end{aligned}
\end{equation}

We show such objectives also form a family of CROs.

\begin{restatable}[CROs from Sinkhorn distances]{lemma}{lemcanonicalsink}\label{lem:canonical-sink}
	Define (overloading notation) \[f^\sink_{M,E}(x) =\min_{x^\dum\in\R_{\ge0}^{E\setminus E_0}} f^\sink_{M,E}(x, x^\dum).\] 
	Then, the family of $f^\sink_{M,E}(x)$ is a family of $(\eps, \gamma)$-CROs.
\end{restatable}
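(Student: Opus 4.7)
The plan is to verify each of the four properties in Definition~\ref{def:canonical} in turn, following the template of Lemma~\ref{lem:canonical-bs} (the analogue for the box-simplex formulation~\eqref{eq:def-matching-bs}). The key structural fact to exploit is that in~\eqref{eq:def-matching-two}, the demand constraint $2|R|\tmb^\top x^\tot = d$ with $d_v = 1$ on original vertices $v \in L \cup R$ forces $2|R|\mb^\top x \le \1$ on those vertices, so the $E$-restriction $2|R|x$ of any feasible $x^\tot$ is a fractional matching on $(V,E)$, with the dummy edges $\tE \setminus E$ absorbing any slack. Since the dummy vertex and edge structure is built purely from $V$, it is invariant under edge deletions.

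I would first dispatch Item~\ref{item:def12}, which is essentially immediate: fixing $x_e = 0$ for $e \in E_0 \setminus E$ in $f^\sink_{M, E_0}$ produces exactly the feasible set and objective of $f^\sink_{M, E}$, since the dummy coordinates, demand constraints, and entropy term are unchanged. For Item~\ref{item:def11}, I would bound $-\nu^E$ both above and below by $\MCM(E) \pm O(\eps M)$, giving an $(\eps/8)$-multiplicative approximation after using that $M$ is an $8$-approximation to $\MCM(E)$. The upper bound comes from a feasible construction: take any maximum matching on $(V,E)$, normalize it appropriately, and extend to a feasible $x^\tot$ by routing the residual demand through the dummy edges; the linear term yields exactly the matching size and the entropy contributes at most $O(\gamma \log|\tE|) = O(\eps M)$. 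The lower bound follows because the optimizer's $E$-restriction is itself feasible as a fractional matching, with entropy again contributing at most $O(\eps M)$.

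For Item~\ref{item:def13}, the only nonlinear term in $f^\sink_{M,E}$ is $\gamma H(x^\tot)$, so the objective is $\gamma$-strongly convex relative to the entropic Bregman divergence. Mirroring Lemma~\ref{lem:canonical-bs}: apply first-order optimality at $x^{\tot,E}$ with test direction $x^{\tot,E'} - x^{\tot,E}$ (both feasible for the smaller edge set, since $E' \subseteq E$), then integrate the Hessian along that segment to bound $f^\sink_{M,E'}(x^{E'}) - f^\sink_{M,E}(x^E)$ below by $\gamma V^H_{x^{\tot,E}}(x^{\tot,E'})$; convexity of $c\log c$ lets me restrict this divergence to the $E$-coordinates, matching~\eqref{eq:cond-canonical-one}. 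Item~\ref{item:def14} is then direct: for any $x$ such that $8Mx$ is a feasible matching, minimizing over $x^\dum$ gives a feasible extension, the linear part of $-f^\sink_{M,E}$ is proportional to $\|x\|_1$, and the entropy term lies in $[-\gamma \log|\tE|, 0] \subseteq [-O(\eps M), 0]$.

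The main subtlety I anticipate is reconciling the two scalings: the Sinkhorn formulation naturally encodes matchings at scale $2|R|$ (from normalizing the demands to one), whereas Definition~\ref{def:canonical} states Items~\ref{item:def13} and~\ref{item:def14} at scale $8M$, and $|R|$ may be substantially larger than $M$. I expect to handle this by implicitly rescaling the Sinkhorn primal by the factor $\tfrac{2|R|}{8M}$ when mapping it into the CRO framework (so that $8M x_{\mathrm{CRO}}$ corresponds to the matching $2|R| x_{\mathrm{Sinkhorn}}$), or equivalently by verifying that the relevant additive slacks still fit within $O(\eps M)$ and that the $\gamma$-strong-convexity constant transforms correctly under this rescaling. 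Careful bookkeeping across this reparameterization is the main technical hurdle; the remaining verifications are routine given the simple linear-plus-entropy structure of $f^\sink_{M,E}$.
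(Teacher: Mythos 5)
Your proposal is correct and takes essentially the same route as the paper's proof: the paper likewise dismisses Item~\ref{item:def12} as immediate, proves Item~\ref{item:def11} by extending an optimal matching to a feasible $x^{\tot}_\star$ (using nonpositivity of entropy) and by noting that the $E$-restriction of any feasible point is automatically a fractional matching, and handles Items~\ref{item:def13} and~\ref{item:def14} by declaring them analogous to Lemma~\ref{lem:canonical-bs}, which is exactly the entropy-Hessian strong-convexity and entropy-range arguments you spell out. The $2|R|$-versus-$8M$ normalization you flag as the main hurdle is a genuine loose end, but the paper's own proof does not address it at all, so your explicit rescaling bookkeeping is added care beyond what the paper records rather than a departure from its approach.
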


\begin{proof}
We focus on proving Item~\ref{item:def11}; Item~\ref{item:def12} is immediate from the problem definition, and Items~\ref{item:def13} and~\ref{item:def14} follow analogously to the proof in Lemma~\ref{lem:canonical-bs}.
	
\textbf{Multiplicative value approximation (Item~\ref{item:def11}).} By construction, any optimal matching $2|R| x_\star$ on the edge set $E$ can be extended to $x^\tot_\star$ such that $\norm{x^\tot_\star}_1 = 1$ and $\tmb^\top(2|R|x^\tot)=d$. By nonpositivity of entropy and the definition of $c$, we obtain
	\begin{align*}
	\nu^E \le f^\sink_{M,E}(x_\star^E) = -\MCM(E).
	\end{align*}
For the other direction, we proceed by contradiction. Suppose we have a feasible $x^\tot$ with $x \defeq x^\tot_E$, so that $f^\sink_{M,E}(x^\tot)<-\Par{1+\frac{\eps}{4}}\MCM(E)$. This implies $2|R|x$ is a feasible matching, and then an analogous argument to the one used in Lemma~\ref{lem:canonical-bs} yields the contradiction.
\end{proof}

We next construct a canonical solver for the family of $f^\sink_{M,E}$, which makes use of the matrix scaling solver in~\Cref{prop:ms-solver} of~\Cref{ssec:high-OT}. 

\begin{lemma}[Canonical solver for Sinkhorn objectives]\label{lem:dec-solver-ms}
For $\eps = \Omega(m^{-3})$, there is an $(\eps,\mathcal{T})$-canonical solver for the family of $f_{M,E}^\sink$ defined in Lemma~\eqref{lem:canonical-sink}, using the matrix scaling solver from \cite{Cohen17} as $\solve$ and truncation to $E$ as $\Round$, running in time $\tO(n^2 / \eps)$.
\end{lemma}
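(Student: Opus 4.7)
The argument will mirror the structure of Lemma~\ref{lem:dec-solver-bs}, replacing $\RegBoxSimp$ with the matrix scaling solver from Proposition~\ref{prop:ms-solver} (based on \cite{Cohen17}) as the $\solve$ subroutine, and replacing $\Remove$ with coordinate truncation to $E$ as the $\Round$ subroutine. The task is to verify the two conditions from Definition~\ref{def:solver} for the CRO family from Lemma~\ref{lem:canonical-sink}, given that we have already established in Lemma~\ref{lem:canonical-sink} that this family satisfies Definition~\ref{def:canonical} with strong convexity parameter $\gamma$.

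\textbf{Verifying $\solve$.} I would invoke Proposition~\ref{prop:ms-solver} on the Sinkhorn objective \eqref{eq:def-matching-two} to additive accuracy $\sigma = \Theta(\eps^3 M/\log m)$ for a small enough constant. The runtime of $\tO(n^2/\eps)$ follows since the Cohen et al.\ box-constrained Newton's method converges polylogarithmically in $1/\sigma$ and our chosen $\sigma$ only contributes $O(\log(1/\eps))$ factors. The function value bound \eqref{eq:cond-apprx-function} follows immediately by combining the multiplicative approximation between $\nu^E$ and $\MCM(E)$ (Item~\ref{item:def11} of Definition~\ref{def:canonical}) with our choice of additive accuracy scaled by $M$. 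For the $\ell_1$ guarantee \eqref{eq:cond-apprx-l1}, the key is that $f^\sink_{M,E}$ contains the entropic regularizer $\gamma H(x^\tot)$ with $\gamma = \widetilde{\Theta}(\eps M)$, which via Pinsker's inequality is $\gamma/2$-strongly convex in $\ell_1$ over the (rescaled) simplex. Converting an $O(\eps^3 M/\log m)$ function suboptimality to $\ell_1$ distance via strong convexity yields $\|\hat{x}^E - x^E\|_1 \le O(\eps)$ with a tunable constant, satisfying \eqref{eq:cond-apprx-l1}.

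\textbf{Verifying $\Round$.} For the truncation operator returning $\tilde{x}^E = \hat{x}^E$ restricted to coordinates in $E$: monotonicity \eqref{eq:cond-round-guarantee-l1} is immediate since truncation only sets $x^{\dum}$-coordinates to zero. For feasibility of $8M\tilde{x}^E$ as a matching on $(V,E)$, I would invoke the equality constraint $2|R|\tmb^\top \hat{x}^\tot = d$: restricted to $v \in L \cup R$ we have $d_v = 1$, so $\sum_{e \in E,\, e \ni v} 2|R| \tilde{x}^E_e \le 1$. Since $M \le \MCM(E) \le |R|$ by the greedy approximation, rescaling by the constant factor between $8M$ and $2|R|$ (which is at most a universal constant, and absorbable into our choice of $\gamma$) yields that $8M\tilde{x}^E$ is feasible. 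For the function value guarantee \eqref{eq:cond-round-guarantee}, observe that by the definition $f^\sink_{M,E}(x) = \min_{x^\dum} f^\sink_{M,E}(x, x^\dum)$, truncation can only decrease the objective value: $f^\sink_{M,E}(\tilde{x}^E) \le f^\sink_{M,E}(\hat{x}^\tot)$, preserving the upper bound in \eqref{eq:cond-round-guarantee}; the lower bound is automatic since $\nu^E$ is the minimizer.

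\textbf{Anticipated obstacle.} The main bookkeeping challenge will be carefully reconciling the $2|R|$ scaling in the Sinkhorn formulation \eqref{eq:def-matching-two} with the $8M$ scaling required by Definition~\ref{def:solver}, and verifying that the matrix scaling solver (which natively outputs dual potentials) can be converted to a primal solution achieving the stated $\ell_1$ closeness to $x^E$. The latter relies on the fact that the Sinkhorn primal optimizer is the matrix exponential of the dual potentials, and Cohen et al.'s convergence guarantees translate into multiplicative entrywise closeness; combined with the entropic strong convexity argument above, this yields the desired $\ell_1$ bound.
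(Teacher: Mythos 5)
Your overall structure (matrix scaling as $\solve$, truncation as $\Round$, strong convexity of the entropic term to get the $\ell_1$ bound) matches the paper's, but there is a genuine gap in the runtime argument, which is the main quantitative content of the lemma. You assert that ``the runtime of $\tO(n^2/\eps)$ follows since the Cohen et al.\ box-constrained Newton's method converges polylogarithmically in $1/\sigma$'' — but that reasoning does not produce the stated bound (if the dependence were purely polylogarithmic you would get a near-linear runtime with no $1/\eps$ factor at all, and no explanation of where $n^2$ comes from). The guarantee of Proposition~\ref{prop:ms-solver} is $\tO(\nnz(\mk)\cdot B)$, i.e.\ it scales \emph{linearly} in a bound $B$ on the $\ell_\infty$ norm of near-optimal scaling potentials, and the whole point of the paper's proof is to bound $B$. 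Concretely, one must (i) instantiate the scaling instance $\mk_{ij} = \exp(-\tfrac{2|R|}{\gamma} c_{ij})$ on $\tE$ and invoke Lemma 2 of \cite{Cuturi13} to identify its $(r,c)$-scaling with the minimizer of \eqref{eq:def-matching-two}; (ii) bound $B = O(\log n + \tfrac{|R|}{\gamma}) = \tO(\tfrac{|R|}{\eps M})$ (e.g.\ via Lemma 10 of \cite{BlanchetJKS18}), using $\gamma = \Theta(\tfrac{\eps M}{\log m})$; and (iii) use the graph-theoretic fact $M = \Omega(\tfrac{m}{n})$ — because the minimum vertex cover equals the maximum matching and each vertex covers at most $n-1$ edges — to conclude $B = \tO(\tfrac{n^2}{\eps m})$ and hence total time $\tO(\nnz(\mk)\cdot B) = \tO(n^2/\eps)$. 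Steps (ii) and (iii) are entirely absent from your proposal, and without them the runtime claim is unsupported.

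A secondary issue: in your $\Round$ verification you treat the equality constraint $2|R|\tmb^\top \hat x^{\tot} = d$ as exactly satisfied by the solver's output, but the matrix scaling solver returns scalings whose marginals match $d$ only approximately (an $\ell_2$ bound on the violation), so exact feasibility of the truncated plan does not follow ``from the equality constraint'' as written; you must either round the approximate scaling to a feasible plan or account for the marginal violation in the accuracy budget, as is done in the machinery of Theorem~\ref{lem:ms-sinkhorn} (padding plus $\OTRound$) that the paper's proof points to. Your ``anticipated obstacle'' paragraph correctly identifies that the dual potentials must be converted to a primal solution, but the conversion and the feasibility claim are exactly where this care is needed, not something that can be asserted.
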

\begin{proof}
Let $r$ be the restriction of $\frac d {2|R|}$ to the vertices in $R \cup \{v^\dum_R\}$, and let $\ell$ be the restriction of $\frac d {2|R|}$ to the vertices in $L \cup L_0 \cup \{v^\dum_L\}$. Finally, let $\mk$ be the $(|R| + 1) \times (|R| + 1)$ matrix with $\mk_{ij} = \exp(-\frac {2|R|} \gamma c_{ij})$ for all indices corresponding to $(i, j) \in E$, and $\mk_{ij} = 0$ otherwise. Lemma 2 of \cite{Cuturi13} shows that solving the $(r, c)$-matrix scaling problem on $\mk$ computes the minimizer to \eqref{eq:def-matching-two}. 

We next apply Proposition~\ref{prop:ms-solver}, similarly to the proof of Theorem~\ref{lem:ms-sinkhorn}. Clearly $s_{\mk} \le m$, and by Lemma 10 of \cite{BlanchetJKS18}, we may bound $B$ by
\[O\Par{\log(n) + \frac{|R|}{\gamma}} = \tO\Par{\frac{|R|}{\eps M}} = \tO\Par{\frac{n^2}{\eps m}}.\]
Here, we used that $M = \Omega(\frac{m}{n})$ which follows from the fact that the minimum vertex cover size is the same as the maximum matching size. If there was a vertex cover of size $o(\frac m n)$, this would be a contradiction since each vertex can cover at most $n - 1$ edges. The runtime then follows from Proposition~\ref{prop:ms-solver}, where the required accuracy to satisfy \eqref{eq:cond-apprx-function} and \eqref{eq:cond-apprx-l1} is a polynomial in problem parameters, following the proofs of Theorem~\ref{lem:ms-sinkhorn} and Lemma~\ref{lem:dec-solver-bs} (see~\Cref{ssec:high-OT} for more details). The requirements of $\Round$ are clear from inspection and the additive range of entropy.
\end{proof}

\thmmaindexsinkhorn*
\begin{proof}
It suffices to combine Lemma~\ref{lem:canonical-sink}, Lemma~\ref{lem:dec-solver-ms}, and Corollary~\ref{coro:ddbm-reduce}.
\end{proof}

\subsection{DDBM via Sinkhorn objective solver in~\cite{chen2022maximum}}\label{ssec:dynamic-redx-maxflow}

In this section we give an alternative $m^{1+o(1)} \eps^{-2}$-time algorithm for decremental bipartite matching, leveraging a recent breakthrough of a $m^{1+o(1)}$-time algorithm for general graph flow problems in \cite{chen2022maximum}. We follow the same notational conventions as in previous sections.

Our approach is to simply leverage the recent algorithm of \cite{chen2022maximum} to solve the regularized subproblems \eqref{eq:def-matching-two} from the previous section. We recall the following result from \cite{chen2022maximum}.

\newcommand{\f}{\mathbf{f}}
\begin{proposition}[Theorem 10.16 from \cite{chen2022maximum}]\label{thm:ent_maxflow}
	Given a graph $G = (V,E)$, demands $d \in \R^V$, costs $c \in \R^E$, and weights $w \in \R^E_{\geq 0}$, all entrywise bounded by $\exp(\log^{O(1)} m)$, let $\mb$ be its unsigned adjacency matrix, and  $h(f) = \sum_{e \in E} c_e f_e + w_e f_e \log f_e$. Then in $m^{1+o(1)}$ time we can find a flow $f$ with $\mb^\top f = d$, $f \geq 0$, and for any constant $C >0$
\[
h(f) \leq \min_{B^\top f^\star= d, f^\star \geq 0} h(f^\star) + m^{-C} .
\]
\end{proposition}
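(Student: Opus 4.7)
The plan is to reduce this entropic min-cost flow problem to the convex flow optimization framework developed in \cite{chen2022maximum}, leveraging their robust interior-point method together with the dynamic min-ratio cycle data structure. The objective $h(f) = \sum_e c_e f_e + w_e f_e \log f_e$ is separable and convex on the positive orthant, with each coordinate term twice differentiable on $(0,\infty)$. The nonnegativity constraint $f \geq 0$ is handled by adding a logarithmic barrier $\phi(f) = -\sum_e \log f_e$; the combined objective $h + \mu \phi$ is self-concordant-compatible on the positive orthant for any $\mu > 0$, since the log barrier contributes the usual constant self-concordance, and the entropic term only perturbs the third-derivative-to-Hessian ratio by a bounded polynomial factor on multiplicatively-bounded neighborhoods of the current iterate.

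First I would initialize with a strictly feasible flow $f_0 > 0$ satisfying $\mb^\top f_0 = d$, computed by routing demand along a spanning tree of $G$ and then adding a small uniform flow on every edge. Next I would run a predictor-corrector IPM on $\min_{f > 0,\;\mb^\top f = d} h(f) + \mu \phi(f)$, with $\mu$ decreasing geometrically from a large initial value down to $\mu = m^{-\Theta(C)}$, so that the barrier bias becomes much smaller than the target additive error $m^{-C}$. At each IPM step the Newton direction is constrained to the cycle space $\ker \mb^\top$ and is approximated by the dynamic min-ratio cycle data structure of \cite{chen2022maximum}, which supports $m^{o(1)}$ amortized time per query and per edge-weight update.

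The runtime analysis combines two ingredients: short-step IPMs converge in $\widetilde{O}(\sqrt{m})$ inner iterations per central-path stage by self-concordance, with $O(\log m)$ stages needed to drive $\mu$ down to $m^{-\Theta(C)}$; and the approximation error of the data structure can be absorbed into the robust IPM analysis of \cite{chen2022maximum} at a total amortized overhead of $m^{o(1)}$. These factors multiply to give the stated $m^{1+o(1)}$ runtime (with the hidden $o(1)$ depending on the constant $C$). Finally the nearly-central iterate output by the IPM is rounded to a flow exactly satisfying $\mb^\top f = d$ and $f \geq 0$ via a standard tree-based flow correction, losing at most $m^{-\Omega(C)}$ in the objective.

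The main obstacle will be verifying that the entropic term is compatible with the interfaces of the min-ratio cycle data structure of \cite{chen2022maximum}, which was originally designed for linear objectives and certain $p$-norm generalizations. Specifically, since $\nabla h_e = c_e + w_e(1 + \log f_e)$ diverges logarithmically as $f_e \to 0$, one must ensure that the edge gradients stay within the polynomial range accepted by the data structure, and that the Hessian reweighting $\nabla^2 h_e = w_e/f_e$ fits into their weighted min-ratio oracle. Once these structural compatibilities are verified, the iteration count, per-iteration cost, and final rounding all follow directly from the corresponding pieces in \cite{chen2022maximum}.
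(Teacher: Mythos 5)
This proposition is not proved in the paper at all: it is imported verbatim as Theorem 10.16 of \cite{chen2022maximum}, whose framework for min-cost flow with general separable convex edge costs already covers entropic objectives of exactly the form $h(f)=\sum_e c_e f_e + w_e f_e\log f_e$ (after the standard reduction of the demand-constrained problem to their setting). So the intended ``proof'' here is a direct citation, whereas you attempt a re-derivation from the ingredients of that work.

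Your re-derivation has a genuine gap. First, the algorithmic structure you describe does not match what the machinery of \cite{chen2022maximum} can support: you posit a short-step, self-concordance-based predictor-corrector IPM with $\widetilde{O}(\sqrt{m})$ Newton steps per stage, each Newton direction ``approximated by the dynamic min-ratio cycle data structure.'' That data structure does not approximate a full projected Newton direction; it returns a single approximately-minimum-ratio cycle per step, and the method of \cite{chen2022maximum} is an $\ell_1$-type potential-reduction IPM that performs on the order of $m^{1+o(1)}$ such cheap cycle updates, each in amortized $m^{o(1)}$ time --- not $\widetilde{O}(\sqrt{m})$ expensive Newton steps. Implementing a high-accuracy Newton step in $m^{o(1)}$ time is not available, so your runtime accounting does not go through as stated. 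Second, the claim that $w_e f_e\log f_e$ plus a log-barrier is ``self-concordance-compatible'' with only a polynomially bounded perturbation is asserted rather than proved; what \cite{chen2022maximum} actually verify is a bespoke set of conditions on the edge cost (boundedness/smoothness of derivatives and barrier parameters up to quasi-polynomial factors, with care near $f_e=0$ where $\nabla h_e$ diverges), and this verification is precisely the ``main obstacle'' you name and leave open. Since that verification is the actual content of Theorem 10.16, the proposal as written defers the crux rather than resolving it; the clean argument is simply to invoke that theorem, as the paper does.
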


\begin{lemma}[Canonical solver for Sinkhorn objectives]\label{lem:dec-solver-maxflow}
    For $\eps = \Omega(m^{-3})$, there is an $(\eps,\mathcal{T})$-canonical solver for the family of $f_{M,E}^\sink$ defined in \Cref{lem:canonical-sink}, using \Cref{thm:ent_maxflow} as $\solve$ and truncation to $E$ as $\Round$, running in $m^{1+o(1)}$ time. 
    \end{lemma}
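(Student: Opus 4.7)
The plan is to mimic the structure of \Cref{lem:dec-solver-ms}, replacing the matrix-scaling subroutine of~\cite{Cohen17} with Proposition~\ref{thm:ent_maxflow} as our $\solve$, while retaining truncation to $E$ as $\Round$. The first step is to cast the subproblem \eqref{eq:def-matching-two} into the entropic min-cost flow form required by Proposition~\ref{thm:ent_maxflow}. Letting $f \defeq 2|R| x^\tot$ converts the constraint to $\tmb^\top f = d$ with $f \ge 0$, and a short rearrangement rewrites the objective as
\[
c^\top f + \frac{\gamma}{2|R|}\sum_{e \in \tE} f_e \log f_e - \frac{\gamma \log(2|R|)}{2|R|}\1^\top f,
\]
whose final term is linear in $f$ and can be absorbed into the cost vector. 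This matches the format of Proposition~\ref{thm:ent_maxflow} with edge weights $w_e = \gamma/(2|R|)$. Since $c$ is $\{-1,0\}$-valued, demand entries are at most $|R| \le n$, and $\gamma = \Theta(\eps M/\log m)$ with $\eps \ge m^{-3}$ and $M \le n$, all input magnitudes lie within the $\exp(\log^{O(1)} m)$ range allowed by Proposition~\ref{thm:ent_maxflow}.

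Next, I would invoke Proposition~\ref{thm:ent_maxflow} with a constant $C$ chosen so that the returned flow $\hat f$, and the corresponding $\hat x^\tot \defeq \hat f/(2|R|)$, satisfies both \eqref{eq:cond-apprx-function} and \eqref{eq:cond-apprx-l1}. Since $|\nu^E| = \Omega(M) = \Omega(1)$ and $\eps = \Omega(m^{-3})$, picking $C$ sufficiently large makes the $m^{-C}$ additive error much smaller than $\eps|\nu^E|$, yielding \eqref{eq:cond-apprx-function}. For \eqref{eq:cond-apprx-l1}, summing the constraint $\tmb^\top(2|R|x^\tot) = d$ over $\tV$ (each edge counted twice) gives $\|x^\tot\|_1 = 1$, so $x^\tot$ lies on the simplex $\Delta^{\tE}$, and the entropic regularizer $\gamma H(x^\tot)$ is $\gamma$-strongly convex in $\ell_1$ by Pinsker's inequality. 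Hence $\|\hat x^\tot - x^{\tot,\star}\|_1 = O(\sqrt{m^{-C}/\gamma})$, which is at most $\eps/1100$ for $C$ large enough given that $\gamma = \Omega(m^{-O(1)})$. This bound then transfers to the restriction $\|\hat x^E - x^E\|_1$ on $E$-coordinates.

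For the $\Round$ step, setting $\tilde x^E \defeq \hat x^E$ is trivially monotone and therefore satisfies \eqref{eq:cond-round-guarantee-l1}. Feasibility of the resulting matching on $(V,E)$ follows exactly as in \Cref{lem:dec-solver-ms}: the equality $\tmb^\top(2|R|\hat x^\tot) = d$ enforces $\mb^\top(2|R|\hat x^E) \le \1_V$ entrywise, which combined with the rescaling to $8M\tilde x^E$ using the relationship between $M$ and $|R|$ established within \Cref{lem:canonical-sink} gives a feasible matching. The function-value bound \eqref{eq:cond-round-guarantee} then follows from the one already shown for $\hat x^E$, together with the additive $\eps M/128$ slack in the CRO inequality \eqref{eq:cond-canonical-two} proven in \Cref{lem:canonical-sink}, since truncation only drops the nonnegative entropy contribution of the dummy coordinates. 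The total runtime is dominated by the single call to Proposition~\ref{thm:ent_maxflow}, giving $m^{1+o(1)}$.

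The main obstacle I anticipate is the careful bookkeeping across the change of variables: verifying that the entropic flow form of Proposition~\ref{thm:ent_maxflow} exactly captures our subproblem after absorbing the normalization term into the costs, and that the uniform $m^{-C}$ additive accuracy of the black-box solver propagates through both the optimality gap and the $\ell_1$ stability conditions of \Cref{def:solver}. Once these transformations and constant-choice calibrations are in place, the remainder of the argument follows structurally from the template of \Cref{lem:dec-solver-ms}.
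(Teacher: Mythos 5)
Your proposal is correct and follows essentially the same route as the paper's proof: invoke the entropic flow solver on \eqref{eq:def-matching-two} at inverse-polynomial accuracy, deduce \eqref{eq:cond-apprx-function} from the CRO multiplicative guarantees and \eqref{eq:cond-apprx-l1} from the $\Omega(\gamma)$-strong convexity of the objective in $\ell_1$ over the $x$-coordinates, and implement $\Round$ by truncation exactly as in Lemma~\ref{lem:dec-solver-ms}. Your explicit change of variables $f = 2|R|x^{\tot}$ (with the resulting linear term absorbed into the cost vector) simply spells out what the paper asserts when it says the subproblem is ``exactly of the form'' handled by the black-box solver.
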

    \begin{proof}
    To implement $\solve$, we apply \Cref{thm:ent_maxflow} to the problem \eqref{eq:def-matching-two} and solve the subproblem to accuracy $O(\frac{\eps^3}{m})$, for a sufficiently small constant. We observe that \eqref{eq:def-matching-two} is a problem exactly of the form where \Cref{thm:ent_maxflow} applies: by the multiplicative guarantee on $\nu^E$ and $\MCM(E)$ and since $\MCM(E)$ is $8$-approximated by $M$, \eqref{eq:cond-apprx-function} holds immediately. We additionally obtain \eqref{eq:cond-apprx-l1}, as the function $f_{M,E}^\sink$ is $\Omega(\frac{\eps}{\log m})$-strongly convex in the coordinates of $x$.  The requirements of $\Round$ are clear by inspection and by the additive range of entropy, as shown in~\Cref{lem:dec-solver-ms}.
    \end{proof}
    
    With this result, we follow \Cref{thm:main-dec-sinkhorn} and obtain an improved runtime for DDBM. 
    \thmmaindecsinkhornmaxflow*

    \begin{proof}
    It suffices to combine Lemma~\ref{lem:canonical-sink}, Lemma~\ref{lem:dec-solver-maxflow}, and Corollary~\ref{coro:ddbm-reduce}.
    \end{proof}

\section{Proofs for~\Cref{sec:framework}}\label{app:framework}

We first remark on the assumptions made throughout~\Cref{sec:framework}, restated as below: For some $\delta > 0$, we have
\begin{enumerate}
\item upper bounds on entries: $\|\ma\|_{\infty} \le 1$, $\norm{b}_\infty \le \Bmax$, $\norm{c}_\infty \le \Cmax$. For simplicity, we assume $\Bmax \ge \Cmax \ge 1$, as otherwise we set $\Cmax \gets \max(1, \Cmax)$ and $\Bmax \gets \max(\Cmax, \Bmax)$.
\item lower bounds on matrix column entries: $\max_{i}|\ma_{ij}|\ge \delta$ for every $j\in[n]$.
\end{enumerate} 

The second assumption can be satisfied by padding entries of $\ma$ by $\delta$, which at most incurs $O(\delta) \ll O(\epsilon)$ error in the objective value for any $\delta\ll\epsilon$. Our runtimes extend to general $\ma$ in a scale-invariant way (i.e.\ by scaling the whole problem by a factor of $\frac{1}{\norm{\ma}_\infty}$ and then scaling up the resulting error), so the first assumption (on $\norm{\ma}_\infty$) is for simplicity. All of our runtimes depend logarithmically on the quantity $\Bmax$, which is polynomially bounded in all our applications.

Finally, regarding $\Cmax$, it is clear we can assume without loss of generality that $c \le \1$ entrywise, since shifting $c$ by a multiple of the all-ones vector changes the objective value in \eqref{eq:main-reg-apprx}. In the unregularized case, i.e.\ $f_{\mu, \eps}$ with $\mu = \eps = 0$, \cite{AssadiJJST22} shows that we can also \emph{lower bound} the vector $c$ entrywise by $-1$. In the regularized case, we provide the following lemma which shows that by truncating the entries of $c$, we do not lose too much in objective value (see Appendix~\ref{app:framework}).

\begin{restatable}{lemma}{restatecmaxbound}\label{lem:cmaxbound}
Let $\tau \ge 0$, and define $S_\tau = \{i \in [m] \mid c_i \ge \min_{i \in [m]} c_i + \tau\}$ to be the large entries of $c$. Define $\xset' \defeq \{x \in \Delta^m \mid x_i = 0 \text{ for all } i \in S_\tau\}$. Then,
\begin{align*}
\min_{x \in \xset'} \max_{y \in [0, 1]^n} \fme(x, y) \le \min_{x \in \Delta^m} \max_{y \in [0, 1]^n} \fme(x, y) + \mu m \exp\Par{-\frac{\tau - 3}{\mu}}.
\end{align*}
\end{restatable}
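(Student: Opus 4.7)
My plan is to use minimax duality to convert the primal constraint $x \in \xset'$ into a restriction of an explicit log-partition function, after which the comparison reduces to essentially one line. Throughout I assume $\tau > 0$; otherwise $\xset'$ is empty and the claim is vacuous.

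First, $\fme(x,y)$ is convex in $x$ (a linear functional plus $\mu H$, which is convex) and concave in $y$ (a linear functional minus $\tfrac{\eps}{2}(y^2)^\top |\ma|^\top x$, which is concave in $y$ thanks to $x \ge 0$), with $\Delta^m$ and $[0,1]^n$ convex and compact. Sion's minimax theorem therefore gives
\[
\min_{x \in \Delta^m}\max_{y \in [0,1]^n}\fme(x,y) = \max_{y \in [0,1]^n}\phi(y), \qquad \phi(y) \defeq \min_{x \in \Delta^m}\fme(x,y),
\]
and the analogous identity with $\xset'$ in place of $\Delta^m$ produces a potential $\phi'(y)$. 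It suffices to prove the pointwise bound $\phi'(y) - \phi(y) \le \mu m \exp(-(\tau-3)/\mu)$ for every $y$, since then taking $y^\star \in \argmax \phi'$ yields $\max\phi' \le \phi(y^\star) + \mu m \exp(-(\tau-3)/\mu) \le \max\phi + \mu m \exp(-(\tau-3)/\mu)$.

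Second, for fixed $y$, define the linearized cost $u(y)_i \defeq (\ma y)_i + c_i - \tfrac{\eps}{2}(|\ma|y^2)_i$. The inner minimization over the simplex (or the simplex face $\xset'$) is then an entropically regularized linear objective, solved in closed form by softmax, with value equal to the negative log-partition up to the additive $-b^\top y$:
\[
\phi(y) = -\mu\log Z(y) - b^\top y, \qquad \phi'(y) = -\mu\log Z'(y) - b^\top y,
\]
where $Z(y) \defeq \sum_{j \in [m]}\exp(-u(y)_j/\mu)$ and $Z'(y)$ restricts the sum to $j \notin S_\tau$. Using $\log(1+t) \le t$,
\[
\phi'(y) - \phi(y) = \mu\log\frac{Z(y)}{Z'(y)} \le \mu \cdot \frac{\sum_{i \in S_\tau}\exp(-u(y)_i/\mu)}{Z'(y)}.
\]

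Third, I bound the ratio by comparing each numerator term to a single denominator term. Fix $i^\star \in \argmin_i c_i$; since $\tau > 0$, $i^\star \notin S_\tau$, hence $Z'(y) \ge \exp(-u(y)_{i^\star}/\mu)$. For any $i \in S_\tau$, the assumptions $\norm{\ma}_\infty \le 1$, $\norm{y}_\infty \le 1$, and $\eps \le 1$ (which holds since $\mu \le 1$ and $72\eps \le \mu$) give $|(\ma y)_i|, |(\ma y)_{i^\star}| \le 1$ and $(|\ma|y^2)_i, (|\ma|y^2)_{i^\star} \in [0,1]$, so that $u(y)_i - u(y)_{i^\star} \ge (c_i - c_{i^\star}) - 2 - \tfrac{\eps}{2} \ge \tau - 3$. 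Summing over at most $m$ indices of $S_\tau$ bounds the ratio by $m\exp(-(\tau-3)/\mu)$, and combining with the previous display completes the proof. The only conceptually nontrivial step is the duality swap in the first paragraph; once justified, the rest is a direct log-partition calculation, so I do not anticipate any serious obstacle.
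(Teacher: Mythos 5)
Your proof is correct and follows essentially the same route as the paper: reduce the claim to a comparison of the inner minimizations at each fixed $y$ (the paper does this via a direct three-inequality argument, Lemma~\ref{lem:eachy}, where you invoke Sion's theorem, but the content is the same), then evaluate both inner minima as softmin/log-partition values and bound the loss using $\log(1+t)\le t$ together with the fact that the non-$c$ part of the linearized cost has range at most $3$. Your explicit comparison against the index $i^\star\in\argmin_i c_i$ is exactly the paper's Lemma~\ref{lem:sumofexp} applied with $T=\tau-3$.
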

In other words, Lemma~\ref{lem:cmaxbound} shows that if our goal is to find an $x \in \Delta^m$ which approximately minimizes $\max_{y \in [0, 1]^n} \fme(x, y)$, we may restrict ourselves to considering only $x$ supported on coordinates where $c$ is polylogarithmically bounded, without much loss in the error. In particular, setting $\tau = \Theta(\log \frac{m}{\sigma})$ above, where the final desired error is $\sigma$, yields this claim for $\mu \le 1$. Our methods will have runtimes depending linearly on $\Cmax$, which upon applying the preprocessing of Lemma~\ref{lem:cmaxbound}, is an overall polylogarithmic dependence on the final target accuracy.

Now we prove Lemma~\ref{lem:cmaxbound}. Before we do so, we state two simple helper lemmas used in its proof.

\begin{lemma}\label{lem:eachy}
Let $\xset, \yset$ be compact and convex, let $\xset' \subseteq \xset$ also be compact and convex, and let $f: \xset \times \yset \to \R$ be convex-concave. Suppose for some $\Delta > 0$ it is the case that for all $y \in \yset$,
\[\min_{x \in \xset'} f(x, y) \le \min_{x \in \xset} f(x, y) + \Delta.\]
Then,
\[\min_{x \in \xset'} \max_{y \in \yset} f(x, y) \le \min_{x \in \xset} \max_{y \in \yset} f(x, y) + \Delta.\]
\end{lemma}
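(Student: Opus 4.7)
The plan is to reduce the inequality on max-min values to the pointwise hypothesis via a minimax swap. Concretely, I will apply Sion's minimax theorem (valid since $\xset, \xset', \yset$ are compact and convex, and $f$ is convex-concave, hence in particular continuous on the compact product) to rewrite both sides of the desired inequality as maxes of mins. That is, first I would show
\[
\min_{x \in \xset} \max_{y \in \yset} f(x,y) = \max_{y \in \yset} \min_{x \in \xset} f(x,y),
\qquad
\min_{x \in \xset'} \max_{y \in \yset} f(x,y) = \max_{y \in \yset} \min_{x \in \xset'} f(x,y).
\]

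Once this swap is justified, the conclusion follows directly from the hypothesis: for each fixed $y \in \yset$,
\[
\min_{x \in \xset'} f(x,y) \;\le\; \min_{x \in \xset} f(x,y) + \Delta,
\]
and taking $\max_{y \in \yset}$ on both sides preserves the inequality (since adding the constant $\Delta$ commutes with the max), yielding
\[
\max_{y \in \yset}\min_{x \in \xset'} f(x,y) \;\le\; \max_{y \in \yset}\min_{x \in \xset} f(x,y) + \Delta.
\]
Combining this with the two minimax identities above gives the claim.

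\textbf{Main obstacle.} The only mildly nontrivial step is invoking Sion's theorem, which requires lower semicontinuity in $x$ and upper semicontinuity in $y$ (in addition to the already-assumed convex-concavity and compactness). In our intended application $f = \fme$ is smooth on the interiors of $\xset$ and $\yset$ and continuous on their compact closures, so these hypotheses hold automatically; we also apply it to the restricted domain $\xset'$, which is compact and convex by assumption, so the swap is equally valid there. No additional structure of $f$ beyond convex-concavity is used, keeping the lemma as a fully generic minimax tool to be combined with Lemma~\ref{lem:cmaxbound}'s pointwise estimate.
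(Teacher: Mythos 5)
Your proof is correct, and it is essentially the same argument as the paper's, just with the minimax ingredient made explicit rather than implicit. The paper takes saddle points $(x_\star, y_\star)$ of $f$ over $\xset \times \yset$ and $(x'_\star, y'_\star)$ over $\xset' \times \yset$ and chains three inequalities, $f(x'_\star, y'_\star) \le \min_{x \in \xset} f(x, y'_\star) + \Delta \le f(x_\star, y'_\star) + \Delta \le f(x_\star, y_\star) + \Delta$; the first of these uses exactly the saddle-point property $\max_{y} f(x'_\star, y) \le \min_{x \in \xset'} f(x, y'_\star)$ on the restricted domain, which is the same minimax equality you obtain from Sion. So your route buys transparency about where the minimax theorem enters (and your remark about semicontinuity applies equally to the paper's implicit appeal to saddle-point existence), while the paper's chaining avoids naming the theorem. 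One small refinement: you only need the min-max swap on $\xset' \times \yset$; on the full domain $\xset \times \yset$ the direction you use, $\max_{y \in \yset}\min_{x \in \xset} f \le \min_{x \in \xset}\max_{y \in \yset} f$, is just weak duality, which holds unconditionally, so invoking Sion there is unnecessary (though harmless).
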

\begin{proof}
Let $(x_\star, y_\star)$ be the optimizer to $f$ over $\xset \times \yset$, and let $(x_\star', y_\star')$ be the optimizer to $f$ over $\xset' \times \yset$. We conclude with the following sequence of inequalities:
\begin{align*}
f(x'_\star, y'_\star) \le \min_{x \in \xset} f(x, y'_\star) + \Delta \le f(x_\star, y'_\star) + \Delta \le f(x_\star, y_\star) + \Delta.
\end{align*}
The three inequalities respectively used our assumption, $x_\star \in \xset$, and $y_\star = \argmax_{y \in \yset} f(x_\star, y)$.
\end{proof}

\begin{lemma}\label{lem:sumofexp}
For $\mu > 0$, define the function $\smin_\mu: \R^m \to \R$ by
\[\smin_\mu(v) \defeq -\mu\log\Par{\sum_{i \in [m]} \exp\Par{-\frac 1 \mu v_i}}.\]
Fix $v$, and let $S \subseteq [m]$ be a set such that for all $i \in [m] \setminus S$, we have $v_i \ge \min_{i \in [m]} v_i + T$ for some $T > 0$. Further let $v' \in \R^S$ be the restriction of $v$ to the set $S$. Then,
\[\smin_\mu\Par{v'} \defeq -\mu \log\Par{\sum_{i \in S} \exp\Par{-\frac 1 \mu v'_i}} \le \smin_\mu(v) + \mu m \exp\Par{-\frac{T}{\mu}}.\]
\end{lemma}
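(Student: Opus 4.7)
The plan is to write the difference $\smin_\mu(v') - \smin_\mu(v)$ as a logarithm of a ratio of sums, then bound the tail sum (indices in $[m]\setminus S$) in terms of the full sum using the gap assumption on entries outside $S$, and finally apply the elementary inequality $\log(1+x) \le x$.

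More concretely, first I would observe that
\[
\smin_\mu(v') - \smin_\mu(v) = \mu \log\!\left(\frac{\sum_{i\in[m]}\exp(-v_i/\mu)}{\sum_{i\in S}\exp(-v_i/\mu)}\right) = \mu \log\!\left(1 + \frac{\sum_{i\notin S}\exp(-v_i/\mu)}{\sum_{i\in S}\exp(-v_i/\mu)}\right).
\]
Next, let $v_\star \defeq \min_{i\in[m]} v_i$ and note that any index achieving $v_\star$ must lie in $S$ (since every $i \notin S$ satisfies $v_i \ge v_\star + T > v_\star$). Therefore the denominator is at least $\exp(-v_\star/\mu)$. For the numerator, each term with $i \notin S$ satisfies $\exp(-v_i/\mu) \le \exp(-(v_\star+T)/\mu)$, and there are at most $m$ such terms, giving numerator at most $m\exp(-v_\star/\mu)\exp(-T/\mu)$.

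Combining, the ratio is bounded by $m \exp(-T/\mu)$, and applying $\log(1+x)\le x$ yields
\[
\smin_\mu(v') - \smin_\mu(v) \le \mu\log\!\left(1 + m\exp(-T/\mu)\right) \le \mu m \exp(-T/\mu),
\]
which is the desired bound. There is no real obstacle here; the only subtlety is the observation that the global minimizer of $v$ lies in $S$, which ensures the denominator dominates the exponential of the minimum. Everything else is a direct algebraic manipulation of the log-sum-exp expression.
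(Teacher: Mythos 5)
Your proof is correct and follows essentially the same route as the paper's: bound the tail sum $\sum_{i \notin S} \exp(-v_i/\mu)$ by $m\exp(-(v_{\min}+T)/\mu)$, use that the minimizer of $v$ lies in $S$ so the sum over $S$ dominates $\exp(-v_{\min}/\mu)$, and finish with $\log(1+x) \le x$. No gaps to report.
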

\begin{proof}
Let $\vmin \defeq \min_{i \in [m]} v_i$. Note that
\begin{align*}
\sum_{i \in [m] \setminus S} \exp\Par{-\frac 1 \mu v_i} \le m\exp\Par{-\frac 1 \mu (\vmin + T)} \le m\exp\Par{-\frac T \mu} \sum_{i \in S}  \exp\Par{-\frac 1 \mu v_i}.
\end{align*}
Here we used that some element in $S$ achieves $v_i = \vmin$. Hence,
\[\sum_{i \in [m]} \exp\Par{-\frac 1 \mu v_i} \le \Par{1 + m\exp\Par{-\frac T \mu}} \sum_{i \in S} \exp\Par{-\frac 1 \mu v_i}.\]
Taking logarithms and scaling by $-\mu$ yields the claim, where we use $\log(1 + c) \le c$ for $c \ge 0$.
\end{proof}

Combining these lemmas allow us to prove~\Cref{lem:cmaxbound} formally.

\begin{proof}[Proof of~\Cref{lem:cmaxbound}]
By Lemma~\ref{lem:eachy}, it suffices to first fix $y \in [0, 1]^n$, and prove that
\begin{equation}\label{eq:eachy}\min_{x \in \xset'} f_{\mu, \eps}(x, y) \le \min_{x \in \Delta^m} f_{\mu, \eps}(x, y) + \mu m \exp\Par{-\frac{\tau - 3}{\mu}}.\end{equation}
Clearly the term $b^\top y$ cancels from both sides. Next, define
\[v \defeq \ma y + c - \frac \eps 2 |\ma| (y^2).\]
Let $v'$ be the restriction of $v$ to the indices in $S_\tau$. By explicitly minimizing over $x \in \Delta^m$ and $x \in \xset'$ respectively, \eqref{eq:eachy} is equivalent to proving
\[\smin_\mu(v') \le \smin_\mu(v) + \mu m \exp\Par{-\frac{\tau - 3}{\mu}}.\]
Since $\ma y - \frac \eps 2 |\ma|(y^2)$ has a range of at most $3$, every truncated entry of $v'$ must be at least $(\tau - 3)$ larger than the smallest entry of $v'$. We conclude by applying Lemma~\ref{lem:sumofexp} with $T = \tau - 3$.
\end{proof}

\subsection{Proofs for~\Cref{ssec:high-level}}

\outerloop*

\begin{proof}
Fix an iteration $k$. We have
\begin{flalign}
\nu V_{z_{k-1/2}}(z^\star) & \stackrel{(i)}{\le}  
\inner{g(z_{k-1/2})-g(z^\star)}{z_{k-1/2} - z^\star}
 \stackrel{(ii)}{\le}  \inner{g(z_{k-1/2})}{z_{k-1/2} - z^\star}
 \nonumber \\ &
 =  \inner{g(z_{k-1/2})}{z_{k} - z^\star} + \inner{g(z_{k-1/2})}{z_{k-1/2} - 
 z_{k}},\label{eq:strong-eq-1}
\end{flalign}
where we used $(i)$ the definition of $\nu$-strong monotonicity,  and $(ii)$ optimality of $z^\star$.

Next, %
\begin{flalign}
	\inner{g(z_{k-1/2})}{z_{k} - z^\star}- \frac{\veps}{2}  \stackrel{(i)} {\leq}&  -\langle\alpha 
	\nabla V_{z_{k-1}}(z_k)+\nu\nabla V_{z_{k-1/2}}(z_k),z_k-z^\star\rangle  
	\nonumber \\ 
	 \stackrel{(ii)}{\leq} &
	 \nu\Big(V_{z_{k - 1/2}}(z^\star) - V_{z_k}(z^\star) - V_{z_{k - 
	 1/2}}(z_k)\Big)
\\& 
+\alpha\Big(V_{z_{k-1}}(z^\star)-V_{z_{k}}(z^\star)-V_{z_{k-1}}(z_k)\Big) 
\nonumber \\ 
\le &\alpha V_{z_{k-1}}(z^\star)-(\nu+\alpha)V_{z_k}(z^\star)+\nu 
V_{z_{k-1/2}}(z^\star)-\alpha V_{z_{k-1}}(z_k),\label{eq:strong-eq-2}
\end{flalign}
and similarly, 
\begin{flalign}
	& \inner{g(z_{k-1/2})}{z_{k-1/2} - z_k} - \frac{\veps}{2}\nonumber\\
	& \hspace{1em} \stackrel{(i)}{\leq} -\langle\alpha 
	\nabla V_{z_{k-1}}(z_{k-1/2}),z_{k-1/2}-z_k\rangle + \inner{g(z_{k-1/2})-g(z_{k-1}))}{z_{k-1/2} - z_k}
	\nonumber \\
	& \hspace{1em} \stackrel{(ii)}{\leq}  
	 \alpha\Big(V_{z_{k-1}}(z_k)-V_{z_{k-1/2}}(z_k)-V_{z_{k-1}}(z_{k-1/2})\Big)  + \inner{g(z_{k-1/2})-g(z_{k-1}))}{z_{k-1/2} - z_k}
\nonumber \\ 
& \hspace{1em} \stackrel{(iii)}{\leq}
	 \alpha V_{z_{k-1}}(z_k),\label{eq:strong-eq-3}
\end{flalign}
where we used $(i)$ the approximate proximal oracle optimality conditions for $z_k$ and $z_{k - 1/2}$, $(ii)$ 
the three-point property of Bregman divergence~\eqref{eq:three-point}, and $(iii)$ the relative Lipschitzness conddition of $g$. In 
the last inequality, we also use nonnegativity of divergence term $V$. 

Now, 
combining~\eqref{eq:strong-eq-2} and~\eqref{eq:strong-eq-3} with~\eqref{eq:strong-eq-1} yields
\begin{equation*}
\label{eq:strong-eq-both}	
\begin{aligned}
\nu V_{z_{k-1/2}}(z^\star) & \le \alpha 
V_{z_{k-1}}(z^\star)-(\nu+\alpha)V_{z_k}(z^\star)+\nu 
V_{z_{k-1/2}}(z^\star) +\veps.
\end{aligned}
\end{equation*}
Rearranging terms then yields
\[
V_{z_k}(z^\star)\le\frac{\alpha}{\nu+\alpha}
V_{z_{k-1}}(z^\star)+\frac{\veps}{\nu+\alpha}.
\]
Applying this bound recursively $K$ times and using that 
$V_{z_0}(u) \le 
r(u)-r(z_0) \le \Theta$ for $z_0$ the minimizer of $r$, we have
\[
V_{z_K}(z^\star)\le\left(\frac{\alpha}{\nu+\alpha}\right)^K\Theta 
+\sum_{k = 0}^{K-1}\left(\frac{\alpha}{\nu + \alpha}\right)^{k}  
\left(\frac{\veps}{\nu + \alpha}\right) 
\le 
\left(\frac{\alpha}{\nu+\alpha}\right)^K\Theta +
\frac{\veps}{\nu}.
\]
\end{proof}

\innerloop*

To prove~\Cref{coro:altmin-conv}, we first provide technical lemma that shows how to transfer the accuracy desired for transferring from function error in~\Cref{alg:altmin} guarantees to the duality gap error desired by an approximate extragradient step in~\Cref{alg:outerloop}, in a controllable sense.

 \begin{lemma}\label{lem:l1diffneeded}
Let $\delta,\veps \in(0,1)$, $\rho\ge 1$. Suppose we are given $\gamma\in\zset_* = \xset_*\times \yset_*$ satisfying $\max(\norm{\gamma\x}_\infty,\norm{\gamma\y}_1)\le B$, and define the proximal oracle subproblem solution  
\[x_\OPT, y_\OPT = \argmin_{x\in\Delta^m}\argmin_{y\in[0,1]^n} f(x,y)\defeq\langle \gamma^x, x\rangle+\langle\gamma^y,y\rangle+\theta r(x,y)~~\text{for some}~\theta>0.\] Let $z = (x,y)$ be an approximate solution satisfying  $f(x,y)\le f(x_\OPT,y_\OPT)+\tfrac{\veps^2\delta^2\theta}{64\rho(8n\theta+B)^2}$, and suppose $x, x_\OPT\ge \delta$ holds elementwise. For any $w\in\zset = \xset\times\yset$, 
\[\inprod{\gamma+\theta \nabla r(z)}{z - w} \le \frac{\veps}{2}.\]
\end{lemma}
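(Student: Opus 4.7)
The plan is to write the target inner product as a telescope through the constrained minimizer $z_\OPT = (x_\OPT, y_\OPT) = \argmin_{z \in \zset} f(z)$:
\[\inprod{\gamma + \theta \nabla r(z)}{z - w} = \inprod{\nabla f(z_\OPT)}{z_\OPT - w} + \inprod{\nabla f(z_\OPT)}{z - z_\OPT} + \inprod{\nabla f(z) - \nabla f(z_\OPT)}{z - w}.\]
The first term is non-positive by first-order optimality of $z_\OPT$ over $\zset$; the second is at most $f(z) - f(z_\OPT) \le \eta \defeq \tfrac{\veps^2\delta^2\theta}{64\rho(8n\theta+B)^2}$ by convexity together with the hypothesised function-value gap. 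So the task reduces to bounding the third (smoothness) term by essentially $\tfrac{\veps}{2}$.

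For this third term, I would split by block and apply H\"older using the primal diameters $\|x - x_w\|_1 \le 2$ and $\|y - y_w\|_\infty \le 1$, reducing to bounding $\|\nabla_x f(z) - \nabla_x f(z_\OPT)\|_\infty$ and $\|\nabla_y f(z) - \nabla_y f(z_\OPT)\|_1$. These I get entrywise from the explicit gradient formulas: since $\log$ is $\tfrac{1}{\delta}$-Lipschitz on $[\delta,\infty)$ (by the padding hypothesis $x, x_\OPT \ge \delta$), $y \mapsto y^2$ is $2$-Lipschitz on $[0,1]$, and $\|\ma\|_\infty \le 1$ combined with $|\ma_{ij}| \le 1$ gives $\||\ma|v\|_\infty \le \|v\|_\infty$ and $\||\ma|^\top v\|_1 \le \|v\|_1$, a direct computation yields
\[\|\nabla_x f(z) - \nabla_x f(z_\OPT)\|_\infty \le \tfrac{\theta\rho}{\delta}\|x - x_\OPT\|_1 + \tfrac{2\theta}{\rho}\|y - y_\OPT\|_\infty,\]
\[\|\nabla_y f(z) - \nabla_y f(z_\OPT)\|_1 \le \tfrac{2\theta}{\rho}\Par{\|x - x_\OPT\|_1 + \|y - y_\OPT\|_\infty}.\]

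The remaining step converts the function-value gap $\eta$ into bounds on $\|x - x_\OPT\|_1$ and $\|y - y_\OPT\|_\infty$ via strong convexity of $\theta r$. For the simplex block, Pinsker applied to the entropic part $\rho H(x)$ of $r$ (the cross-term $\tfrac{1}{\rho} x^\top |\ma| y^2$ contributes nonnegatively to $V^r$) gives $\|x - x_\OPT\|_1 \le \sqrt{2\eta/(\theta\rho)}$. For the box block, the key observation is that along the segment $z_t = (1-t)z_\OPT + tz$, every entry $[|\ma|^\top x_t]_j \ge \delta^2$ uniformly, using $x_t \ge \delta$ (padding hypothesis) and the standing column-norm assumption $\max_i |\ma_{ij}| \ge \delta$; combined with the lower Hessian bound of \Cref{lem:reg-convex-and-bound} and a Taylor expansion of $f$ around $z_\OPT$ with its nonnegative first-order remainder, this yields $\|y - y_\OPT\|_\infty \le \|y - y_\OPT\|_2 \le \delta^{-1}\sqrt{2\rho\eta/\theta}$. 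Plugging back, the third term is bounded by a constant times $\tfrac{1}{\delta}\sqrt{\theta\rho\eta}$, and substituting the chosen $\eta$ gives $O\!\Par{\veps\theta/(n\theta+B)} = O(\veps/n) \ll \tfrac{\veps}{2}$, as required.

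\textbf{The principal obstacle} lies in establishing $y$-direction strong convexity from the function-value gap: the $yy$-block Hessian $\tfrac{2}{\rho}\diag{|\ma|^\top x}$ can degenerate when $x$ is sparse on rows where $|\ma|$ is small, so both the padding $x \ge \delta$ and the standing column-norm assumption $\max_i |\ma_{ij}| \ge \delta$ are essential to guarantee uniform $y$-strong convexity with constant $\Theta(\delta^2/\rho)$---and this is precisely why $\delta^{-2}$ appears in the required accuracy $\eta$.
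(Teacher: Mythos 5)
Your proposal follows essentially the same route as the paper's proof: both rest on first-order optimality of $z_\OPT$ over $\zset$, on converting the assumed function-value gap into bounds on $\norm{x-x_\OPT}_1$ and $\norm{y-y_\OPT}_\infty$ via the lower Hessian bound of \Cref{lem:reg-convex-and-bound} (with $[\Abs{\ma}^\top x_t]_j\ge\delta^2$ along the segment, exactly from the padding hypothesis plus the standing column-norm assumption), and on finishing with $\ell_1$--$\ell_\infty$ H\"older applied to explicit gradient-difference formulas for $r$. Your telescope through $\nabla f(z_\OPT)$ together with the convexity inequality $\inprod{\nabla f(z_\OPT)}{z-z_\OPT}\le f(z)-f(z_\OPT)$ is only a cosmetic rearrangement of the paper's split (the paper instead pairs $\nabla r(z)-\nabla r(z_\OPT)$ against $z_\OPT-w$ and bounds $\inprod{\gamma}{z-z_\OPT}$ and $\inprod{\nabla r(z)}{z-z_\OPT}$ separately), and your bound $\frac{2\theta}{\rho}\Par{\norm{x-x_\OPT}_1+\norm{y-y_\OPT}_\infty}$ on the $y$-block gradient difference is in fact slightly tighter than the paper's $n$-dependent estimate; the final arithmetic gives $O\Par{\veps\theta/(n\theta+B)}$ with ample slack, as you say.

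One justification is wrong as stated, though it is harmless here: the cross term $Q(x,y)=\frac{1}{\rho}x^\top\Abs{\ma}(y^2)$ is not jointly convex, and its Bregman divergence need not be nonnegative; a direct computation gives $V^Q_{z_\OPT}(z)=\frac{1}{\rho}\Par{x^\top\Abs{\ma}(y-y_\OPT)^2+2(x-x_\OPT)^\top\Abs{\ma}\Par{y_\OPT\circ(y-y_\OPT)}}$, whose second term can be negative. So ``Pinsker on the entropic part plus nonnegativity of the cross term'' does not by itself yield $\norm{x-x_\OPT}_1\le\sqrt{2\eta/(\theta\rho)}$. The repair is the mechanism you already invoke for the box block: using $\nabla^2 r(z_t)\succeq \md(x_t)$ along the segment (with the nonnegative first-order term), together with Cauchy--Schwarz in the form $\norm{x-x_\OPT}^2_{\diag{1/x_t}}\ge\norm{x-x_\OPT}_1^2$ since $\norm{x_t}_1=1$, gives $\frac{\theta\rho}{4}\norm{x-x_\OPT}_1^2+\frac{\theta\delta^2}{2\rho}\norm{y-y_\OPT}_\infty^2\le f(z)-f(z_\OPT)$, which is exactly the paper's display; the factor-of-two weakening of your $\ell_1$ bound is absorbed by the slack in your final estimate. (Minor shared caveat: the lower bound in \Cref{lem:reg-convex-and-bound} is stated for $\rho\ge 3$ while the lemma allows $\rho\ge 1$; the paper's proof has the same reliance, and $\rho\ge 3$ holds in every invocation since $\mu\ge 72\eps$.)
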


\begin{proof}
Let $z_\OPT = (x_\OPT,y_\OPT)$. Under the assumption that $f(z)-f(z_\OPT)\le A \defeq \tfrac{\veps^2\delta^2\theta}{64\rho(8n\theta+B)^2}$, using the lower bound in \eqref{eq:reg-convex-and-bound} with $|\ma|^\top x \ge \delta^2$ entrywise by the assumption $\max_i|\ma_{ij}| \ge \delta$ for every column $j$, and since entropy is strongly convex in the $\ell_1$ norm, we have
\begin{equation}\label{eq:error-condition}	
\begin{aligned}
& \frac{\theta \rho}{4}\norm{x-x_\OPT}_1^2+\frac{\delta^2\theta}{2\rho}\norm{y-y_\OPT}_\infty^2\le f(z)-f(z_\OPT)\le A\\
~\Longrightarrow~&~\norm{x-x_\OPT}_1\le \frac{\veps}{4\Par{B+\frac{8n\rho\theta}{\delta}}}~\text{and}~\norm{y-y_\OPT}_\infty\le \frac{\veps}{4\Par{B+\frac{8n\theta}{\rho}}}.
\end{aligned}
\end{equation}
For $w=(u,v)$, by definition of $z_\OPT$, $\inprod{\gamma+\theta \nabla r(z_\OPT)}{z_\OPT-w}\le 0$. Hence, we bound
\begin{equation}\label{eq:error-bounds-total}
\begin{aligned}
\inprod{\gamma+\theta\nabla r(z)}{z-w} &\le \inprod{\gamma+\theta\nabla r(z)}{z-w}-\inprod{\gamma+\theta\nabla r(z_\OPT)}{z_\OPT-w}\\
&= \inprod{\gamma}{z-z_\OPT} + \theta \inprod{\nabla r(z)-\nabla r(z_\OPT)}{z_\OPT-w} + \theta\inprod{\nabla r(z)}{z-z_\OPT}.
\end{aligned}
\end{equation}
We now bound the terms on the right-hand side of \eqref{eq:error-bounds-total}:
\begin{equation}\label{eq:error-bounds}
\begin{aligned}
\inprod{\gamma}{z-z_\OPT}& \le \|\gamma\x\|_\infty\|x-x_\OPT\|_1+\|\gamma\y\|_1\|y-y_\OPT\|_\infty\\
&  \le B\norm{x-x_\OPT}_1 + B\norm{y-y_\OPT}_\infty,\\
\inprod{\nabla_x r(z)-\nabla_x r(z_\OPT)}{x_\OPT-u} & \le \|x_\OPT-u\|_1\norm{\nabla_x r(z)-\nabla_x r(z_\OPT)}_\infty\\
&\le 2\rho \norm{\log \frac{x}{x_\OPT}}_\infty + \frac{2}{\rho}\norm{|\ma|( (y + y_\OPT) \circ (y - y_\OPT))}_\infty \\
&\le  \frac{2\rho}{\delta} \norm{x - x_\OPT}_1 + \frac{4}{\rho}\norm{y - y_\OPT}_\infty, \\
\inprod{\nabla_y r(z)-\nabla_y r(z_\OPT)}{y_\OPT-v} & \le  \norm{y_\OPT-v}_\infty \norm{\nabla_y r(z)-\nabla_y r(z_\OPT)}_1\\
& \le \frac{2n}{\rho}\norm{y-y_\OPT}_\infty + \frac{2n}{\rho}\norm{x - x_\OPT}_1,\\
\inprod{\nabla r(z)}{z-z_\OPT} & \le \|\nabla_x r(z)\|_\infty\|x-x_\OPT\|_1+\|\nabla_y r(z)\|_1\|y-y_\OPT\|_\infty\\
&  \le \left(\rho\left(1+\log\Par{\frac 1 \delta}\right)+\frac{1}{\rho}\right)\|x-x_\OPT\|_1 + \frac{2}{\rho}\norm{y-y_\OPT}_\infty.
\end{aligned}
\end{equation}
For the second bound, we used $\log(1 + c) \le c$ for whichever of $c = \frac{x_j}{[x_\OPT]_j}$ or $c = \frac{[x_\OPT]_j}{x_j}$ is larger, and the entrywise lower bounds on $x$ and $x_\OPT$. For the third bound, we used
\begin{flalign*}
& \norm{\diag{y_\OPT}|\ma|^\top x_\OPT - \diag{y}|\ma|^\top x}_1 \le \sum_{i \in [m], j \in [n]} |\ma|_{ij} \Abs{\Brack{x_\OPT}_i\Brack{y_\OPT}_j - x_i y_j} \\
&\hspace{10em}\le \sum_{i \in [m], j \in [n]} |\ma|_{ij} \Par{\Abs{\Brack{y_\OPT}_j - y_j} + \Abs{\Brack{x_\OPT}_i - x_i} } \\
&\hspace{10em}\le n (\norm{y_\OPT - y}_\infty + \norm{x_\OPT - x}_1).
\end{flalign*}

Plugging the bounds of~\eqref{eq:error-bounds} back in~\eqref{eq:error-bounds-total}, we can thus conclude that for $\rho\ge 1$, $\delta\le 1$
\[
\inprod{\gamma+\theta\nabla r(z)}{z-w}\le \left(B+\frac{8n\rho\theta}{\delta}\right)\norm{x-x_\OPT}_1 + \left(B+\frac{8n\theta}{\rho}\right)\norm{y-y_\OPT}_\infty\le \frac{\veps}{2},
\]
where for the last inequality we use conditions in~\eqref{eq:error-condition}.
\end{proof}

We use this lemma to prove~\Cref{coro:altmin-conv} formally below.

\begin{proof}[Proof of~\Cref{coro:altmin-conv}]
	We first bound $f(x^{(1)}, y^{(0)})-f(x_\OPT, y_\OPT)\le \max_{z\in\zset} f(z)-f(z_\OPT)\le 2\norm{\gamma\x}_\infty+2\norm{\gamma\y}_1+\theta \Par{\rho \log m + \frac{2}{\rho}}\le 4B + \theta \Par{\rho \log m + \frac{2}{\rho}}$ where the last inequality uses $\ell_1$-$\ell_\infty$ H\"older, the domain definitions $\xset = \Delta^{m}$, $\yset = [0,1]^n$, and the definition of $r=\rme$ as in~\eqref{def:reg-sm}. 

	Now applying~\Cref{lem:progress-altmin}, after $O\left(\log\Par{\frac{\rho(B+mn\theta)^2}{\delta\veps\theta}}\right)$ iterations for a sufficiently large constant, 
	\begin{flalign*}
	& f(x^{(T+1)},y^{(T)})-f(x_\OPT, y_\OPT) \le \left(1-\frac{1}{2\kappa}\right)^T\left(f(x^{(1)},y^{(0)})-f(x_\OPT, y_\OPT)\right)\\
	& \hspace{10em}\le \left(1-\frac{1}{2\kappa}\right)^T\left(4B + \theta \Par{\rho \log m + \frac{2}{\rho}}\right)\le \frac{\veps^2\delta^2\theta}{64\rho(8n\theta+B)^2},
	\end{flalign*}
	which by~\Cref{lem:l1diffneeded} in turn implies it implements an approximate proximal oracle to $\frac \veps 2$ accuracy.
\end{proof}

\subsection{Proofs for~\Cref{ssec:helper}}

We give a helpful variant of the Cauchy-Schwarz inequality in appropriate ``local norms'' and a consequence about approximating the Hessian of our regularizer. These find uses in proving \Cref{lem:rel-lip}.

\restatercab*

\begin{proof}
	For the first property, it suffices to square both sides and use Cauchy-Schwarz:
	\begin{align*}\Par{\sum_{i \in [m]} \sum_{j \in [n]} \left|\ma_{ij} u_i v_j\right|}^2 &\le \Par{\sum_{i \in [m]} \sum_{j \in [n]} \frac{\Abs{\ma_{ij}}}{[\bar{x}]_i} u_i^2}\Par{\sum_{i \in [m]} \sum_{j \in [n]} \Abs{\ma_{ij}}[\bar{x}]_i v_j^2} \\
	&\le \Par{\sum_{i \in [m]} \frac{\norm{\ma}_\infty}{[\bar{x}]_i} u_i^2}\Par{\sum_{j \in [n]} \Brack{\Abs{\ma}^\top \bar{x}}_j v_j^2} \\
	&\le \norm{u}_{\diag{\frac{1}{\bar{x}}}}^2 \norm{v}_{\diag{\Abs{\ma}^\top \bar{x}}}^2.
	\end{align*}
	
	Next, given any $w = (u,v)$, we have 
	\begin{align*}
		w^\top \nabla^2r(x,y)w & = \rho\norm{u}_{\diag{\frac{1}{x}}}^2+\frac{2}{\rho}\norm{v}_{\diag{\Abs{\ma}^\top x}}^2+\frac{4}{\rho}u^\top\ma\diag{y}v\\
		& \le \rho\norm{u}_{\diag{\frac{1}{x}}}^2+\frac{2}{\rho}\norm{v}_{\diag{\Abs{\ma}^\top x}}^2+\frac{4}{\rho}\norm{u}_{\diag{\frac{1}{x}}}\norm{\diag{y}v}_{\diag{\Abs{\ma}^\top x}}\\
		& \le \Par{\rho+\frac{2}{\rho}}\norm{u}_{\diag{\frac{1}{x}}}^2+\frac{4}{\rho}\norm{v}_{\diag{\Abs{\ma}^\top x}}^2\\
		& \le 4\Par{\frac{\rho}{2}\norm{u}_{\diag{\frac{1}{x}}}^2+\frac{1}{\rho}\norm{v}_{\diag{\Abs{\ma}^\top x}}^2} = 4w^\top\md(x)w.
\end{align*}
Similarly, on the other side we have
	\begin{align*}
		w^\top \nabla^2r(x,y)w & = \rho\norm{u}_{\diag{\frac{1}{x}}}^2+\frac{2}{\rho}\norm{v}_{\diag{\Abs{\ma}^\top x}}^2+\frac{4}{\rho}u^\top\ma\diag{y}v\\
		& \ge \rho\norm{u}_{\diag{\frac{1}{x}}}^2+\frac{2}{\rho}\norm{v}_{\diag{\Abs{\ma}^\top x}}^2-\frac{4}{\rho}\norm{u}_{\diag{\frac{1}{x}}}\norm{\diag{y}v}_{\diag{\Abs{\ma}^\top x}}\\
		& \ge \Par{\rho-\frac{4}{\rho}}\norm{u}_{\diag{\frac{1}{x}}}^2+\Par{\frac{2}{\rho}-\frac{1}{\rho}}\norm{v}_{\diag{\Abs{\ma}^\top x}}^2\\
		& \ge \frac{\rho}{2}\norm{u}_{\diag{\frac{1}{x}}}^2+\frac{1}{\rho}\norm{v}_{\diag{\Abs{\ma}^\top x}}^2 = w^\top\md(x)w.
\end{align*}

\end{proof}

\lempaderror*

\begin{proof}
We write $r(x,y) = \rho H(x)+\frac{1}{\rho}Q(x,y)$ where 
\[H(x) = \sum_{i\in[m]}x_i \log x_i,\; Q(x,y) = \Par{y^2}^\top|\ma|^\top x,\]
and bound the Bregman divergences induced by $H$ and $Q$ respectively. 

First note that by definition of the padding oracle we have $\norm{x_k-\bx_k}_1 \le \norm{x_k - \hx_k}_1 +\norm{\bx_k - \hx_k}_1 \le 2m\delta$, and that $x_k$ is a $m\delta$-padding of $\bx_k$ by the definition of padding in~\cite{CJST20} (cf.\ Definition 2). Thus using Lemma 6 of~\cite{CJST20}, we have
\[
V^H_z(w)-V^H_{\bz}(w)\le m\delta.
 \]
For the quadratic part, letting $w = (u, v)$, we have
\begin{align*}
V^Q_{z}\Par{w} - V^Q_{\bz}\Par{w} 
 & = Q(\bz)-Q(z)+\inprod{\nabla Q(\bz)}{w-\bz}-\inprod{\nabla Q(z)}{w-z} \\
&  \le \norm{x-\bx}_1 + \inprod{|\ma|y^2}{x -\bx} + \inprod{\diag{y}|\ma|^\top (\bx - x)}{v - y} \\
&\le 4\norm{x-\bx}_1\le 8m\delta,
\end{align*}
where the first inequality used $\by = y$, and both the first and second used the assumed bounds $\norm{y}_\infty \le 1$, $\norm{\ma}_\infty \le 1$. Combining these bounds proves the statement.
\end{proof}

\lemstability*

\begin{proof}
We first handle the case of~\Cref{line:outer-grad-sm} in Algorithm~\ref{alg:outerloop}. Using Algorithm~\ref{alg:altmin} to implement this step, we observe that the iterates satisfy
\begin{align*}
x^{(t + 1)} \gets & \argmin_{x \in \Delta^m}\inprod{\gamma\x}{x} + \theta r(x, y^{(t)})~~\text{where}~~\theta = \alpha,\\
~\text{and}~~\gamma\x & = g\x (x_{k-1},y_{k-1}) - \alpha \nabla_x r(z_{k-1})\\
& =  \ma y_{k-1} + c + \mu(\1+\log(x_{k-1})) - \frac{\eps}{2} \Abs{\ma} (y_{k-1}^2) -\alpha \rho (1+\log x_{k-1}) - \frac{\alpha}{\rho}|\ma| y_{k-1}^2,
\end{align*}
which implies
\[x^{(t+1)} \propto x_{k-1}\circ \exp\Par{\frac{1}{\alpha\rho}\left(-\frac{\alpha}{\rho}|\ma|\left(y^{(t)}\right)^2-\ma y_{k-1} - c+\frac{\eps}{2}|\ma|y_{k-1}^2+\frac{\alpha}{\rho}|\ma|y_{k-1}^2-\mu \log(x_{k-1})\right)}.\]

Consequently, under the given assumptions on $x_{k - 1}$, $\rho$, and $\alpha$, and using $\norm{c}_\infty \le \Cmax$,
\begin{flalign*}
	& \left|\log\Par{\frac{x^{(t+1)}}{x_{k-1}}}\right|\le \frac{1}{\alpha\rho}\Par{1 + \Cmax +\frac{\alpha}{\rho} + \frac \eps 2 +\mu \log\frac 2 \delta}\le \frac{1}{18}\\
	& \hspace{10em}\implies x^{(t+1)}\in x_{k-1}\cdot\Brack{\exp\Par{-\frac 1 9},\exp\Par{\frac 1 9}}.
\end{flalign*}

Next, we handle the case of~\Cref{line:outer-extragrad-sm} in Algorithm~\ref{alg:altmin}. Here, the iterates of Algorithm~\ref{alg:altmin} satisfy
\begin{align*}
x^{(t + 1)} \gets & \argmin_{x \in \Delta^m}\inprod{\gamma\x}{x} + \theta r(x, y^{(t)})~~\text{where}~~\theta = \alpha+\nu,\\
~\text{and}~~\gamma\x &= g\x (x_{k-1/2},y_{k-1/2}) - \alpha \nabla_x r(z_{k-1}) - \nu\nabla_x r(z_{k-1/2})\\
&=  \ma y_{k-1/2} + c + \mu(\1+\log(x_{k-1/2})) - \frac{\eps}{2} \Abs{\ma} (y_{k-1/2}^2) \\
& \quad\quad\quad-\alpha \rho (1+\log x_{k-1}) - \frac{\alpha}{\rho}|\ma| y_{k-1}^2 -\nu \rho (1+\log x_{k-1/2}) - \frac{\nu}{\rho}|\ma| y_{k-1/2}^2.
\end{align*}
Hence,
\begin{align*}
x^{(t+1)} \propto~&~x_{k-1}^{\frac{\alpha}{\alpha+\nu}}\circ x_{k-1/2}^\frac{\nu}{\alpha+\nu}\circ \tau\x\\
\text{where}~\tau\x =~&~\exp\biggl(\frac{1}{(\alpha+\nu)\rho}\biggl(-\frac{\alpha}{\rho}|\ma|\left(y^{(t)}\right)^2-\frac{\nu}{\rho}|\ma|\left(y^{(t)}\right)^2-\ma y_{k-1/2} - c\\
& \quad \quad \quad \quad +\frac{\eps}{2}|\ma|y_{k-1/2}^2+\frac{\alpha}{\rho}|\ma|y_{k-1}^2 + \frac{\nu}{\rho}|\ma|y_{k-1/2}^2-\mu \log(x_{k-1/2})\biggr)\biggr).
\end{align*}
Consequently, under a similar calculation as before,
\begin{flalign*}
& \exp\Par{-\frac 1 {18}}\le \tau\x\le \exp\Par{\frac 1 {18}}~\text{entrywise}\\
& \hspace{10em}~\Longrightarrow x^{(k+1)}\in x_{k-1}^{\frac{\alpha}{\alpha+\nu}}\circ x_{k-1/2}^\frac{\nu}{\alpha+\nu}\cdot\Brack{\exp\Par{-\frac 1 9},\exp\Par{\frac 1 9}}.
\end{flalign*}

\end{proof}

\subsection{Proofs for~\Cref{ssec:framework-alg}}\label{app:framework-alg}

\lemsm*

\begin{proof}
Throughout the proof, let $g \defeq \gme$, $f \defeq \fme$, and $r \defeq \rme$ for notational simplicity. In order to show the desired bound
\begin{equation}\label{eq:smshow}\inprod{g(w) - g(z)}{w - z} \ge \frac{1}{3}\sqrt{\frac{\mu\eps}{2}}\inprod{\nabla r(w) - \nabla r(z)}{w - z},\end{equation}
we begin by putting \eqref{eq:smshow} into a more convenient form. Letting $\jac(z)$ be the Jacobian of $g$ at the point $z$, we have by direct integration (where $z_t \defeq (1 - t)z + tw$ for all $t \in [0, 1]$)
\[\inprod{g(w) - g(z)}{w - z} = \int_0^1 (w - z)^\top \jac(z_t) (w - z) dt.\]
Since quadratic forms through a square matrix $\mm$ are preserved by replacing $\mm$ with its symmetric part $\half(\mm + \mm^\top)$, it suffices to understand the symmetric part of $\jac(z_t)$. The restriction of $\jac_{\mu}$ to the $xy$ and $yx$ blocks is skew-symmetric, since the blocks are respectively $\nabla^2_{xy} f(x, y)$ and $-\nabla^2_{yx} f(x, y)$. Similarly, its restrictions to its $xx$ and $yy$ blocks are symmetric. Hence, we have
\begin{equation}\label{eq:lhssm}\begin{aligned}\inprod{g(w) - g(z)}{w - z}&= \int_0^1 (w - z)^\top \begin{pmatrix} \nabla^2_{xx} f(z_t) & \mzero \\ \mzero & -\nabla^2_{yy} f(z_t) \end{pmatrix} (w - z)dt \\
&= \int_0^1 (w - z)^\top \begin{pmatrix} \mu\cdot \diag{\frac 1 {x_t}} & \mzero \\ \mzero & \eps\cdot \diag{\Abs{\ma}^\top x_t} \end{pmatrix} (w - z)dt. \end{aligned}\end{equation}
In the last line, we denoted $z_t \defeq (x_t, y_t)$ for all $t \in [0, 1]$. Next, to bound the right hand side of \eqref{eq:smshow}, integrating once more yields
\begin{equation}\label{eq:rhssm}
\begin{aligned}
& \inprod{\nabla r(w) - \nabla r(z)}{w - z} = \int_0^1 (w - z)^\top \nabla^2 r(z_t) (w - z) dt \\
& \hspace{5em} = \int_0^1 (w - z)^\top \begin{pmatrix} \rho \cdot \diag{\frac{1}{x_t}} & \frac{2}{\rho}\Abs{\ma} \diag{y_t} \\ \frac{2}{\rho}\diag{y_t} \Abs{\ma}^\top & \frac{2}{\rho} \diag{\Abs{\ma}^\top x_t} \end{pmatrix} (w - z) dt \\
& \hspace{5em} \le \int_0^1 (w - z)^\top \begin{pmatrix} 2\rho \cdot \diag{\frac{1}{x_t}} & \mzero \\ \mzero & \frac{4}{\rho} \diag{\Abs{\ma}^\top x_t}\end{pmatrix} (w - z) dt. 
\end{aligned}\end{equation}
The last line used~\Cref{lem:reg-convex-and-bound} for all $(x_t, y_t) \in \Delta^m \times [0, 1]^n$. Combining the bounds \eqref{eq:lhssm} and \eqref{eq:rhssm} yields the conclusion.
\end{proof}

\corostability*
\begin{proof}
	It suffices to prove $x_{k-1}\ge \frac \delta 2$ entrywise as all other conclusions are then immediate consequences of~\Cref{lem:stability} (the conclusions about $x_\OPT$ follow from taking a limit). When $k=1$, $x_{k-1}\ge \frac \delta 2$ holds by the assumption that $\delta\le \frac 1 {2m}$. When $k>1$, $x_{k-1}$ is the result of padding with parameter $\delta$. We note for  any $x\in\Delta^m$  we have the desired
	\[
	1\le \norm{\max\Par{x,\delta}}_1\le 1+\delta m\implies \mathcal{O}_\delta(x)\ge \frac{\delta}{\norm{\max\Par{x,\delta}}_1}  \ge \frac{\delta}{1+\delta m} \ge \frac{\delta}{2}.
	\]	
\end{proof}

\lemrellip*

\begin{proof}
The conclusion that $x_{k-1/2}, \bx_{k}\in[\tfrac{1}{2}x_{k-1},2x_{k-1}]$ elementwise follows from \Cref{coro:stability}. Next, we have for $z_\beta = z_{k-1}+\beta(z_{k-1/2}-z_{k-1})$, and $\rho\ge 6$, 
		\begin{align*}
			V_{z_{k-1}}(z_{k-1/2}) & = \int_{0}^1(1-\beta)\|z_{k-1/2}-z_{k-1}\|^2_{\nabla^2 r(z_{k-1}+\beta(z_{k-1/2}-z_{k-1}))}d\beta \\
			& \ge \frac{1}{4}\left\|z_{k-1/2}-z_{k-1}\right\|_{D(x_{k-1})}^2,
		\end{align*}
	following the fact that $\nabla^2 r(z_{k-1}+\beta(z_{k-1/2}-z_{k-1})) \succeq \md(x_{k-1}+\beta(x_{k-1/2}-x_{k-1})) \succeq \frac{1}{2}\md(x_{k-1})$ from~\Cref{lem:reg-convex-and-bound} and $x_{k-1/2}\in[\tfrac{1}{2}x_{k-1}, 2x_{k-1}]$. Similarly, we also have 
	\[V_{z_{k-1/2}}(\bz_k)\ge \frac{1}{4}\left\|z_{k-1/2}-\bz_k\right\|_{\md(x_{k-1})}^2.\]
	
	On the other hand, we directly compute
	\begin{equation}\label{eq:lip-RHS}
	\begin{aligned}
		& \inner{g(z_{k-1/2})-g(z_{k-1})}{z_{k-1/2}-\bz_k} \\
		& \hspace{2em} = \innerB{\ma(y_{k-1/2}-y_{k-1})}{x_{k-1/2}-\bx_k}+ \innerB{-\ma^\top (x_{k-1/2}-x_{k-1})}{y_{k-1/2}-\by_{k}}\\
		& \hspace{4em}+\innerB{\mu \log\left(\frac{x_{k-1/2}}{x_{k-1}}\right)-\frac \eps 2|\ma|(y_{k-1/2}^2-y_{k-1}^2)}{x_{k-1/2}-\bx_k} \\
		& \hspace{4em}+\innerB{\epsilon \cdot\diag{y_{k-1/2}}|\ma|^\top x_{k-1/2} - \epsilon \cdot \diag{y_{k-1}}|\ma|^\top x_{k-1}}{y_{k-1/2}-\by_{k}}.
	\end{aligned}
	\end{equation}
	We now bound the terms on the right-hand side of~\eqref{eq:lip-RHS}. By Lemma~\ref{lem:reg-convex-and-bound},
	\begin{equation}
	\label{eq:lip-RHS-one}
	\begin{aligned}
		& \innerB{\ma(y_{k-1/2}-y_{k-1})}{x_{k-1/2}-\bx_k}\\
		& \hspace{5em}\le \frac{1}{2\rho}\|y_{k-1/2}-y_{k-1}\|_{\diag{|\ma|^\top x_{k-1}}}^2+\frac{\rho}{2}\|x_{k-1/2}-\bx_k\|_{\diag{\frac{1}{x_{k-1}}}}^2,\\
		& \innerB{\ma^\top(x_{k-1/2}-x_{k-1})}{\by_k - y_{k-1/2}}\\
		& \hspace{5em}\le \frac{1}{2\rho}\|y_{k-1/2}-\by_k\|_{\diag{|\ma|^\top x_{k-1}}}^2+\frac{\rho}{2}\|x_{k-1/2}-x_{k-1}\|_{\diag{\frac{1}{x_{k-1}}}}^2.
		\end{aligned}
	\end{equation}
	For the rest of the terms we have, letting $(x_\beta,y_\beta) = z_\beta = (1-\beta)z_{k-1}+\beta z_{k-1/2}$ for all $\beta \in [0,1]$,
	\begin{equation}\label{eq:lip-RHS-2}\begin{aligned}
	& \innerB{\mu \log\left(\frac{x_{k-1/2}}{x_{k-1}}\right)-\frac \eps 2|\ma|(y_{k-1/2}^2-y_{k-1}^2)}{x_{k-1/2}-\bx_k}\\
	&  + \innerB{\epsilon \cdot\diag{y_{k-1/2}}|\ma|^\top x_{k-1/2} - \epsilon \cdot \diag{y_{k-1}}|\ma|^\top x_{k-1}}{y_{k-1/2}-\by_k}\\
= & \int_0^1 (z_{k-1/2} - z_{k-1})^\top \begin{pmatrix} \mu \cdot \diag{\frac{1}{x_\beta}} & -\eps\Abs{\ma} \diag{y_\beta} \\ \eps\diag{y_\beta} \Abs{\ma}^\top & \eps \diag{\Abs{\ma}^\top x_\beta} \end{pmatrix} (z_{k-1/2} - \bz_k) d\beta \\
= & \sqrt{\frac {\mu\eps} 2}\int_0^1 (z_{k-1/2} - z_{k-1})^\top \begin{pmatrix} \rho \cdot \diag{\frac{1}{x_\beta}} & -\frac{2}{\rho}\Abs{\ma} \diag{y_\beta} \\ \frac 2\rho \diag{y_\beta} \Abs{\ma}^\top & \frac{2}{\rho} \diag{\Abs{\ma}^\top x_\beta} \end{pmatrix} (z_{k-1/2} - \bz_k) dt\\
\le & 8\sqrt{\frac {\mu\eps} 2}\left(\|z_{k-1/2}-z_{k-1}\|^2_{\md(x_{k-1})}+\|z_{k-1/2}-\bz_k\|^2_{\md(x_{k-1})}\right),
\end{aligned}\end{equation}
where for the last inequality we use similar arguments as in~\Cref{lem:reg-convex-and-bound} to bound the matrix by $4\md(x_\beta) \preceq 8\md(x_{k - 1})$ following the assumption that $x_{k-1/2}\in[\tfrac{1}{2}x_{k-1}, 2x_{k-1}]$ elementwise. Putting~\eqref{eq:lip-RHS-one} and~\eqref{eq:lip-RHS-2} into~\eqref{eq:lip-RHS}, we conclude that the relative Lipschitz condition~\eqref{def-rel-lip} holds with the stated $\alpha$.
\end{proof}

\coroaltmin*

\begin{proof}
We first claim \Cref{lem:progress-altmin} holds with $\kappa = 8$. This follows from the fact that for any $x'\in\Delta^m$, $y'\in[0,1]^n$ satyisfying $x'\ge\tfrac{1}{2}x$, we have
	\[
	\nabla^2 r(x',y')\succeq \md(x')\succeq \frac{1}{2}\md(x) \succeq \frac{1}{8}\nabla^2 r(x,y)\succeq \frac{1}{8}\nabla^2_{yy} r(x,y),
	\] 
	where we use both of the Hessian bounds from~\Cref{lem:reg-convex-and-bound} and the fact that restrictions to blocks only decrease a quadratic form.
	
For each $\gamma = (\gamma\x,\gamma\y)$ defined in~\Cref{alg:sherman} and the fact that $x_{k-1}, x_{k-1/2}\ge \frac \delta 4$ for all $k\in[K]$, we have by the assumptions $\norm{\ma}_\infty\le 1$, $\norm{c}_\infty\le \Cmax$, and $\norm{b}_1\le n\Bmax$ that it suffices to take
\[
B = \max\Par{\norm{\gamma\x}_\infty,\norm{\gamma\y}_1} = O\Par{(\alpha\rho+\mu)\log \frac 1 \delta + n\Bmax}
\]
in~\Cref{coro:altmin-conv}. Hence, the stated number of iterations $T$ suffices.
\end{proof}

\propouterpad*

\begin{proof}
Our proof is based on the proof of~\Cref{prop:outerloopproof-sm}. Lemma~\ref{lem:alphabetasm} and Lemma~\ref{lem:rel-lip} showed that our operator-regularizer pair satisfies $\nu$-strong monotonicity and $\alpha$-relative Lipschitzness. Under these conditions, and letting $\bz_k$ be the unpadded version of $z_k$, recall that the proof of Proposition~\ref{prop:outerloopproof-sm} showed that
\[(\nu + \alpha) V_{\bz_k}(z^\star) \le \alpha V_{z_{k - 1}}(z^\star) + \veps.\]
Moreover, the padding guarantee of~\Cref{lem:padsuffices} shows
\begin{equation*}
\begin{aligned}
(\nu+\alpha)\Par{V_{z_k}(z^\star)-V_{\bz_k}(z^\star)}\le (\nu+\alpha)\left(\rho+
\frac{8}{\rho}\right)m\delta\le \veps,
\end{aligned}
\end{equation*}
by the assumption on $\delta$. Following the proof of Proposition~\ref{prop:outerloopproof-sm}, we conclude
\[
V_{z_K}(z^\star)\le \Par{\frac{\alpha}{\nu+\alpha}}^K \Theta +\frac{2\veps}{\nu}.
\]
Since $\Theta = O(\rho \log m)$ for our choice of regularizer, the claim follows.
\end{proof}

Finally, we prove the main theorem in this section on the convergence of our regularized box-simplex solver for~\eqref{eq:main-reg-apprx}. We first state two helper lemmas used to prove the theorem.

\begin{lemma}[Entropy bounded by $\ell_1$-distance]\label{lem:entropyvsl1}
For any $x, \tx \in \Delta^m$, defining $H(x) = \sum_{i \in [m]} x_i \log x_i$,
\[H(\tx) - H(x) \le 33\log(m)\norm{\tx - x}_1 + m^{-30}.\]
\end{lemma}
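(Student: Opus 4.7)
The plan is to split the coordinates by whether both $x_i$ and $\tx_i$ exceed a small threshold $\delta \defeq m^{-32}$, and to handle the two cases with different tools. Writing $\phi(t) \defeq t \log t$ with $\phi'(t) = 1 + \log t$, I would partition $[m] = S \cup T$, where $S \defeq \{i : \min(x_i, \tx_i) \ge \delta\}$ and $T$ is its complement. Over $S$ the derivative $1 + \log t$ is bounded in absolute value by $1 + \log(1/\delta) = 1 + 32 \log m \le 33\log m$, so the mean value theorem yields
\[
\sum_{i \in S} \big(\phi(\tx_i) - \phi(x_i)\big) \;\le\; 33 \log m \cdot \sum_{i \in S} |\tx_i - x_i| \;\le\; 33 \log m \cdot \norm{\tx - x}_1.
\]

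The core observation for coordinates in $T$ is that $\phi$ is decreasing on $[0, 1/e]$ with $|\phi(t)| \le \delta \log(1/\delta)$ for every $t \in [0, \delta]$, since $|\phi|$ is maximized on $[0, \delta]$ at the endpoint $t = \delta$. I would further split $T = T_1 \cup T_2$, where $T_1$ contains coordinates with $\max(x_i, \tx_i) < \delta$ and $T_2$ contains the ``mixed'' coordinates. For $T_1$, each difference $\phi(\tx_i) - \phi(x_i)$ is bounded in absolute value by $2\delta \log(1/\delta)$, giving total contribution at most $2m\delta\log(1/\delta) = 64 m^{-31}\log m$.

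The main obstacle is the mixed case $T_2$, where, say, $\tx_i \ge \delta > x_i$ (the reverse subcase is symmetric). Here I would use the integral representation $\phi(\tx_i) - \phi(x_i) = \int_{x_i}^{\tx_i}(1 + \log t)\,dt$ and split the integral at $\delta$. The portion over $[\delta, \tx_i]$ stays in the region where $|1 + \log t| \le 33\log m$, contributing at most $33\log m \cdot (\tx_i - \delta) \le 33 \log m \cdot |\tx_i - x_i|$. The portion over $[x_i, \delta]$ evaluates to $\phi(\delta) - \phi(x_i) \le 0$, since $\phi$ is decreasing on $[0, 1/e]$ and $x_i < \delta$, so it only helps us. (In the reverse subcase the analogous below-$\delta$ piece is nonnegative but bounded by $|\phi(\delta)| = \delta \log(1/\delta)$, yielding an additive $m \cdot \delta \log(1/\delta)$ across $T_2$.)

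Summing the three contributions gives
\[
H(\tx) - H(x) \;\le\; 33 \log m \cdot \norm{\tx - x}_1 + O\!\left(m \cdot \delta \log(1/\delta)\right) \;=\; 33\log m \cdot \norm{\tx - x}_1 + O(m^{-31} \log m),
\]
and for $m$ larger than an absolute constant the $O(m^{-31}\log m)$ error is bounded by $m^{-30}$, completing the argument. (Small $m$ can be handled directly, since the claimed error term $m^{-30}$ is then nearly of constant size and trivially dominates $H(\tx) - H(x) \le \log m$ whenever $\norm{\tx - x}_1$ is not tiny.)
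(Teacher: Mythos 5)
Your argument is correct and rests on the same core mechanism as the paper's proof: threshold the coordinates at $m^{-32}$, bound the derivative $1+\log t$ by $33\log m$ above the threshold so those coordinates contribute $33\log(m)\norm{\tx-x}_1$, and absorb everything below the threshold into the $m^{-30}$ slack. The bookkeeping differs slightly: the paper partitions coordinates according to the size of $\tx_i$ alone, handles the large-$\tx_i$ block in one line via convexity of $t\log t$ (the gradient inequality $\phi(\tx_i)-\phi(x_i)\le(1+\log\tx_i)(\tx_i-x_i)$, which never needs a lower bound on $x_i$), and on the small-$\tx_i$ block drops $\phi(\tx_i)\le 0$ and splits by the size of $x_i$; you instead partition by $\min(x_i,\tx_i)$, which forces the separate ``mixed'' case $T_2$ that you resolve by splitting the integral of $1+\log t$ at the threshold and using monotonicity of $\phi$ on $[0,1/e]$. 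Both routes work; the paper's convexity step just avoids your case analysis. One caveat: your parenthetical fix for small $m$ is not right as stated --- $m^{-30}$ is minuscule even for $m=2$, it does not dominate $\log m$, and the delicate regime is precisely when $\norm{\tx-x}_1$ is tiny (a near-zero coordinate of $x$ moved to zero), which is handled by the below-threshold absorption, not by the additive term being ``large.'' In fairness, your main argument needs only $m$ beyond a modest absolute constant (so that $1+32\log m\le 33\log m$ and $96m^{-31}\log m\le m^{-30}$), and the paper's own proof likewise invokes ``sufficiently large $m$,'' so this does not constitute a substantive gap relative to the paper --- just delete or repair that remark.
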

\begin{proof}
We define a partition of the coordinates of $\tx$, 
\[L \defeq \Brace{i \in [m] \mid \tx_i \ge m^{-32}} \text{ and } S \defeq \Brace{i \in [m] \mid \tx_i < m^{-32}}.\]
First, we have by convexity of entropy that, letting $v_L \in \R^{|L|}$ denote the $|L|$-dimensional restriction of a vector $v \in \R^m$ to the coordinates of $L \subseteq [m]$,
\begin{align*}
\sum_{i \in L} \tx_i \log \tx_i - \sum_{i \in L} x_i \log x_i &\le \inprod{\log \tx_L + \1_L}{\tx_L - x_L} \\
&\le \norm{\log \tx_L + \1_L}_{\infty}\norm{\tx_L - x_L}_1\\
&\le \Par{32\log(m) + 1}\norm{\tx - x}_1 \le 33\log(m)\norm{\tx_L - x_L}_1,
\end{align*}
for sufficiently large $m$. In the second line, we used the $\ell_1$-$\ell_\infty$ H\"older's inequality. Moreover, let $S' \defeq \{i \in [m] \mid x_i < m^{-32}\}$ be the subset of $[m]$ where $x$ is small. By nonpositivity of entropy,
\begin{align*}
& \sum_{i \in S} \tx_i \log \tx_i - \sum_{i \in S} x_i \log x_i\le -\sum_{i \in S} x_i \log x_i = -\sum_{i \in S'} x_i \log x_i - \sum_{i \in S \setminus S'} x_i \log x_i \\
&\hspace{5em}\le \frac{32|S'|\log(m)}{m^{32}} + 32\log(m) \sum_{i \in S \setminus S'} x_i \\
&\hspace{5em}\le \frac{32|S|\log(m)}{m^{32}} + 32\log(m) \sum_{i \in S \setminus S'} \Par{x_i - m^{-32}} \\
&\hspace{5em}\le m^{-30} + 32\log(m) \sum_{i \in S \setminus S'} \Par{x_i - \tx_i} \le  m^{-30} + 32\log(m) \norm{\tx_S - x_S}_1.
\end{align*}
The second line used that $-c \log c$ is increasing in the range $(0, m^{-32})$, and the fourth used that $m^{-32} > \tx_i$ for all $i \in S$. We conclude by combining the above two displays.
\end{proof}

\begin{lemma}\label{lem:regdivtogap}
Let $(x^\star, y^\star)$ be the optimizer to the regularized box-simplex game in~\eqref{eq:main-reg-apprx}. Suppose $\rho\ge 6$, $\mu\le 1$, $\veps\ge m^{-10}$ and that $z = (x, y)$ satisfies $V_{z}\Par{z^\star }\le \frac{\veps^2}{C_{\max}^2\log^4 m}$. Then, $x$ is an $\veps$-approximate minimizer to \[f_{\mu,\eps}\x(x,y)\defeq\max_{y\in[0,1]^n}f_{\mu,\eps}(x,y).\]
\end{lemma}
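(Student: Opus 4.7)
The plan is to pass from the Bregman divergence bound to an $\ell_1$ bound on $x - x^\star$ via strong convexity of the simplex block of $r_{\mu,\eps}$, and then convert this $\ell_1$ bound into an additive bound on $f^x_{\mu,\eps}(x) - f^x_{\mu,\eps}(x^\star)$ by separately controlling the ``linear'' part of the objective (which is $O(C_{\max})$-Lipschitz in $\ell_1$) and the entropic part (which is only locally Lipschitz, and requires Lemma~\ref{lem:entropyvsl1}).

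First I would show $\norm{x - x^\star}_1 \le O\bigl(\veps/(\sqrt{\rho}\, C_{\max}\log^2 m)\bigr)$. Writing $V^r_z(z^\star) = \int_0^1 (1-t)(z^\star - z)^\top \nabla^2 r(z_t)(z^\star - z)\,dt$ and using the Hessian lower bound $\nabla^2 r(z_t)\succeq \md(x_t)$ from Lemma~\ref{lem:reg-convex-and-bound}, the integrand is bounded below by $\frac{\rho}{2}\sum_i (x^\star_i - x_i)^2/[x_t]_i$. A one-line Cauchy--Schwarz argument using $\sum_i [x_t]_i = 1$ gives $\sum_i (x^\star_i - x_i)^2/[x_t]_i \ge \norm{x - x^\star}_1^2$, and integration in $t$ yields $V^r_z(z^\star) \ge \frac{\rho}{4}\norm{x - x^\star}_1^2$. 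Combining with the divergence hypothesis gives the claimed $\ell_1$ bound.

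Next I decompose $f^x_{\mu,\eps}(x) - f^x_{\mu,\eps}(x^\star) = (g(x) - g(x^\star)) + \mu(H(x) - H(x^\star))$, where $g(x) \defeq \max_{y \in [0,1]^n}\bigl[y^\top\ma^\top x + c^\top x - b^\top y - \frac{\eps}{2}(y^2)^\top\abs{\ma}^\top x\bigr]$. For any fixed $y \in [0,1]^n$ the inner expression is linear in $x$ with $\ell_\infty$-norm of its $x$-gradient bounded by $\norm{\ma}_\infty + \norm{c}_\infty + \frac{\eps}{2} \le 3C_{\max}$, so $g$ is $3C_{\max}$-Lipschitz in the $\ell_1$ norm and $g(x) - g(x^\star) \le 3C_{\max}\norm{x - x^\star}_1$. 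For the entropic term, applying Lemma~\ref{lem:entropyvsl1} directly gives $\mu(H(x) - H(x^\star)) \le 33\log(m)\norm{x - x^\star}_1 + m^{-30}$ using $\mu \le 1$. Summing the two bounds, the total is at most $O(C_{\max}\log m)\norm{x - x^\star}_1 + m^{-30}$, which after substituting the $\ell_1$ bound from Step 1 gives $O(\veps/(\sqrt{\rho}\log m)) + m^{-30} \le \veps$ using $\rho \ge 6$, $\veps \ge m^{-10}$, and choosing the constant in the divergence hypothesis appropriately.

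The main subtlety is that $H$ is not globally Lipschitz in $\ell_1$ on $\Delta^m$ (its gradient blows up near the boundary), so one cannot simply differentiate $\mu H$ directly; this is exactly what Lemma~\ref{lem:entropyvsl1} is designed to handle, at the cost of the additive $m^{-30}$ term, which is then absorbed using $\veps \ge m^{-10}$. A secondary subtlety is that the joint regularizer $r_{\mu,\eps}$ is non-separable and its $y$-block is only degenerate-convex, so rather than trying to separate the divergence into marginal pieces, the cleanest approach is to pass through the Hessian lower bound of Lemma~\ref{lem:reg-convex-and-bound}, keep only the $xx$ block, and use Cauchy--Schwarz to recover an $\ell_1$ strong convexity statement in $x$ alone.
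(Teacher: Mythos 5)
Your proposal is correct and follows essentially the same two-step route as the paper: first an $\ell_1$ bound $\norm{x - x^\star}_1 \lesssim \veps/(C_{\max}\log^2 m)$ extracted from the divergence via $\ell_1$-strong convexity of the simplex block (the paper gets it from Pinsker plus nonnegativity of the divergence of $r - \tfrac{\rho}{2}H$, you get it from the Hessian lower bound of Lemma~\ref{lem:reg-convex-and-bound} plus Cauchy--Schwarz in the local norm --- both are fine), and then the same decomposition of $f\x_{\mu,\eps}$ into an $O(C_{\max})$-Lipschitz-in-$\ell_1$ part and an entropic part handled by Lemma~\ref{lem:entropyvsl1}. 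Your one-shot bound on the $y$-maximized part as a pointwise maximum of uniformly $\ell_1$-Lipschitz affine functions replaces the paper's term-by-term bounds, and the constant slack you invoke at the end (absorbing factors like $33\log m$ using the $\log^2 m$ headroom, valid for $m$ beyond a fixed constant) is no worse than the slack already implicit in the paper's own $\veps/4$ accounting.
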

\begin{proof}
We first claim that the given divergence condition implies that 
\begin{equation}\label{eq:regdivtol1}
\norm{x - x^\star}_1 \le \frac{\veps}{C_{\max}\log^2 m}.
\end{equation}
To see this, we observe that due to the Hessian bound in~\Cref{lem:reg-convex-and-bound}, the divergence induced by $r(x, y) - \frac{\rho}{2} H(x)$ is nonnegative, and hence
\begin{align*}
 \norm{x^\star - x}_1^2 \le  2V^H_{x}(x^\star) \le \frac \rho 2 V^H_{x}(x^\star) \le V^r_{z}(z^\star).
\end{align*}
The first inequality above was by Pinsker, the second used $\rho\ge 6$, and the last followed by nonnegativity of divergence. This implies \eqref{eq:regdivtol1} by the assumed divergence bound. We next show that \eqref{eq:regdivtol1} implies that $x$ is an approximate minimizer of $f_{\mu,\eps}\x$, which we recall means
\begin{equation}\label{eq:dualitygapparts}
\max_{y' \in \yset} f_{\mu,\eps}(x, y') - f_{\mu,\eps}(x^\star, y^\star) \le \veps.
\end{equation}
To this end, we compute
\begin{gather*}
\max_{y \in \yset} y^\top\Par{\ma^\top x - b} - \max_{y \in \yset} y^\top \Par{\ma^\top x^\star - b} \le \norm{\ma^\top\Par{x - x^\star}}_1 \le \norm{x - x^\star}_1 \le \frac{\veps}{4}, \\
\inprod{c}{x} - \inprod{c}{x^\star} \le C_{\max}\norm{x - x^\star}_1 \le \frac{\veps}{4}, \\
\mu \Par{H(x) - H(x^\star)} \le 33\mu\log(m)\norm{x - x^\star}_1 + \mu m^{-30} \le \frac{\veps}{4}, \\
\max_{y \in \yset} \Par{y^2}^\top \Abs{\ma}^\top x - \max_{y \in \yset} \Par{y^2}^\top \Abs{\ma}^\top x^\star \le \norm{\ma^\top\Par{x - x^\star}}_1 \le \norm{x - x^\star}_1 \le \frac{\veps}{4}.
\end{gather*}
The third line used Lemma~\ref{lem:entropyvsl1}. Combining these pieces proves the claim.
\end{proof}

We are now ready to prove~\Cref{thm:sherman}.

\thmsherman*

\begin{proof}
We handle correctness and runtime separately.

\textbf{Correctness.} Let $(x^\star,y^\star)$ be the minimax optimal solution of $f_{\mu,\eps}$ in~\eqref{eq:main-reg-apprx}. By applying~\Cref{prop:outerloopproof-sm-pad} and~\Cref{lem:regdivtogap}, shifting the definition of $\veps$ appropriately so that $\frac{3\veps}{\nu} \gets \frac{\veps^2}{C_{\max}^2\log^4 m}$, we have with the choice of parameters our returned $x_K$ satisfies the desired bounds.

\textbf{Runtime.} It is immediate to see each line of~\Cref{alg:sherman} runs in time $O(\nnz(\ma))$. Then, the total runtime follows from combining~\Cref{prop:outerloopproof-sm-pad}  and~\Cref{coro:altmin-conv-sherman}.
\end{proof}

We now apply Theorem~\ref{thm:sherman} to prove Corollary~\ref{coro:sherman}.

\corosherman*

\begin{proof}
We apply Theorem~\ref{thm:sherman} with $ \veps\gets \frac \eps 2$ and the same value of $\mu$ as in the regularized objective \eqref{eq:main-reg}. Note that the additive range of the quadratic regularizer is $\frac \eps 2$, and hence
\[
f_{\mu,\eps}(x,y)\le f_{\mu}(x,y)\le f_{\mu,\eps}(x,y) +\frac{\eps}{2},\text{ for all } (x,y)\in\Delta^m\times[0,1]^n.
\]
It thus suffices to obtain a $\frac \eps 2$ error guarantee from Theorem~\ref{thm:sherman}, which yields the runtime guarantee.
\end{proof}

\notarxiv{
	\subsection{Statement of Algorithm~\ref{alg:sherman}}\label{ssec:shermanalg}
	
	\SetKwProg{Fn}{function}{}{}
	
	\begin{algorithm}[ht!]
		\DontPrintSemicolon
		\KwInput{ $\ma \in \R^{m \times n}$, $c \in \R^m$, $b \in \R^n$, accuracy $\veps\in(m^{-10},1)$, $72\eps\le \mu\le  1$}
		\KwOutput{Approximate solution pair $(x, y)$ to \eqref{eq:main-reg-apprx}}
		\textbf{Global:} $\delta \gets  \tfrac{\eps\veps^2}{m^2}$, $\rho\gets \sqrt{\tfrac{2\mu}{\eps}}$, $\nu\gets \tfrac{1}{2}\sqrt{\tfrac{\mu\eps}{2}}$, $\alpha \gets 18\Cmax+32\sqrt{\frac{\mu\eps}{2}} \log \frac 4 \delta$ \;
		\textbf{Global:} $T \gets O\Par{\log \frac{mn\Bmax \alpha\rho}{\delta\veps}}$, $K \gets O\Par{\frac \alpha \nu \log\Par{\frac{\nu\log m}{\veps}}}$ for appropriate constants\;
		$(x_0,y_0)\gets (\tfrac{1}{m}\cdot\mathbf{1}_m, \mathbf{0}_n)$\;
		\For{$k =1$ to $K$}{
			$(\gamma\x, \gamma\y) \gets \GradBS(x_{k-1},y_{k-1}, x_{k-1},y_{k-1}, 0)$\;\label{line:altmin-one:start}
			$(x_{k-\frac{1}{2}},y_{k-\frac{1}{2}})\gets \AltminBS(\gamma\x,\gamma\y,\alpha,x_{k-1},y_{k-1})$\;\label{line:altmin-one:end}
			$(\gamma\x, \gamma\y) \gets \GradBS(x_{k-\frac{1}{2}},y_{k-\frac{1}{2}}, x_{k-1}, y_{k-1}, \nu)$\;\label{line:altmin-two:start}
			$(x^{(T+1)},y^{(T)})\gets \AltminBS(\gamma\x,\gamma\y,\alpha+\nu,x_{k-\frac{1}{2}},y_{k-\frac{1}{2}})$\;\label{line:altmin-two:end}
			$x_{k}\gets \frac{1}{\norm{\max\Par{x^{(T+1)},\delta}}_1}\cdot  \max\Par{x^{(T+1)},\delta}$, $y_{k}\gets y^{(T)}$\label{line:return-two}\Comment*{Implement padding $\mathcal{O}_\delta(x^{(T+1)})$}
		}
		\Fn{$\GradBS(x,y,x_0,y_0, \Theta)$}{
			$g\x \gets  \ma y + c + \mu(\1+\log(x)) - \frac{\eps}{2} \Abs{\ma} (y^2)$\;
			$g\y\gets -\ma^\top x + b + \eps\diag{y}|\ma|^\top x$ \;
			$g\x_r \gets -\alpha \rho (1+\log x_0) - \frac{\alpha}{\rho}|\ma| y^2_0 -  \Theta \rho (1+\log x) - \frac{\Theta}{\rho}|\ma| y^2$ \;
			$g\y_r \gets - \frac{2\alpha}{\rho} \diag {y_0}|\ma|^\top x_0 - \frac{2\Theta}{\rho} \diag {y}|\ma|^\top x)$ \;
			\Return $(g\x + g\x_r ,g\y + g\y_r)$	
		}
		\Fn(\Comment*[h]{Implement approximate proximal oracle via $\AltMin$}){$\AltminBS(\gamma\x,\gamma\y, \theta,x^{(0)},y^{(0)})$}{\For{$0 \le t \le  T$}{
				$x^{(t+1)}\gets \frac{1}{\norm{\exp\Par{-\frac{1}{\theta\rho}\gamma\x-\frac{1}{\rho^2}|\ma|\Par{y^{(t)}}^2}}_1}\cdot \exp\Par{-\frac{1}{\theta\rho}\gamma\x-\frac{1}{\rho^2}|\ma|\Par{y^{(t)}}^2}$ \;
				$y^{(t+1)} \gets \mathrm{med}\left(0,1, -\frac{\rho}{2\theta}\cdot\frac{\gamma\y}{|\ma|^\top x^{(t+1)}}\right)$ \;
			}	
			\Return $(x^{(T+1)},y^{(T)})$	
		}
		\caption{$\RegBoxSimp(\ma, b, c, \epsilon,\mu,\veps)$}\label{alg:sherman}
	\end{algorithm}

} %

\section{Approximating Sinkhorn distances}\label{sec:apprx}

In this section, we consider the problem of approximating Sinkhorn distances~\cite{Cuturi13}, a common computational task in the theory and practice of optimal transport. In this problem, we are given a product domain on two discrete supports $L\times R$, along with demands $d\in\R^{L+R}_{\ge 0}$ (with marginals $\norm{d_L}_1 = \norm{d_R}_1=1$) and a cost matrix $c\in\R^{L\times R}_{\ge0}$. Letting $m = |L||R|$ and $n = |L|+|R|$, the optimal transport (OT) problem asks to find a feasible transport plan $x\in\Delta^{m}$ minimizing $\inprod{c}{x}$, concisely described as 
\begin{equation}\label{def:ot}
\min_{x\in\Delta^m \mid \mb^\top x=d}\inprod{c}{x},\end{equation}
where $\mb$ is the unsigned incidence matrix of the complete bipartite graph on $L\times R$. Prior work by~\cite{Cuturi13} proposed Sinkhorn distances as an efficiently computable, differentiable alternative to~\eqref{def:ot}, where for $\mu>0$, the Sinkhorn distance asks for an optimal \emph{regularized} transport plan:
\begin{equation}\label{def:ot-reg}
x_\mu^\star ~~\text{solves}~~\min_{x\in\Delta^m \mid \mb^\top x=d}\inprod{c}{x}+\mu H(x),~\text{where}~H(x) = \sum_{i\in[m]}x_i\log x_i.
\end{equation}

We use the notion of function error in terms of the regularized objective $\inprod{c}{x}+\mu H(x)$ as the approximation criterion for developing our approximate solvers in this section, and we aim to find an $\eps$-approximate solution $x_\mu\in\mathcal{U}(d_L,d_R) \defeq \{x \in \Delta^m \mid \mb^\top x = d\}$ satisfying
\begin{equation}\label{def:ot-reg-eps}
\inprod{c}{x_\mu}+\mu H(x_\mu)\le  \inprod{c}{x_\mu^\star}+\mu H(x_\mu^\star) +  \eps.
\end{equation}
We call such an $x_\mu$ an $\eps$-approximate minimizer to \eqref{def:ot-reg}. One reason for choosing this definition of an approximate solution is that function error bounds the error in transportation cost and $\ell_1$ distances, i.e.\ 
\begin{align*}
\frac \mu 2 \norm{x_\mu-x^\star_\mu}^2_1 \le \inprod{c}{x_\mu} +\mu H(x_\mu) - \Par{ \inprod{c}{x_\mu^\star} + \mu H(x_\mu^\star)},\\
~~\text{and}~~\inprod{c}{x_\mu}-	\inprod{c}{x^\star_\mu}\le \norm{c}_\infty\norm{x_\mu-x^\star_\mu}_1.
\end{align*}
The first inequality follows from the first-order optimality condition of $x^\star_\mu$ and Pinsker's inequality, which shows the objective minimized by $x^\star_\mu$ is $\mu$-strongly convex in $\ell_1$. The second is $\ell_1$-$\ell_\infty$ H\"older.

We present three different approaches to approximate Sinkhorn distances~\eqref{def:ot-reg}. 
\begin{enumerate}
	\item In Section~\ref{ssec:first-OT}, we develop a first-order method that uses our regularized box-simplex solver, which finds an $\eps$-approximate minimizer to \eqref{def:ot-reg} in $\widetilde{O}(\frac 1 {\sqrt{\mu\eps}})$ time for $\mu = \Omega(\eps)$ (Theorem~\ref{thm:sinkhornmin}).
	\item In Section~\ref{ssec:high-OT}, we present a high-accuracy solver for the problem by building upon recent near-linear time box-constrained Newton's method for matrix scaling problems~\cite{Cohen17}, which finds an $\eps$-approximate minimizer in an improved $\widetilde{O}(\frac 1 \mu)$ time (Theorem~\ref{lem:ms-sinkhorn}).
	\item Lastly, for completeness we provide in~\Cref{ssec:sinkhornunaccel} an unaccelerated Sinkhorn distance solver with runtime $\widetilde{O}(\frac 1{\mu\eps})$ using the classical $\Sinkhorn$ method~\cite{DvurechenskyGK18}, which recovers the result in~\cite{altschuler2019massively}, Theorem 5 up to logarithmic factors.
\end{enumerate}

\subsection{Regularized box-simplex solver for computing Sinkhorn distances: Theorem~\ref{thm:sinkhornmin}}\label{ssec:first-OT}

In this section, we summarize our application of Theorem~\ref{thm:sherman} in approximating the Sinkhorn distance \eqref{def:ot-reg}. We first show approximating \eqref{def:ot-reg} can be reduced to solving a \emph{regularized box-simplex game}:
\begin{equation}\label{eq:sinkhornpddefintro}
\begin{gathered}
\min_{x\in\Delta^{m}} \max_{y \in [0, 1]^{2n}} \frac{1}{4C}\inprod{c}{x} +\frac{\mu}{4C} H(x) +  y^\top \ma^\top x - \inprod{b}{y}, \\
\text{ where } \ma \defeq \frac{1}{4}\begin{pmatrix}
	\mb, -\mb
\end{pmatrix}
,\; b \defeq \frac{1}{4}
\begin{pmatrix}
d, -d	
\end{pmatrix}
,
\end{gathered}
\end{equation}
for some $C \defeq 2\Par{\norm{c}_\infty + 33\mu \log(m)}$. This characterization, which is key to our approach, follows by first replacing the equality constraints $\mb^\top x = d$ with inequality constraints $\mb^\top x\le d, \mb^\top x\ge d$, and then developing an analogous reduction to the one used by \cite{JambulapatiST19} for solving the standard OT objective \eqref{def:ot}. In particular, by maximizing over $y$, problem \eqref{eq:sinkhornpddefintro} is equivalent to the following regularized $\ell_1$-regression problem:
\begin{equation}\label{eq:sinkhornl1def}
\min_{x \in \Delta^m} \frac{1}{4C}\inprod{c}{x} + \frac{\mu}{4C} H(x) + \frac{1}{4}\norm{\mb^\top x - d}_1.
\end{equation}
The quantity $\norm{\mb^\top x - d}_1$ is naturally interpreted as the deviation of the $x$ marginals from the demands $d$. We further require one simple helper fact, which shows that any transport plan $x$ can be ``fixed'' to be feasible for demands $d$ at a cost depending on its deviation $\norm{\mb^\top x - d}_1$.

\begin{lemma}[Lemma 7, \cite{AltschulerWR17}]\label{lem:fixot}
	There is an algorithm, $\OTRound$ (Algorithm 2, \cite{AltschulerWR17}), which takes as input any $x \in \R_{\ge 0}^m$ and produces $\tx \in \Delta^m$ such that $\norm{\tx - x}_1 \le 2\norm{\mb^\top x - d}_1$ and $\mb^\top \tx = d$. The algorithm runs in $O(m)$ time.
\end{lemma}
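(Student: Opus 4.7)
The plan is to analyze Algorithm 2 of~\cite{AltschulerWR17}, which I summarize now. Identify the input $x \in \R_{\ge 0}^m$ with a nonnegative matrix $X \in \R_{\ge 0}^{L \times R}$, and write its row and column sums as $r(X) \in \R^L$ and $c(X) \in \R^R$, so that $\mb^\top x = (r(X), c(X))$; correspondingly decompose $d = (d_L, d_R)$. The procedure performs three linear passes: (i) scale each row $i$ of $X$ by $\min(1, [d_L]_i/[r(X)]_i)$, yielding $X'$ with $r(X') = \min(r(X), d_L)$ entrywise; (ii) scale each column $j$ of $X'$ by $\min(1, [d_R]_j/[c(X')]_j)$, yielding $X''$ with $c(X'') \le d_R$ and $r(X'') \le d_L$ entrywise; (iii) letting $\epsilon_L \defeq d_L - r(X'')$ and $\epsilon_R \defeq d_R - c(X'')$, both nonnegative with $\norm{\epsilon_L}_1 = \norm{\epsilon_R}_1$ by conservation of mass, return $\widetilde{X} \defeq X'' + \epsilon_L \epsilon_R^\top / \norm{\epsilon_L}_1$ (treating the correction as $\mathbf{0}$ when $\norm{\epsilon_L}_1 = 0$).

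Feasibility $\mb^\top \widetilde{X} = d$ will be immediate: the rank-one correction in step (iii) fills in precisely both marginal deficits, and $\norm{d_L}_1 = 1$ then places $\widetilde{X} \in \Delta^m$. Each pass touches the nonzeros of $X$ a constant number of times, so the total running time is $O(m)$. The only nontrivial component is the $\ell_1$ estimate, which I would obtain by separately controlling the mass \emph{removed} in (i)--(ii) and the mass \emph{added} in (iii), bounding each by $\norm{\mb^\top x - d}_1$ and invoking the triangle inequality.

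For the removed mass, the truncation identities directly give
\[
\norm{X}_1 - \norm{X''}_1 = \norm{(r(X) - d_L)_+}_1 + \norm{(c(X') - d_R)_+}_1,
\]
and monotonicity $c(X') \le c(X)$ (since step (i) only scales rows down) upper bounds this by $\norm{(r(X) - d_L)_+}_1 + \norm{(c(X) - d_R)_+}_1 \le \norm{\mb^\top x - d}_1$. For the added mass, use $\norm{\epsilon_L}_1 = 1 - \norm{X''}_1$ (since $r(X'') \le d_L$ and $\norm{d_L}_1 = 1$), then decompose $1 - \norm{X''}_1 = (\norm{d_L}_1 - \norm{r(X)}_1) + (\norm{X}_1 - \norm{X''}_1)$ and apply the sign-split identity $\norm{d_L}_1 - \norm{r(X)}_1 = \norm{(d_L - r(X))_+}_1 - \norm{(r(X) - d_L)_+}_1$ to collapse the sum to $\norm{(d_L - r(X))_+}_1 + \norm{(c(X') - d_R)_+}_1 \le \norm{\mb^\top x - d}_1$. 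Combining yields $\norm{\widetilde{X} - X}_1 \le 2\norm{\mb^\top x - d}_1$ as desired. The main obstacle is simply keeping the bookkeeping consistent across the three truncation steps; every other inequality reduces to a sign split or entrywise monotonicity.
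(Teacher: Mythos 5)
The paper itself gives no proof of this lemma---it is imported directly as Lemma 7 / Algorithm 2 of \cite{AltschulerWR17}---and your reconstruction is precisely that standard argument: truncate rows, truncate columns, then repair both marginal deficits with the rank-one term $\epsilon_L\epsilon_R^\top/\norm{\epsilon_L}_1$, and bound $\norm{\tx-x}_1$ by the removed plus added mass. Each step you invoke checks out (the truncation identities, entrywise monotonicity $c(X')\le c(X)$, the sign-split identity, and $\norm{\epsilon_L}_1=\norm{\epsilon_R}_1$ from $\norm{d_L}_1=\norm{d_R}_1=1$), so your proposal correctly yields feasibility, the $2\norm{\mb^\top x-d}_1$ bound, and the $O(m)$ running time.
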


The guarantees of $\OTRound$ (cf.\ Lemma~\ref{lem:fixot}) give an algorithmic proof of equivalence between \eqref{eq:sinkhornl1def} and \eqref{def:ot-reg} by relating their approximate solutions: Lemma~\ref{lem:relatesinkhorn} gives a formal statement.

\begin{restatable}{lemma}{lemrelatesinkhorn}\label{lem:relatesinkhorn}
	Let $x$ be a $\tfrac{1}{4C}\Delta$-approximate minimizer to \eqref{eq:sinkhornl1def}. Then, letting $\tx$ be the result of $\OTRound(x)$ (cf.\ Lemma~\ref{lem:fixot}), $\tx$ is a $\Delta + \mu m^{-30}$-approximate minimizer to \eqref{def:ot-reg}. 
\end{restatable}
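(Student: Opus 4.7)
The plan is to convert the approximation guarantee on the $\ell_1$-regression surrogate \eqref{eq:sinkhornl1def} back into a guarantee on the Sinkhorn objective \eqref{def:ot-reg}, using the $\OTRound$ procedure to repair marginal infeasibility while only paying a controlled additive cost. The key observation is that at any point $x$ which is feasible for \eqref{def:ot-reg} (i.e.\ $\mb^\top x = d$), the surrogate objective $F(x) \defeq \tfrac{1}{4C}\inprod{c}{x} + \tfrac{\mu}{4C}H(x) + \tfrac 1 4\norm{\mb^\top x - d}_1$ coincides with $\tfrac{1}{4C}$ times the Sinkhorn objective $G(x) \defeq \inprod{c}{x} + \mu H(x)$. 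In particular, if $x^\star$ denotes the minimizer of \eqref{def:ot-reg}, then $F(x^\star) = \tfrac{1}{4C}G(x^\star)$, and the assumed approximation guarantee on $x$ rearranges (after multiplying by $4C$) to
\begin{equation*}
\inprod{c}{x} + \mu H(x) + C\norm{\mb^\top x - d}_1 \le G(x^\star) + \Delta.
\end{equation*}

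Next, I would bound $G(\tx)$ in terms of $G(x)$ using the two properties of $\tx = \OTRound(x)$ from \Cref{lem:fixot}: $\tx \in \Delta^m$, $\mb^\top \tx = d$, and $\norm{\tx - x}_1 \le 2\norm{\mb^\top x - d}_1$. The linear term changes by at most $\norm{c}_\infty\norm{\tx - x}_1 \le 2\norm{c}_\infty \norm{\mb^\top x - d}_1$ by $\ell_1$-$\ell_\infty$ H\"older. For the entropy term, I invoke \Cref{lem:entropyvsl1} (which applies since both $x, \tx \in \Delta^m$) to obtain $\mu H(\tx) - \mu H(x) \le 66\mu\log(m)\norm{\mb^\top x - d}_1 + \mu m^{-30}$. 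Summing these two bounds and using the precise choice $C = 2(\norm{c}_\infty + 33\mu\log m)$ gives exactly
\begin{equation*}
G(\tx) \le G(x) + C\norm{\mb^\top x - d}_1 + \mu m^{-30}.
\end{equation*}

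Combining with the displayed inequality above then yields $G(\tx) \le G(x^\star) + \Delta + \mu m^{-30}$, which together with $\tx \in \mathcal{U}(d_L,d_R)$ is exactly the claim. Essentially no step is an obstacle here: the main content is the bookkeeping showing that the coefficient $C$ was chosen large enough to absorb both the transport cost perturbation and the entropy perturbation induced by $\OTRound$. The small $\mu m^{-30}$ slack is an artifact of the fact that \Cref{lem:entropyvsl1} picks up a $\mu m^{-30}$ additive term when $x$ has very small coordinates on the support where $\tx$ is comparatively large, which cannot be avoided by a purely $\ell_1$-based argument.
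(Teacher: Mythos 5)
Your proposal is correct and follows essentially the same route as the paper's proof: both use the observation that the minimizer of \eqref{def:ot-reg} is feasible for \eqref{eq:sinkhornl1def} with matching value (so $4C$ times the surrogate optimum is at most $\opt_{\textup{Sinkhorn}}$), and then bound the change in the Sinkhorn objective under $\OTRound$ via H\"older for the linear term and Lemma~\ref{lem:entropyvsl1} for the entropy term, with the choice of $C$ absorbing both perturbations. No gaps.
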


\begin{proof}
Define $\opt_{\textup{Sinkhorn}}$ to be the value of the Sinkhorn objective \eqref{def:ot-reg}, and $\opt_{\textup{Sinkhorn-Pen}}$ to be the value of \eqref{eq:sinkhornl1def}. It is immediate that $\opt_{\textup{Sinkhorn-Pen}} \le \tfrac{1}{4C}\opt_{\textup{Sinkhorn}}$, since any feasible point for \eqref{def:ot-reg} (in particular the minimizer) is also feasible for \eqref{eq:sinkhornl1def} and obtains the same function value. Hence, we have
\[
\inprod{c}{x} + \mu H(x) + C\norm{\mb^\top x - d}_1 \le 4C\cdot\opt_{\textup{Sinkhorn-Pen}} + \Delta \le \opt_{\textup{Sinkhorn}} + \Delta.
\]
Moreover, we bound the objective value of $\tx$:
\begin{align*}
\inprod{c}{\tx} + \mu H(\tx) &= \inprod{c}{x} + \mu H(x) + \inprod{c}{\tx - x} + \mu \Par{H(\tx) - H(x)} \\
&\le \inprod{c}{x} + \mu H(x) + \Par{\norm{c}_\infty + 33\mu \log(m)} \norm{\tx - x}_1 + \mu m^{-30} \\
&\le \inprod{c}{x} + \mu H(x) + 2\Par{\norm{c}_\infty + 33\mu \log(m)} \norm{\mb^\top x - d}_1 + \mu m^{-30} \\
&= \inprod{c}{x} + \mu H(x) + C\norm{\mb^\top x - d}_1 + \mu m^{-30}.
\end{align*}
In the second line, we applied $\ell_1$-$\ell_\infty$ H\"older and Lemma~\ref{lem:entropyvsl1}, in the third we used Lemma~\ref{lem:fixot}, and in the last we used the definition of $C$. Combining the above displays, we have the desired
\[\inprod{c}{\tx} + \mu H(\tx) \le \opt_{\textup{Sinkhorn}} + \Delta + \mu m^{-30}.\]
\end{proof}

Lemma~\ref{lem:relatesinkhorn} shows that to obtain an $\eps \ge 2\mu m^{-30}$-approximate minimizer of \eqref{def:ot-reg}, it suffices to solve \eqref{eq:sinkhornl1def} to $\tfrac{1}{8C}\cdot\eps$ additive accuracy. Applying Corollary~\ref{coro:sherman} in Section~\ref{ssec:framework-alg}, we thus develop a solver for minimizing the Sinkhorn distance objective \eqref{def:ot-reg} with the following guarantees.

\begin{restatable}[Regularized box-simplex solver for Sinkhorn distances]{theorem}{thmsinkhornmin}\label{thm:sinkhornmin}
Consider the Sinkhorn distance objective \eqref{def:ot-reg} on an instance with $n = |L|+|R|$ vertices and $m = |L||R|$ edges, and suppose $\eps\in[m^{-5}\norm{c}_\infty, \norm{c}_\infty]$ , $\mu \in[36\eps, \norm{c}_\infty]$. Using $\OTRound$ on the result of Corollary~\ref{coro:sherman} applied to \eqref{eq:sinkhornpddefintro} obtains an $\eps$-approximate minimizer to \eqref{def:ot-reg} in time
\[O\Par{m\Par{\sqrt{\frac{\mu}{\eps}}\log(m) + \frac{\norm{c}_\infty}{\sqrt{\mu\eps}}} \log(m)\log \Par{\frac{\norm{c}_\infty\log m}{\eps}}}.\]
\end{restatable}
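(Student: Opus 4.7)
The plan is to invoke the reduction already established in Lemma~\ref{lem:relatesinkhorn} to convert the Sinkhorn objective \eqref{def:ot-reg} into the regularized $\ell_1$-regression formulation \eqref{eq:sinkhornl1def}, observe that the latter is precisely the $y$-maximization dual of the regularized box-simplex game \eqref{eq:sinkhornpddefintro}, and then apply Corollary~\ref{coro:sherman} directly to that game. More concretely, to produce an $\eps$-approximate minimizer of \eqref{def:ot-reg} it suffices, by Lemma~\ref{lem:relatesinkhorn} with $\Delta = \eps/2$, to find a $\tfrac{\eps}{8C}$-accurate minimizer of \eqref{eq:sinkhornl1def} (where $C = 2(\|c\|_\infty + 33\mu\log m)$) and round the result via $\OTRound$; the residual $\mu m^{-30}$ term contributed by the lemma is absorbed into the budget using the hypotheses $\mu \le \|c\|_\infty$ and $\eps \ge m^{-5}\|c\|_\infty$. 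Since $\max_{y \in [0,1]^{2n}}$ of the objective in \eqref{eq:sinkhornpddefintro} reproduces \eqref{eq:sinkhornl1def}, a $\tfrac{\eps}{8C}$-duality-gap solution from Corollary~\ref{coro:sherman} yields the same bound on the primal residual.

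The next step is to verify that \eqref{eq:sinkhornpddefintro} fits the hypotheses of Corollary~\ref{coro:sherman} and to choose the two regularization/accuracy parameters appropriately. The scalings in \eqref{eq:sinkhornpddefintro} are designed so that each row of $\ma = \tfrac{1}{4}(\mb,-\mb)$ has exactly four $\pm \tfrac{1}{4}$ entries, giving $\|\ma\|_\infty \le 1$; the linear coefficient vector is $c/(4C)$ whose $\ell_\infty$ norm is $O(1)$, and similarly $\|b\|_\infty = O(1)$. I set $\mu' = \mu/(4C)$ and $\sigma' = \eps/(8C)$ in the corollary, and need to check $72\sigma' \le \mu' \le 1$. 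The upper bound is immediate from $C \ge 2\mu$, and the lower bound reduces to $\mu \ge 18\eps$, which follows from the hypothesis $\mu \ge 36\eps$.

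With these substitutions the runtime follows by direct book-keeping. The per-iteration cost is $O(\nnz(\ma)) = O(m)$ since $\ma$ has $O(m)$ nonzero entries. For the iteration count, $\sqrt{\mu'\sigma'} = \Theta(\sqrt{\mu\eps}/C)$ so the main factor becomes $O(C/\sqrt{\mu\eps})$; using $\mu \le \|c\|_\infty$ to conclude $C = O(\|c\|_\infty \log m)$, this splits as $O(\|c\|_\infty/\sqrt{\mu\eps} + \sqrt{\mu/\eps}\log m)$, matching the theorem. The two logarithmic factors produced by Corollary~\ref{coro:sherman}, $\log(C_{\max}\log m/\sigma')$ and $\log(mn B_{\max}/\sigma')$, collapse under the assumption $\eps \ge m^{-5}\|c\|_\infty$ (which forces $\log(\|c\|_\infty/\eps) = O(\log m)$) into $O(\log m \cdot \log(\|c\|_\infty \log m/\eps))$, yielding the stated bound.

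The subtleties are not conceptual but arise from parameter tracking: ensuring that the function-value guarantee of Corollary~\ref{coro:sherman} is correctly interpreted through the $\tfrac{1}{4C}$ scaling inside \eqref{eq:sinkhornl1def} before feeding into Lemma~\ref{lem:relatesinkhorn}, and that $C$ (which itself depends on $\mu$) can be computed or upper-bounded to within a constant factor so that the parameter choices $\mu' = \mu/(4C)$, $\sigma' = \eps/(8C)$ are well-defined. A small bound on the $\mu m^{-30}$ rounding slack is what allows the $\Delta = \eps/2$ budget to suffice, and the assumption $\eps \ge m^{-5}\|c\|_\infty$ (via $\mu \le \|c\|_\infty$) provides plenty of margin for this.
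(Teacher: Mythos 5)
Your proposal is correct and takes essentially the same route as the paper's proof: it reduces via Lemma~\ref{lem:relatesinkhorn} with the $\mu m^{-30}$ slack absorbed, then applies Corollary~\ref{coro:sherman} to \eqref{eq:sinkhornpddefintro} with exactly the paper's parameter choices $\mu' = \mu/(4C)$ and accuracy $\eps/(8C)$, and does the same runtime book-keeping. One arithmetic nit: the condition $72\cdot\tfrac{\eps}{8C} \le \tfrac{\mu}{4C}$ reduces to $\mu \ge 36\eps$ rather than $\mu \ge 18\eps$, but this is precisely the stated hypothesis, so the verification still goes through.
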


\begin{proof}
Note that the deviation between the regularized Sinkhorn objective \eqref{eq:sinkhornpddefintro} and the actual Sinkhorn objective \eqref{def:ot-reg} at an approximate minimizer is at most $\mu m^{-30} \le \frac \eps {2} =: \Delta$ by Lemma~\ref{lem:relatesinkhorn}. Thus, it suffices to find a $\veps$-approximate minimizer to \eqref{eq:sinkhornpddefintro} with the choice of parameters
\[ \mu' \gets \frac{\mu}{4C} = \frac{\mu}{8\Par{\norm{c}_\infty + 33\mu \log(m)}},~~\text{and}~~\veps \gets \frac{\eps}{8C} = \frac{\eps}{16\Par{\norm{c}_\infty + 33\mu \log(m)}}.\]
It is straightforward to verify that the assumptions at the start of Section~\ref{sec:framework} on $\norm{\ma}_\infty$, $\norm{b}_1$, and $\norm{c}_\infty$ are all met for this problem and the desired accuracy parameter. Thus, by applying Corollary~\ref{coro:sherman}, we obtain the desired runtime.
\end{proof}

\subsection{Box-constrained Newton for computing Sinkhorn distances: Theorem~\ref{lem:ms-sinkhorn}}\label{ssec:high-OT}

 In this section, we show how to apply the recent matrix scaling solvers developed in~\cite{Cohen17} to obtain second-order high-accuracy solvers for approximating Sinkhorn distances. Our approach follows from a direct connection to the scaling problem via the well-known correspondence of matrix scaling and computing Sinkhorn distances: we provide the following results to make this reduction from matrix scaling rigorous for completeness, and to prove our claimed runtime in Theorem~\ref{lem:ms-sinkhorn}.

We first define the matrix scaling problem.

\begin{definition}[Matrix scaling]\label{def:matscale}Given a nonnegative matrix $\mk\in\R^{n_1\times n_2}_{\ge0}$, the $(r,c)$-matrix scaling problem asks to find the vectors $u\in\R^{n_1}$, $v\in\R^{n_2}$ such that
\[
\diag{\exp(u)}\cdot \mk \cdot\diag{\exp(v)}\1 = r,\; \1^\top \diag{\exp(u)}\cdot\mk \cdot\diag{\exp(v)} = c^\top.
\]
\end{definition}

In other words, it asks to rescale $\mk$ by nonnegative diagonal matrices to obtain the specified row and column marginals. Prior work by~\cite{Cohen17} gives a near-linear time matrix scaling solver by using a box-constrained Newton's method, which we restate in~\Cref{prop:ms-solver} for our use.

\begin{proposition}[Theorem 4.6,~\cite{Cohen17}]\label{prop:ms-solver}
	Given $\mk\in\R^{n_1\times n_2}_{\ge0}$ with $s_{\mk} \defeq \sum_{i\in[n_1], j\in[n_2]}\mk_{ij}$, define  \begin{equation}\label{def:potential-ms}P_\mk(u,v) \defeq \sum_{i\in[n_1], j\in[n_2]}\mk_{ij}\exp(u_i-v_j)-\langle r, u\rangle+\langle c,v\rangle.\end{equation}
	Suppose there is an approximate $(r,c)$-scaling $(u,v)$ satisfying $P_\mk(u,v)\le \inf_{u',v'} P_\mk(u',v')+\frac{\veps^2}{n_1+n_2}$ and $\norm{u}_\infty, \norm{v}_\infty \le B$. Then we can compute an approximate $(r,c)$ scaling $(u,v)$ of $\mk$ satisfying 
	\[
	\norm{\diag{\exp(u)}\cdot \mk \cdot\diag{\exp(v)}\1 - r}_2^2+\norm{\diag{\exp(v)}\cdot \mk^
	\top \cdot\diag{\exp(u)}\1 - c}_2^2\le \veps,
	\]
	in time $\widetilde{O}(\nnz\Par{\mk}\cdot B\log^2(\frac{s_\mk} \veps))$.
\end{proposition}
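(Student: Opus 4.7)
The statement in question is a direct restatement of Theorem 4.6 of \cite{Cohen17}, so my ``proof'' is really a sketch of how one would apply (and mildly adapt) that theorem to the convex potential $P_{\mk}$ defined in \eqref{def:potential-ms}. The plan is to recognize $P_{\mk}$ as the Lagrangian dual of the entropically regularized transport problem on $\mk$, so that its gradients in $u, v$ are exactly the row- and column-marginal deviations of the scaled matrix $\diag{\exp(u)} \mk \diag{\exp(v)}$. Consequently, bounding the $\ell_2$ norm of the marginal residual is the same as bounding $\norm{\nabla P_{\mk}(u,v)}_2$.

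The next step is to connect function suboptimality of $P_{\mk}$ to the size of $\nabla P_{\mk}$. Here one invokes the structural fact from \cite{Cohen17} that $P_{\mk}$ is $O(1)$-second-order robust in the $\ell_\infty$ box around any point: its Hessian is given by a weighted Laplacian-type matrix on the bipartite support of $\mk$, and for any box-step of $\ell_\infty$ radius $O(1)$ the Hessian changes by at most a constant factor. This robustness is what licenses the box-constrained Newton's method: each step is the exact minimizer of a quadratic model over an $\ell_\infty$-box, and solving it reduces to an (approximate) Laplacian system, solvable in $\tO(\nnz(\mk))$ time. A potential-function argument of \cite{Cohen17} then shows the number of box-Newton iterations needed to drive $P_{\mk}$ within $\veps^2/(n_1+n_2)$ of optimal scales as $\tO(B)$, since $B$ bounds the $\ell_\infty$ diameter of the relevant sublevel set.

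Finally, to produce the stated marginal guarantee I would translate function error back to gradient error. Because $P_{\mk}$ is $\Omega(1)$-relatively smooth in the local Hessian norm, any $(u,v)$ with function error at most $\veps^2/(n_1+n_2)$ has squared gradient norm at most $O(\veps^2)$; splitting this across the $u$ and $v$ blocks and using that each block of the gradient is exactly the corresponding marginal deviation yields the claimed $\ell_2^2$ bound on the combined marginal error. Summing the per-iteration cost $\tO(\nnz(\mk))$ over $\tO(B)$ iterations, with polylogarithmic overhead in $s_{\mk}/\veps$ from the Laplacian solver accuracy, gives the stated total runtime $\tO(\nnz(\mk) \cdot B \log^2(s_{\mk}/\veps))$.

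The main obstacle, and the reason the result is nontrivial despite the simple dual formulation, is establishing the $\ell_\infty$ second-order robustness and the associated potential-decrease analysis for box-constrained Newton. This is exactly what \cite{Cohen17} develops, and rather than reprove it I would simply cite their Theorem 4.6 and verify that our normalization of $P_{\mk}$, our notion of approximate scaling, and our parameter $B$ match theirs up to constants; the only care needed is in tracking the factor of $s_{\mk}$ through the Laplacian-solve precision, which contributes the $\log^2(s_{\mk}/\veps)$ factor.
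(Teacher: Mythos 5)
Your proposal is correct and matches the paper's treatment exactly: the paper gives no proof of this proposition, it simply imports Theorem 4.6 of \cite{Cohen17}, and your sketch of the underlying box-constrained Newton machinery (second-order robustness of $P_{\mk}$, Laplacian-structured Hessian solves, $\widetilde{O}(B)$ iterations, and gradient-equals-marginal-residual) is an accurate account of what that citation contains. The only nit is a sign convention: with $P_{\mk}$ as written the gradient blocks are the marginal residuals of $\diag{\exp(u)}\mk\diag{\exp(-v)}$ (equivalently, replace $v$ by $-v$ in the scaling), which does not affect the argument.
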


It is well-known that computing Sinkhorn distances is equivalent to finding the optimal matrix scaling for $\mk \defeq \exp(-\tfrac{1}{\mu}\mc)\in\R^{L\times R}$, where $\mc = (c_{ij})_{i\in L,j\in R}$ comes from reshaping the cost vector (see e.g.\ Lemma 2, \cite{Cuturi13}). Consequently, we can apply~\Cref{prop:ms-solver} with $\veps \defeq \tfrac{\eps^2}{8\norm{c}_\infty^2m}$ and $\OTRound$ (Lemma~\ref{lem:fixot}) to obtain a high-precision solver for the Sinkhorn distance.

\begin{restatable}[Box-constrained Newton solver for Sinkhorn distances]{theorem}{thmmssinkhorn}\label{lem:ms-sinkhorn}
	Consider the Sinkhorn distance objective \eqref{def:ot-reg} on an instance with $n = |L|+|R|$ vertices and $m = |L||R|$ edges, and suppose $\eps\in[m^{-5}\norm{c}_\infty, \norm{c}_\infty]$ and $\mu\in(0,\norm{c}_\infty]$. The algorithm in Proposition~\ref{prop:ms-solver} obtains an $\eps$-approximate minimizer to~\eqref{def:ot-reg} in time 
	\[\widetilde{O}\Par{\frac{m\norm{c}_\infty}{\mu}}.\]
\end{restatable}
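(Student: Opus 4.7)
The plan is to reduce Sinkhorn distance computation to a matrix scaling problem, apply Proposition~\ref{prop:ms-solver} to solve the scaling to high precision, extract the corresponding nearly-feasible transport plan, and round it via $\OTRound$ (Lemma~\ref{lem:fixot}) to a true feasible plan. Define $\mk \in \R_{\ge 0}^{L \times R}$ by $\mk_{ij} \defeq \exp(-c_{ij}/\mu)$. By a standard Lagrangian/KKT argument (cf.\ Lemma 2 of~\cite{Cuturi13}), the unique minimizer $x^\star$ of \eqref{def:ot-reg} takes the form $x^\star_{ij} = \mk_{ij} \exp(u^\star_i - v^\star_j)$, where $(u^\star, v^\star)$ is the optimal $(d_L, d_R)$-scaling of $\mk$; equivalently, $(u^\star, v^\star)$ minimize $P_\mk$ in \eqref{def:potential-ms} with marginals $(d_L, d_R)$, and the minimum Sinkhorn value equals $\mu(1 - P_\mk(u^\star, v^\star))$.

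Second, I would bound the parameter $B$ required by Proposition~\ref{prop:ms-solver} by invoking the well-known estimate $\max(\|u^\star\|_\infty, \|v^\star\|_\infty) = O(\log n + \|c\|_\infty/\mu) = O(\|c\|_\infty/\mu)$ following, e.g., Lemma~10 of~\cite{BlanchetJKS18}, using the assumption $\mu \le \|c\|_\infty$. Since $\nnz(\mk) = m$ and $s_\mk \le m$ (as all entries of $\mk$ are in $(0, 1]$ when costs are nonnegative), plugging $B = O(\|c\|_\infty/\mu)$ and an accuracy parameter $\veps' = \Theta(\eps^2/(m \|c\|_\infty^2))$ into Proposition~\ref{prop:ms-solver} yields an approximate scaling $(u, v)$ whose induced plan $x_{ij} \defeq \mk_{ij} \exp(u_i - v_j)$ satisfies $\sqrt{\|\mb_L^\top x - d_L\|_2^2 + \|\mb_R^\top x - d_R\|_2^2} \le \veps'$ in the claimed total runtime $\tO(m\|c\|_\infty/\mu)$.

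The main obstacle is translating this $\ell_2$-marginal guarantee into additive error in the Sinkhorn objective. By Cauchy--Schwarz, $\|\mb^\top x - d\|_1 \le \sqrt{2n}\,\veps' = O(\eps/\|c\|_\infty)$. Exploiting the identity $\log x_{ij} = u_i - v_j - c_{ij}/\mu$, I would derive the closed form
\[
\inprod{c}{x} + \mu H(x) = \mu\bigl(\inprod{u}{\mb_L^\top x} - \inprod{v}{\mb_R^\top x}\bigr),
\]
and combine this with the matching identity at $(u^\star, v^\star, x^\star)$ together with the $\ell_\infty$ bounds on the potentials and the above marginal error to obtain $\inprod{c}{x} + \mu H(x) \le \opt_{\textup{Sinkhorn}} + O(\eps)$. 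Finally, $\OTRound$ returns a feasible $\tx \in \mathcal{U}(d_L, d_R)$ with $\|\tx - x\|_1 \le 2\|\mb^\top x - d\|_1 = O(\eps/\|c\|_\infty)$, and a Lipschitz argument on $\inprod{c}{\cdot} + \mu H(\cdot)$ analogous to the proof of Lemma~\ref{lem:relatesinkhorn} (using Lemma~\ref{lem:entropyvsl1} to control the change in entropy) bounds the objective change by $O((\|c\|_\infty + \mu \log m)\|\tx - x\|_1) = O(\eps)$, completing the proof.
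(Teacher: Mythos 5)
Your overall route matches the paper's: reduce \eqref{def:ot-reg} to $(d_L,d_R)$-scaling of $\mk = \exp(-\tfrac 1 \mu \mc)$, run Proposition~\ref{prop:ms-solver} to squared-marginal accuracy $\Theta(\eps^2/(m\norm{c}_\infty^2))$, convert to an $\ell_1$ marginal bound by Cauchy--Schwarz, and finish with $\OTRound$ (Lemma~\ref{lem:fixot}) plus the entropy-Lipschitz bound of Lemma~\ref{lem:entropyvsl1}. Your middle step differs from the paper's: you compare the identity $\inprod{c}{x}+\mu H(x) = \mu\inprod{(u,-v)}{\mb^\top x}$ at the computed potentials against the same identity at $(u^\star,v^\star,x^\star)$; this can be made rigorous (the missing ingredient is that $P_\mk$ in \eqref{def:potential-ms} is convex and minimized at the exact scaling, so $P_\mk(u,v)\ge P_\mk(u^\star,v^\star)$, which together with $\norm{x}_1 \le 1 + O(\eps/\norm{c}_\infty)$ gives the claimed objective bound), and it is a reasonable alternative to the paper's argument, which instead notes that the computed plan is the \emph{exact} Sinkhorn optimizer for its own nearby marginals (Lemma~\ref{lem:approxsinkhorn}) and then uses stability of the optimal value in the demands (Lemma~\ref{lem:optdbound}).

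The genuine gap is your bound on the potential radius $B$. Lemma~10 of \cite{BlanchetJKS18} does not give $\norm{u^\star}_\infty, \norm{v^\star}_\infty = O(\log n + \norm{c}_\infty/\mu)$ for arbitrary demands: the bound is $O\Par{\frac{\norm{c}_\infty}{\mu} + \log\frac{m}{\min_{i\in L, j\in R}([d_L]_i,[d_R]_j)}}$, and the theorem places no lower bound on the demand entries. If some entry is zero, the infimum of $P_\mk$ is not attained, no finite scaling represents $x^\star$ (so the hypothesis of Proposition~\ref{prop:ms-solver} and your closed-form identity at $(u^\star,v^\star,x^\star)$ both fail), and even positive entries as small as $2^{-m}$ force $B = \Omega(m)$, which destroys the $\tO(m\norm{c}_\infty/\mu)$ runtime since Proposition~\ref{prop:ms-solver} scales linearly in $B$. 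The paper's proof resolves this by first padding the demands, $\tilde d_L = \mathcal{O}_\delta(d_L)$, $\tilde d_R = \mathcal{O}_\delta(d_R)$ with $\delta = \Theta\Par{\frac{\eps}{m(\norm{c}_\infty+\mu\log m)}}$ (Definition~\ref{def:pad}), so the logarithmic term becomes $O(\log\frac{m\norm{c}_\infty}{\eps})$ (benign given $\eps \ge m^{-5}\norm{c}_\infty$), and then charging the $O(\eps/\norm{c}_\infty)$ demand perturbation back to the objective through Lemma~\ref{lem:optdbound} and a second $\OTRound$ comparison. Your proof needs this preprocessing (or some equivalent treatment of zero and tiny demand entries) and the associated error accounting to be complete; without it, both the correctness and the stated runtime are unjustified.
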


To prove the theorem, we  first require one helper lemma.

\begin{lemma}\label{lem:approxsinkhorn}
Let $d_L \in \Delta^L, d_R \in \Delta^R$ be demands, let $\mc \in \R^{L \times R}_{\ge 0}$ be costs, and define $\mk \defeq \exp(-\frac 1 \mu \mc)$ for some $\mu \ge 0$. Let $x \in \R^m_{\ge 0}$ be $\mx \in \R^{L \times R}$ appropriately vectorized, such that $\mx = \diag{u} \cdot \mk \cdot \diag{v}$ for some $u \in \R^L_{\ge 0}$, $v \in \R^R_{\ge 0}$. Assume that
\[\norm{\mx \1 - d_L}_1 + \norm{\mx^\top \1 - d_R}_1 \le \Delta.\]
Then, $\hat{x} \defeq \frac{x}{\norm{x}_1}$ is a $3C\Delta + \mu m^{-30}$-approximate minimizer to \eqref{def:ot-reg} with demands $d$, where $C \defeq 2(\norm{c}_\infty + 33\mu\log(m))$.
\end{lemma}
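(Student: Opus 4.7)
The plan is to compare $\hat x$'s Sinkhorn value to the true optimum $x_\star$ by first rounding $\hat x$ to a feasible transport plan, and then applying a Lagrangian-based weak-duality bound to handle the (small) infeasibility.

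\textbf{Step 1 (rounding to feasibility).} I would begin by checking that $\hat x$ is nearly feasible. By triangle inequality $|\norm{x}_1 - 1| = \bigl|\norm{\mx\1}_1 - \norm{d_L}_1\bigr| \le \norm{\mx\1 - d_L}_1 \le \Delta$, so $\norm{x}_1 \in [1-\Delta, 1+\Delta]$; propagating this through the normalization yields $\norm{\mb^\top \hat x - d}_1 = O(\Delta)$. Applying $\OTRound$ (Lemma~\ref{lem:fixot}) to $\hat x$ produces $\tx \in \mathcal{U}(d_L, d_R)$ with $\norm{\tx - \hat x}_1 = O(\Delta)$. Bounding $\inprod{c}{\tx - \hat x}$ by $\ell_1$-$\ell_\infty$ H\"older and $\mu\bigl(H(\tx) - H(\hat x)\bigr)$ by Lemma~\ref{lem:entropyvsl1} gives a total objective change of at most $O(C\Delta) + \mu m^{-30}$. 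It thus suffices to show $\inprod{c}{\hat x} + \mu H(\hat x) - \inprod{c}{x_\star} - \mu H(x_\star) \le O(C\Delta)$.

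\textbf{Step 2 (weak duality via the exponential form).} Writing $\hat x = \diag{\hat\alpha}\,\mk\,\diag{\hat\beta}$ with $\hat\alpha = u/\norm{x}_1$, $\hat\beta = v$, and using $\mk_{ij} = \exp(-c_{ij}/\mu)$, we have $\hat x_{ij} = \exp\bigl((\tilde u_i + \tilde v_j - c_{ij})/\mu - 1\bigr)$ for $\tilde u_i \defeq \mu(1 + \log\hat\alpha_i)$ and $\tilde v_j \defeq \mu\log\hat\beta_j$. These are exactly the first-order optimality conditions for $\hat x$ to be the unconstrained minimizer over $z \in \R_{\ge 0}^m$ of the Lagrangian
\[
L(z; \tilde u, \tilde v) \defeq \inprod{c}{z} + \mu H(z) - \inprod{\tilde u}{\mb_L^\top z - d_L} - \inprod{\tilde v}{\mb_R^\top z - d_R}.
\]
Evaluating $L(x_\star; \tilde u, \tilde v) \ge L(\hat x; \tilde u, \tilde v)$ and using $\mb^\top x_\star = d$ gives
\[
\inprod{c}{\hat x} + \mu H(\hat x) - \inprod{c}{x_\star} - \mu H(x_\star) \le \inprod{\tilde u}{d_L - \mb_L^\top \hat x} + \inprod{\tilde v}{d_R - \mb_R^\top \hat x}.
\]

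\textbf{Step 3 (bounding the dual gap).} Since $d_L - \mb_L^\top \hat x$ and $d_R - \mb_R^\top \hat x$ each sum to zero, the right-hand side is invariant under shifting $\tilde u, \tilde v$ by constant multiples of $\1$. After $\ell_1$-$\ell_\infty$ H\"older, the bound then depends only on the \emph{ranges} $\tfrac 1 2(\max_i \tilde u_i - \min_i \tilde u_i)$ and $\tfrac 1 2(\max_j \tilde v_j - \min_j \tilde v_j)$. Using $\hat x_{ij} \le 1$ we have $\tilde u_i + \tilde v_j \le c_{ij} + \mu \le \norm{c}_\infty + \mu$, while $\sum_{ij}\hat x_{ij} = 1$ forces some $(i^*, j^*)$ with $\tilde u_{i^*} + \tilde v_{j^*} \ge -\mu\log m$; combined with the reparameterization freedom $(\hat\alpha, \hat\beta) \leftrightarrow (\lambda\hat\alpha, \hat\beta/\lambda)$, these yield individual range bounds of size $O(\norm{c}_\infty + \mu\log m) = O(C)$. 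This gives the dual inner product $\le O(C \Delta)$, which together with Step~1 yields the claimed $3C\Delta + \mu m^{-30}$ bound after tightening constants. The main obstacle is precisely this last point: individual coordinates of $\tilde u$ or $\tilde v$ can diverge to $-\infty$ when some $\hat d_{L,i}$ or $\hat d_{R,j}$ is nearly zero, so obtaining a clean $O(C)$ range bound (rather than just a bound on sums $\tilde u_i + \tilde v_j$) requires either a per-coordinate averaging argument exploiting that $\hat d_{L,i} \propto \exp(\tilde u_i/\mu)$ decays fast enough to absorb extreme $\tilde u_i$, or an appeal to a standard Sinkhorn dual-potential bound from the matrix scaling literature (e.g.\ Lemma 10 of~\cite{BlanchetJKS18}, already invoked elsewhere in the paper).
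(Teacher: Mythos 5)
You have correctly identified the weak point yourself, and it is a genuine gap, not a missing calculation: under the lemma's hypotheses the $O(C)$ range bound on the dual potentials $\tilde u,\tilde v$ that Step~3 needs is simply unavailable. The scalings are only assumed to satisfy $u\in\R^L_{\ge0}$, $v\in\R^R_{\ge0}$, and neither $d$ nor the marginals $\hat d_L=\widehat{\mx}\1$, $\hat d_R=\widehat{\mx}^\top\1$ are bounded away from zero; a row of $\widehat{\mx}$ may be identically zero (e.g.\ $u_i=0$), in which case $\tilde u_i=\mu(1+\log\hat\alpha_i)=-\infty$, and more generally $\tilde u_i$ can be arbitrarily negative whenever $\hat d_{L,i}$ is tiny. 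The shift freedom does not rescue this, since distinct coordinates can be extreme in incompatible ways. Your first patch controls $|\tilde u_i|\,\hat d_{L,i}$, but after H\"older the quantity you must control is $|\tilde u_i|\,|d_{L,i}-\hat d_{L,i}|$, and the deviation $|d_{L,i}-\hat d_{L,i}|$ can be of order $\Delta$ exactly on a coordinate where $\hat d_{L,i}\approx 0$, so the product is not $O(C\Delta)$. Your second patch, Lemma~10 of \cite{BlanchetJKS18}, bounds potentials only up to a term scaling with $\log\frac{m}{\min_i d_i}$, i.e.\ it needs entrywise lower bounds on the demands; no such bound is assumed in this lemma (the paper only enforces one later, via the padding step inside the proof of Theorem~\ref{lem:ms-sinkhorn}, which is outside the scope of this statement). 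Two smaller remarks: the lemma, as it is invoked in \eqref{eq:round-sinkhorn-two}, is a bound on the value $\inprod{c}{\hat x}+\mu H(\hat x)$ itself, so your Step~1 rounding of $\hat x$ is not needed here (that reduction happens in Theorem~\ref{lem:ms-sinkhorn}); and the sign in your weak-duality display is flipped, though this is harmless once you pass to absolute values.

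The paper's proof avoids dual potentials entirely and is purely primal. Since $\hat x$ is a normalized nonnegative diagonal scaling of $\mk=\exp(-\frac1\mu\mc)$, Lemma~2 of \cite{Cuturi13} says $\hat x$ is the \emph{exact} minimizer of the Sinkhorn objective for its own marginals $\hat d$, so $\inprod{c}{\hat x}+\mu H(\hat x)=\opt_\mu(\hat d)$. Next, $\norm{x-\hat x}_1=|\norm{x}_1-1|\le\norm{\mx\1-d_L}_1\le\Delta$ gives $\norm{\hat d-d}_1\le 3\Delta$. Finally the demand-stability bound Lemma~\ref{lem:optdbound} (itself proved with $\OTRound$, $\ell_1$--$\ell_\infty$ H\"older, and Lemma~\ref{lem:entropyvsl1}, which is where the $\mu m^{-30}$ comes from) yields $\opt_\mu(\hat d)\le\opt_\mu(d)+3C\Delta+\mu m^{-30}$. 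Because this comparison of $\opt_\mu(\hat d)$ and $\opt_\mu(d)$ is done by rounding a primal plan rather than by pairing infeasibilities with dual variables, it is insensitive to vanishing demand entries, which is precisely where your route breaks; salvaging the duality argument would require adding hypotheses (entrywise lower bounds on $d$ and a relation between $\Delta$ and $\min_i d_i$) that the lemma deliberately does not make.
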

\begin{proof}
Clearly, $\hat{x}$ is also a diagonal scaling of $\mk$, by multiplying $u$ or $v$ by appropriate scalars, and
\begin{align*}\norm{x - \hat{x}}_1 = \Abs{\norm{x}_1 - 1} = \Abs{\1^\top \mx \1 - \1^\top d_L} \le \norm{\mx \1 - d_L}_1 \le \Delta.
\end{align*}
This along with Lemma 2 of \cite{Cuturi13} implies that $\hat{x}$, as a diagonal scaling of $\mk$, is the optimal solution to \eqref{def:ot-reg} for some demands $\hat{d}_L = \widehat{\mx}\1$, $\hat{d}_R = \widehat{\mx}^\top \1$ such that 
\[\norm{\hat{d} - d}_1 = \norm{\mx \1 - d_L}_1 + \norm{\mx^\top \1 - d_R}_1 \le \Delta + \norm{(\mx - \widehat{\mx})\1}_1 + \norm{(\mx - \widehat{\mx})^\top\1}_1 \le 3\Delta.\]
The remainder of the proof follows from Lemma~\ref{lem:optdbound}.
\end{proof}

Next, Lemma~\ref{lem:approxsinkhorn} is key to proving~\Cref{lem:ms-sinkhorn} below.

\begin{proof}[Proof of~\Cref{lem:ms-sinkhorn}]
	Let $C \defeq 2\Par{\norm{c}_\infty + 33\mu \log(m)}$. We first do a preprocessing step to pad the demands $d_L, d_R$ to the modified demands $\tilde{d}_L = \mathcal{O}_\delta(d_L)$ and $\tilde{d}_R = \mathcal{O}_\delta(d_R)$ (see~\Cref{def:pad}), where $\delta \defeq \tfrac{\eps}{48mC}$.
	Next, suppose for $\veps \defeq \tfrac{\eps^2}{288C^2m}$, we find a $\veps$-approximate $(\tilde{d}_L,\tilde{d}_R)$-scaling $(u,v)$ in the sense of Proposition~\ref{prop:ms-solver}, such that $\mx \defeq \diag{u} \cdot \mk \cdot \diag{v}$ satisfies 
	\[
	\norm{\mx\1-\tilde{d}_L}_1+\norm{\mx^\top 1-\tilde{d}_R}_1\le \sqrt{2m}\sqrt{\norm{\mx\1-\tilde{d}_L}_2^2+\norm{\mx^\top 1-\tilde{d}_R}_2^2}\le \sqrt{2m\veps} = \frac{\eps}{12C}.
	\]
	Recalling that the guarantees of padding imply
	\[
	\norm{d_L-	\tilde{d}_L}_1 + \norm{d_R-	\tilde{d}_R}_1 \le 2m\delta+2m\delta = \frac{\eps}{12C},
	\]
	we have by the triangle inequality, letting $x \in \R_{\ge 0}^m$ be the appropriately vectorized $\mx$, and $\hat{x} \defeq \frac{x}{\norm{x}_1}$,
	\begin{align*}
	\norm{\mb^\top \hat{x} -d}_1&\le \norm{\mx\1-\tilde{d}_L}_1+ \norm{d_L-	\tilde{d}_L}_1 + \norm{\mx^\top 1-\tilde{d}_R}_1 + \norm{d_R-	\tilde{d}_R}_1 + 2\norm{\hat{x} - x}_1 \\
	& \le  \frac{\eps}{3C} .
	\end{align*}
In the last inequality, we bound (see the proof of Lemma~\ref{lem:approxsinkhorn}) $\norm{x - \hat{x}}_1 \le \frac{\eps}{12C}$. Next, let $x'$ be the plan which results after applying $\OTRound$ for $\hat{x}$ and demands $d$, and let $\tilde{x}$ be the resulting plan when applying $\OTRound$ for $x_\mu^\star$ and demands $\tilde{d}$, where $x_\mu^\star$ is optimal for \eqref{def:ot-reg} with demands $d$. By a similar argument as used in the proof of~\Cref{lem:relatesinkhorn}, we have 
\begin{equation}\label{eq:round-sinkhorn-one}
\begin{aligned}
\inprod{c}{\tx} + \mu H(\tx) &\le \inprod{c}{x_\mu^\star} + \mu H(x_\mu^\star) + C\norm{\mb^\top x_\mu^\star - \tilde{d}}_1  + \mu m^{-30},\\
\inprod{c}{x'} + \mu H(x') &\le \inprod{c}{\hat{x}} + \mu H(\hat{x}) + C\norm{\mb^\top \hat{x} - d}_1  + \mu m^{-30}.
\end{aligned}
\end{equation}
Moreover, letting $\tilde{x}^\star_\mu$ be optimal for \eqref{def:ot-reg} with demands $\tilde{d}$, applying Lemma~\ref{lem:approxsinkhorn} with $d \gets \tilde{d}$ and $\Delta \gets \frac{\eps}{12C}$, we obtain
\begin{align}\label{eq:round-sinkhorn-two}
	\inprod{c}{\hat{x}} + \mu H(\hat{x})\le \inprod{c}{\tilde{x}_\mu^\star} + \mu H(\tilde{x}_\mu^\star) + \frac{\eps}{4} + \mu m^{-30} \le \inprod{c}{\tilde{x}} + \mu H(\tilde{x}) +\frac{\eps}{4} + \mu m^{-30},
\end{align}
since $\tx$ is feasible for \eqref{def:ot-reg} with demands $\tilde{d}$. Finally,
\begin{equation}\label{eq:dtd}
\norm{\mb^\top x^\star_\mu - \tilde{d}}_1 = \norm{d - \tilde{d}}_1 \le \frac{\eps}{12C}.
\end{equation}
Combining~\eqref{eq:round-sinkhorn-one},~\eqref{eq:round-sinkhorn-two}, and~\eqref{eq:dtd}, as well as our assumed ranges on $\eps$ and $\mu$, we obtain 
\begin{align*}
\inprod{c}{x'} + \mu H(x') & \le  \inprod{c}{x_\mu^\star} + \mu H(x_\mu^\star) +C\norm{\mb^\top x_\mu^\star - \tilde{d}}_1 + C\norm{\mb^\top \hat{x} - d}_1  + \frac{\eps}{4} +  3\mu m^{-30}\\
& \le \inprod{c}{x_\mu^\star} + \mu H(x_\mu^\star) + \frac\eps {12}  + \frac \eps 3 + \frac{\eps}{4} + 3\mu m^{-30} \le \inprod{c}{x_\mu^\star} + \mu H(x_\mu^\star) +  \eps.
\end{align*}
This proves the correctness of returning the transport plan $x'$.

The runtime is bounded by one call to the matrix scaling solver, and one call to $\OTRound$ (which clearly does not dominate). To parameterize~\Cref{prop:ms-solver}, the  matrix scaling problem with $\mk = \exp(-\tfrac{1}{\mu}\mc)\in\R^{L\times R}$ satisfies $s_\mk\le m$ since $c\ge0$ entrywise. Furthermore, there are optimal $(\tilde{d}_L,\tilde{d}_R)$-scaling vectors $(u^\star,v^\star)$ of $\mk$ satisfying (see e.g.~Lemma 10, \cite{BlanchetJKS18}):
	\[\norm{u^\star}_\infty, \norm{v^\star}_\infty = O\Par{\frac{\norm{c}_\infty}{\mu}+\log \frac{m}{\min_{i\in L, j\in R}\Par{[\tilde{d}_L]_i, [\tilde{d}_R]_j}}} = O\Par{\frac{\norm{c}_\infty}{\mu}+\log \frac{m\norm{c}_\infty}{\eps}}.\]
	Thus by applying~\Cref{prop:ms-solver} with $\veps = \tfrac{\eps^2}{288C^2m}$, the runtime is bounded as desired.
\end{proof}

\subsection{Unaccelerated rate of Sinkhorn's algorithm}\label{ssec:sinkhornunaccel}

In this section, we demonstrate that Sinkhorn iteration obtains a complexity for approximately optimizing \eqref{def:ot-reg} scaling as $\tO(\frac{\norm{c}_\infty^2}{\mu\eps})$, the natural unaccelerated counterpart to Theorem~\ref{thm:sinkhornmin}. We note that this rate for solving \eqref{def:ot-reg} (up to logarithmic factors) appeared as Theorem 5 of~\cite{altschuler2019massively}, and we provide a proof for completeness. We consider an algorithm and combine an analysis of Sinkhorn due to \cite{DvurechenskyGK18} and a technical lemma developed in this paper (Lemma~\ref{lem:entropyvsl1} from~\Cref{app:framework-alg}), which we use to reduce \eqref{def:ot-reg} to a regularized box-simplex game up to negligible (inverse-polynomial) additive error.

Throughout, we assume $\eps \le \mu \le \norm{c}_\infty$, and further that $\frac{\norm{c}_\infty}{\eps} \le m^5$; otherwise, the runtime of cutting-plane methods \cite{JiangLSW20} subsume all runtimes claimed in this paper for approximately solving \eqref{def:ot-reg}. Finally, throughout this section we will fix the values of $\eps$, $\mu$, and $c \in \R^m_{\ge 0}$, and define for demands $d$ the optimal value of \eqref{def:ot-reg} by
\[\opt_\mu(d) \defeq \min_{x \in \Delta^m, \mb^\top x = d} \inprod{c}{x} + \mu H(x).\]

We first bound the difference between $\opt_\mu(d)$ and $\opt_\mu(d')$ for ``nearby'' $d$ and $d'$.

\begin{lemma}\label{lem:optdbound}
For any demand vectors $d, d' \in \R^{L + R}_{\ge 0}$ with blockwise restrictions in $\Delta^L$ and $\Delta^R$, 
\[\opt_\mu(d) \le \opt_{\mu}(d') + \Par{2\norm{c}_\infty + 66\mu\log(m)} \norm{d - d'}_1 + \mu m^{-30}.\]
\end{lemma}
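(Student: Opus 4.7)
\textbf{Proof plan for Lemma~\ref{lem:optdbound}.}

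The plan is to construct a feasible transport plan for demands $d$ by rounding the optimum for $d'$, and then bound the cost and entropy increments using the $\OTRound$ guarantee together with Lemma~\ref{lem:entropyvsl1}. Concretely, let $x^\star \in \Delta^m$ be the minimizer achieving $\opt_\mu(d')$, so $\mb^\top x^\star = d'$. Running $\OTRound$ on input $x^\star$ with target demand $d$ (Lemma~\ref{lem:fixot}) returns $\tx \in \Delta^m$ with $\mb^\top \tx = d$ and
\[
\norm{\tx - x^\star}_1 \le 2\norm{\mb^\top x^\star - d}_1 = 2\norm{d' - d}_1.
\]
In particular $\tx$ is feasible for the optimization problem defining $\opt_\mu(d)$, so $\opt_\mu(d) \le \inprod{c}{\tx} + \mu H(\tx)$.

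Next I would bound the two increments $\inprod{c}{\tx - x^\star}$ and $\mu(H(\tx) - H(x^\star))$ separately. By $\ell_1$--$\ell_\infty$ H\"older,
\[
\inprod{c}{\tx - x^\star} \le \norm{c}_\infty \norm{\tx - x^\star}_1 \le 2\norm{c}_\infty \norm{d - d'}_1.
\]
By Lemma~\ref{lem:entropyvsl1}, since $\tx, x^\star \in \Delta^m$,
\[
H(\tx) - H(x^\star) \le 33\log(m)\norm{\tx - x^\star}_1 + m^{-30} \le 66\log(m)\norm{d - d'}_1 + m^{-30}.
\]

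Combining these two estimates gives
\[
\opt_\mu(d) \le \opt_\mu(d') + \Par{2\norm{c}_\infty + 66\mu\log(m)} \norm{d - d'}_1 + \mu m^{-30},
\]
which is the claim. There is no real obstacle here: the only nontrivial ingredient is invoking Lemma~\ref{lem:entropyvsl1} (since entropy is only Lipschitz in $\ell_1$ up to a logarithmic factor and a polynomially small additive error), and the feasibility and $\ell_1$ closeness of the rounded plan follow directly from Lemma~\ref{lem:fixot}. One minor point to verify is that $d$ has marginals compatible with applying $\OTRound$, which follows from the assumption that its blockwise restrictions lie in $\Delta^L$ and $\Delta^R$.
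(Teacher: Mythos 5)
Your proposal is correct and follows essentially the same route as the paper: round the optimizer for one demand vector to be feasible for the other via $\OTRound$, then bound the cost change by $\ell_1$--$\ell_\infty$ H\"older and the entropy change by Lemma~\ref{lem:entropyvsl1}. The only (immaterial) difference is that you round the $d'$-optimizer toward $d$, which yields the stated inequality directly, whereas the paper rounds the $d$-optimizer toward $d'$ and relies on the symmetry of the roles of $d$ and $d'$.
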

\begin{proof}
Throughout the proof define $\Delta \defeq \norm{d - d'}_1$. Let $x$ satisfying $\mb^\top x = d$ achieve the value $\opt_\mu(d)$, and similarly let $x'$ with $\mb^\top x = d'$ achieve $\opt_\mu(d')$. By Lemma~\ref{lem:fixot}, there is a plan $\tx$ with $\mb^\top \tx = d'$ and $\norm{x - \tx}_1 \le 2\Delta$. Since $x'$ minimizes the Sinkhorn objective with demands $d'$,
\begin{align*}
\inprod{c}{x'} + \mu H(x') &= \inprod{c}{\tx} + \mu H(\tx) \\
&= \inprod{c}{x} + \mu H(x) + \Par{\inprod{c}{\tx - x} + \mu H(\tx) - \mu H(x)} \\
&\le \inprod{c}{x} + \mu H(x) + \Par{2\norm{c}_\infty + 66\mu\log(m)} \Delta + \mu m^{-30}.
\end{align*}
The only inequality used $\ell_1$-$\ell_\infty$ H\"older's, and Lemma~\ref{lem:entropyvsl1}.
\end{proof}
For the remainder of the section, we will fix a particular $d$ which we wish to optimize the Sinkhorn objective for (and refer to all other demands by $d'$, $\td$, etc.). We begin by defining the ``padded'' demand vector $\td \in \R^{L + R}_{\ge 0}$ obtained from modifying $d$ as follows:
\begin{equation}\label{eq:tddef}
\td_L = \frac{\max(d_L, m^{-20})}{\norm{\max(d_L, m^{-20})}_1},\; \td_R = \frac{\max(d_R, m^{-20})}{\norm{\max(d_R, m^{-20})}_1}.
\end{equation}
By observation, $\|\td - d\|_1 \le 2m^{-19}$. Next, we recall a result from the literature on the performance of Sinkhorn's algorithm.

\begin{proposition}[Theorem 1, \cite{DvurechenskyGK18}]\label{prop:sinkhornrate}
There is an algorithm, $\Sinkhorn$, which given demand vector $\td$ entrywise at least $m^{-21}$, returns $x'$ such that
\[x' = \argmin_{x \in \Delta^m, \mb^\top x = d'} \inprod{c}{x} + \mu H(x)\]
for some $d'$ with $\norm{d' - d}_1 \le \Delta$, in time
\[O\Par{m\Par{\frac{\norm{c}_\infty}{\mu\Delta} + \frac{\log m}{\Delta}}}.\]
\end{proposition}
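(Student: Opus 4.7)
The plan is to prove Proposition~\ref{prop:sinkhornrate} by analyzing Sinkhorn iteration as exact block coordinate descent in the dual of the entropy-regularized OT problem, following the approach of \cite{DvurechenskyGK18}. First I would dualize \eqref{def:ot-reg}: introduce multipliers $u \in \R^L$, $v \in \R^R$ for the marginal equalities and eliminate the primal via the KKT conditions to obtain a smooth unconstrained dual. The KKT optimality implies the optimal primal takes the matrix-scaling form $x_{ij}^\star \propto \exp((u^\star_i + v^\star_j - c_{ij})/\mu)$, and the dual objective reduces to a log-sum-exp style function $\phi(u,v)$ whose partial minimizers in $u$ (resp.\ $v$) have closed form: the optimal $u_i$ given $v$ sets the $i$-th row marginal of the induced scaled matrix exactly equal to $[\td]_i$.

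Second, I would interpret Sinkhorn iteration as exact block coordinate descent on $\phi$: after each $u$-step the row marginals are exactly $\td_L$, and after each $v$-step the column marginals are exactly $\td_R$. Thus at any point the algorithm maintains a feasible matrix-scaling solution whose output marginals are exactly correct on one side but possibly off on the other. This produces an iterate $x'$ of the form $\diag{\exp(u')} \mk \diag{\exp(v')}$, and by Lemma 2 of \cite{Cuturi13}, any such iterate is \emph{itself} the exact minimizer of \eqref{def:ot-reg} for whatever marginal pair $d'$ it realizes; so it suffices to drive $\|d' - \td\|_1$ below $\Delta$.

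Third, I would bound iteration complexity via a potential argument. A per-step progress lemma shows that each Sinkhorn step decreases $\phi$ by a quantity that can be lower bounded in terms of the KL divergence between the currently-inexact marginal and the corresponding block of $\td$; combined with Pinsker this yields progress $\Omega(\|\mb^\top x - \td\|_1^2)$ per step. The initial suboptimality gap $\phi(u^{(0)}, v^{(0)}) - \phi(u^\star, v^\star)$ is controlled by boundedness of optimal dual variables, which for demand entries lower-bounded by $m^{-21}$ gives $O(\norm{c}_\infty/\mu + \log m)$ via the standard estimate (cf.\ \cite{BlanchetJKS18}, Lemma 10). A vanilla averaging argument already gives iteration count $O(R/\Delta^2)$ for $R \defeq \norm{c}_\infty/\mu + \log m$. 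To get the sharper $O(R/\Delta)$ rate claimed, I would invoke the refined analysis of \cite{DvurechenskyGK18} which, rather than averaging, tracks the \emph{best} iterate's $\ell_1$ marginal error using a telescoping argument that exploits the exact-marginal structure on the alternating block; this replaces the square-root loss in Pinsker with a linear dependence.

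Finally, each Sinkhorn iteration reduces to a single diagonal rescaling, computable in $O(\nnz(\mk)) = O(m)$ time using row/column sums (maintained incrementally), yielding the claimed total runtime $O(m \cdot (\norm{c}_\infty/\mu + \log m)/\Delta)$. The main obstacle is the sharp per-iteration analysis underpinning the $1/\Delta$ rather than $1/\Delta^2$ dependence on $\Delta$: the naive dual-decrease-then-Pinsker argument loses a square root, and recovering the linear rate requires the tighter telescoping/stopping-time analysis from \cite{DvurechenskyGK18}. A secondary technical issue is the need for the padded demands $\td$ to be uniformly bounded below (by $m^{-21}$) to keep the dual optimizers, and hence the initial gap, polylogarithmic in $m$; this is ensured by the preprocessing in \eqref{eq:tddef} which only perturbs $d$ in $\ell_1$ by $O(m^{-19})$, much smaller than $\Delta$.
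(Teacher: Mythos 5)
This proposition is not proved in the paper at all: it is imported verbatim as Theorem~1 of \cite{DvurechenskyGK18}, and the paper's ``proof'' is the citation itself. Your sketch is therefore being judged as a standalone argument, and as such it has one genuine gap: at the decisive quantitative step — obtaining the $O(R/\Delta)$ iteration bound with $R = \norm{c}_\infty/\mu + O(\log m)$ rather than the $O(R/\Delta^2)$ bound that your ``dual decrease plus Pinsker'' argument yields — you explicitly defer to ``the refined analysis of \cite{DvurechenskyGK18}'', i.e.\ to the very theorem being stated. That makes the proposal circular exactly where the content lies. Moreover, your description of how the linear rate is obtained (``a telescoping argument that \ldots replaces the square-root loss in Pinsker with a linear dependence'') does not match the actual mechanism: the proof in \cite{DvurechenskyGK18} keeps the Pinsker-based per-step decrease and combines it with a separate bound asserting that the dual suboptimality at any iterate is at most $R$ times the current $\ell_1$ marginal error, then runs a two-phase (switching) argument — roughly $O(R/\Delta)$ iterations to drive the dual gap down to $O(R\Delta)$, after which at most $O(R/\Delta)$ further iterations can have marginal error exceeding $\Delta$. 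If you want a self-contained proof, you need to reproduce that pair of lemmas and the phase argument rather than gesture at a telescoping trick.

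The surrounding scaffolding in your sketch is sound and consistent with how the proposition is used in \Cref{cor:unaccel}: Sinkhorn as exact block coordinate descent on the dual, the observation that each iterate is a diagonal scaling of $\mk$ with one block of marginals exactly equal to the corresponding block of $\td$ (hence, by Lemma~2 of \cite{Cuturi13}, the exact minimizer of \eqref{def:ot-reg} for whatever marginals $d'$ it realizes, provided you note the iterate has unit total mass so it lies in $\Delta^m$), the $O(m)$ per-iteration cost, and the role of the entrywise lower bound $m^{-21}$ on $\td$ in keeping the dual variables, and hence $R$, of size $O(\norm{c}_\infty/\mu + \log m)$ via the estimate of \cite{BlanchetJKS18}. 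One small point worth flagging: the proposition's conclusion should read $\norm{d' - \td}_1 \le \Delta$ (the input is $\td$); this is how it is invoked in the proof of \Cref{cor:unaccel}, where the additional $\norm{\td - d}_1 \le 2m^{-19}$ is absorbed by a triangle inequality.
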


Finally, we give our main result on the performance of Sinkhorn's algorithm for optimizing \eqref{def:ot-reg}.

\begin{corollary}\label{cor:unaccel}
For \eqref{def:ot-reg} on an instance with $n$ vertices and $m$ edges, suppose $\eps \le \mu \le \norm{c}_\infty$ and $\frac{\norm{c}_\infty}{\eps} \le m^5$. Using $\Sinkhorn$ (Proposition~\ref{prop:sinkhornrate}) on $\td$ defined in \eqref{eq:tddef} to accuracy 
\[\Delta \defeq \frac{\eps}{10\norm{c}_\infty + 330\mu\log(m)}\]
and then applying $\OTRound$ (Lemma~\ref{lem:fixot}) with demands $d$ results in a vector $x$ with $\mb^\top x = d$ and
\[\inprod{c}{x} + \mu H(x) \le \opt_{\mu}(d) + \eps\]
in time
\[O\Par{\frac{\norm{c}_\infty^2}{\mu\eps} + \frac{\mu \log^2 m}{\eps}}.\]
\end{corollary}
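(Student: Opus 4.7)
The plan is to combine Proposition~\ref{prop:sinkhornrate} (which gives an exact solution for a perturbed demand) with $\OTRound$ (which fixes the feasibility but changes the plan in controlled $\ell_1$) and finally control the objective change via Lemma~\ref{lem:optdbound} and Lemma~\ref{lem:entropyvsl1}. The padding step that defines $\td$ ensures the entries are at least $m^{-20}$ so that Proposition~\ref{prop:sinkhornrate} is applicable, while only perturbing the demands by at most $2m^{-19}$ in $\ell_1$.

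First, I would invoke Proposition~\ref{prop:sinkhornrate} on $\td$ with accuracy $\Delta$ as specified in the statement; this returns $x'$ that is the exact optimum of the Sinkhorn objective with respect to some demand $d'$ satisfying $\|d'-\td\|_1\le\Delta$. Combined with the padding bound, $\|d'-d\|_1\le \Delta + 2m^{-19}$. Then I would apply $\OTRound$ from Lemma~\ref{lem:fixot} with target demands $d$ to produce $x$ with $\mb^\top x = d$ and $\|x-x'\|_1 \le 2\|\mb^\top x' - d\|_1 = 2\|d' - d\|_1 \le 2\Delta + 4m^{-19}$.

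Next, I would bound the approximation error in two pieces. By Lemma~\ref{lem:optdbound} applied to $d'$ and $d$,
\[
\inprod{c}{x'}+\mu H(x') = \opt_\mu(d') \le \opt_\mu(d) + (2\|c\|_\infty+66\mu\log m)\|d'-d\|_1 + \mu m^{-30}.
\]
By $\ell_1$–$\ell_\infty$ Hölder's inequality and Lemma~\ref{lem:entropyvsl1} applied to $x$ and $x'$,
\[
\inprod{c}{x}+\mu H(x) \le \inprod{c}{x'}+\mu H(x') + (\|c\|_\infty+33\mu\log m)\|x-x'\|_1 + \mu m^{-30}.
\]
Summing these two bounds and substituting the bounds $\|d'-d\|_1\le\Delta+2m^{-19}$ and $\|x-x'\|_1\le 2\Delta+4m^{-19}$, the dominant term is exactly $4(\|c\|_\infty+33\mu\log m)\Delta$, which equals $(2/5)\eps$ for the chosen $\Delta$. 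Under the assumption $\|c\|_\infty/\eps\le m^5$, the $m^{-19}$ and $m^{-30}$ terms are polynomially small multiples of $\eps$, so the total error is below $\eps$ as required.

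Finally, for the runtime, I would substitute $\Delta$ into the bound from Proposition~\ref{prop:sinkhornrate}, giving a running time of order $m(\|c\|_\infty+\mu\log m)^2/(\mu\eps)$, which after expanding and using the AM-GM bound $\|c\|_\infty\log m/\eps \le \|c\|_\infty^2/(\mu\eps)+\mu\log^2 m/\eps$ simplifies to the claimed runtime up to absorbing $m$ factors into the $O(\cdot)$. The main (minor) subtlety will be verifying that the padding only shifts the demands by $2m^{-19}$ in $\ell_1$ and, in particular, that the normalization step in \eqref{eq:tddef} does not cause the $\td$-entries to fall below the $m^{-21}$ threshold required by Proposition~\ref{prop:sinkhornrate}; this follows because the normalizing factor is at most $1+m^{-19}$.
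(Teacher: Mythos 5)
Your proposal is correct and follows essentially the same route as the paper's proof: Proposition~\ref{prop:sinkhornrate} on the padded demands with accuracy $\Delta$, the $\OTRound$ step with $\norm{x-x'}_1 \le 2\norm{d'-d}_1$, the demand-perturbation bound via Lemma~\ref{lem:optdbound}, and the H\"older-plus-Lemma~\ref{lem:entropyvsl1} bound for the rounding error, followed by the same substitution of $\Delta$ (and AM-GM) for the runtime. The only discrepancy you flag --- the missing factor of $m$ relative to Proposition~\ref{prop:sinkhornrate}'s runtime --- is present in the paper's own statement of the corollary, not an artifact of your argument, and your extra check that the normalization in \eqref{eq:tddef} keeps $\td$ entrywise above $m^{-21}$ is a valid (and slightly more careful) verification of the hypothesis of Proposition~\ref{prop:sinkhornrate}.
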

\begin{proof}
By Proposition~\ref{prop:sinkhornrate}, we obtain a vector $x' \in \Delta^m$ satisfying $\mb^\top x' = d'$, $\inprod{c}{x} + \mu H(x)$, and $\norm{d' - d}_1 \le \norm{d' - \td}_1 + \norm{\td - d}_1 \le \Delta + 2m^{-19}$. Moreover, letting $x \gets \OTRound(x')$ for demands $d$, we have by Lemma~\ref{lem:fixot} that $\norm{x - x'}_1 \le 2\Delta + 4m^{-19}$, implying
\begin{align*}
\inprod{c}{x} + \mu H(x) &\le \inprod{c}{x'} + \mu H(x') + \Par{\norm{c}_\infty + 33\mu\log(m)}\Par{2\Delta + 4m^{-19}} + \mu m^{-30} \\
&\le \inprod{c}{x'} + \mu H(x') + \frac{\eps}{2} = \opt_\mu(d') + \frac \eps 2.
\end{align*}
Moreover, Lemma~\ref{lem:optdbound} implies that
\[\opt_\mu(d') \le \opt_\mu(d) + \Par{2\norm{c}_\infty + 66\mu\log(m)}\Par{2\Delta + 4m^{-19}} + \mu m^{-30} \le \opt_\mu(d) +  \frac \eps 2. \]
The conclusion follows from combining the above two displays.
\end{proof}


\begin{thebibliography}{ABRNW19}

\bibitem[ABRNW19]{altschuler2019massively}
Jason Altschuler, Francis Bach, Alessandro Rudi, and Jonathan Niles-Weed.
\newblock Massively scalable sinkhorn distances via the nystr{\"o}m method.
\newblock {\em Advances in neural information processing systems}, 32, 2019.

\bibitem[ADK{\etalchar{+}}16]{AbrahamDKKP16}
Ittai Abraham, David Durfee, Ioannis Koutis, Sebastian Krinninger, and Richard
  Peng.
\newblock On fully dynamic graph sparsifiers.
\newblock In Irit Dinur, editor, {\em {IEEE} 57th Annual Symposium on
  Foundations of Computer Science, {FOCS} 2016, 9-11 October 2016, Hyatt
  Regency, New Brunswick, New Jersey, {USA}}, pages 335--344. {IEEE} Computer
  Society, 2016.

\bibitem[AJJ{\etalchar{+}}22]{AssadiJJST22}
Sepehr Assadi, Arun Jambulapati, Yujia Jin, Aaron Sidford, and Kevin Tian.
\newblock Semi-streaming bipartite matching in fewer passes and optimal space.
\newblock In {\em Proceedings of the 2022 {ACM-SIAM} Symposium on Discrete
  Algorithms, {SODA} 2022, Alexandria, VA, {USA}, January 9 - 12, 2022}, 2022.

\bibitem[ALdOW17]{Allen-ZhuLOW17}
Zeyuan Allen{-}Zhu, Yuanzhi Li, Rafael~Mendes de~Oliveira, and Avi Wigderson.
\newblock Much faster algorithms for matrix scaling.
\newblock In {\em 58th {IEEE} Annual Symposium on Foundations of Computer
  Science, {FOCS} 2017, Berkeley, CA, USA, October 15-17, 2017}, pages
  890--901, 2017.

\bibitem[AWR17]{AltschulerWR17}
Jason Altschuler, Jonathan Weed, and Philippe Rigollet.
\newblock Near-linear time approximation algorithms for optimal transport via
  sinkhorn iteration.
\newblock In {\em Advances in Neural Information Processing Systems 30: Annual
  Conference on Neural Information Processing Systems 2017, December 4-9, 2017,
  Long Beach, CA, {USA}}, pages 1964--1974, 2017.

\bibitem[BGS20]{BernsteinGS20}
Aaron Bernstein, Maximilian~Probst Gutenberg, and Thatchaphol Saranurak.
\newblock Deterministic decremental reachability, scc, and shortest paths via
  directed expanders and congestion balancing.
\newblock In {\em 61st {IEEE} Annual Symposium on Foundations of Computer
  Science, {FOCS} 2020, Durham, NC, USA, November 16-19, 2020}, pages
  1123--1134, 2020.

\bibitem[BJKS18]{BlanchetJKS18}
Jose Blanchet, Arun Jambulapati, Carson Kent, and Aaron Sidford.
\newblock Towards optimal running times for optimal transport.
\newblock {\em CoRR}, abs/1810.07717, 2018.

\bibitem[BLSZ14]{bosek2014online}
Bartlomiej Bosek, Dariusz Leniowski, Piotr Sankowski, and Anna Zych.
\newblock Online bipartite matching in offline time.
\newblock In {\em 2014 IEEE 55th Annual Symposium on Foundations of Computer
  Science}, pages 384--393. IEEE, 2014.

\bibitem[BSW19]{BoobSW19}
Digvijay Boob, Saurabh Sawlani, and Di~Wang.
\newblock Faster width-dependent algorithm for mixed packing and covering lps.
\newblock In {\em Advances in Neural Information Processing Systems 32: Annual
  Conference on Neural Information Processing Systems 2019, NeurIPS 2019,
  December 8-14, 2019, Vancouver, BC, Canada}, pages 15253--15262, 2019.

\bibitem[CJST19]{CarmonJST19}
Yair Carmon, Yujia Jin, Aaron Sidford, and Kevin Tian.
\newblock Variance reduction for matrix games.
\newblock In {\em Advances in Neural Information Processing Systems 32: Annual
  Conference on Neural Information Processing Systems 2019, NeurIPS 2019,
  December 8-14, 2019, Vancouver, BC, Canada}, pages 11377--11388, 2019.

\bibitem[CJST20]{CJST20}
Yair Carmon, Yujia Jin, Aaron Sidford, and Kevin Tian.
\newblock Coordinate methods for matrix games.
\newblock In {\em 2020 IEEE 61st Annual Symposium on Foundations of Computer
  Science (FOCS)}, pages 283--293, 2020.

\bibitem[CKL{\etalchar{+}}22]{chen2022maximum}
Li~Chen, Rasmus Kyng, Yang~P Liu, Richard Peng, Maximilian~Probst Gutenberg,
  and Sushant Sachdeva.
\newblock Maximum flow and minimum-cost flow in almost-linear time.
\newblock {\em arXiv preprint arXiv:2203.00671}, 2022.

\bibitem[CMTV17]{Cohen17}
Michael~B Cohen, Aleksander Madry, Dimitris Tsipras, and Adrian Vladu.
\newblock Matrix scaling and balancing via box constrained newton's method and
  interior point methods.
\newblock In {\em 2017 IEEE 58th Annual Symposium on Foundations of Computer
  Science (FOCS)}, pages 902--913. IEEE, 2017.

\bibitem[CST21]{CohenST21}
Michael~B. Cohen, Aaron Sidford, and Kevin Tian.
\newblock Relative lipschitzness in extragradient methods and a direct recipe
  for acceleration.
\newblock In {\em 12th Innovations in Theoretical Computer Science Conference,
  {ITCS} 2021, January 6-8, 2021, Virtual Conference}, pages 62:1--62:18, 2021.

\bibitem[Cut13]{Cuturi13}
Marco Cuturi.
\newblock Sinkhorn distances: Lightspeed computation of optimal transport.
\newblock In {\em Advances in Neural Information Processing Systems 26: 27th
  Annual Conference on Neural Information Processing Systems 2013. Proceedings
  of a meeting held December 5-8, 2013, Lake Tahoe, Nevada, United States},
  pages 2292--2300, 2013.

\bibitem[DGGP19]{DurfeeGGP19}
David Durfee, Yu~Gao, Gramoz Goranci, and Richard Peng.
\newblock Fully dynamic spectral vertex sparsifiers and applications.
\newblock In Moses Charikar and Edith Cohen, editors, {\em Proceedings of the
  51st Annual {ACM} {SIGACT} Symposium on Theory of Computing, {STOC} 2019,
  Phoenix, AZ, USA, June 23-26, 2019}, pages 914--925. {ACM}, 2019.

\bibitem[DGK18]{DvurechenskyGK18}
Pavel Dvurechensky, Alexander Gasnikov, and Alexey Kroshnin.
\newblock Computational optimal transport: Complexity by accelerated gradient
  descent is better than by sinkhorn's algorithm.
\newblock In {\em Proceedings of the 35th International Conference on Machine
  Learning, {ICML} 2018, Stockholmsm{\"{a}}ssan, Stockholm, Sweden, July 10-15,
  2018}, pages 1366--1375, 2018.

\bibitem[GGdOW20]{GargGOW20}
Ankit Garg, Leonid Gurvits, Rafael~Mendes de~Oliveira, and Avi Wigderson.
\newblock Operator scaling: Theory and applications.
\newblock {\em Found. Comput. Math.}, 20(2):223--290, 2020.

\bibitem[GHP17]{GoranciHP17a}
Gramoz Goranci, Monika Henzinger, and Pan Peng.
\newblock The power of vertex sparsifiers in dynamic graph algorithms.
\newblock In Kirk Pruhs and Christian Sohler, editors, {\em 25th Annual
  European Symposium on Algorithms, {ESA} 2017, September 4-6, 2017, Vienna,
  Austria}, volume~87 of {\em LIPIcs}, pages 45:1--45:14. Schloss Dagstuhl -
  Leibniz-Zentrum f{\"{u}}r Informatik, 2017.

\bibitem[GHP18]{GoranciHP18}
Gramoz Goranci, Monika Henzinger, and Pan Peng.
\newblock {Dynamic Effective Resistances and Approximate Schur Complement on
  Separable Graphs}.
\newblock In Yossi Azar, Hannah Bast, and Grzegorz Herman, editors, {\em 26th
  Annual European Symposium on Algorithms (ESA 2018)}, volume 112 of {\em
  Leibniz International Proceedings in Informatics (LIPIcs)}, pages
  40:1--40:15, Dagstuhl, Germany, 2018. Schloss Dagstuhl--Leibniz-Zentrum fuer
  Informatik.

\bibitem[GLP21]{GaoLP21}
Yu~Gao, Yang~P. Liu, and Richard Peng.
\newblock Fully dynamic electrical flows: Sparse maxflow faster than
  goldberg-rao.
\newblock {\em CoRR}, abs/2101.07233, 2021.

\bibitem[GLS{\etalchar{+}}19]{GrandoniLSSS19}
Fabrizio Grandoni, Stefano Leonardi, Piotr Sankowski, Chris Schwiegelshohn, and
  Shay Solomon.
\newblock $(1 + \epsilon)$-approximate incremental matching in constant
  deterministic amortized time.
\newblock In {\em Proceedings of the Thirtieth Annual {ACM-SIAM} Symposium on
  Discrete Algorithms, {SODA} 2019, San Diego, California, USA, January 6-9,
  2019}, pages 1886--1898, 2019.

\bibitem[Gor19]{Goranci21}
Gramoz Goranci.
\newblock Dynamic graph algorithms and graph sparsification: New techniques and
  connections.
\newblock {\em CoRR}, abs/1909.06413, 2019.

\bibitem[GP13]{GuptaP13}
Manoj Gupta and Richard Peng.
\newblock Fully dynamic ($1+ \epsilon$)-approximate matchings.
\newblock In {\em 54th Annual {IEEE} Symposium on Foundations of Computer
  Science, {FOCS} 2013, 26-29 October, 2013, Berkeley, CA, {USA}}, pages
  548--557, 2013.

\bibitem[GRST21]{GoranciRST21}
Gramoz Goranci, Harald R{\"{a}}cke, Thatchaphol Saranurak, and Zihan Tan.
\newblock The expander hierarchy and its applications to dynamic graph
  algorithms.
\newblock In D{\'{a}}niel Marx, editor, {\em Proceedings of the 2021 {ACM-SIAM}
  Symposium on Discrete Algorithms, {SODA} 2021, Virtual Conference, January 10
  - 13, 2021}, pages 2212--2228. {SIAM}, 2021.

\bibitem[Gup14]{Gupta14}
Manoj Gupta.
\newblock Maintaining approximate maximum matching in an incremental bipartite
  graph in polylogarithmic update time.
\newblock In {\em 34th International Conference on Foundation of Software
  Technology and Theoretical Computer Science, {FSTTCS} 2014, December 15-17,
  2014, New Delhi, India}, pages 227--239, 2014.

\bibitem[GW20]{GutenbergW20b}
Maximilian~Probst Gutenberg and Christian Wulff{-}Nilsen.
\newblock Fully-dynamic all-pairs shortest paths: Improved worst-case time and
  space bounds.
\newblock In Shuchi Chawla, editor, {\em Proceedings of the 2020 {ACM-SIAM}
  Symposium on Discrete Algorithms, {SODA} 2020, Salt Lake City, UT, USA,
  January 5-8, 2020}, pages 2562--2574. {SIAM}, 2020.

\bibitem[HHS21]{HHS21}
Kathrin Hanauer, Monika Henzinger, and Christian Schulz.
\newblock Recent advances in fully dynamic graph algorithms.
\newblock {\em CoRR}, abs/2102.11169, 2021.

\bibitem[HKNS15]{HenzingerKNS15}
Monika Henzinger, Sebastian Krinninger, Danupon Nanongkai, and Thatchaphol
  Saranurak.
\newblock Unifying and strengthening hardness for dynamic problems via the
  online matrix-vector multiplication conjecture.
\newblock In {\em Proceedings of the Forty-Seventh Annual {ACM} on Symposium on
  Theory of Computing, {STOC} 2015, Portland, OR, USA, June 14-17, 2015}, pages
  21--30, 2015.

\bibitem[JLSW20]{JiangLSW20}
Haotian Jiang, Yin~Tat Lee, Zhao Song, and Sam~Chiu{-}wai Wong.
\newblock An improved cutting plane method for convex optimization,
  convex-concave games, and its applications.
\newblock In {\em Proccedings of the 52nd Annual {ACM} {SIGACT} Symposium on
  Theory of Computing, {STOC} 2020, Chicago, IL, USA, June 22-26, 2020}, pages
  944--953, 2020.

\bibitem[JST19]{JambulapatiST19}
Arun Jambulapati, Aaron Sidford, and Kevin Tian.
\newblock A direct $\tilde{O}(1/\epsilon)$ iteration parallel algorithm for
  optimal transport.
\newblock In {\em Advances in Neural Information Processing Systems 32: Annual
  Conference on Neural Information Processing Systems 2019, NeurIPS 2019,
  December 8-14, 2019, Vancouver, BC, Canada}, pages 11355--11366, 2019.

\bibitem[Kis21]{Kiss21}
Peter Kiss.
\newblock Improving update times of dynamic matching algorithms from amortized
  to worst case.
\newblock {\em CoRR}, abs/2108.10461, 2021.

\bibitem[KLOS14]{KelnerLOS14}
Jonathan~A. Kelner, Yin~Tat Lee, Lorenzo Orecchia, and Aaron Sidford.
\newblock An almost-linear-time algorithm for approximate max flow in
  undirected graphs, and its multicommodity generalizations.
\newblock In {\em Proceedings of the Twenty-Fifth Annual {ACM-SIAM} Symposium
  on Discrete Algorithms, {SODA} 2014, Portland, Oregon, USA, January 5-7,
  2014}, pages 217--226, 2014.

\bibitem[KPP16]{KopelowitzPP16}
Tsvi Kopelowitz, Seth Pettie, and Ely Porat.
\newblock Higher lower bounds from the 3sum conjecture.
\newblock In {\em Proceedings of the Twenty-Seventh Annual {ACM-SIAM} Symposium
  on Discrete Algorithms, {SODA} 2016, Arlington, VA, USA, January 10-12,
  2016}, pages 1272--1287, 2016.

\bibitem[LHJ19]{LinHJ19}
Tianyi Lin, Nhat Ho, and Michael~I. Jordan.
\newblock On efficient optimal transport: An analysis of greedy and accelerated
  mirror descent algorithms.
\newblock In {\em Proceedings of the 36th International Conference on Machine
  Learning, {ICML} 2019, 9-15 June 2019, Long Beach, California, {USA}}, pages
  3982--3991, 2019.

\bibitem[LS15]{LeeS15}
Yin~Tat Lee and Aaron Sidford.
\newblock Efficient inverse maintenance and faster algorithms for linear
  programming.
\newblock In {\em {IEEE} 56th Annual Symposium on Foundations of Computer
  Science, {FOCS} 2015, Berkeley, CA, USA, 17-20 October, 2015}, pages
  230--249, 2015.

\bibitem[LSW98]{LinialSW98}
Nathan Linial, Alex Samorodnitsky, and Avi Wigderson.
\newblock A deterministic strongly polynomial algorithm for matrix scaling and
  approximate permanents.
\newblock In {\em Proceedings of the Thirtieth Annual {ACM} Symposium on the
  Theory of Computing, Dallas, Texas, USA, May 23-26, 1998}, pages 644--652,
  1998.

\bibitem[Nem04]{Nemirovski04}
Arkadi Nemirovski.
\newblock Prox-method with rate of convergence \uppercase{O}(1/t) for
  variational inequalities with lipschitz continuous monotone operators and
  smooth convex-concave saddle point problems.
\newblock {\em {SIAM} Journal on Optimization}, 15(1):229--251, 2004.

\bibitem[Nes05]{Nesterov05}
Yurii Nesterov.
\newblock Smooth minimization of non-smooth functions.
\newblock {\em Math. Program.}, 103(1):127--152, 2005.

\bibitem[Nes07]{Nesterov07}
Yurii Nesterov.
\newblock Dual extrapolation and its applications to solving variational
  inequalities and related problems.
\newblock {\em Math. Program.}, 109(2-3):319--344, 2007.

\bibitem[NSW17]{NanongkaiSW17}
Danupon Nanongkai, Thatchaphol Saranurak, and Christian Wulff{-}Nilsen.
\newblock Dynamic minimum spanning forest with subpolynomial worst-case update
  time.
\newblock In Chris Umans, editor, {\em 58th {IEEE} Annual Symposium on
  Foundations of Computer Science, {FOCS} 2017, Berkeley, CA, USA, October
  15-17, 2017}, pages 950--961. {IEEE} Computer Society, 2017.

\bibitem[She13]{Sherman13}
Jonah Sherman.
\newblock Nearly maximum flows in nearly linear time.
\newblock In {\em 54th Annual {IEEE} Symposium on Foundations of Computer
  Science, {FOCS} 2013, 26-29 October, 2013, Berkeley, CA, {USA}}, pages
  263--269, 2013.

\bibitem[She17]{Sherman17}
Jonah Sherman.
\newblock Area-convexity, $\ell_\infty$ regularization, and undirected
  multicommodity flow.
\newblock In {\em Proceedings of the 49th Annual {ACM} {SIGACT} Symposium on
  Theory of Computing}, pages 452--460. ACM, 2017.

\bibitem[ST18]{SidfordTi18}
Aaron Sidford and Kevin Tian.
\newblock Coordinate methods for accelerating $\ell_\infty$ regression and
  faster approximate maximum flow.
\newblock In {\em 2018 {IEEE} 59th Annual Symposium on Foundations of Computer
  Science}, pages 922--933. IEEE, 2018.

\bibitem[vdBLL{\etalchar{+}}21]{BrandLLSS0W21}
Jan van~den Brand, Yin~Tat Lee, Yang~P. Liu, Thatchaphol Saranurak, Aaron
  Sidford, Zhao Song, and Di~Wang.
\newblock Minimum cost flows, mdps, and $\ell_1$-regression in nearly linear
  time for dense instances.
\newblock In {\em {STOC} '21: 53rd Annual {ACM} {SIGACT} Symposium on Theory of
  Computing, Virtual Event, Italy, June 21-25, 2021}, pages 859--869, 2021.

\bibitem[Via19]{Vialard19}
Fran\c{c}ois-Xavier Vialard.
\newblock An elementary introduction to entropic regularization and proximal
  methods for numerical optimal transport, 2019.

\bibitem[Waj20]{Wajc20}
David Wajc.
\newblock Rounding dynamic matchings against an adaptive adversary.
\newblock In {\em Proccedings of the 52nd Annual {ACM} {SIGACT} Symposium on
  Theory of Computing, {STOC} 2020, Chicago, IL, USA, June 22-26, 2020}, pages
  194--207, 2020.

\end{thebibliography}
\end{document}